\documentclass[ecta,nameyear,draft,nolinenumbers]{econsocart}

\usepackage{amsthm} 
\usepackage{amsmath}
\usepackage{amssymb}
\usepackage{mathtools}
\usepackage{array}
\usepackage{graphicx}
\usepackage{comment}

\usepackage{tikz}
\usepackage{pgfplots}
\pgfplotsset{compat=1.15}

\RequirePackage[colorlinks,citecolor=blue,linkcolor=blue,urlcolor=blue,pagebackref]{hyperref}

\startlocaldefs

\theoremstyle{plain}
\newtheorem{theorem}{Theorem}
\newtheorem{corollary}{Corollary}
\newtheorem{lemma}{Lemma}
\newtheorem{claim}{Claim}

\theoremstyle{remark} 
\newtheorem{definition}{Definition}

\newtheorem{remark}{Remark}

\newcommand{\nc}{\newcommand}
\nc{\eps}{\varepsilon}
\nc{\ul}{\underline}
\nc{\wh}{\widehat}
\nc{\wt}{\widetilde}
\nc{\ol}{\overline}
\nc{\df}{\mathrm{d}}
\nc{\Tau}{\mathcal{T}}
\nc{\R}{\mathbb{R}}
\nc{\E}{\mathbb{E}}
\nc{\Pf}{\mathcal P (\mu,\nu)}
\nc{\Pfhat}{\mathcal P (\hat \mu,\hat \nu)}
\nc{\Pfm}{\mathcal P^* (F,G)}
\nc{\Df}{\mathcal D (V)}
\nc{\DfL}{\mathcal D_L (v)}
\nc{\DfM}{\mathcal D_M (v,m)}
\DeclareMathOperator*{\supp}{supp}

\DeclareMathOperator*{\argmin}{arg\,min}

\DeclareMathOperator{\Lip}{Lip}
\DeclareMathOperator{\Conv}{Conv}

\nc{\1}{\mbox{\bf 1}}

\nc{\cred}[1]{{\color{red} #1}}
\nc{\cblue}[1]{{\color{blue} #1}}
\nc{\ak}[1]{\textcolor{blue}{AK: #1}}
\nc{\aw}[1]{\textcolor{red}{AW: #1}}
\nc{\kr}[1]{\left\lVert{#1}\right\rVert_{KR}}

\tikzset{
  Tstyle/.style={blue, very thick},
  Sstyle/.style={red, very thick, dashed, dash pattern=on 4pt off 3pt},
  T1style/.style={green!50!black, very thick, densely dotted},
  T2style/.style={orange!80!black, very thick, densely dotted},
  guideline/.style={dotted, gray!70},
}
\pgfplotsset{
  curveplotpaper/.style={
    width=4.6cm, height=4.6cm,
    scale only axis,
    axis lines=middle, axis line style={->},
    xlabel={$x$}, ylabel={$z$},
    xlabel style={xshift=4pt, font=\footnotesize},
    ylabel style={yshift=4pt, font=\footnotesize},
    legend columns=2,
    legend style={at={(0,0.95)}, anchor=north west, draw=none,
                  fill=none, font=\scriptsize,
                  /tikz/every even column/.append style={column sep=2pt}},
    legend cell align=left,
    every tick label/.append style={font=\scriptsize},
    tick style={black},
    clip=false,
  },
  curveplotpaperbig/.style={
    width=6.8cm, height=5.6cm,
    scale only axis,
    axis lines=middle, axis line style={->},
    xlabel={$x$}, ylabel={$z$},
    xlabel style={xshift=4pt, font=\footnotesize},
    ylabel style={yshift=4pt, font=\footnotesize},
    legend columns=2,
    legend style={at={(0,0.95)}, anchor=north west, draw=none,
                  fill=none, font=\scriptsize,
                  /tikz/every even column/.append style={column sep=2pt}},
    legend cell align=left,
    every tick label/.append style={font=\scriptsize},
    tick style={black},
    clip=false,
  },
}

\newcommand{\panelpic}[1]{%
  \begin{tikzpicture}[baseline={([yshift=-\dp\strutbox]current axis.outer south)}]
    #1
  \end{tikzpicture}%
}

\newcommand{\originzero}{\node[font=\scriptsize, anchor=north east, inner sep=1pt,
  xshift=-2pt, yshift=-2pt] at (axis cs:0,0) {$0$};}

\newcommand{\tlbl}[1]{\ensuremath{#1\vphantom{\tfrac{1}{2}}}}

\endlocaldefs

\begin{document}

\begin{frontmatter}

\title{Competitive Many-to-One Matching: Sorting vs.\ Equality}
\runtitle{Competitive Many-to-One Matching}

\begin{aug}
%
%
%
\author[id=au1,addressref={add1}]{\fnms{Anton}~\snm{Kolotilin}\ead[label=e1]{akolotilin@gmail.com}}
\author[id=au2,addressref={add2}]{\fnms{Alexander}~\snm{Wolitzky}\ead[label=e2]{wolitzky@mit.edu}}
\address[id=add1]{%
\orgdiv{School of Economics},
\orgname{UNSW Business School}}

\address[id=add2]{%
\orgdiv{Department of Economics},
\orgname{MIT}}
\end{aug}

\support{\emph{Date:} May 28th, 2026 \newline We thank Roberto Corrao for helpful early stage discussions on this research. We are grateful for feedback from seminar participants at AETW, APIOC, Bristol, BC, Brown, CMU, Columbia, ESWC, Harvard, LSE, MIT, Monash, NTU, NYU, Oxford, Penn, Penn State, and UCL. Kolotilin gratefully acknowledges financial support from the ARC DP240103257 and FT240100731.
}
\begin{abstract}
We study many-to-one matching with transfers and peer effects, such as matching workers to firms, students to schools, residents to neighborhoods, or consumers to status goods. With flexible prices (as in the labor market), competitive equilibrium exists and is efficient under general conditions. We characterize when workforces are segregated by skill and matched to firms in a positively assortative manner. In general, equilibrium features alternating intervals of workforce segregation and compression (mixing). Comparative statics characterize when workforces are more segregated or more compressed, and when profits and wages are more or less unequal. With uniform prices (as in school or neighborhood choice), the value generated by peer effects accrues to schools rather than students, and equilibrium can be excessively segregated. Our model generalizes both assignment models (optimal transport) and Bayesian persuasion.
\end{abstract}

\begin{keyword}
\kwd{matching}
\kwd{competitive equilibrium}
\kwd{segregation}
\kwd{compression}
\kwd{inequality}
\kwd{assignment model}
\kwd{optimal transport}
\kwd{Bayesian persuasion}
\end{keyword}
\begin{keyword}[class=JEL] 
\kwd{C78}
\kwd{D51}
\kwd{D82}
\kwd{J00}
\end{keyword}

\end{frontmatter}

\pagebreak

\setcounter{page}{1}

\section{Introduction}\label{sec:intro}

Many economic settings involve many-to-one matching among heterogeneous individuals. Firms of varying productivity hire workers of varying skill. Schools of varying quality admit students of varying ability. Neighborhoods with varying amenities house heterogeneous residents. Status goods are purchased by heterogeneous consumers.

What matching patterns arise in these markets? Do workers segregate by skill, so the best firms hire only the best workers? Or, do firms hire a mix of workers with different skill levels, leading to a more equal distribution of workforces? When are workforces (or student bodies, or neighborhoods) more sorted, and when are they more equal?

These questions can be seen at the heart of current social and economic debates. Influential research has linked rising wage inequality to rising assortativity between firms and workers \citep{card,song,kline}. Proposed explanations include skill-biased technological change \citep{acemoglu2011,haakanson,cortes}, outsourcing \citep{goldschmidt, handwerker}, and factor supply and demand shocks \citep{haanwinckel}. Yet, standard assignment models focus on one-to-one matching \citep{sattinger93,tervio2008}---where firms match with individual workers, not \emph{workforces}---and are thus silent as to what changes in production technologies lead firms to hire more or less homogeneous workforces. Another longstanding debate concerns whether neighborhoods, schools, and other social institutions are inefficiently segregated by income or ability: i.e., whether society is overly sorted or meritocratic. See, e.g., \cite{benabou1996}, \cite{beckermurphy}, \cite{durlauf}, and \cite{bishop} on residential segregation, \cite{arnott1987peer} and \cite{epple1998} on schools, and \cite{markovits} and \cite{sandel} on society more broadly. Again, existing models do not provide a tractable framework for studying equilibrium matching patterns, utilities, or comparative statics in these settings, especially with rich heterogeneity on both sides of the market.

The current paper proposes such a framework. We study competitive equilibrium in continuum-agent, many-to-one matching markets with quasilinear utility and both \emph{match effects}---output depends on the pattern of cross-side matching, such as workers to firms---and \emph{peer effects}---output also depends on the pattern of within-side matching, such as workers to coworkers. Competitive equilibrium exists and is efficient. The key tradeoff in the model is between sorting and equality. Suppose a firm of productivity $x$ that hires a workforce with mean skill $z$ generates value $v(x,z)$. (This is the baseline, \emph{mean-measurable} case of our model, where the skill distribution in a workforce affects output only through its mean $z$.) Assume that there are increasing differences between firm productivity and workforce skill ($v_{xz}>0$), but that there are also decreasing returns to workforce skill ($v_{zz}<0$). Increasing differences implies that, whatever distribution of workforce means $z$ forms in equilibrium, they will be matched with firms in a positively assortative manner. Thus, when the distribution of workforces is more spread out, more value is created by matching better workforces to better firms---but, at the same time, more value is wasted due to decreasing returns to skill. There is thus a tradeoff between sorting workers across workforces---to capitalize on increasing differences---and mixing workers to create more equal workforces---to economize on decreasing returns.

We show how this tradeoff resolves. In the mean-measurable case, equilibrium is described by the solution to a concave maximization problem, subject to majorization constraint, which says that the equilibrium distribution of workforce skills is compressed relative to the population distribution of worker skills. If increasing differences are stronger than decreasing returns (or if there are increasing returns, $v_{zz}\geq 0$), the equilibrium is \emph{positive assortative segregation} (\emph{PAS}): workforces are segregated by skill and matched to firms in a positively assortative manner. PAS plays an important role in our model, as the many-to-one analogue of the positive assortative matching pattern characterized by \cite{becker}. If instead increasing differences are weaker than decreasing returns, the equilibrium is \emph{full compression}: essentially, all workforces are mixed, but more productive firms still hire strictly more skilled workforces. We also provide conditions for other simple matching patterns, such as \emph{upper compression}, where less productive firms hire segregated workforces, while more productive firms hire mixed ones. In general, equilibrium is characterized by alternating intervals of segregation and compression, where at each point either the majorization constraint binds or the marginal value of workforce skill is increasing.

We also obtain comparative statics on the equilibrium assignment and utilities. In particular, we show what changes in the production function $v(x,z)$ make the equilibrium assignment more segregated by skill. Essentially, equilibrium becomes more segregated if and only if increasing differences become stronger relative to decreasing returns. We also show that comparative statics for wages have both similarities and differences from the one-to-one matching case (e.g., \citealp{tervio2008}). One notable difference is that, in our model, a worker's wage may fall when higher-skilled workers become even more productive, whereas her wage is unaffected by this change under one-to-one matching. The reason is that workers do not directly compete with higher-skilled workers under one-to-one matching, while in our model all workers are in competition, as their skills substitute for each other in improving workforces.

Our baseline, mean-measurable model admits a unique equilibrium matching between firm productivities $x$ and mean workforce skills $z$. However, many workforce compositions imply the same distribution of $z$, so the equilibrium matching between firms and individual workers is not uniquely determined. We show how this indeterminacy resolves when firms of different productivities $x$ value workforces according to slightly different moments, all of which are close to the mean $z$. In particular, we show when the equilibrium assignment is \emph{strictly single-dipped}---meaning that more productive firms employ more heterogeneous workforces---or, conversely, \emph{strictly single-peaked}. These results refine the relationship between firm productivity and workforce composition predicted in the mean-measurable case. In addition, we also characterize when the equilibrium is PAS in the general, non-mean-measurable case.

The above results pertain to competitive equilibrium, which is always efficient. However, competitive equilibrium implicitly allows personalized prices, which are not reasonable in all of our applications: firms pay different wages to different workers, but schools might or might not charge different tuition to different students, and the price of a house in a given neighborhood is generally not personalized to different prospective buyers. We therefore contrast competitive equilibrium with \emph{uniform price equilibrium}, where there is a single price to match with each agent on the ``one'' side of the market (e.g., a single price for each school or neighborhood). In markets without personalized prices, uniform price equilibrium is the appropriate descriptive equilibrium concept, and competitive equilibrium serves as an efficiency benchmark. Generalizing earlier results of \cite{benabou1996} and \cite{beckermurphy} (among others), we characterize when uniform price equilibria are \emph{inefficiently segregated}, meaning that the uniform price equilibrium is PAS, but the competitive equilibrium is not. We also obtain the apparently novel result that moving from competitive equilibrium to uniform price equilibrium impacts utilities precisely by shifting the value generated by peer effects from students to schools. This result suggests that agreements among colleges to forego merit scholarships may warrant regulatory scrutiny.\footnote{Such an agreement among elite US colleges was the subject of the ``Overlap case'' \citep{USvBrown} and \emph{Henry v.\ Brown} \citep{HenryvBrown}. A counterpoint is that the money colleges saved by forgoing merit aid supported need-based aid and other pro-social goals \citep{carlton}.}

\smallskip

\textbf{Related Literature.} Our model is a minimal mutual generalization of optimal transport and Bayesian persuasion. It also generalizes prior work on many-to-one matching with peer effects. These connections are explored in more detail in the body of the paper---here, we give a brief overview.

The special case of our model with no peer effects is one-to-one matching with transfers and a continuum of agents, aka the continuum assignment model, aka optimal transport. The discrete version of this model traces to \cite{koopmans}, \cite{shapley}, and \cite{becker}. \citeauthor{sattinger75} (\citeyear{sattinger75}, \citeyear{sattinger93}) derived the wage equation, \cite{tervio2008} derived its comparative statics, and \cite{gabaix2008} applied it to study CEO compensation. The continuum assignment model is due to \cite{GOZ92}. These and other applications of optimal transport are surveyed by \cite{galichon}.

The special case of our model with no match effects is mathematically equivalent to the Bayesian persuasion problem of \cite{kamenica2011}, where a ``sender'' splits a distribution of states into a distribution of posteriors, to maximize the expectation of a function of the posterior. As far as we know, the mean-measurable version of this problem first appeared in \cite{SaintPaul}, in the labor market context. In the persuasion literature, it was studied by \cite{gentzkow2016rothschild}, \cite{KMZL}, \cite{Kolotilin2017}, \cite{dworczak2019}, and \cite{KMS}, among others.

A few prior papers study many-to-one matching with linear peer effects. \cite{kremer} studies the production function $v(x,z)=xe^z$, with $x>0$ for all firms. This function has increasing returns to skill ($v_{zz}\geq 0$), which directly implies that equilibrium is PAS: see Remark~\ref{r:IRS}. \cite{boermaetal} study the production function $v(x,z)=xe^z$, with $x<0$ for all firms. This function has decreasing returns to skill ($v_{zz}<0$) and is thus covered by our main results.\footnote{\cite{kremer} and \cite{boermaetal} assume that production depends on the product of worker abilities, so their production functions are mean-measurable after taking $\log$'s. \citeauthor{boermaetal}'s production function is then $v(\tilde{x},z)=\tilde{x}(1-e^z)$ with $\tilde{x}>0$, which is equivalent to  $v(x,z)=xe^z$, with $x=-\tilde x <0$, after dropping the constant $\tilde{x}$. A difference from our model is that these authors assume that each firm hires a finite number of workers, rather than a continuum.} Other papers that study matching with linear peer effects in particular contexts include \cite{arnott1987peer} and \cite{epple1998} on schools, \cite{debartolome1990}, \cite{benabou1996}, and \cite{beckermurphy} on neighborhoods, and \cite{rayo2013} on status goods.\footnote{Farther afield, \cite{lindenlaub}, \cite{edmond}, and \cite{chone} study sorting and bundling workers with multidimensional skills.}

The literature on many-to-one matching in discrete economies dates to \cite{kelso}. Other than requiring substitutability between workers, their model is very general; consequently, equilibrium lacks a tight characterization. Continuum-agent generalizations, including \cite{che}, \cite{azevedo}, and \cite{leshno}, likewise mostly focus on existence and other basic properties, rather than tightly characterizing matching patterns and comparative statics. The same comment applies to the literature on general equilibrium with ``clubs'' (e.g., \citealp{ellickson}, \citealp{rahman}).

We also relate to a mathematics literature on \emph{weak optimal transport}, initiated by \cite{GRST} (see Remark~\ref{r:WOT}). The closest papers in this literature are \cite{ABC} and \cite{B-VBP}, whose results anticipate parts of our Theorem~\ref{t:e} and Corollary~\ref{c:eqm}, respectively.


\smallskip

The remainder of the paper is organized as follows. Section~\ref{s:model} present our model and competitive equilibrium notion. Section~\ref{s:dual} establishes equilibrium existence and efficiency. Section~\ref{sec:seg} characterizes when equilibrium is PAS. Section~\ref{sec:linear} specializes the model to the mean-measurable case. This section forms the core of our analysis: it characterizes the equilibrium matching and utilities and provides comparative statics. Section~\ref{sec:moment} shows how perturbing the model away from mean-measurability determines individual workers' assignments. Section~\ref{sec:uniform} defines uniform price equilibrium and contrasts it with competitive equilibrium. Section~\ref{s:discussion} discusses possible extensions.

\section{Model}\label{s:model}

We study many-to-one matching between unit measures of heterogeneous firms and workers. Firm types (productivity) $x\in X$ are distributed $\mu \in \Delta (X)$, and worker types (skill) $y \in Y$ are distributed $\nu \in \Delta(Y)$, where $X$ and $Y$ are compact metric spaces.\footnote{For any compact metric space $X$, $\rho_X$ denotes its metric, $\Lip (X)$ denotes the set of Lipschitz functions on $X$, and $\Delta(X)$ denotes the set of Borel probability measures on $X$, endowed with the Kantorovich-Rubinstein distance, induced by the norm \eqref{e:KR}. A subset of $X$ will always mean a Borel subset of $X$. For any $\mu\in \Delta(X)$, its support $\supp(\mu)$ is the smallest compact subset of $X$ of measure one.} Each firm $x$ must hire a unit mass of workers $\eta\in \Delta(Y)$, which we call a \emph{workforce}, and each worker must find employment. Thus, an \emph{assignment} is a distribution $\gamma\in \Delta(X\times \Delta(Y))$ over \emph{(firm, workforce)} pairs $(x,\eta)$ such that its marginal over $X$ equals $\mu$ (each firm hires a unit mass of workers) and the average workforce $\eta$ equals $\nu$ (each worker is employed): that is, $\gamma_X=\mu$ and $\E_\gamma [\eta] =\nu$.

All agents (firms and workers) have quasi-linear utility. A matched firm-workforce pair $(x,\eta)$ generates \emph{value} $V(x,\eta)$. Our main technical assumption is that $V$ is Lipschitz continuous: $V \in \Lip(X \times \Delta(Y))$.

We define a competitive equilibrium as follows.
\begin{definition}\label{d:CE}
A \emph{competitive equilibrium} is an assignment $\gamma$ together with utilities (or \emph{prices}) $(p,q) \in \Lip(X)\times \Lip (Y)$ satisfying
\begin{align*}
p(x) + \E_\eta[q(y)] &= V(x,\eta), \quad \text{for all } (x,\eta)\in \supp(\gamma), \\
p(x) + \E_\eta[q(y)] &\geq V(x,\eta), \quad \text{for all } (x,\eta) \in X\times \Delta(Y).
\end{align*}
\end{definition}
The first condition says that each firm $x$ makes profit $p(x)$ from hiring workforce $\eta$ at wages $q$;  the second condition says that firm $x$ cannot make more profit by hiring a different workforce. Equivalently, $(\gamma;p,q)$ is a core or group-stable outcome (see Remark~\ref{r:duality}).

Note that if $(\gamma;p,q)$ is an equilibrium, then so is $(\gamma;p+c,q-c)$, for any constant $c \in \R$. That is, utilities are only determined up to a constant that divides the surplus between the two sides of the market. This feature is typical of assignment models \citep{sattinger93,boermaetal}.\footnote{Some assignment models pin down the constant by introducing outside options \citep{kremer,tervio2008} or a size imbalance between the sides of the market. These features could likewise be introduced in our model.}

\begin{remark}[Relation to Assignment Models and Bayesian Persuasion]
The model is a minimal generalization of assignment models (one-to-one matching) and Bayesian persuasion. If $V$ is linear in $\eta$, it reduces to one-to-one matching: i.e., the continuum assignment model as in \cite{GOZ92,GOZ99}, or the optimal transport problem as in \cite{santambrogio2015} or \cite{galichon}.\footnote{A minor difference with \cite{GOZ92,GOZ99} is that they assume free disposal, so an assignment $\gamma$ is a positive measure on $X\times \Delta(Y)$ such that $\gamma_X\leq \mu$ and $\int \eta \gamma(\df x,\df \eta) \leq \nu$. If $V\geq 0$ and $\mu(X)=\nu(Y)$, free disposal is immaterial, as $\gamma_X= \mu$ and $\int \eta \gamma(\df x,\df \eta) = \nu$ in any competitive equilibrium.} If instead $X$ is a singleton, it reduces to Bayesian persuasion as in \cite{kamenica2011}.\footnote{Explicitly, our model with singleton $X$ is described by a distribution $\nu$ of workers, which must be split into a distribution $\tau \in \Delta(\Delta(Y))$ of workforces $\eta \in \Delta(Y)$, with $\E_\tau[\eta]=\nu$, to maximize $\E_{\tau}[V(\eta)]$. This is precisely the Bayesian persuasion problem of \cite{kamenica2011}, with workers replacing states of the world, the distribution $\tau$ replacing an experiment, workforces replacing posteriors, and the production function $V(\eta)$ replacing the sender's indirect utility from inducing posterior $\eta$.}
\end{remark}

\begin{remark}[Matching with Peer Effects and Personalized Prices]\label{r:microfoundation}
Our model captures matching with peer effects, because if a worker with skill $y$ working at a firm of productivity $x$ with co-workers with skill distribution $\eta$ generates value $W(x,y,\eta)$, we recover our model with $V(x,\eta)=\E_\eta [W(x,y,\eta)]$. With this interpretation, the special case of our model with no peer effects---where $W$ does not depend on $\eta$---is the continuum assignment model/optimal transport problem. Conversely, the special case with no match effects---where $W$ does not depend on $x$---is the Bayesian persuasion problem.

In the worker-firm interpretation of the model, the value $W(x,y,\eta)$ generated by firm-worker-workforce triple $(x,y,\eta)$ accrues to the firm: the firm's profit is $p(x)=\E_\eta [W(x,y,\eta)-q(y)]$, and worker $y$'s wage is $q(y)$. In other applications, $W(x,y,\eta)$ instead accrues to the ``worker'' (the agent on the ``many'' side of the market). For instance, in the student-school interpretation, $W(x,y,\eta)$ is the willingness-to-pay of a type-$y$ student to attend a type-$x$ school with student body $\eta$; the tuition charged to this student is $W(x,y,\eta)-q(y)$ (leaving her utility $W(x,y,\eta)-(W(x,y,\eta)-q(y))=q(y)$); and the total tuition received by the school is $p(x)=\E_\eta [W(x,y,\eta)-q(y)]$. The neighborhood effects and status goods applications are similar.\footnote{For example, in the neighborhood interpretation, the willingness-to-pay of a type-$y$ household to live in a type-$x$ neighborhood with composition $\eta$ is $W(x,y,\eta)$; the rent charged to this household is $W(x,y,\eta)-q(y)$; and the average rent in the neighborhood is $p(x)=\E_\eta [W(x,y,\eta)-q(y)]$.}

In applications where $W(x,y,\eta)$ accrues to the ``worker,'' competitive equilibrium implicitly involves personalized prices. For example, if two students $y$ and $y'$ attend the same school $x$ with student body $\eta$, their tuitions $W(x,y,\eta)-q(y)$ and $W(x,y',\eta)-q(y')$ can differ. Personalized prices may not be very realistic in the student-school and status goods applications (merit scholarships and giveaways to influencers notwithstanding), and they are likely unrealistic in the neighborhood application (as emphasized by \citealp{beckermurphy}). Thus, in these applications, our competitive equilibrium notion is relevant as an efficiency benchmark rather than a positive prediction, and the corresponding positive theory is instead given by the uniform-price equilibrium notion developed in Section~\ref{sec:uniform}.
\end{remark}

\section{Existence and Welfare Theorems}\label{s:dual}

Our first result is that (i) competitive equilibria exist, (ii) equilibrium assignments maximize total value, and (iii) equilibrium prices minimize total utility subject to a no-entry condition. Formally, define the \emph{planner's problem}
\begin{equation}\tag{P} \label{P}
\begin{gathered}
\max_{\gamma\in \Delta(X\times \Delta(Y))}\E_\gamma [V(x,\eta)]\\
\text{s.t.} \quad \gamma_X=\mu\quad\text{and}\quad \E_\gamma[\eta]=\nu,
\end{gathered}
\end{equation}
and define the \emph{dual problem}
\begin{equation}\tag{D} \label{D}
\begin{gathered}
\min_{(p,q)\in \Lip(X)\times\Lip(Y)}\E_\mu [p(x)] +\E_\nu [q(y)]\\
\text{s.t.} \quad p(x)+\E_\eta [q(y)] \geq V(x,\eta),\quad  \text{for all $(x,\eta)\in X\times \Delta(Y)$.}
\end{gathered}
\end{equation}

The planner's problem is to choose an assignment to maximize total value. This problem can be broken into a Bayesian persuasion problem followed by an optimal transport problem: first, choose a distribution $\tau\in \Delta(\Delta(Y))$ over workforces $\eta$ satisfying $\E_\tau [\eta]=\nu$; then, choose a transport plan $\gamma\in \Delta(X\times \Delta(Y))$ between the (exogenous) distribution $\mu$ of firms and the (endogenous) distribution $\tau$ of workforces satisfying $\gamma_X=\mu$ and $\gamma_{\Delta(Y)}=\tau$. The dual problem is to minimize total utility subject to the constraint that no entering \emph{(firm, workforce)} pair can make a strictly positive profit.\footnote{Mathematically, the dual formulation is justified by the Kantorovich-Rubinstein theorem that a continuous linear function on $M(X)\times M(Y)$ (where $M(X)$ is the space of signed measures on $X$ endowed with the Kantorovich-Rubinstein norm) is represented by $\int p\,\df \mu + \int q\,\df \nu$ for some Lipschitz functions $p$ and $q$ (e.g., Theorem 0 in \citealp{hanin}).}

\begin{theorem}\label{t:e} A competitive equilibrium exists. Moreover, $(\gamma;p,q)$ is a competitive equilibrium if and only if $\gamma$ solves the planner's problem~\eqref{P} and $(p,q)$ solves the dual problem~\eqref{D}. 
\end{theorem}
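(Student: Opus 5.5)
The plan is to establish weak duality, existence of solutions to both problems, and absence of a duality gap, and then read off the equivalence from complementary slackness.

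\emph{Weak duality.} Take any feasible $\gamma$ for \eqref{P} and any feasible $(p,q)$ for \eqref{D}. Integrating the dual constraint against $\gamma$, and using $\gamma_X=\mu$ and $\E_\gamma[\E_\eta q]=\E_\nu q$ (valid since $q$ is continuous and $\E_\gamma[\eta]=\nu$), gives
\[
\E_\gamma[V(x,\eta)]\le \E_\gamma\bigl[p(x)+\E_\eta q(y)\bigr]=\E_\mu p+\E_\nu q .
\]
Hence value(P) $\le$ value(D).

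\emph{Equivalence, given strong duality.} Suppose the two values are equal and both problems are attained. If $(\gamma;p,q)$ is a competitive equilibrium, then $\gamma$ is feasible for \eqref{P} and $(p,q)$ is feasible for \eqref{D}. Integrating the equality condition over $\supp(\gamma)$ gives $\E_\gamma V=\E_\mu p+\E_\nu q$, so by weak duality both are optimal. Conversely, if $\gamma$ and $(p,q)$ are optimal, the nonnegative continuous function $p(x)+\E_\eta q-V(x,\eta)$ has $\gamma$-integral zero. By continuity it then vanishes on $\supp(\gamma)$, and the pair is an equilibrium. Existence of an equilibrium follows once both problems are attained with no gap.

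\emph{Attainment in \eqref{P}.} The set $\Delta(Y)$ is compact metric under the Kantorovich--Rubinstein metric, so $\Delta(X\times\Delta(Y))$ is weak$^*$ compact. The constraints $\gamma_X=\mu$ and $\E_\gamma[\eta]=\nu$ are closed: the second holds iff $\int \E_\eta f\,\df\gamma=\int f\,\df\nu$ for all continuous $f$, and $\eta\mapsto\E_\eta f$ is continuous. The objective is continuous because $V$ is continuous. The feasible set is nonempty, since $\mu\otimes\delta_\nu$ is feasible. Hence \eqref{P} is attained.

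\emph{No gap and dual attainment.} This is the main obstacle. I would treat \eqref{P} as a linear program over measures on $Z=X\times\Delta(Y)$ and apply a standard infinite-dimensional LP or Fenchel--Rockafellar duality. Define the linear map $A:M(Z)\to M(X)\times M(Y)$ by $A\gamma=(\gamma_X,\int\eta\,\df\gamma)$. Its adjoint sends $(p,q)$ to $p(x)+\E_\eta q$. Consider the perturbation function
\[
\Phi(a,b)=\sup\Bigl\{\textstyle\int V\,\df\gamma:\ \gamma\ge 0,\ A\gamma=(a,b)\Bigr\}.
\]
This function is concave. It is upper semicontinuous in the Kantorovich--Rubinstein norm, by the same compactness argument used for attainment, applied to sequences with bounded mass. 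Moreover, $\Phi$ is finite on a neighborhood of $(\mu,\nu)$ within the relevant affine set of equal-mass pairs: by the Lipschitz bound on $V$, perturbing $(a,b)$ changes $\Phi$ by at most a constant times the distance. Precisely, one can transport a perturbation of $\mu$ or of $\nu$ and lose at most $\Lip(V)$ times the Kantorovich--Rubinstein distance.

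This local Lipschitz bound is the key step. It implies that $\Phi$ has a supergradient at $(\mu,\nu)$ in the dual of $(M(X)\times M(Y),\|\cdot\|_{KR})$. By the Kantorovich--Rubinstein representation cited in the footnote (Theorem 0 in \citealp{hanin}), that supergradient is a pair $(p,q)\in\Lip(X)\times\Lip(Y)$; the mass normalization is absorbed by the constant shift $(p+c,q-c)$. Concavity then yields
\[
\Phi(a,b)\le \Phi(\mu,\nu)+\int p\,\df(a-\mu)+\int q\,\df(b-\nu).
\]
Choosing $(a,b)=A(\delta_{(x,\eta)})+(\mu,\nu)-A(\gamma^*)$, where $\gamma^*$ is optimal, converts this inequality into the dual constraint $p(x)+\E_\eta q\ge V(x,\eta)$ pointwise, with $\E_\mu p+\E_\nu q=\Phi(\mu,\nu)$.

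The delicate points are twofold. First, one must verify the Lipschitz estimate on $\Phi$ by explicitly constructing feasible couplings for the perturbed marginals. Second, one must restrict to nonnegative perturbations so that the constructed measures stay nonnegative. If this direct route becomes messy, I would instead use a discretization: approximate $\mu$ and $\nu$ by finitely supported measures, use finite LP duality to obtain dual solutions with uniformly bounded Lipschitz constants (inherited from $\Lip(V)$ via a $c$-concave-type envelope $p(x)=\sup_\eta[V(x,\eta)-\E_\eta q]$), and pass to the limit by Arzel\`a--Ascoli.
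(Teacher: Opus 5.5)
Your architecture is the paper's: weak duality and complementary slackness, primal attainment by compactness, then concavity plus a Lipschitz bound on the value function $\Phi$, then a supergradient represented by Lipschitz $(p,q)$ via Kantorovich--Rubinstein, then evaluation at $(\delta_x,\eta)$. Your choice $(a,b)=A(\delta_{(x,\eta)})+(\mu,\nu)-A(\gamma^*)$ is just $(\delta_x,\eta)$, since $A\gamma^*=(\mu,\nu)$.

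\emph{First gap: the Lipschitz estimate is asserted, not proved.} The step you call ``key'' is exactly what the paper calls the bulk of the proof. ``Transport a perturbation'' is the right instinct, but the content is how to perturb a feasible $\gamma$ while preserving the barycenter constraint $\E_\gamma[\eta]=\nu$. Lemma~\ref{l:KR} does this as follows.
\begin{itemize}
\item Take an optimal coupling $\beta$ of $(\nu,\hat\nu)$ with disintegration $\beta(\df\hat y\mid y)$. Replace every workforce $\eta$ by $N(\eta)=\int\beta(\cdot\mid y)\,\eta(\df y)$.
\item Replace every firm $x$ by $\hat x\sim\alpha(\cdot\mid x)$, where $\alpha$ is an optimal coupling of $(\mu,\hat\mu)$.
\item Linearity of $N$ gives $\E_{\hat\gamma}[\hat\eta]=N(\nu)=\hat\nu$ for free.
\item The cost bound uses $\kr{\eta-N(\eta)}\le\int\!\int\rho_Y(y,\hat y)\,\beta(\df\hat y\mid y)\,\eta(\df y)$. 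Averaged over workforces, this gives $\int\rho_Y\,\df\beta=\kr{\nu-\hat\nu}$ (Claim~\ref{l:L}).
\end{itemize}
Without this construction, the Lipschitz property of $\Phi$, and hence dual attainment, is unproved.

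Your discretization fallback does not bypass this. The envelope $p(x)=\sup_\eta[V(x,\eta)-\E_\eta q]$ controls $p$, but there is no analogous pointwise transform for $q$. Its constraint $\E_\eta q\ge\max_x[V(x,\eta)-p(x)]$ ranges over mixed workforces. When $X$ is a singleton, $p$ is constant, and this persuasion-type dual attainment is the entire content. So uniform Lipschitz bounds on $q_n$ again require the Lipschitz value function.

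\emph{Second gap: the justification for the supergradient is wrong as stated.} $\Phi$ is not finite on any KR-neighborhood of $(\mu,\nu)$, even within the equal-mass affine set. Whenever $Y$ is infinite (e.g., an interval), pick $y_2$ with $\nu(\{y_2\})<\varepsilon$. Then $\nu+\varepsilon(\delta_{y_1}-\delta_{y_2})$ is a signed measure within KR distance $\varepsilon\,\mathrm{diam}(Y)$ of $\nu$, and $\Phi=-\infty$ there. So the domain of $\Phi$ has empty interior, and the textbook result (concave and continuous at an interior point implies superdifferentiable) does not apply.

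What works, and what the paper uses, is Gale's (1967) bounded-steepness theorem. A one-sided bound $\Phi(a,b)-\Phi(\mu,\nu)\le L\|(a,b)-(\mu,\nu)\|_{KR}$ on the domain lets you separate the hypograph of $\Phi$ from the \emph{open} convex set $\{(a,b,t):t>\Phi(\mu,\nu)+L\|(a,b)-(\mu,\nu)\|_{KR}\}$. The separating functional must have a nonzero $t$-component, and it yields the supergradient. With these two pieces supplied, your argument becomes the paper's.
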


This strong duality result forms the basis of our analysis, as it lets us study competitive equilibria via the optimization problems~\eqref{P} and \eqref{D}. The bulk of the proof consists in showing that the value of \eqref{P} as a function of the distributions $\mu$ and $\nu$, denoted $U(\mu,\nu)$, inherits Lipschitz continuity from $V(x,\eta)$, at which point the theorem follows from an infinite-dimensional separating hyperplane theorem (namely, the Duality Theorem in \citealp{Gale1967}). In turn, Lipschitz preservation follows the usual line of argument for why the value function in parameterized optimization problems inherits Lipschitz continuity from the objective function \citep{MS}, once we show that the set of feasible assignments varies with the population distributions in a well-behaved manner. Specifically, the key technical step is showing that if $\gamma$ is a feasible assignment for population measures $(\mu,\nu)$, then for any population measures $(\hat \mu,\hat \nu)$, there exists a feasible assignment $\hat \gamma$ such that $\kr{\gamma-\hat \gamma} \leq \kr{\mu-\hat \mu}+\kr{\nu- \hat \nu}$, where $\kr{\cdot}$ is the Kantorovich-Rubinstein norm. This is proved by using optimal transport plans between $\mu$ and $\hat \mu$, and between $\nu$ and $\hat \nu$, to construct $\hat \gamma$.\footnote{In the special case of Bayesian persuasion (where $X$ is a singleton), this argument replicates the proof of Theorem 4 in \cite{DK}.}

\begin{remark}[Price Functions]\label{r:price} The equilibrium utilities satisfy
\begin{align*}
p(x)&=\max_{\eta\in \Delta(Y)} \{V(x,\eta)-\E_\eta[q(y)]\}, \quad \text{for all $x\in \supp(\mu)$}, \quad \text{and} \\
q &\in \argmin_{\hat q\in \Lip(Y)} \{\E_\nu [\hat q(y)]:\, \E_\eta[\hat q(y)]\geq \max_{x\in X}\{V(x,\eta)-p(x)\}, \: \: \text{for all $\eta\in \Delta(Y)$}\}.
\end{align*}
The first equation says that the profit $p(x)$ of firm $x$ equals its maximum profit from hiring any workforce $\eta$ at wages $q$. This is analogous to the profit equation in optimal transport (e.g., \citealp{santambrogio2015}, Proposition 1.11; \citealp{galichon}, Proposition 2.3). The second equation says that the wage schedule $q(y)$ is the lowest one such that the total wage $\E_\eta[\hat q(y)]$ of any workforce $\eta$ is at least the value it generates at any firm $x$, net of the firm's profit $p(x)$. This is analogous to the dual problem in Bayesian persuasion \citep{DK}.
\end{remark}

\begin{remark}[Relation to General Equilibrium] \label{r:duality}
Our economy is convex as we have a continuum of firms and workers, as in \cite{aumann}: mathematically, $U(\mu,\nu)$ is concave as the value of a linear program. As in \cite{aumann} and \cite{GOZ92,GOZ99}, say that $(p,q)\in \Lip(X)\times \Lip (Y)$ is the \emph{core} if $U(\hat \mu,\hat \nu)\leq \E_{\hat \mu} [p(x)] +\E_{\hat \nu}[q(y)]$ for all $(\hat \mu,\hat \nu )\in \Delta(X)\times \Delta(Y)$, with equality at $(\mu,\nu)$, meaning that there is no blocking coalition of firms and workers. Mathematically, this says that $(p,q)$ is a \emph{supergradient} of $U$ at $(\mu,\nu)$. By conjugate duality (e.g., Theorem 16 in \citealp{rockafellar74}), \eqref{D} has a solution and the values of \eqref{P} and \eqref{D} coincide at $(\mu,\nu)$ if and only if $U$ has a supergradient at $(\mu,\nu)$.\footnote{\eqref{P} has a solution by the Weierstrass Theorem.} Core and competitive equilibrium are thus equivalent in our model (as in \citealp{aumann} and \citealp{GOZ92,GOZ99}), and the contribution of Theorem~\ref{t:e} is showing that if $V$ is Lipschitz continuous then $U$ has a supergradient. To do so, we build on \cite{Gale1967}, whose duality theorem implies that $U$ has a supergradient at $(\mu,\nu)$ if and only if it has \emph{bounded steepness} at $(\mu,\nu)$, meaning that there exists a constant $L$ such that ${U(\hat\mu,\hat\nu)-U(\mu,\nu)}\leq L \kr{(\hat \mu,\hat \nu)-(\mu,\nu)}$, for all $(\hat\mu,\hat\nu) \in \Delta(X)\times \Delta(Y)$. So, bounded steepness is equivalent to equilibrium existence,\footnote{This observation applies to any general equilibrium setting with a normed vector space of commodities and quasi-linear utility (e.g., \citealp{ostroy,GOZ}). For comparison, the main technical result in the infinite-dimensional general equilibrium literature is that \emph{properness} of preferences is equivalent to quasi-equilibrium existence in a one-consumer economy \citep{mascollel,MZ}. Since any equilibrium is a quasi-equilibrium, bounded steepness implies properness under quasi-linear utility. Under mild additional conditions (e.g., positivity of $V$ and full support of $(\mu,\nu)$), any quasi-equilibrium is an equilibrium, so properness and bounded steepness are equivalent.} and is implied by Lipschitz continuity. Finally, as described above, we show that $U$ inherits Lipschitz continuity from $V$ by combining an envelope argument as in \cite{MS} and a generalization of the optimal transport argument in \cite{DK}.
\end{remark}

\begin{remark}[Relation to Weak Optimal Transport] \label{r:WOT}
If $V(x,\eta)$ is concave in $\eta$, then, by Jensen's inequality, there exists an optimal assignment $\gamma$ such that $(x,\eta),(x,\hat \eta)\in \supp(\gamma)\implies \eta=\hat \eta$, so \eqref{P} can be reformulated as
\begin{equation}\tag{P'} \label{P'}
\begin{gathered}
\max_{\pi\in \Delta(X\times Y)}\E_{\pi_X} [V(x,\pi_x)]\\
\text{s.t.} \quad \pi_X=\mu\quad\text{and}\quad \pi_Y=\nu,
\end{gathered}
\end{equation} 
where $\pi_x$ is the conditional distribution of $y$ given $x$ under $\pi$. This formulation is the premise of the weak optimal transport literature (e.g., \citealp{GRST,ABC,BBP}). This literature establishes primal attainment and no duality gap (i.e., $\max \eqref{P'}=\inf \eqref{D}$), assuming that $V$ is concave in $\eta$ but allowing $V$ to be upper-semicontinuous and $X$ and $Y$ to be non-compact.\footnote{Without concavity, we can have $\sup  \eqref{P'}<\sup  \eqref{P}=\inf \eqref{D}$, which clarifies the role of the concavity assumption in the weak optimal transport literature.} However, none of these papers provides tractable sufficient conditions on $V$ for dual attainment and hence for equilibrium existence.\footnote{The most related of these papers is \cite{ABC}, who establish dual attainment assuming Lipschitz continuity of $U$ (Theorem 5.1) and in some special cases (Theorems 5.6 and 6.8).}
\end{remark}

\section{Positive Assortative Segregation} \label{sec:seg}
This short section characterizes when workforces are segregated by skill and matched to firms in a positively assortative manner. This is the many-to-one analogue of positive assortative matching. This section also serves as a benchmark for Section~\ref{sec:uniform}, where we compare conditions for (and payoffs under) segregation with competitive and uniform prices.

For the remainder of the paper, we assume that firm and worker types are $1$-dimensional: $X=[\ul x,\ol x]$ and $Y=[\ul y, \ol y]$, with $\ul x <\ol x$ and $\ul y < \ol y$. We represent measures $\mu$ and $\nu$ by their cdfs $F\in \Delta(X)$ and $G\in \Delta(Y)$, which are assumed to have continuous and strictly positive densities $f$ and $g$.\footnote{Most results extend straightforwardly to general $F$ and $G$, which need not have full support or densities.} We also assume that $V(x,\delta_y)$ is twice continuously differentiable on $X\times Y$, where $\delta_y$ is the Dirac measure at $y$, and we denote partial derivatives of $(x,y)\mapsto V(x,\delta_y)$ with subscripts.

In one-to-one matching, strictly increasing differences in $(x,y)$ imply positive assortative matching between firms and workers: each firm $x$ hires worker $y=S(x)$, where $S=G^{-1}\circ F$ is the quantile matching function from $F$ to $G$ \citep{becker}. In our many-to-one model, the same matching arises (where each firm $x$ hires workforce $\delta_{S(x)}$) if all workforces are segregated and matched to firms positively assortatively. We thus refer to $S=G^{-1}\circ F$ as the \emph{positive assortative segregation (PAS) matching}, where the corresponding \emph{PAS assignment} $\phi\in \Delta(X\times \Delta(Y))$ is given by
\[
\phi (A,D)=\int_A \1\{\delta_{S(x)}\in D\} \, \mu(\df x), \quad \text{for all $A\subset X$ and $D\subset \Delta(Y)$}.
\]

The following result characterizes when the equilibrium is PAS.

\begin{theorem}\label{t:fs} The following hold:
\begin{enumerate}
\item PAS is a competitive equilibrium if and only if
\begin{equation}
\begin{gathered}
\int_{\ul y}^{\ol y} \int_{S(x)}^{y} V_y(S^{-1}(\tilde y),\delta_{\tilde y})\, \df \tilde y\, \eta (\df y)\geq V(x,\eta )-V(x,\delta_{S(x)}),\\
\text{for all $(x,\eta)\in X\times \Delta(Y)$}.\label{e:FS}	
\end{gathered}	
\end{equation}
In addition, if \eqref{e:FS} holds strictly for all $(x,\eta)\in X\times \Delta(Y)$ with $\eta\neq \delta_{S(x)}$, then PAS is the unique competitive equilibrium.

\item PAS is a competitive equilibrium for all $(F,G)$ if and only if $V(x,\delta_y)$ is \emph{supermodular in $(x,y)$},
\begin{equation}
V_{xy}(x,\delta_y)\geq 0,\quad \text{for all $(x,y)\in X\times Y$},\label{e:spm}	
\end{equation}
and $V(x,\eta)$ is \emph{outer convex in $\eta$},
\begin{equation}
\int_Y V(x,\delta_y)\,\eta(\df y)\geq V(x,\eta),\quad \text{for all $(x,\eta)\in X\times \Delta(Y)$}.\label{e:oc}
\end{equation}
In addition, if \eqref{e:spm} holds strictly for all $(x,y)$, then PAS is the unique competitive equilibrium for all $(F,G)$.
\item When PAS is a competitive equilibrium, functions $(p,q)$ are corresponding equilibrium prices if and only if there exists $c\in \R $ such that  
\begin{align}
p(x)&=\int_{\ul x}^{x} V_x(\tilde x,\delta_{S(\tilde x)})\, \df  \tilde x+c, \quad \text{for all }x\in X,\label{e:ps}\\
q(y)&=V(\ul x,\delta_{\ul y})+ \int_{\ul y}^y V_y(S^{-1}(\tilde y),\delta_{\tilde y})\, \df \tilde y-c,\quad\text{for all }y\in Y, \label{e:qs}
\end{align}
\end{enumerate}
\end{theorem}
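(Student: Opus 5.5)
By Theorem~\ref{t:e}, PAS is a competitive equilibrium if and only if there exist $(p,q)\in\Lip(X)\times\Lip(Y)$ with $p(x)+q(S(x))=V(x,\delta_{S(x)})$ for all $x$ (the support of $\phi$ is the graph $\{(x,\delta_{S(x)})\}$, since $f>0$) and $p(x)+\E_\eta[q(y)]\ge V(x,\eta)$ everywhere. I will prove part~3 first, since it pins down the only candidate prices. Then parts~1 and~2 reduce to checking the inequality at those prices.

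\emph{Part 3.} Restrict the inequality to Dirac workforces $\eta=\delta_y$. Together with equality on the graph of $S$ this gives $p(x)=\max_{y}\{V(x,\delta_y)-q(y)\}$, attained at $y=S(x)$. Symmetrically, $q(y)=\max_x\{V(x,\delta_y)-p(x)\}$, attained at $x=S^{-1}(y)$. Since $p,q$ are Lipschitz they are absolutely continuous, and $V(\cdot,\delta_\cdot)$ is $C^2$. An envelope argument (comparing $p(x')-p(x)$ with $V(x',\delta_{S(x)})-V(x,\delta_{S(x)})$ from both sides) gives $p'(x)=V_x(x,\delta_{S(x)})$ at every differentiability point, and likewise $q'(y)=V_y(S^{-1}(y),\delta_y)$. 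Integrating yields \eqref{e:ps} and \eqref{e:qs}. The constant relating them is fixed by the equality at $(\ul x,\delta_{\ul y})$, using $S(\ul x)=\ul y$. Conversely, once the inequality is verified, any such $(p,q)$ works, and this is what part~1 does.

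\emph{Part 1.} Substitute \eqref{e:ps}--\eqref{e:qs} and subtract the identity $p(x)+q(S(x))=V(x,\delta_{S(x)})$. The no-profitable-deviation inequality becomes
\[
\E_\eta[q(y)-q(S(x))]\ge V(x,\eta)-V(x,\delta_{S(x)}),
\]
which is exactly \eqref{e:FS}. So \eqref{e:FS} is necessary, because any equilibrium prices must take this form, and it is sufficient by Theorem~\ref{t:e}. For uniqueness, note that by Theorem~\ref{t:e} any equilibrium assignment $\gamma$ solves \eqref{P}, so it pairs with these dual prices and satisfies equality on $\supp(\gamma)$. Strict \eqref{e:FS} then forces $\eta=\delta_{S(x)}$ for $\gamma$-a.e.\ $(x,\eta)$, and the marginal constraints give $\gamma=\phi$.

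\emph{Part 2.} The left side of \eqref{e:FS} can be rewritten as
\[
\int\Big[V(x,\delta_y)-V(x,\delta_{S(x)})\Big]\eta(\df y)+\int\!\!\int_{S(x)}^{y}\big[V_y(S^{-1}(t),\delta_t)-V_y(x,\delta_t)\big]\df t\,\eta(\df y).
\]
Supermodularity makes the second integrand nonnegative for every $y$: $S^{-1}(t)\gtrless x$ exactly when $t\gtrless S(x)$. Outer convexity \eqref{e:oc} then bounds the first term below by $V(x,\eta)-V(x,\delta_{S(x)})$. Hence \eqref{e:spm}--\eqref{e:oc} imply \eqref{e:FS} for every $(F,G)$.

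Under strict supermodularity, the second term is strictly positive whenever $\eta\ne\delta_{S(x)}$, so \eqref{e:FS} holds strictly and part~1 gives uniqueness.

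For necessity of \eqref{e:spm}, suppose $V_{xy}(x_0,\delta_{y_0})<0$. I will choose $(F,G)$ with $S(x_0)=y_0$ and $S$ steep near $x_0$, so that $S^{-1}(t)$ lies just on the wrong side of $x_0$ for $t$ near $y_0$. Then $\eta=\delta_y$ with $y$ close to $y_0$ violates \eqref{e:FS}.

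For necessity of \eqref{e:oc}, fix $x$ and choose full-support $F$ concentrated near $x$ while $G$ stays fixed. Then $S^{-1}(t)\to x$ uniformly on the bulk of $Y$. The second term above vanishes in the limit, and \eqref{e:FS} becomes \eqref{e:oc} at $x$ by continuity of $V$.

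\emph{Main obstacle.} I expect the hardest step to be these two necessity constructions: producing admissible $(F,G)$, with continuous strictly positive densities, whose quantile map $S$ approximates the desired degenerate behavior, while controlling the error uniformly. The envelope step in part~3 is routine given Lipschitz continuity and the $C^2$ assumption.
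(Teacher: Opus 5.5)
Your proof is correct and follows the paper's argument essentially step for step: the envelope derivation of the prices in part 3, substitution into the equilibrium inequality for part 1, the same $V_{xy}$-decomposition of the left side of \eqref{e:FS} for sufficiency in part 2, and the same local constructions for necessity (a Dirac deviation near a point where $V_{xy}<0$, and $F$ concentrated near $x$). The differences are cosmetic. You obtain $q'$ from a second envelope argument, whereas the paper reads $q$ off the identity $p(S^{-1}(y))+q(y)=V(S^{-1}(y),\delta_y)$. Also, sufficiency in part 1 follows directly from Definition~\ref{d:CE}, not Theorem~\ref{t:e}, once you check by the chain rule that \eqref{e:ps}--\eqref{e:qs} give equality on the graph of $S$.
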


The third part shows that, when PAS is a competitive equilibrium, the corresponding utilities are given by the envelope theorem (i.e., $p'(x)=V_x(x,\delta_{S(x)})$ and $q'(y)=V_y(S^{-1}(y),\delta_{y})$). The first part then follows from substituting these utilities in Definition~\ref{d:CE}. 

The second part says that the equilibrium is PAS for any distribution of firm and worker types if and only if $V(x,\delta_y)$ is supermodular in $(x,y)$ and $V(x,\eta)$ is outer convex in $\eta$.\footnote{This condition is a mutual generalization of the conditions that characterize positive assortative matching in optimal transport (supermodularity of $V$; \citealp{becker}) and full disclosure in Bayesian persuasion (outer convexity of $V$; \citealp{dworczak2019}).} Intuitively, segregation by skill is optimal if and only if $V(x,\eta)$ is outer convex; and, if workforces are segregated, positive assortative matching between firms and workforces is optimal if and only if $V(x,\delta_y)$ is supermodular.

\begin{remark}[Sufficiency of Checking 2-Type Workforces]\label{r:moment}
Suppose there exist functions $v:X\times Y\times \R \to \R $ and $m:X\times Y\to \R $ such that $V(x,\eta)=\E_\eta [v(x,y,\E_\eta[m(x,\tilde y)])]$ for all $(x,\eta)\in X\times \Delta(Y)$. This assumption holds throughout the remainder of the paper: in Section~\ref{sec:moment}, $v$ does not depend on $y$; in the linear peer effects model in Section~\ref{sec:uniform}, $m(x,y)=y$; and in Section~\ref{sec:linear}, both of these additional assumptions hold. Then, to check the conditions of Theorem~\ref{t:fs}, it suffices to consider workforces with at most two worker types: \eqref{e:FS} and \eqref{e:oc} hold if and only if they hold for all $(x,\eta)\in X\times \Delta(Y)$ with $|\supp(\eta)|\leq 2$.\footnote{In addition, if \eqref{e:FS} holds strictly for all $(x,\eta)$ with $\eta \neq \delta_{S(x)}$ and $|\supp(\eta)|\leq 2$, then it holds strictly for all $(x,\eta)$ with $\eta\neq \delta_{S(x)}$.}
\end{remark}

\section{Mean-Measurable Production} \label{sec:linear}

We now focus on the tractable, baseline model specification where the productivity of a workforce is determined by the mean skill of the workers it contains: there exists a function $v\in \Lip(X\times Y)$ such that 
\[
V(x,\eta)=v\left (x,\E_\eta [y]\right),\quad \text{for all $(x,\eta)\in X\times \Delta(Y)$}.
\]
We refer to this specification as the \emph{mean-measurable case}.

The mean-measurable case generalizes \cite{SaintPaul} (as well as mean-measurable Bayesian persuasion following \citealp{gentzkow2016rothschild}, \citealp{Kolotilin2017}, and \citealp{dworczak2019}) by allowing multiple firm types. Other special cases include \cite{kremer}, who assume that $v(x,z)=xe^z$ with $x>0$ for all firms, and \cite{boermaetal}, who assume that $v(x,z)=xe^z$ with $x<0$ for all firms. The mean-measurable case also generalizes models with binary worker types, such as \cite{beckermurphy} and \cite{benabou1996}, as well as models where workers contribute different quantities of efficiency units of labor.

This section characterizes competitive equilibrium in the mean-measurable case. Section~\ref{s:CEL} shows how the primal and dual problems defining a competitive equilibrium simplify. Section~\ref{s:matching} characterizes the equilibrium, showing how the sorting and pooling forces determine the pattern of workforce segregation and compression. Section~\ref{s:segregation} characterizes when simple matching patterns arise, such as PAS or full compression. Section~\ref{s:CS} gives comparative statics on the equilibrium matching and prices.

\subsection{Competitive Equilibrium} \label{s:CEL}

In the mean-measurable case, a workforce $\eta \in \Delta(Y)$ can be identified with its mean, $z=\E_\eta [y]\in Y$. We can thus abstract from the details of how individual workers and firms match (returning to this question only in Section~\ref{sec:moment}), and instead simply ask what distribution $H$ of workforce mean skills $z$ forms, and how workforces with different mean skills $z$ match with firms with different productivities $x$. We therefore define an \emph{assignment} as a cdf $J \in \Delta (X \times Y)$ over \emph{(firm, workforce mean)} pairs $(x,z)$ such that its $X$-marginal is $J_X=F$ and its $Y$-marginal $J_Y=H$ can be obtained by splitting the population worker distribution $G$ into a distribution of workforces $\eta$ that induces distribution $H$ of workforce means. By \cite{blackwell1953equivalent}, such a distribution $H$ is feasible if and only if it \emph{majorizes} $G$, denoted $H\succeq  G$, meaning that $\int_{\ul y}^{y} H(\tilde y)\df \tilde y \leq \int_{\ul y}^{y} G(\tilde y)\df \tilde y$ for all $y \in Y$, with equality at $y= \ol y$.\footnote{Equivalently, $H\succeq  G$ if and only if $G$ is a mean-preserving spread of $H$ (or, if and only if $G$ dominates $H$ in the convex order).}  More generally, the same integral inequality defines the majorization order for any pair of increasing functions on $Y$.

The definition of competitive equilibrium in the mean-measurable case can be simplified by simultaneously viewing $q(z)$ as both the wage of a type $z$ worker and the total wage bill associated with hiring a workforce with mean skill $z$. For such a wage schedule $q$ to be consistent with a competitive assignment $J$, it must be convex---as otherwise a workforce $z$ could be obtained more cheaply by hiring a distribution of workers $\eta$ such that $\E_\eta [q(y)]<q(\E_\eta [z])$---and it must satisfy $\E_{J_Y} [q(z)]=\E_G [q(y)]$---so that the total wages paid by firms, $\E_{J_Y} [q(z)]$, equals the total wages received by workers, $\E_G [q(y)]$. We can thus define a competitive equilibrium as follows (where $\Conv(Y)$ is the set of convex functions on $Y$).

\begin{definition}\label{d:CEL}
A \emph{competitive equilibrium} is an assignment $J$ together with prices $(p,q) \in \Lip(X)\times \Lip (Y)$ satisfying
\begin{align*}
&p(x) + q(z) = v(x,z), \quad \text{for all } (x,z)\in \supp(J), \\
&p(x) + q(z) \geq v(x,z), \quad \text{for all } (x,z) \in X\times Y, \\
&q\in \Conv(Y), \quad \text{and} \quad \E_{J_Y} [q(z)]=\E_G [q(z)].
\end{align*}
\end{definition}

This definition coincides with the general one.

\begin{lemma}\label{l:CEeqv}
Suppose that $\gamma\in \Delta(X\times \Delta(Y))$ induces $J\in \Delta(X\times Y)$.\footnote{Formally, $J(\tilde x,\tilde z)=\E_\gamma[\1\{(x,\E_\eta[y])\in [\ul x,\tilde x]\times [\ul y,\tilde z]\}]$ for all $\tilde x\in  X$ and $\tilde z\in  Y$.} Then, $(J;p,q)$ is a competitive equilibrium in the sense of Definition~\ref{d:CEL} if and only if $(\gamma;p,q)$ is a competitive equilibrium in the sense of Definition~\ref{d:CE}.
\end{lemma}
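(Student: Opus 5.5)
The plan is to prove the two implications separately. The key tool is the observation that, when $V(x,\eta)=v(x,\E_\eta[y])$, the second condition of Definition~\ref{d:CE} can be rewritten as a condition on the convex envelope of $q$. For $z\in Y$, define $\breve q(z)=\min\{\E_\eta[q(y)]:\E_\eta[y]=z\}$. This is the largest convex function below $q$, and the minimum is attained by two-point (or Dirac) measures. Definition~\ref{d:CE}'s inequality holds for all $(x,\eta)$ iff $p(x)+\breve q(z)\ge v(x,z)$ for all $(x,z)$.

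\emph{From Definition~\ref{d:CE} to Definition~\ref{d:CEL}.} I will show that $q$ is convex. Fix $z$ and take $x$ such that $(x,z)\in\supp(J)$; by the first condition, $p(x)+q(z)=v(x,z)$ at the Dirac workforce $\delta_z$, since that pair is also feasible. More directly, I will use the equality on the support together with the inequality at Dirac measures and at the support workforce. For $(x,\eta)\in\supp(\gamma)$ with mean $z$, we have
\[
v(x,z)=p(x)+\E_\eta[q]\ge p(x)+\breve q(z)\ge v(x,z),
\]
so $\E_\eta[q]=\breve q(z)$. Taking expectations over $\gamma$ and using $\E_\gamma[\eta]=G$ gives $\E_G[q]=\E_{J_Y}[\breve q]$.

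The difficulty is that $q$ itself need not be convex off the support. I therefore replace $q$ by $\breve q$, or rather argue that on $\supp(G)=Y$ (full support), convexity is forced. Suppose $q>\breve q$ at some $y_0$. Since $\breve q\le q$ everywhere and $\E_G[q]=\E_{J_Y}[\breve q]\le \E_G[\breve q]$ (Jensen, because $H\succeq G$ and $\breve q$ is convex), we get $\E_G[q-\breve q]\le 0$. By continuity and the positive density $g$, this forces $q=\breve q$ everywhere, so $q$ is convex. Consequently $\E_{J_Y}[q]=\E_G[q]$, and both conditions of Definition~\ref{d:CEL} follow from the rewritten inequality. On $\supp(J)$, the equality comes from the displayed chain, using that the support of $J$ is the image of $\supp(\gamma)$ under a continuous map (up to closure, and the equality passes to the closure by continuity).

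\emph{From Definition~\ref{d:CEL} to Definition~\ref{d:CE}.} If $q$ is convex, then $\breve q=q$, so the inequality holds for all $(x,\eta)$ because $\E_\eta[q]\ge q(\E_\eta y)$. For the equality on $\supp(\gamma)$, I integrate the nonnegative slack $p(x)+\E_\eta[q]-v(x,\E_\eta y)$ against $\gamma$. The result is
\[
\E_F[p]+\E_G[q]-\E_\gamma[v]=\E_J[p(x)+q(z)-v(x,z)]+\bigl(\E_G[q]-\E_{J_Y}[q]\bigr)=0,
\]
using the equality on $\supp(J)$ and the wage-bill condition. A continuous nonnegative function with zero integral vanishes on the support. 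Here I must check continuity of $(x,\eta)\mapsto\E_\eta[q]$ in the Kantorovich--Rubinstein topology, which holds since $q$ is Lipschitz.

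The main obstacle is the convexity step in the first direction: it relies on full support of $G$ and on continuity of $q$, and the support-closure details need care.
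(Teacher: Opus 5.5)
Your proposal is correct and follows essentially the paper's argument in the proof of Lemma~\ref{l:CEeqvg}: your $\breve q$ is the paper's $\check q$ from \eqref{e:qc}, and the chain $\E_G[q]=\E_{J_Y}[\breve q]\le \E_G[\breve q]\le \E_G[q]$, combined with continuity and full support of $G$, is how the paper forces $\check q=q$; the converse direction integrates the nonnegative slack against $\gamma$ as the paper does. Two small clean-ups: drop the abandoned opening claim that $p(x)+q(z)=v(x,z)$ holds at $(x,\delta_z)$ ``since that pair is also feasible'' (it is unjustified, though you never use it), and state explicitly that, as in the paper, you assume $\gamma$ is an assignment, i.e.\ $\gamma_X=\mu$ and $\E_\gamma[\eta]=\nu$.
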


The primal and dual problems simplify analogously: the planner's problem is
\begin{equation}\tag{P$_L$} \label{PL}
\begin{gathered}
\max_{J\in \Delta(X\times Y)}\E_J [v(x,z)],\\
\text{s.t.} \quad J_X=F \quad\text{and}\quad J_Y \succeq  G,
\end{gathered}
\end{equation}
and the dual problem is
\begin{equation}\tag{D$_L$} \label{DL}
\begin{gathered}
\min_{(p,q)\in \Lip(X)\times\Lip(Y)}\E_F [p(x)] +\E_G [q(z)]\\
\text{s.t.} \quad p(x)+ q(z) \geq v(x,z),\quad  \text{for all }(x,z)\in X\times Y, \\
q \in \Conv(Y).
\end{gathered}
\end{equation}

Theorem~\ref{t:e} specializes to the mean-measurable case as follows.\footnote{Theorem~\ref{t:el} simplifies Theorem~\ref{t:e} analogously to how \cite{dworczak2019} simplifies \cite{kamenica2011} in the mean-measurable case in Bayesian persuasion.}

\begin{theorem}\label{t:el} A competitive equilibrium exists. Moreover, $(J;p,q)$ is a competitive equilibrium if and only if $J$ solves the planner's problem~\eqref{PL} and $(p,q)$ solves the dual problem~\eqref{DL}. 
\end{theorem}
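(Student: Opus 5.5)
The plan is to deduce Theorem~\ref{t:el} from Theorem~\ref{t:e} and Lemma~\ref{l:CEeqv}, with $V(x,\eta)=v(x,\E_\eta[y])$. First I check that this $V$ is Lipschitz on $X\times\Delta(Y)$. The map $\eta\mapsto \E_\eta[y]$ is $1$-Lipschitz in the Kantorovich--Rubinstein norm, since $y\mapsto y$ is $1$-Lipschitz. Because $v\in\Lip(X\times Y)$, the composition $V$ is Lipschitz. Theorem~\ref{t:e} therefore applies and gives a competitive equilibrium $(\gamma;p,q)$ in the sense of Definition~\ref{d:CE}. Let $J$ be the assignment induced by $\gamma$, as in the footnote to Lemma~\ref{l:CEeqv}. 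Then $J_X=F$, and $J_Y\succeq G$ because $J_Y$ is the distribution of means of a splitting of $G$, by \cite{blackwell1953equivalent}. By Lemma~\ref{l:CEeqv}, $(J;p,q)$ is a competitive equilibrium in the sense of Definition~\ref{d:CEL}. This proves existence.

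For the characterization I will show two things: values are preserved, $\max\eqref{P}=\max\eqref{PL}$ and $\min\eqref{D}=\min\eqref{DL}$, and the feasible sets correspond.

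On the primal side, any $\gamma$ feasible for \eqref{P} induces a $J$ feasible for \eqref{PL} with $\E_\gamma[V]=\E_J[v]$. Conversely, take any $J$ with $J_X=F$ and $J_Y\succeq G$. Blackwell gives a kernel $z\mapsto\eta_z$ with $\E_{\eta_z}[y]=z$ and $\int\eta_z\,\df J_Y=G$. The pushforward of $J$ under $(x,z)\mapsto(x,\eta_z)$ is then a feasible $\gamma$ that induces $J$ and has the same value. Hence the two primal problems have equal value, and $J$ is optimal for \eqref{PL} if and only if some, equivalently every, inducing $\gamma$ is optimal for \eqref{P}.

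On the dual side, suppose $(p,q)$ is feasible for \eqref{DL}. Convexity of $q$ and Jensen give $p(x)+\E_\eta[q]\ge p(x)+q(\E_\eta y)\ge v(x,\E_\eta y)$, so $(p,q)$ is feasible for \eqref{D} with the same objective. Conversely, suppose $(p,q)$ is feasible for \eqref{D}. Replace $q$ by its convex envelope $\breve q(z)=\min\{\E_\eta[q]:\E_\eta[y]=z\}$. This function is convex and Lipschitz, satisfies $\breve q\le q$, and keeps $(p,\breve q)$ feasible for \eqref{DL}: for any $z$, choose $\eta$ attaining the minimum, so that $p(x)+\breve q(z)=p(x)+\E_\eta[q]\ge v(x,z)$. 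The objective weakly falls, so $\min\eqref{DL}\le\min\eqref{D}$. Combined with the embedding in the other direction, the two dual values are equal, and both equal the primal value by Theorem~\ref{t:e}.

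The equivalence then follows from complementary slackness. Suppose $J$ solves \eqref{PL} and $(p,q)$ solves \eqref{DL}. Then
\[
\E_F[p]+\E_G[q]=\E_J[v]\le \E_J[p(x)+q(z)]=\E_F[p]+\E_{J_Y}[q]\le \E_F[p]+\E_G[q],
\]
where the last inequality is convexity of $q$ together with $J_Y\succeq G$. Both inequalities must therefore hold with equality. This yields $p+q=v$ on $\supp(J)$ (using continuity) and $\E_{J_Y}[q]=\E_G[q]$, which are exactly the conditions of Definition~\ref{d:CEL}. Conversely, suppose $(J;p,q)$ satisfies Definition~\ref{d:CEL}. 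Then $(p,q)$ is feasible for \eqref{DL}, $J$ is feasible for \eqref{PL}, and the same chain holds with equalities, so the primal and dual values coincide. By weak duality, each is optimal.

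The main obstacle is measurability. The Blackwell splitting kernel $z\mapsto\eta_z$ must be chosen measurably, and the convex envelope $\breve q$ must be attained and Lipschitz. I would handle both by compactness of $Y$, and use a measurable selection from the disintegration of a martingale coupling of $(J_Y,G)$ to build the kernel.
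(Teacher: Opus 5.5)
Your proof is correct and follows the paper's route. The Lipschitz transfer, the Blackwell splitting, and the convexification $q\mapsto\breve q$ that you carry out inline are exactly Lemmas~\ref{l:Lip}, \ref{l:Blackwell}, and \ref{l:dual}, combined with Theorem~\ref{t:e}. The only difference is packaging: you finish with a direct weak-duality/complementary-slackness chain inside \eqref{PL}--\eqref{DL} instead of going through Lemma~\ref{l:CEeqvg}, which does the same computation. One justification needs fixing: Lipschitz continuity of $\breve q$ does not follow from compactness alone. It follows from the affine minorants $q(\ul y)-L(y-\ul y)$ and $q(\ol y)+L(y-\ol y)$, which bound the endpoint slopes of the convex function $\breve q$.
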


Theorem~\ref{t:el} lets us characterize the competitive equilibrium assignment and utilities via the primal and dual problems~\eqref{PL} and \eqref{DL}.

\subsection{Equilibrium Characterization} \label{s:matching}

For the remainder of this section, we assume that $v$ is twice continuously differentiable, with strictly increasing differences between firm productivity and workforce skill ($v_{xz} (x,z) > 0$ for all $(x,z)$) and strictly decreasing returns to workforce skill ($v_{zz}(x,z)<0$ for all $(x,z)$).\footnote{The increasing returns to skill case, $v_{zz}(x,z)>0$ for all $(x,z)$, is also economically relevant---e.g., \cite{kremer} assumes increasing returns---and we characterize the equilibrium in this case in Remark~\ref{r:IRS}.}

By the same logic as in one-to-one matching, strictly increasing differences implies positive assortative matching between firms and workforces: each firm $x$ hires workforce $z=H^{-1}(F(x))=T(x)$, where $H=J_Y$ is the endogenous distribution of workforce means, and $T$ is the quantile matching function from $F$ to $H$. Thus, \eqref{PL} simplifies to
\begin{align*}
\max_{H\succeq  G}\int_{X} v(x,H^{-1}(F(x)))\, F(\df x),
\end{align*}
or, equivalently, by the equivalence of $H\succeq  G$ and $H^{-1}\preceq  G^{-1} $ (\citealp{SS}, Theorem 3.A.5),
\begin{align} 
&\max_{T:X \to Y}\int_{X} v(x,T(x))\, F(\df x), \tag{M}  \label{M} \\
\text{s.t.} \quad & T \text{ is increasing,} \quad \text{and} \notag \\
&\int_{\ul x}^{x} T(\tilde x)\, F( \df \tilde x) \geq \int_{\ul x}^{x} S(\tilde x)\, F(\df \tilde x), \quad \text{for all } x\in X, \quad \text{with equality at } x=\ol x, \tag{F} \label{F}
\end{align}
where, again, $S=G^{-1}\circ F$ is the segregation matching that arises when $H=G$.\footnote{In addition, $H^{-1}$ denotes the generalized inverse, although we will see that in equilibrium $T$ is continuous and strictly increasing, so it coincides with the ordinary inverse.} 

Finding a competitive equilibrium thus simplifies to finding the increasing matching function $T$ that maximizes $\int v(x,T(x))\, F(\df x)$, such that the resulting distribution of workforce means $F \circ T^{-1}$ majorizes $G$. In particular, if $F$ is uniform---as is without loss up to a change of variables---then a matching $T$ is feasible if and only if $T \preceq S$. As we have assumed that $v_{zz}<0$, the objective is concave, so we have a concave maximization problem with a majorization constraint.

Given an increasing matching function $T$ satisfying the feasibility constraint~\eqref{F}, we can partition the set of firm types $X$ into the \emph{segregation} and \emph{compression} sets
\begin{align*}
X^S &= \left\{ x\in X: \int_{\ul x}^{x} T(\tilde x) F( \df \tilde x) = \int_{\ul x}^{x} S(\tilde x) F(\df \tilde x) \right\}, \quad \text{and} \\
X^C &= \left\{ x\in X: \int_{\ul x}^{x} T(\tilde x) F( \df \tilde x) > \int_{\ul x}^{x} S(\tilde x) F(\df \tilde x) \right\}=X \backslash X^S.
\end{align*}
As we show in Lemma~\ref{l:I} in Appendix~\ref{a:proofs}, if $x \in X^S$, then all workers of types $\tilde y<S(x)$ are employed by firms of types $\tilde x < x$, and all workers of types $\tilde y > S(x)$ are employed by firms of types $\tilde x >x$. Consequently, if $T$ is continuous at $x\in (\ul x,\ol x)$ (which will hold for all $x$ in equilibrium), then $T(x)=S(x)$. If instead $x \in X^C$, then some workers of types $\tilde y<S(x)$ are employed by firms of types $\tilde x > x$, and some workers of types $\tilde y > S(x)$ are employed by firms of types $\tilde x <x$. Note that $X^C$ is open and thus is a union of at most countably many disjoint open intervals, $X^C=\bigcup_i (\ul x_i,\ol x_i)$.\footnote{In the optimal transport literature, the intervals $(\ul x_i,\ol x_i)$ are called \emph{irreducible} with respect to $(F \circ T^{-1},G)$ \citep{beiglbock2016}.}

The following theorem characterizes the equilibrium matching $T$ and prices $(p,q)$.

\begin{theorem}\label{t:eqm} The following hold:
\begin{enumerate}
\item There is a unique equilibrium matching $T$, which is continuous and strictly increasing. In particular, more productive firms hire strictly more skilled workforces.
\item The equilibrium matching is the unique increasing matching $T$ such that the marginal value of skill, $v_z(x,T(x))$, is increasing and is constant on each compression interval $(\ul x_i,\ol x_i)$.
\item Functions $(p,q)$ are corresponding equilibrium prices if and only if there exists $c\in \R$ such that
\begin{align}
p(x)&=\int_{\ul x}^x v_x(\tilde x,T(\tilde x))\, \df \tilde x +c,\quad \text{for all $x\in X$,}\label{e:p}\\
q(y)&=
\begin{cases}\label{e:q}
v(\ul x,T(\ul x))+\int_{T(\ul x)}^y v_z(T^{-1}(\tilde y),\tilde y)\,\df \tilde y-c, &\text{if } y\in [T(\ul x),T(\ol x)],\\
q(T(\ul x))-(T(\ul x)-y)v_z(\ul x,T(\ul x)), &\text{if } y\in [\ul y,T(\ul x)),\\
q(T(\ol x))+(y-T(\ol x)) v_z(\ol x,T(\ol x)), &\text{if } y\in (T(\ol x),\ol y].
\end{cases}
\end{align}
\end{enumerate}
\end{theorem}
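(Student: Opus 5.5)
The plan is to work with the reduced problem \eqref{M} subject to \eqref{F} and use Theorem~\ref{t:el} to move between the primal and dual sides. After the change of variables $u=F(x)$ we may take $F$ uniform. Then the feasible set is the set of increasing $T$ with $T\preceq S$, and the objective $\int v(x,T(x))\,F(\df x)$ is strictly concave in $T$ because $v_{zz}<0$.

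\textbf{Existence and uniqueness.} The feasible set is convex. By Helly's selection theorem it is compact in $L^1$, and the objective is continuous there. Strict concavity therefore gives a unique optimizer $T$ almost everywhere. Since $v_{xz}>0$, every optimal $J$ is positive assortative between $x$ and $z$, so $J$ is pinned down by $T$ and the equilibrium matching is unique.

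\textbf{Optimality conditions (part 2).} I will show that an optimal $T$ has $v_z(x,T(x))$ increasing and constant on each compression interval; this is the main step. I would argue by perturbation.

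Suppose first that $x_1<x_2$ lie in the same component $(\ul x_i,\ol x_i)$ of $X^C$ and $v_z(x_1,T(x_1))\neq v_z(x_2,T(x_2))$. The constraint \eqref{F} is slack on a neighbourhood of $[x_1,x_2]$, so we can move a small amount of mean skill between small neighbourhoods of $x_1$ and $x_2$ while preserving $\int_{\ul x}^{\ol x} T\,\df F$. Moving skill from $x_1$ to $x_2$ changes \eqref{F} by a term of one sign; moving it from $x_2$ to $x_1$ changes it by the opposite sign. Either move stays feasible provided the slack exceeds the perturbation, so slackness rules out both a strict increase and a strict decrease in the marginal value.

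Next suppose $v_z(x,T(x))$ is decreasing somewhere, say $v_z(x_1,T(x_1))>v_z(x_2,T(x_2))$ with $x_1<x_2$. Transferring skill from near $x_2$ to near $x_1$ raises the left side of \eqref{F}, so it is always feasible and raises value. Monotonicity of $T$ must be preserved, which requires averaging (ironing) rather than a pure transfer; I expect this bookkeeping to be the main technical obstacle.

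\textbf{Continuity.} A jump of $T$ at $x$ would make $v_z(\cdot,T(\cdot))$ drop there, since $v_{zz}<0$ and $v_z$ is continuous in $x$. This contradicts the monotonicity of $v_z(x,T(x))$. So $T$ is continuous. Strict increase follows from $v_{xz}>0$: on a compression interval, constant $v_z(x,T(x))$ with $x$ rising forces $T$ to rise. On $X^S$, $T=S$ is strictly increasing because $g>0$.

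\textbf{Sufficiency and prices (parts 2 and 3).} Given a $T$ with these properties, construct $(p,q)$ by \eqref{e:p}--\eqref{e:q} and verify the conditions of Definition~\ref{d:CEL}.

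1. $q'(y)=v_z(T^{-1}(y),y)$ is increasing, and it is extended linearly outside $[T(\ul x),T(\ol x)]$, so $q$ is convex.

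2. $p(x)+q(z)\ge v(x,z)$, with equality at $z=T(x)$, follows from the standard envelope and single-crossing argument using $v_{xz}>0$. Here $p(x)=\max_z\{v(x,z)-q(z)\}$, with first-order condition $v_z(x,T(x))=q'(T(x))$.

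3. For $\E_{H}[q]=\E_G[q]$, note that $q$ is affine on each $T((\ul x_i,\ol x_i))$, where $q'$ is constant. On an irreducible interval, $H$ and $G$ agree in the sense of \eqref{F}-equality at the endpoints and have equal means there. Integrating by parts then gives $\int q\,\df(H-G)=\int q''\,(\text{difference of integrated cdfs})=0$, because $q''=0$ wherever the integrated cdfs differ.

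Optimality of both $T$ and $(p,q)$ then follows from Theorem~\ref{t:el}, and so does uniqueness of the $T$ characterized in part 2.

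\textbf{Uniqueness of prices.} Conversely, any dual solution must satisfy the envelope formula $p'(x)=v_x(x,T(x))$ almost everywhere, giving \eqref{e:p}. It must also satisfy $q'=v_z$ on the range of $T$, where the supporting condition forces $q$ to touch $v(x,\cdot)-p(x)$ tangentially. Outside the range of $T$, minimality of $\E_G[q]$ subject to convexity and the constraint $q\ge\max_x\{v(x,\cdot)-p(x)\}$ forces the linear extensions in \eqref{e:q}. Since $g>0$, every piece of $q$ on $Y$ is charged by $G$, so any slack there would strictly raise the objective.
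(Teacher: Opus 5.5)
There is a genuine gap in the necessity half of part 2. You call this the main step, and you then derive continuity and strict monotonicity (part 1) from it, so the gap propagates.

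Your perturbations, which move mean skill from a small neighbourhood of $x_1$ to one of $x_2$ (or back), do not preserve monotonicity of $T$. Lowering $T$ on a short interval creates a downward jump at its left end, and raising it creates one at its right end, unless $T$ happens to jump there. This is not cosmetic. For a non-monotone map, \eqref{F} no longer characterizes feasibility: the induced distribution of workforce means majorizes $G$ if and only if the \emph{increasing rearrangement} of the map satisfies \eqref{F}, and rearranging lowers the partial integrals. You flag the issue for the compression move and leave it open. The same problem affects your constancy argument on compression intervals, where you infer feasibility from slackness alone.

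There is also a circularity. Comparing first-order effects through ``small neighbourhoods'' of $x_1,x_2$ presumes that $x\mapsto v_z(x,T(x))$ is continuous there and that $T$ is not flat there. At that stage, however, $T$ is only an a.e.-determined increasing function, and continuity and strict increase are exactly what you deduce afterwards from part 2.

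The paper sidesteps all of this by working on the dual side from the start, and you already have the ingredients in your last two steps. Take a solution $(p,q)$ of \eqref{DL}, which exists by Theorem~\ref{t:el}.
\begin{enumerate}
\item A midpoint argument using convexity of $q$, $v_{zz}<0$ and $v_{xz}>0$ shows that the contact set $\{(x,z):p(x)+q(z)=v(x,z)\}$ is the graph of a single continuous increasing function (Lemma~\ref{l:uniq}). Since this set contains $\supp(J)$ for every equilibrium $J$, you get uniqueness and continuity at once.
\item The tangency condition you invoke for price uniqueness gives $q'(y)=v_z(T^{-1}(y),y)$ on the range of $T$. Convexity of $q$ then makes $v_z(x,T(x))$ increasing (Lemma~\ref{l:pq}).
\item Complementary slackness $\E_H[q]=\E_G[q]$, via the same integration-by-parts identity you use for sufficiency, forces $q$ to be affine on each component of $Y^C$ (Lemma~\ref{l:DM}). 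For interior $x$, $x\in X^C$ iff $T(x)\in Y^C$ (Lemma~\ref{l:IK}), so $v_z(x,T(x))=q'(T(x))$ is constant on compression intervals.
\item Strict increase then follows as you sketch. A flat piece of $T$ cannot meet a compression interval, so it lies in $X^S$, where $T=S$, which would force an atom of $G$.
\end{enumerate}
Your sufficiency construction and your price-uniqueness argument are fine and essentially coincide with Lemmas~\ref{l:converse} and \ref{l:pq}.
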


The intuition is as follows. For part 1, the equilibrium matching is unique, because it solves \eqref{M}, which has a strictly concave objective and a convex constraint set. It is increasing, because $v_{xz}>0$ implies positive assortativity between firms and workforces. It is strictly increasing, because if $x<x'$ but $T(x)=T(x')$, then $v_z(x,T(x)) < v_z(x',T(x'))$ by $v_{xz}>0$, so the assignment is \emph{under-sorted}: marginally segregating workforces $T(x)$ and $T(x')$ (e.g., swapping a few high-skill firm $x$ employees for a few low-skill firm $x'$ employees) would increase total value. It is continuous, because if $T$ jumps up at $x$ then $v_z(x,T(x))$ jumps down at $x$ by $v_{zz}<0$, but then marginally compressing workforces $T(x-\varepsilon)$ and $T(x+\varepsilon)$ would increase total value.

For part 2, $v_z(x,T(x))$ is increasing, because if $x<x'$ but $v_z(x,T(x)) > v_z(x',T(x'))$ then the assignment is \emph{over-sorted}: marginally compressing workforces $T(x)$ and $T(x')$ would increase total value. It is constant on compression intervals, because if $x$ and $x'$ are in the same compression interval but $v_z(x,T(x)) < v_z(x',T(x'))$, then marginally segregating workforces $T(x)$ and $T(x')$ would increase total value. Conversely, if $v_z(x,T(x))$ is increasing and is constant on compression intervals, then it is optimal, because the matching cannot be improved on either compression intervals (where $v_z(x,T(x))$ is constant, so further compression or segregation are both suboptimal) or segregation intervals (where $v_z(x,T(x))$ is increasing and workforces are already segregated, so compression is suboptimal and further segregation is infeasible).

For part 3, equation~\eqref{e:p} and the middle equation in \eqref{e:q} follow from the envelope theorem, which implies that $p'(x)=v_x(x,T(x))$ and $q'(y)=v_z(T^{-1}(y),y)$.\footnote{The second and third equations in \eqref{e:q} follow from linearly extending the first equation to $[\ul y,T(\ul x))$ and $(T(\ol x),\ol y]$.} Note that the wage schedule $q$ is linear for workers employed by firms in compression intervals (i.e., workers $y$ where $T^{-1}(y) \in X^C$) and is convex for workers employed by firms in segregation intervals (workers $y$ where $T^{-1}(y) \in X^S$). Given the equilibrium matching function, equations~\eqref{e:p}--\eqref{e:q} are the same as in one-to-one matching \citep{sattinger93}.\footnote{A subtlety is that, even though in our model type $y$ workers may be employed by a range of firms in equilibrium---not just firms with productivity $T^{-1}(y)$---their marginal return to skill is still uniquely determined as $v_z(T^{-1}(y),y)$.} However, in one-to-one matching, the matching function is always $S$ (under $v_{xz}>0$), while in our model the matching function---and hence profits and wages---also depends on the type distributions $F$ and $G$. This feature will have important implications for our comparative statics results in Section~\ref{s:CS}.

\begin{remark}[The Convex-Loss Special Case]
An especially tractable special case of the model is the \emph{convex-loss case}, where $v(x,z)=a(z-x)+b(x)+cz$ with a strictly concave function $a$ (e.g., $v(x,z)=xz-z^2/2$).\footnote{This case has received attention in the weak optimal transport literature. In particular, parts 1 and 3 of Corollary~\ref{c:eqm} follow from \cite{B-VBP}, while part 2 is novel. It also covers \cite{boermaetal} (as $|x|e^{z}=e^{z-\tilde x}$, with $\tilde x=-\log |x|$).} In this case, $v_z(x,T(x))$ is increasing in $x$---so the matching function $T$ avoids over-sorting---if and only if $T'(x)\leq 1$: that is, iff $T$ is $1$-Lipschitz. The equilibrium matching is the same for any production function $v$ in the convex-loss case. It can be characterized in three equivalent ways. First, $T$ is the unique increasing $1$-Lipschitz matching with slope $1$ on each compression interval. Second, $T$ results from ironing the function $x-S(x)$ (cf.\ \citealp{myerson}, Section 6; \citealp{KMS}, Section 3.2.1): more precisely, defining $E:[0,1]\to \R $ by $E(t)=\int_0^{t}(F^{-1}(\tilde t)-G^{-1}(\tilde t))\, \df \tilde t$ for all $t\in [0,1]$, letting $\ol E:[0,1]\to \R $ be the convex envelope of $E$ (i.e., the largest convex function that lies below $E$), and letting $\ol e=\ol E'$ be its continuous derivative, we have $T(x)=x-\ol e(F(x))$ for all $x\in X$. Third, $T$  is the most segregated $1$-Lipschitz matching. In sum, letting $\Tau$ be the set of increasing, $1$-Lipschitz maps $T:X\to Y$ satisfying \eqref{F}, we have:
\begin{corollary}\label{c:eqm}
In the convex-loss case, the unique equilibrium matching $T$ has the following three equivalent characterizations:
\begin{enumerate}
\item $T\in \Tau$ such that $T'(x)=1$ for all $x\in X^C$.
\item $T(x)=x-\ol e(F(x))$ for all $x\in X$.
\item $T\in \Tau$ such that $T\circ F^{-1}\succeq  \tilde T\circ F^{-1} $  for all $\tilde T\in \Tau$.
\end{enumerate}
\end{corollary}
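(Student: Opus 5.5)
The strategy is to start from part 2 of Theorem~\ref{t:eqm} and translate it into the convex-loss case; parts 2 and 3 then follow by showing they describe the same object. With $v(x,z)=a(z-x)+b(x)+cz$ we have $v_z(x,T(x))=a'(T(x)-x)+c$. Since $a'$ is strictly decreasing, $v_z(x,T(x))$ is increasing in $x$ if and only if $T(x)-x$ is decreasing, i.e.\ $T$ is $1$-Lipschitz (given that $T$ is increasing). Likewise, $v_z$ is constant on a compression interval if and only if $T(x)-x$ is constant there, i.e.\ $T'=1$ on $X^C$. Part 2 of Theorem~\ref{t:eqm} states that the equilibrium $T$ is the unique feasible increasing matching with these two properties, and this is exactly characterization~1. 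It is independent of $a,b,c$.

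For characterization 2, I pass to quantile coordinates $t=F(x)$. Set $\tau(t)=T(F^{-1}(t))$ and $\sigma(t)=G^{-1}(t)=S(F^{-1}(t))$. Define $D(t)=F^{-1}(t)-\tau(t)$, which is the decreasing-in-$x$ function $x-T(x)$ read in $t$. Then constraint \eqref{F} becomes
\[
\int_0^t D \le \int_0^t (F^{-1}-G^{-1}) = E(t),
\]
with equality at $t=1$. Write $\hat E(t)=\int_0^t D$. The $1$-Lipschitz property says $D$ is increasing in $x$, hence in $t$, so $\hat E$ is convex. Combined with $\hat E\le E$ and $\hat E(0)=E(0)=0$, this gives $\hat E\le \ol E$. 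On $X^C$ we have $T'=1$, so $D$ is constant there and $\hat E$ is affine on each interval where $\hat E<E$; on $X^S$, $\hat E=E$. A convex minorant of $E$ that agrees with $E$ at the endpoints and is affine wherever it lies strictly below $E$ must be the convex envelope $\ol E$. The key step is to check this carefully: on a maximal open interval where $\hat E<E$, $\hat E$ is the chord between two points where it touches $E$, so it lies above $\ol E$ there, and hence equals $\ol E$. It follows that $D=\ol e$ and $T(x)=x-\ol e(F(x))$. Conversely, this formula satisfies characterization~1, which gives uniqueness. Continuity of $\ol e$ follows from the positive continuous densities, which make $E$ $C^1$.

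For characterization 3, any $\tilde T\in\Tau$ gives a convex $\tilde E=\int_0^t(F^{-1}-\tilde\tau)\le E$, so $\tilde E\le \ol E$ with equality at $1$. Hence
\[
\int_0^t\tilde\tau=\int_0^tF^{-1}-\tilde E\ \ge\ \int_0^tF^{-1}-\ol E=\int_0^t\tau,
\]
with equality at $t=1$. This is precisely $\tau\succeq\tilde\tau$ in the majorization order for increasing functions defined in the paper (lower partial integrals smaller). So $T$ is majorization-maximal in $\Tau$. Uniqueness of such a maximal element holds because two maximal elements majorize each other, so their partial integrals agree and the functions coincide almost everywhere, hence everywhere by continuity.

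I expect the main obstacle to be the envelope identification in step 2, specifically the endpoint behavior of $X^C$ intervals that touch $\ul x$ or $\ol x$. It requires $\hat E(1)=E(1)$ and $\hat E(0)=0$, so that every maximal interval where $\hat E<E$ has contact points at both ends. I would handle this by noting that $0$ and $1$ always lie in the closure of the equality set.
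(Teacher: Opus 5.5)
Your proposal is correct. Characterizations 1 and 3 are handled as in the paper: part 1 is read off from part 2 of Theorem~\ref{t:eqm} via $v_z(x,T(x))=a'(T(x)-x)+c$, and part 3 uses that any $\tilde T\in\Tau$ yields a convex minorant $\tilde E\le E$, hence $\tilde E\le \ol E$.

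The genuine difference is characterization 2, which you run in the opposite direction from the paper.

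\emph{The paper's route.} It takes the formula $T(x)=x-\ol e(F(x))$ as given. It checks that this $T$ is increasing and $1$-Lipschitz and satisfies \eqref{F}. It shows the slope is $1$ exactly where $E(F(x))>\ol E(F(x))$, i.e.\ on $X^C$. It then invokes the uniqueness statement in Theorem~\ref{t:eqm}.

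\emph{Your route.} You start from an arbitrary $T\in\Tau$ with $T'=1$ on $X^C$. You show that $\hat E(t)=\int_0^t (F^{-1}-T\circ F^{-1})$ is a convex minorant of $E$ that touches $E$ at $0$, at $1$, and on $F(X^S)$, and is affine on each component of $F(X^C)$. The chord argument then forces $\hat E=\ol E$.

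\emph{What each buys.} Your direction proves uniqueness within characterization 1 directly, without leaning on the uniqueness part of Theorem~\ref{t:eqm}; you still need the theorem for existence. It also gives regularity for free. Since $D$ is continuous, $\ol E=\hat E$ is $C^1$ with $\ol E'=D$, so you do not actually need your separate claim that the convex envelope of a $C^1$ function is $C^1$. The paper's direction avoids the contact-point and chord bookkeeping, but it relies on the asserted continuity of $\ol e$ and on Theorem~\ref{t:eqm} for uniqueness.

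One slip to fix: $D(t)=F^{-1}(t)-\tau(t)$ is the \emph{increasing}-in-$x$ function $x-T(x)$, which is what $1$-Lipschitz gives. Your next sentence uses this correctly, so the earlier ``decreasing-in-$x$'' is just a typo.
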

\end{remark}

\begin{remark}[Concave Optimization with Majorization Constraints]
Much recent interest in mechanism and information design centers on linear optimization problems over distributions satisfying majorization constraint \citep{bergemann2026information,KMS26}. Solutions to such problems lie at the extreme points of the constraint set, which were characterized by \cite{KMS}. In contrast, our problem~\eqref{M} is a concave optimization problem over distributions satisfying majorization constraints. Theorem~\ref{t:eqm} shows that the solution everywhere either satisfies the majorization constraint with equality ($x \in X^S$) or satisfies a first-order condition ($\frac{\df}{\df x}v_z(x,T(x))=0$). Investigating the generality of this property is a possible avenue for future research.
\end{remark}

\subsection{Conditions for Simple Matching Patterns} \label{s:segregation}

Theorem~\ref{t:eqm} implies transparent sufficient conditions for simple matching patterns. We say that the matching function $T$ is \emph{PAS} if $X^S=X$ (or, equivalently, $T=S$), \emph{full compression} if $X^C=(\ul x,\ol x)$, \emph{upper compression} if there exists $x^*$ such that $X^C=(x^*, \ol x)$, and \emph{lower compression} if there exists $x^*$ such that $X^C=(\ul x, x^*)$. Roughly speaking, under PAS, all workforces are segregated; under full compression, all workforces are mixed; under upper compression, low-skill workforces are segregated and high-skill workforces are mixed; and under lower compression, low-skill workforces are mixed and high-skill workforces are segregated.\footnote{This sentence is imprecise because a firm in a compression interval need not actually hire a mixed workforce in the equilibrium assignment (see, e.g., Figure~\ref{f:examples5}(a)). However, such a firm is always willing to hire a mixed workforce at the equilibrium wages.}

\begin{theorem}\label{c:psm} The equilibrium matching is:
\begin{enumerate}
\item PAS if and only if $v_z(x,S(x))$ is increasing on $X$.
\item Full compression if $v_z(x,S(x))$ is strictly decreasing on $X$.
\item Upper compression if $v_z(x,S(x))$ is strictly quasiconcave on $X$.
\item Lower compression if $v_z(x,S(x))$ is strictly quasiconvex on $X$.
\end{enumerate}
\end{theorem}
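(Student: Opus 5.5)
The whole argument runs through the characterization in part~2 of Theorem~\ref{t:eqm}: the equilibrium matching is the unique increasing feasible $T$ such that $\psi_T(x):=v_z(x,T(x))$ is increasing on $X$ and constant on each compression interval $(\ul x_i,\ol x_i)$. On $X^S$, Lemma~\ref{l:I} together with continuity of $T$ gives $T=S$. Throughout, write $\sigma(x):=v_z(x,S(x))$.

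\textbf{Part 1.} If $\sigma$ is increasing, then $T=S$ is feasible with $X^C=\emptyset$. Hence $\psi_S=\sigma$ is increasing, and the constancy requirement is vacuous, so uniqueness gives that PAS is the equilibrium. Conversely, if PAS is the equilibrium, then $T=S$ and $\psi_T=\sigma$ must be increasing.

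\textbf{Part 2 (full compression).} Suppose $\sigma$ is strictly decreasing. First I show that no segregation interval of positive length exists. On such an interval $T=S$, so $\psi_T=\sigma$ there, which would be strictly decreasing and contradict monotonicity of $\psi_T$.

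Next I show that two distinct compression intervals cannot be separated by a point of $X^S$. Suppose $\ol x_i\le \ul x_j$ with $\ol x_i, \ul x_j\in X^S$. Then $\psi_T(\ol x_i)=\sigma(\ol x_i)\ge \sigma(\ul x_j)=\psi_T(\ul x_j)$. Combined with increasingness of $\psi_T$, this forces $\psi_T$ to be constant on $[\ol x_i,\ul x_j]$, and hence $\sigma(\ol x_i)=\sigma(\ul x_j)$, which gives $\ol x_i=\ul x_j$ by strictness. At such a boundary point, constancy of $\psi$ on each adjacent interval and continuity imply that $\psi_T$ is constant across the whole union. The hard part is then ruling out the cut point itself: both adjacent intervals carry the same constant $\psi$, and the feasibility slack vanishes at the single point $x_0$. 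I would argue that an isolated point of $X^S$ with equal constant $\psi$ on both sides is irrelevant: $X^C$ is defined by strict slack, so $X^C$ would be $(\ul x,x_0)\cup(x_0,\ol x)$. I expect the statement intends $X^C$ up to such points, or else one can show this case is impossible. I would try the latter via the integrated constraint. On each side, $T$ solves $v_z(x,T(x))=k$, the same $k$, so $T$ is a single strictly increasing curve. Then $\int_{\ul x}^{x}(T-S)\,dF$ vanishes at $\ul x$, $x_0$, and $\ol x$ while being positive in between. Since $T-S$ changes sign in a way determined by $\psi_T-\sigma=k-\sigma$, which crosses zero only once because $\sigma$ is strictly decreasing, the integral can have at most one interior zero-crossing pattern consistent with positivity. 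Equality at an interior point $x_0$ would require $T-S$ to change sign twice, which is a contradiction. This single-crossing argument is the main obstacle, and I would also use it to finish: the only alternative to full compression left is $X^C=\emptyset$, i.e.\ $T=S$, which is excluded because $\sigma$ is not increasing.

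\textbf{Parts 3 and 4.} These follow the same template. For strictly quasiconcave $\sigma$, which increases up to some peak and then decreases, segregation intervals can only lie where $\sigma$ is increasing, i.e.\ to the left. The single-crossing argument shows that $X^C$ is a single interval, and it must reach $\ol x$. Otherwise the right endpoint of $X^C$ would be a point of $X^S$ beyond which $T=S$ and $\psi_T=\sigma$ must increase, so $X^S$ could not lie in the decreasing region, contradicting that $\psi$ is constant then increasing. Part 4 is symmetric, with $\sigma$ decreasing then increasing, which gives $X^C=(\ul x,x^*)$. In each case I would also verify that the sign of $k-\sigma$ on a compression interval crosses zero once, which is exactly what makes the feasibility integral vanish only at the interval's endpoints.
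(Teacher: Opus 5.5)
Your route differs from the paper's. The paper proves parts 2--4 constructively via Lemma~\ref{l:uc}. It shoots a level curve $v_z(\tilde x,T(\tilde x))=\text{const}$ from $(x^*,S(x^*))$ or from $(\ul x,z^*)$, chooses $x^*$ or $z^*$ by the intermediate value theorem so that $\int_{x^*}^{\ol x}(T(\tilde x)-S(\tilde x))\,F(\df \tilde x)=0$, and then verifies the sufficient conditions of Theorem~\ref{t:eqm}. You instead take the existing, unique equilibrium and use only its necessary properties. That strategy is sound, and your parts 1 and 2 go through. In part 2 your sign count really does exclude the isolated cut point, so the hedge about ``$X^C$ up to such points'' is unnecessary.

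The gap is in parts 3 and 4, at the sentence ``the single-crossing argument shows that $X^C$ is a single interval.'' The part-2 template does not carry over, for two reasons.
\begin{enumerate}
\item In the configurations you must exclude, distinct compression intervals may carry different constants. Comparing $\sigma$ at separating points of $X^S$ then gives no contradiction when those points lie where $\sigma$ increases.
\item Counting sign changes no longer helps. A compression interval $(a,b)$ with $b<\ol x$ only needs $T-S$ to switch once, from positive to negative. That is fully compatible with $k-\sigma$ crossing zero once, and for quasiconcave $\sigma$ the function $k-\sigma$ can even cross twice.
\end{enumerate}
Your closing remark, that the slack vanishes only at the endpoints, is a sufficiency check from the constructive approach. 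In your approach positivity of the slack on $X^C$ holds by definition, so it does no work.

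The missing ingredient is the value of $\sigma$ at the endpoint, together with the direction of the sign change. Let $D(x)=\int_{\ul x}^x (T(\tilde x)-S(\tilde x))\,F(\df \tilde x)$, and let $(a,b)$ be a compression interval with $b<\ol x$ on which $\psi_T\equiv k$.
\begin{enumerate}
\item Since $b\in X^S\cap(\ul x,\ol x)$, Lemma~\ref{l:I} gives $T(b)=S(b)$, and continuity of $\psi_T$ gives $\sigma(b)=k$.
\item Since $D(a)=D(b)=0<D$ on $(a,b)$, there are points $d<c$ in $(a,b)$ with $T(d)>S(d)$ and $T(c)<S(c)$.
\item By $v_{zz}<0$, this means $\sigma(d)>k>\sigma(c)$.
\item Hence $d<c<b$ with $\sigma(c)<\min\{\sigma(d),\sigma(b)\}$, which contradicts quasiconcavity.
\end{enumerate}
So every compression interval ends at $\ol x$, and $X^C$ is either empty or equal to $(x^*,\ol x)$; this is part 3.

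Part 4 is the mirror image: use $\sigma(a)=k$ at a left endpoint $a>\ul x$, and the violation $\sigma(d)>\max\{\sigma(a),\sigma(c)\}$ of quasiconvexity. A strictly decreasing $\sigma$ is both quasiconcave and quasiconvex, so the same lemma also gives part 2 directly, with $X^C=\emptyset$ ruled out by part 1.
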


To interpret Theorem~\ref{c:psm}, note that
\begin{align}\label{PS}
\frac{\df}{\df x}v_z(x,S(x))=v_{xz}(x,S(x))+v_{zz}(x,S(x))S'(x).
\end{align}
Thus, $v_z(x,S(x))$ is increasing if and only if the sorting force $v_{xz}$ is stronger than the pooling force $v_{zz}$, in that the magnitude of $v_{xz}$ is larger than that of $v_{zz}$, scaled by the change-of-measure term $S'(x)=f(x)/g(S(x))$.

To see the intuition, consider matching firm types $x<x'$ with worker types $S(x)<S(x')$. Under segregation, these matches generate total value $$v(x,S(x))+v(x',S(x')).$$ If instead the workforces are slightly mixed, they generate total value $$v(x,S(x)+\varepsilon)+v(x',S(x')-\varepsilon).$$ Taking $\varepsilon \downarrow 0$, slight mixing is suboptimal if and only if $$v_z(x,S(x))\leq v_z(x',S(x')).$$ Moreover, by $v_{zz}<0$, if slight mixing is suboptimal, then so is any larger mixing. Thus, PAS is optimal (and hence arises in equilibrium) if and only if $v_z(x,S(x))\leq v_z(x',S(x'))$ for all $x<x'$: that is, if and only if $v_z(x,S(x))$ is increasing. Conversely, if $v_z(x,S(x))$ is decreasing, then it is suboptimal to segregate any interval of workforces, so the equilibrium is full compression. Similarly, if $v_z(x,S(x))$ is first increasing and then decreasing (i.e., quasiconcave), then it is optimal to segregate low-skill workforces and mix high-skill workforces, so the equilibrium is upper compression. Likewise, if $v_z(x,S(x))$ is quasiconvex, then the equilibrium is lower compression.\footnote{Part 1 of Theorem~\ref{c:psm}, as well as Remark~\ref{r:IRS}, can alternatively be obtained using Theorem~\ref{t:fs} and Remark~\ref{r:moment}.}

We illustrate Theorem~\ref{c:psm} and Corollary~\ref{c:eqm} with some examples, where the production function is assumed to lie in the convex-loss case (e.g., $v(x,z)=xz-z^2/2$). Recall that, by Corollary~\ref{c:eqm}, the equilibrium matching is the same for any such production function, and is the unique increasing $1$-Lipschitz matching with slope $1$ on each compression interval.

Figure~\ref{f:examples1} illustrates the segregation mapping $S=G^{-1}\circ F$ and the equilibrium matching $T$ for three different type distributions. In panel (a), $F=U[0,1]$ and $G=U[1/4,3/4]$---so, in the population, worker types are concentrated relative to firm types. The segregation mapping $S$ has slope $1/2<1$ and thus coincides with the equilibrium matching $T$. Hence, equilibrium is PAS.

In panel (b), $F=U[1/4,3/4]$ and $G=U[0,1]$. Now, worker types are dispersed relative to firm types. The segregation mapping $S$ has slope $2>1$, so the equilibrium matching $T$ is everywhere compressed, with slope $1$. Equilibrium is fully compressed.

In panel (c), $F=U[0,1]$ and $G(y)=\sqrt{y}$. Here, low-skill workers are plentiful relative to low-productivity firms, but high-skill workers are scarce relative to high-productivity firms. The segregation mapping is $S(x)=x^2$, so $v_z(x,S(x))=a'(x^2-x)+c$. As this function is strictly quasiconcave (as $a$ is strictly concave), Theorem~\ref{c:psm} implies that the equilibrium matching is upper compression.\footnote{Specifically, the equilibrium matching is given by $T(x)=x^2$ if $x\leq 1/4$, and $T(x)=x-3/16$ if $x>1/4$.}

\begin{figure}[ht]
\centering
\begin{minipage}[b]{0.33\textwidth}\centering
\panelpic{%
\begin{axis}[curveplotpaper,
  xmin=0, xmax=1.05, ymin=0, ymax=1.05,
  xtick={1}, xticklabels={\tlbl{1}},
  ytick={0.25, 0.75, 1},
  yticklabels={$\tfrac{1}{4}$, $\tfrac{3}{4}$, $1$},
]
\draw[guideline] (axis cs:0,0.25) -- (axis cs:1,0.25);
\draw[guideline] (axis cs:1,0)    -- (axis cs:1,0.75) -- (axis cs:0,0.75);
\addplot[Tstyle, samples=2, domain=0:1] {x/2 + 0.25}; \addlegendentry{$T$}
\addplot[Sstyle, samples=2, domain=0:1] {x/2 + 0.25}; \addlegendentry{$S$}
\originzero
\end{axis}}\\[-4pt]
{\scriptsize (a)}
\end{minipage}%
\begin{minipage}[b]{0.33\textwidth}\centering
\panelpic{%
\begin{axis}[curveplotpaper,
  xmin=0, xmax=1.05, ymin=0, ymax=1.05,
  xtick={0.25, 0.75, 1},
  xticklabels={\tlbl{\tfrac{1}{4}}, \tlbl{\tfrac{3}{4}}, \tlbl{1}},
  ytick={0.25, 0.75, 1},
  yticklabels={$\tfrac{1}{4}$, $\tfrac{3}{4}$, $1$},
]
\draw[guideline] (axis cs:0.25,0)  -- (axis cs:0.25,0.25) -- (axis cs:0,0.25);
\draw[guideline] (axis cs:0.75,0)  -- (axis cs:0.75,1)    -- (axis cs:0,1);
\draw[guideline] (axis cs:0.75,0.75) -- (axis cs:0,0.75);
\addplot[Tstyle, samples=2, domain=0.25:0.75] {x};
\addlegendentry{$T$}
\addplot[Sstyle, samples=2, domain=0.25:0.75] {2*x - 0.5};
\addlegendentry{$S$}
\originzero
\end{axis}}\\[-4pt]
{\scriptsize (b)}
\end{minipage}%
\begin{minipage}[b]{0.33\textwidth}\centering
\panelpic{%
\begin{axis}[curveplotpaper,
  xmin=0, xmax=1.05, ymin=0, ymax=1.05,
  xtick={0.25, 1}, xticklabels={\tlbl{\tfrac{1}{4}}, \tlbl{1}},
  ytick={0.0625, 0.8125, 1},
  yticklabels={$\tfrac{1}{16}$, $\tfrac{13}{16}$, $1$},
]
\draw[guideline] (axis cs:0.25,0) -- (axis cs:0.25,0.0625) -- (axis cs:0,0.0625);
\draw[guideline] (axis cs:1,0) -- (axis cs:1,1) -- (axis cs:0,1);
\draw[guideline] (axis cs:1,0.8125) -- (axis cs:0,0.8125);
\addplot[Tstyle, samples=40, domain=0:0.25] {x^2};
\addlegendentry{$T$}
\addplot[Tstyle, samples=2, domain=0.25:1, forget plot] {x - 3/16};
\addplot[Sstyle, samples=120, domain=0:1] {x^2};
\addlegendentry{$S$}
\originzero
\end{axis}}\\[-4pt]
{\scriptsize (c)}
\end{minipage}
\caption{Equilibrium Matchings. (a) When $F=U[0,1]$, $G=U[1/4,3/4]$, equilibrium is PAS. 
(b) When $F=U[1/4,3/4]$, $G=U[0,1]$, equilibrium is full compression. 
(c) When $F=U[0,1]$, $G(y)=\sqrt{y}$, equilibrium is upper compression. 
}
\label{f:examples1}
\end{figure}

\begin{remark}[Comparison to Bayesian Persuasion]
Theorem~\ref{c:psm} parallels results on the optimality of full disclosure, no disclosure, and upper and lower censorship (i.e., disclosing states below a cutoff while concealing states above it, or vice versa) in the linear Bayesian persuasion problem \citep{KMZL,Kolotilin2017,dworczak2019,KMZ}. In persuasion, if the sender's marginal utility from increasing the receiver's posterior mean belief is $v_z$, then full disclosure is optimal if and only if $v_z$ is increasing, no disclosure is optimal if $v_z$ is decreasing, upper censorship is optimal if $v_z$ is quasiconcave, and lower censorship is optimal if $v_z$ is quasiconvex. However, there does not appear to be a tight formal connection between these two sets of results. For example, no disclosure and upper censorship are deterministic policies, whereas full compression and upper compression involve mixing.
\end{remark}

\begin{remark}[Increasing Returns to Skill]\label{r:IRS}
Under increasing differences and decreasing returns to skill ($v_{zz}<0<v_{xz}$), the sign of \eqref{PS}---and hence the optimality of segregation or compression---depends on distributions $F$ and $G$. In contrast, under increasing differences and \emph{increasing} returns to skill ($v_{xz}>0$ and $v_{zz}\geq0$), PAS is optimal for any distributions $F$ and $G$. Mathematically, this follows from the Fan-Lorentz theorem (e.g., \citealp{KMS}, Theorem 4), which says that $\int_X v(x,T(x))\,\df x\leq \int_X v(x,S(x))\,\df x$ for all $T\preceq S$ if and only if $v(x,z)$ is supermodular in $(x,z)$ and convex in $z$.\footnote{To apply the theorem, first re-scale $x$ to make $F$ uniform. This re-scaling does not affect supermodularity in $(x,z)$ or convexity in $z$.} Intuitively, if $v_{xz}>0$ and $v_{zz}\geq0$ then there is no conflict between forces favoring sorting and pooling, but rather a confluence of forces favoring sorting and separation. This observation generalizes \cite{kremer} (see also \citealp{lazear2001educational}), where $v(x,z)=xe^z$, with $x>0$ for all firms.
\end{remark}

\subsection{Comparative Statics} \label{s:CS}

We now consider how changes in the production function and the distributions of firm and worker types affect equilibrium matching patterns and utilities.

We first give conditions for workforces to become more segregated, in that the distribution $H$ of workforce means spreads out (equivalently, the matching $T\circ F^{-1}$ increases in the majorization order). This occurs if the sorting force $v_{xz}$ becomes stronger relative to the pooling force $v_{zz}$, as captured by the following single-crossing condition.

\begin{definition} \label{d:MRS}
Given two production functions $v$ and $\hat v$, the marginal return to skill under $\hat v$ has a higher \emph{marginal rate of substitution} (\emph{MRS}) between firm productivity and workforce skill than that under $v$, denoted $\hat v_z \unrhd v_z$ if, for all $x'\geq x$ and $z' \geq z$,
\[v_z(x',z')\geq v_z(x,z)\implies  \hat v_z(x',z')\geq \hat v_z(x,z).\]
\end{definition}

In words, $\hat v$ has a higher MRS than $v$ if, whenever the marginal return to worker skill under $v$ is higher at a high-productivity firm $x'$ with a high-skill workforce $z'$ than at a low-productivity firm $x$ with a low-skill workforce $z$, then the same is true under $\hat v$.

\begin{remark}\label{r:rms}[MRS and Magnitude of Sorting vs.\ Pooling Forces]
The condition that $\hat v$ has a higher MRS than $v$ is closely related to the ratio of the magnitudes of the sorting and pooling forces, $v_{xz}/(-v_{zz})$, being higher under $\hat v$. Specifically, if $\hat v_z \unrhd v_z$ then
\[\frac{\hat v_{xz}(x,z)}{-\hat v_{zz}(x,z)}\geq \frac{v_{xz}(x,z)}{-v_{zz}(x,z)} \quad \text{for all } (x,z);\]
and, conversely, if 
\[\frac{\hat v_{xz}(x,z)}{-\hat v_{zz}(x',z')}\geq \frac{v_{xz}(x,z)}{-v_{zz}(x',z')} \quad \text{for all } (x,z)\text{ and }(x',z'),\]
then $\hat v_z \unrhd v_z$. In turn, the latter condition is equivalent to the existence of a constant $c>0$ such that $\hat v_{xz}(x,z)\geq c v_{xz}(x,z)$ and $-\hat v_{zz}(x,z)\leq -c v_{zz}(x,z)$ for all $(x,z)$. These conditions are the ``Spence--Mirrlees'' analogues of the single-crossing condition given in Definition~\ref{d:MRS} \citep{MS94}.
\end{remark}

Our next result shows that matching is more segregated if the sorting force is stronger relative to the pooling force, or if workers are more heterogeneous.\footnote{One might guess that matching is also more segregated when firms are more heterogeneous, but this is only true under very restrictive conditions. To see this, suppose that $\hat T\circ \hat F^{-1}\succeq  T\circ  F^{-1}$ for all $(v,G)$. Taking $v(x,z)=k(x)z-z^2/2$ and a sufficiently spread out, zero-mean distribution $G$, the equilibrium matchings are $T(x)=k(x)-\E_{F}[k(x)]$ and $\hat T(x)=k(x)-\E_{\hat F}[k(x)]$. Then $k\circ \hat F^{-1}-\E_{\hat F}[k(x)]\succeq  k\circ  F^{-1}-\E_{F}[k(x)]$ for all increasing $k$, which implies that $\hat F=F$, so $\hat T\circ \hat F^{-1}\succeq  T\circ  F^{-1}$ for all $(v,G)$ only in the trivial case $\hat F=F$.}

\begin{theorem}\label{t:CS} Let $H_{v,F,G}$ be the equilibrium workforce distribution with production function $v$, firm type distribution $F$, and worker type distribution $G$. Then:
\begin{enumerate}
\item $H_{\hat v,F,G} \preceq  H_{v,F,G}$ for all $(F,G)$ if and only if $\hat v_z \unrhd v_z $. (Workforces are more heterogeneous if and only if the sorting force is stronger relative to pooling force.)
\item $H_{v, F,\hat G} \preceq  H_{v,F,G}$ for all $(v,F)$ if and only if $\hat G \preceq G$. (Workforces are more heterogeneous if and only if workers are more heterogeneous.)
\end{enumerate}
\end{theorem}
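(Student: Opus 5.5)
Throughout, rescale $x$ so that $F$ is uniform on $[0,1]$; by Theorem~\ref{t:eqm} the equilibrium is described by the continuous, strictly increasing matching $T$ (under $v$) and $\hat T$ (under $\hat v$). The relation $H_{\hat v}\preceq H_v$ means that $H_{\hat v}$ is a mean-preserving spread of $H_v$. In quantile terms this is $D(t):=\int_0^t(\hat T-T)\,\df s\le 0$ for all $t$, with $D(1)=0$ automatically from \eqref{F}.

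\textbf{Sufficiency in part 1.} I would argue by contradiction. Suppose $D>0$ somewhere, and take a connected component $(a,b)$ of $\{D>0\}$, so that $D(a)=D(b)=0$.
\begin{itemize}
\item On $(a,b)$ we have $\int_0^t\hat T>\int_0^t T\ge\int_0^t S$ by feasibility of $T$. Hence $(a,b)\subset\hat X^C$, and part~2 of Theorem~\ref{t:eqm} makes $\hat v_z(x,\hat T(x))$ constant on $[a,b]$.
\item Since $D$ leaves $0$ at $a$ and returns to $0$ at $b$, continuity gives points $t_1<t_2$ in $(a,b)$ with $\hat T(t_1)>T(t_1)$ and $\hat T(t_2)<T(t_2)$.
\item By $\hat v_{zz}<0$,
\[
\hat v_z(t_1,T(t_1))>\hat v_z(t_1,\hat T(t_1))=\hat v_z(t_2,\hat T(t_2))>\hat v_z(t_2,T(t_2)).
\]
\item By Theorem~\ref{t:eqm}, $v_z(x,T(x))$ is increasing, so $v_z(t_2,T(t_2))\ge v_z(t_1,T(t_1))$. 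Also $T(t_2)\ge T(t_1)$ because $T$ is increasing.
\end{itemize}
Definition~\ref{d:MRS} then yields $\hat v_z(t_2,T(t_2))\ge\hat v_z(t_1,T(t_1))$, contradicting the displayed chain.

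\textbf{Sufficiency in part 2.} The argument is identical with $\hat v=v$. The hypothesis $\hat G\preceq G$ means $\hat G$ is a mean-preserving spread of $G$, hence $\int_0^t\hat S\le\int_0^t S$. On a component where $D>0$ we then get $\int_0^t\hat T>\int_0^t T\ge\int_0^t S\ge\int_0^t\hat S$. So the interval is again a compression interval for $\hat T$, and the same three-inequality contradiction applies.

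\textbf{Necessity in part 2.} Take $v(x,z)=kxz-z^2/2$ with $k$ large. Since $G$ and $\hat G$ have continuous positive densities, $S'$ and $\hat S'$ are bounded. Hence $v_z(x,S(x))=kx-S(x)$ and $kx-\hat S(x)$ are increasing for large $k$, and Theorem~\ref{c:psm} gives PAS in both economies. Then $H_{v,F,G}=G$ and $H_{v,F,\hat G}=\hat G$, so the hypothesis forces $\hat G\preceq G$.

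\textbf{Necessity in part 1.} Suppose $\hat v_z\unrhd v_z$ fails at some $x<x'$, $z\le z'$ with $v_z(x',z')\ge v_z(x,z)$ but $\hat v_z(x',z')<\hat v_z(x,z)$.
\begin{itemize}
\item By continuity of the derivatives I would first perturb to $z<z'$ with strict inequality $v_z(x',z')>v_z(x,z)$, keeping $\hat v_z(x',z')<\hat v_z(x,z)$.
\item Then I would construct $(F,G)$ with continuous positive densities whose PAS map $S$ passes through $(x,z)$ and $(x',z')$ and keeps $v_z(\cdot,S(\cdot))$ increasing on all of $X$. The idea is to make $S$ very flat, with $g$ large, except on a short stretch between $x$ and $x'$ where it climbs from near $z$ to near $z'$. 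Under such a construction $v_z(x,S(x))$ should be increasing wherever $S$ is flat, since $v_{xz}>0$ dominates there. On the steep stretch, monotonicity is meant to come from $v_z(x',z')>v_z(x,z)$, provided the climbing stretch is arranged suitably.
\item Theorem~\ref{c:psm} part~1 would then give $H_v=G$, while $\hat v_z(\cdot,S(\cdot))$ fails to be increasing, so $\hat T\neq S$.
\item Since $G$ is the maximal element of $\{H:H\succeq G\}$, the relation $H_{\hat v}\preceq H_v=G$ forces $H_{\hat v}=G$, i.e.\ $\hat T=S$, a contradiction.
\end{itemize}

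I expect this last construction to be the main obstacle. The derivative along the steep stretch is $v_{xz}+v_{zz}S'$, and with $S'$ large it can be strictly negative pointwise. So whether $v_z(\cdot,S(\cdot))$ really stays increasing there, given only the endpoint comparison $v_z(x',z')>v_z(x,z)$, is not settled by the sketch above. The key decision is how to shape $S$ between $x$ and $x'$ so that this holds while the densities stay continuous and positive.
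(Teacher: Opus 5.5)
Your sufficiency arguments for both parts are the same as the paper's. You take a maximal interval on which $\hat T$ is strictly more compressed and note that it lies in a single compression interval of $\hat T$. You then pick $t_1<t_2$ with $\hat T(t_1)>T(t_1)$ and $\hat T(t_2)<T(t_2)$, and the MRS condition contradicts monotonicity of $v_z(\cdot,T(\cdot))$. Your necessity argument for part 2 is also valid, and a bit more direct than the paper's. With $v=kxz-z^2/2$ and $k$ exceeding $\max S'$ and $\max \hat S'$, both economies are PAS, so $H=G$ and $\hat H=\hat G$. The paper instead uses a convex-loss $v$ with $F=G$ (so $T$ is the identity) together with the feasibility constraint $\hat T\circ F^{-1}\preceq \hat G^{-1}$.

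The gap is in necessity for part 1, and it is deeper than the steep-stretch issue you flag. That issue is fixable. Let $S$ follow the level curve $v_z(s,S(s))=v_z(x,z)$, whose slope is $v_{xz}/(-v_{zz})>0$, until it reaches height $z'$ at some $s^*\in(x,x')$; such an $s^*$ exists because $v_z(x,z')<v_z(x,z)<v_z(x',z')$. Then let $S$ rise slowly to $(x',z')$, smoothing the corners. The real obstruction is that you need \emph{global} PAS under $v$ for some $(F,G)$ whose continuous densities are strictly positive on the fixed intervals $X$ and $Y$. Every such $S=G^{-1}\circ F$ runs from $(\ul x,\ul y)$ to $(\ol x,\ol y)$, so your plan requires $v_z(\ul x,\ul y)\le v_z(x,z)<v_z(x',z')\le v_z(\ol x,\ol y)$. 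This need not hold for any pair of points at which $\hat v_z\unrhd v_z$ fails.

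For example, take $X=Y=[0,1]$, $v=xz-5z^2$ and $\hat v=xz-10z^2$. Then $v_z(1,1)=-9<0=v_z(0,0)$, so no admissible $(F,G)$ is PAS under $v$. Yet $(x,z)=(\tfrac12,\tfrac12)$ and $(x',z')=(\tfrac35,\tfrac12+\tfrac1{125})$ give $v_z(x',z')>v_z(x,z)$ and $\hat v_z(x',z')<\hat v_z(x,z)$. For the same reason, $S$ cannot be flat except between $x$ and $x'$ unless $z\approx\ul y$ and $z'\approx\ol y$.

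The missing idea is to solve a degenerate example exactly and then approximate. The paper takes $F^d=\tfrac12(\delta_x+\delta_{x'})$ and $G^d=\tfrac12(\delta_z+\delta_{z'})$. Linear wages $q(y)=ry$, with $r$ between $v_z(x,z)$ and $v_z(x',z')$, certify that segregation is the unique equilibrium under $v$. Under $\hat v$, the unique equilibrium compresses workforces to $\hat z_0<\hat z_1$ in $(z,z')$ with $\hat z_0+\hat z_1=z+z'$ and $\hat v_z(x,\hat z_0)=\hat v_z(x',\hat z_1)$. Hence $\hat H\succeq G^d=H$ with $\hat H\neq H$, so $\hat H\not\preceq H$.

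One then perturbs to continuous full-support densities. Limit points of equilibria are equilibria of the limit economy, by Berge's theorem with Lemma~\ref{l:KR} giving lower hemicontinuity of the feasible set, and the limit equilibria are unique. Since $\{(H,\hat H):\hat H\preceq H\}$ is closed, $\hat H\not\preceq H$ persists for nearby admissible $(F,G)$. Your level-curve path, with supports restricted to $[x,x']$ and $[z,z']$, could serve as the degenerate example instead, but it needs the same approximation step.
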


To see the logic, suppose by contradiction that $\hat v_z \unrhd v_z $ but there exists an interval of firm types $(x_0,x_1)$ on which $\hat T$ is more compressed than $T$, in that  $\int_{\ul x}^{x} \hat T(\tilde x)\, F(\df \tilde x )> \int_{\ul x}^{x} T(\tilde x)\, F(\df \tilde x)$ for all $x \in (x_0,x_1)$, with equality at $x_0$ and $x_1$. Then, there exist $\tilde x_0<\tilde x_1$ in $(x_0,x_1)$ such that firm $\tilde x_0$'s workforce is better under $\hat T$ ($T(\tilde x_0)<\hat T(\tilde x_0)$), firm $\tilde x_1$'s workforce is better under $T$ ($T(\tilde x_1)>\hat T(\tilde x_1)$), and these firms face the same marginal return to skill under $(\hat v,\hat T)$ ($\hat v_z (\tilde x_0,\hat T(\tilde x_0))=\hat v_z (\tilde x_1,\hat T(\tilde x_1))$). This implies that
$$\hat v_z (\tilde x_0, T(\tilde x_0))>\hat v_z (\tilde x_0, \hat T(\tilde x_0))=\hat v_z (\tilde x_1,\hat T(\tilde x_1))>\hat v_z (\tilde x_1, T(\tilde x_1)),$$
where the inequalities hold because $T(\tilde x_0)<\hat T(\tilde x_0)$, $\hat T(\tilde x_1)< T(\tilde x_1)$, and $v_{zz}<0$. But then $\hat v_z \unrhd v_z $ implies that $v_z (\tilde x_0, T(\tilde x_0))>v_z (\tilde x_1, T(\tilde x_1))$, contradicting that $v_z(x,T(x))$ is increasing.

Recent trends in wage inequality have been linked to rising assortativity between firms and workers \citep{card,song,kline}. Theorem~\ref{t:CS} contributes to this literature by characterizing what changes in production functions and the worker type distribution lead to rising workforce homogeneity in a simple competitive model. In particular, we identify the relative strength of complementarity between firm productivity and workforce skill and of decreasing returns to workforce skill, $v_{xz}/(-v_{zz})$, as the key determinant of the extent of workforce segregation by skill. Specifically, greater firm-worker complementarity and less sharply decreasing returns to workforce skill (or a switch from decreasing to increasing returns) increase workforce segregation by skill.

Part 2 of Theorem~\ref{t:CS} says that, when the population distribution of worker skill becomes more heterogeneous, so does the equilibrium distribution of mean workforce skill. This comparative static is illustrated in panels (a) and (b) of Figure~\ref{f:examples1}, after re-scaling $x$ and $y$ in panel (b) so that $F$ and $v$ are the same in both panels (i.e., $\hat x=2x-1/2$ and $\hat y=2y-1/2$): as the distribution of worker skills spreads out by a factor of 4 (from $U[1/4,3/4]$ to $U[-1/2,3/2]$), the distribution of workforce skills spreads out by a factor of 2 (from $U[1/4,3/4]$ to $U[0,1]$). The same panels also illustrate Part 1 of Theorem~\ref{t:CS}, after re-scaling $x$ and $y$ in panel (b) so that $F$ and $G$ are the same in both panels (i.e., $\hat x=2x-1/2$ and $\hat y=y/2+1/4$): as the sorting force weakens by a factor of 4 relative to the pooling force (e.g., $v_z$ changes from $x-z$ to $x-4z$), the distribution of workforce skills contracts by a factor of 2 (from $U[1/4,3/4]$ to $[3/8, 5/8]$).

\cite{kremermaskin} and \cite{acemoglu1999} study different mechanisms that give a similar comparative static as Part 2 of Theorem~\ref{t:CS}.\footnote{\citeauthor{kremermaskin} consider a model of one-to-one matching between workers, and show that increasing the supply of high-skill workers can eliminate cross-skill matches and reduce low-skill workers' wages. \citeauthor{acemoglu1999} studies a search model with capital investment that complements skill, and shows that increasing the supply of high-skill workers leads firms to invest and then decline to hire low-skill workers.} Moreover, these authors emphasize that changes in equilibrium matching patterns can exacerbate rising skill dispersion. For example, a headline result of \citeauthor{kremermaskin} is that moving mass toward the tails of the worker skill distribution can trigger a shift from mixed workforces to segregated ones.

Figure~\ref{f:examples2} shows that, in our model, the equilibrium matching response can either exacerbate or mitigate rising skill dispersion. The production function is again convex-loss, and firm types are uniformly distributed: $F=U[0,1]$. In panel (a), $g(y)=2-|2-4y|$. Here, extreme worker types are scarce relative to extreme firm types, so extreme firms employ mixed workforces. In panel (b), the skill distribution is more dispersed---$g(y)=1$---and equal to the firm productivity distribution, so the equilibrium is PAS. Thus, in moving from panel (a) to panel (b), the change in equilibrium matching (from partial compression to PAS) further spreads out the distribution of mean workforce skills, similarly to the effect emphasized by \cite{kremermaskin} and \cite{acemoglu1999}. However, in panel (c), the skill distribution is yet more dispersed---$g(y)=|2-4y|$---so now moderate worker types are scare relative to moderate firm types, which leads all firms to employ mixed workforces. Therefore, in moving from panel (b) to panel (c), the change in equilibrium matching (from PAS to full compression) counteracts the rise in skill heterogeneity.

\begin{figure}[ht]
\centering
\begin{minipage}[b]{0.33\textwidth}\centering
\panelpic{%
\begin{axis}[curveplotpaper,
  xmin=0, xmax=1.05, ymin=0, ymax=1.05,
  xtick={0.22222, 0.5, 0.77778, 1},
  xticklabels={\tlbl{\tfrac{2}{9}}, \tlbl{\tfrac{1}{2}}, \tlbl{\tfrac{7}{9}}, \tlbl{1}},
  ytick={0.11111, 0.33333, 0.5, 0.66667, 0.88889, 1},
  yticklabels={$\tfrac{1}{9}$, $\tfrac{1}{3}$, $\tfrac{1}{2}$, $\tfrac{2}{3}$, $\tfrac{8}{9}$, $1$},
]
\draw[guideline] (axis cs:0.22222,0) -- (axis cs:0.22222,0.33333) -- (axis cs:0,0.33333);
\draw[guideline] (axis cs:0.5,0)     -- (axis cs:0.5,0.5)         -- (axis cs:0,0.5);
\draw[guideline] (axis cs:0.77778,0) -- (axis cs:0.77778,0.66667) -- (axis cs:0,0.66667);
\draw[guideline] (axis cs:1,0) -- (axis cs:1,1) -- (axis cs:0,1);
\draw[guideline] (axis cs:1,0.88889) -- (axis cs:0,0.88889);
\addplot[Tstyle, samples=2,  domain=0:0.22222]       {x + 1/9};
\addlegendentry{$T$}
\addplot[Tstyle, samples=80, domain=0.22222:0.5,    forget plot] {sqrt(x/2)};
\addplot[Tstyle, samples=80, domain=0.5:0.77778,    forget plot] {1 - sqrt((1-x)/2)};
\addplot[Tstyle, samples=2,  domain=0.77778:1,      forget plot] {x - 1/9};
\addplot[Sstyle, samples=120, domain=0:0.5] {sqrt(x/2)};
\addlegendentry{$S$}
\addplot[Sstyle, samples=120, domain=0.5:1, forget plot] {1 - sqrt((1-x)/2)};
\originzero
\end{axis}}\\[-4pt]
{\scriptsize (a)}
\end{minipage}%
\begin{minipage}[b]{0.33\textwidth}\centering
\panelpic{%
\begin{axis}[curveplotpaper,
  xmin=0, xmax=1.05, ymin=0, ymax=1.05,
  xtick={1}, xticklabels={\tlbl{1}},
  ytick={1}, yticklabels={$1$},
]
\draw[guideline] (axis cs:1,0) -- (axis cs:1,1) -- (axis cs:0,1);
\addplot[Tstyle, samples=2, domain=0:1] {x}; \addlegendentry{$T$}
\addplot[Sstyle, samples=2, domain=0:1] {x}; \addlegendentry{$S$}
\originzero
\end{axis}}\\[-4pt]
{\scriptsize (b)}
\end{minipage}%
\begin{minipage}[b]{0.33\textwidth}\centering
\panelpic{%
\begin{axis}[curveplotpaper,
  xmin=0, xmax=1.05, ymin=0, ymax=1.05,
  xtick={0.5, 1}, xticklabels={\tlbl{\tfrac{1}{2}}, \tlbl{1}},
  ytick={0.5, 1}, yticklabels={$\tfrac{1}{2}$, $1$},
]
\draw[guideline] (axis cs:0.5,0) -- (axis cs:0.5,0.5) -- (axis cs:0,0.5);
\draw[guideline] (axis cs:1,0) -- (axis cs:1,1) -- (axis cs:0,1);
\addplot[Tstyle, samples=2, domain=0:1] {x};
\addlegendentry{$T$}
\addplot[Sstyle, samples=120, domain=0:0.5]   {0.5*(1 - sqrt(1 - 2*x))};
\addlegendentry{$S$}
\addplot[Sstyle, samples=120, domain=0.5:1, forget plot] {0.5*(1 + sqrt(2*x - 1))};
\originzero
\end{axis}}\\[-4pt]
{\scriptsize (c)}
\end{minipage}
\caption{Impact of Rising Skill Polarization. In all panels, $F=U[0,1]$. In panel (a), $g(y)=2-|2-4y|$, and equilibrium is partially compressed. In panel (b), $g(y)=1$, and equilibrium is PAS. In panel (c), $g(y)=|2-4y|$, and equilibrium is full compression.
}
\label{f:examples2}
\end{figure}

We next consider comparative statics on profits and wages. The most pertinent result in the one-to-one matching context is due to \cite{tervio2008}, who shows that, under increasing differences, firms' marginal return to productivity is higher, and workers' marginal return to skill is lower, when the distribution of firm productivity is lower or the distribution of worker skill is higher in first-order stochastic dominance (FOSD). Thus, inequality in payoffs on either side of the market increases when the type distribution on the opposite side of the market improves.\footnote{This force has been argued to play a role in trends in wage inequality, e.g., among CEOs \citep{gabaix2008}. Moreover, if the utility of the lowest type on each side of the market is determined by an outside option, as in \cite{tervio2008}, comparative statics on payoff differences imply comparative statics on payoff levels. \label{f:tervio}} Under segregation, this result follows immediately from the price equations~\eqref{e:p}--\eqref{e:q}, because (for example) decreasing $F$ or increasing $G$ in the FOSD order increases $S(x)$ for all $x$, which increases $v_x( x,S(x))$ for all $x$, and hence increases $p(x')-p(x)$ for all $x<x'$. In many-to-one matching, the analogous result is not immediate, because it is not obvious that decreasing $F$ or increasing $G$ in the FOSD order increases $T(x)$, once changes in the equilibrium matching pattern are accounted for---that is, that improving the distribution of worker types relative to firm types implies that each firm type hires a more skilled workforce. However, the following result shows that this does hold. An intuition is that increasing $G$ clearly increases $T(x)$ for all $x$ under PAS, and it also increases $T(x)$ for all $x$ under full compression, because $v_z(x,T(x))$ must decrease for all $x$; and the theorem shows that the same conclusion holds in the general case where $T$ alternates between intervals of segregation and compression.

\begin{theorem}\label{t:priceCS} Let $T_{v,F,G}$ be the equilibrium matching and let $(p_{v,F,G},q_{v,F,G})$ be any equilibrium prices with production function $v$, firm type distribution $F$ and worker type distribution $G$. Then, writing $\geq$ for pointwise inequality, we have:
\begin{enumerate}
\item $T_{v,\hat F,G} \geq T_{v,F,G}$ for all $(v,G)$ if and only if $F\leq \hat F$. (Each firm type hires a better workforce if and only if the firm type distribution worsens.)
\item $T_{v,F,\hat G} \geq T_{v,F,G}$ for all $(v,F)$ if and only if $\hat G \leq G$. (Each firm type hires a better workforce if and only if the worker type distribution improves.)
\item For any $v$, $T_{v,\hat F,\hat G} \geq T_{v,F,G}$ if and only if $p'_{v,\hat F,\hat G} \geq p'_{v,F,G}$ if and only if $q'_{v,\hat F,\hat G} \leq q'_{v,F,G}$. (Firm returns to productivity increase, and worker returns to skill decrease, if and only if each firm type hires a better workforce.)
\end{enumerate}
\end{theorem}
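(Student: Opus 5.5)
The plan is to prove part 3 first from the price formulas, then prove sufficiency in parts 1–2 with a crossing argument built on the characterization in part 2 of Theorem~\ref{t:eqm}, and finally prove necessity using PAS examples.

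\emph{Part 3.} By \eqref{e:p}, $p'(x)=v_x(x,T(x))$. Since $v_{xz}>0$, we have $p'_{v,\hat F,\hat G}\geq p'_{v,F,G}$ pointwise if and only if $\hat T\geq T$ pointwise. For wages, \eqref{e:q} gives $q'(y)=v_z(T^{-1}(y),y)$ on $[T(\ul x),T(\ol x)]$, and $q'$ is constant and equal to the endpoint slopes $v_z(\ul x,T(\ul x))$, $v_z(\ol x,T(\ol x))$ outside this range. If $\hat T\geq T$, then $\hat T^{-1}\leq T^{-1}$, so $v_{xz}>0$ gives $\hat q'\leq q'$ on the common range. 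On the extended linear pieces I will compare directly: using $v_{zz}<0$, $v_{xz}>0$ and monotonicity of $v_z(x,T(x))$, the endpoint slope $v_z(\ul x,T(\ul x))$ bounds from above the slope on the interior piece. For the converse I will argue by contraposition: if $\hat T(x_0)<T(x_0)$ at some $x_0$, then $q'$ evaluated at $y=\hat T(x_0)$ is strictly higher under the hat economy, because $\hat T^{-1}(y)=x_0>T^{-1}(y)$ (or $y$ lies below $T$'s range, where the endpoint slope is used).

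\emph{Sufficiency in parts 1–2.} I treat $\hat G\leq G$; the case $F\leq\hat F$ is symmetric after a change of variables that makes $F$ uniform. In both cases the key input is $\hat S\geq S$ pointwise, together with the majorization constraints \eqref{F} for $(T,S)$ and $(\hat T,\hat S)$. Suppose $A=\{x:\hat T(x)<T(x)\}$ is nonempty, and let $(a,b)$ be a maximal interval of $A$; both maps are continuous by Theorem~\ref{t:eqm}. Write $w(x)=v_z(x,T(x))$ and $\hat w(x)=v_z(x,\hat T(x))$. Both are increasing, and $\hat w>w$ on $(a,b)$ because $v_{zz}<0$.

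First, segregation points of $\hat T$ inside $(a,b)$ lie in $X^C$. At such a point, $T(x)>\hat T(x)=\hat S(x)\geq S(x)$, so $x$ cannot be a segregation point of $T$ (there $T=S$).

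Next I distinguish the behaviour at the endpoints. At an interior endpoint $T=\hat T$, so $w=\hat w$ there. Since $\hat w>w$ inside, $w$ must be strictly larger at $b$ than somewhere inside, so $w$ is not constant on the part of $(a,b)$ adjacent to $b$. Hence $T$ has segregation points approaching $b$ (or $b=\ol x$). Symmetrically, $\hat w$ is not constant near $a$ (or $a=\ul x$), so $\hat T$ has segregation points approaching $a$.

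Then I use the cumulative constraints. Let $\Phi(x)=\int_{\ul x}^x(\hat T-T)\,\df F$. Then $\Phi$ is strictly decreasing on $(a,b)$. At a segregation point $x'$ of $\hat T$ near $a$,
\[
\int_{\ul x}^{x'}\hat T\,\df F=\int_{\ul x}^{x'}\hat S\,\df F\geq\int_{\ul x}^{x'}S\,\df F,
\]
and at a segregation point of $T$ near $b$, $\int T\,\df F=\int S\,\df F$. Chaining these inequalities through the region of $(a,b)$ where $T$ is compressed should show that $\int_{\ul x}^{x} T\,\df F$ falls below $\int_{\ul x}^{x} S\,\df F$ at some point, contradicting \eqref{F}. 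When $b=\ol x$, the argument instead uses the equality at $\ol x$ in \eqref{F} together with $\int\hat S\,\df F\geq\int S\,\df F$.

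I expect this bookkeeping step to be the main obstacle. If the chaining does not close, my fallback is a local two-point perturbation as in the proof sketch of Theorem~\ref{t:CS}. I would pick $\tilde x_0<\tilde x_1$ in $(a,b)$ with $w(\tilde x_0)=w(\tilde x_1)$ inside a compression interval of $T$, use $\hat w>w$ together with the monotonicity of $\hat w$ to derive an ordering that is incompatible with \eqref{F}.

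\emph{Necessity.} I will use production functions with a sorting force strong enough that PAS holds for the given distributions, for instance $v(x,z)=Kxz-z^2/2$ with $K$ large; by part 1 of Theorem~\ref{c:psm} this requires $v_z(x,S(x))$ to be increasing. Then $T=S=G^{-1}\circ F$. If $T_{v,F,\hat G}\geq T_{v,F,G}$ for all such $v$ and $F$, then $\hat G^{-1}\circ F\geq G^{-1}\circ F$, which gives $\hat G\leq G$. Similarly, $G^{-1}\circ\hat F\geq G^{-1}\circ F$ gives $F\leq\hat F$.
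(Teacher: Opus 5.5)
Your part 3, your necessity arguments and your sufficiency argument for part 2 are sound. For necessity, taking $v(x,z)=Kxz-z^2/2$ with $K$ large makes both economies PAS by Theorem~\ref{c:psm}, which is a valid substitute for the paper's convex-loss examples with a very concentrated $G$ or very dispersed $F$.

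The part-2 bookkeeping you flag as the main obstacle actually closes in one line. Your endpoint step shows that $a$ is a segregation point of $\hat T$ and $b$ is one of $T$; the segregation sets are closed, and $\ul x,\ol x$ are trivially segregation points. With the common measure $F$, and using $\hat T<T$ and $\hat S\ge S$ on $(a,b)$, you get
\[
0\le \int_a^b(\hat T-\hat S)\,\df F<\int_a^b(T-S)\,\df F=-\int_{\ul x}^a(T-S)\,\df F\le 0 .
\]
There is a small slip in part 3. $v_z(\ul x,T(\ul x))$ is the minimum of $v_z(x,T(x))$, so it bounds the interior slopes from below, not above. The comparisons you need are immediate anyway, e.g.\ $v_z(\ul x,\hat T(\ul x))\le v_z(\ul x,y)\le v_z(T^{-1}(y),y)$ for $y\in[T(\ul x),\hat T(\ul x)]$.

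The genuine gap is sufficiency in part 1. You dismiss the case $F\le\hat F$ as ``symmetric after a change of variables that makes $F$ uniform,'' but it is not.

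\begin{itemize}
\item One change of variables cannot make both $F$ and $\hat F$ uniform while keeping $v$ fixed.
\item In quantile coordinates the two economies share the constraint $\tau\preceq G^{-1}$ but have different objectives $v(F^{-1}(t),\cdot)$ and $v(\hat F^{-1}(t),\cdot)$. Changing $G$, by contrast, changes the constraint.
\item Concretely, $\hat T$'s constraint is $\int_{\ul x}^x\hat T\,\df\hat F\ge\int_{\ul x}^x\hat S\,\df\hat F$, integrated against $\hat F$. So neither the chain above nor your $\Phi=\int(\hat T-T)\,\df F$ can combine it with $T$'s constraint: the integrands $T-S$ and $\hat T-\hat S$ change sign, so integrals against different measures cannot be compared.
\end{itemize}

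Your endpoint step does carry over verbatim. What is missing is the paper's quantile bookkeeping, which has three ingredients.

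\begin{itemize}
\item \textbf{Endpoint inequalities.} You need $\hat T(a)\ge G^{-1}(\hat F(a))$ and $T(b)\le G^{-1}(F(b))$, which come from the segregation property at $a$ and $b$.
\item \textbf{Starting inequality.} From these, $0\le\int_{\hat F(a)}^{\hat F(b)}(\hat T\circ\hat F^{-1}-G^{-1})\,\df t<\int_{\hat F(a)}^{\hat F(b)}(T\circ\hat F^{-1}-G^{-1})\,\df t$.
\item \textbf{Case split.}
\begin{itemize}
\item If $F(b)\le\hat F(a)$, the last integrand is at most $T(b)-G^{-1}(F(b))\le 0$ throughout, a contradiction.
\item Otherwise, drop $[F(b),\hat F(b)]$, where $T\circ\hat F^{-1}\le T(b)\le G^{-1}(F(b))\le G^{-1}$. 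Then replace $T\circ\hat F^{-1}$ by the larger $T\circ F^{-1}$. Finally extend the integral down to $F(a)$, using $T\circ F^{-1}\ge\hat T\circ F^{-1}\ge\hat T(a)\ge G^{-1}(\hat F(a))\ge G^{-1}$ on $[F(a),\hat F(a)]$. This yields $\int_{F(a)}^{F(b)}(T\circ F^{-1}-G^{-1})\,\df t>0$, contradicting $b$ being a segregation point of $T$ together with \eqref{F} at $a$.
\end{itemize}
\end{itemize}

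The paper runs this argument once for simultaneous changes in $F$ and $G$, which covers both parts at once.
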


Theorem~\ref{t:priceCS} shows that the main price comparative statics in one-to-one assignment models extend to our many-to-one model. Our model also features some new possibilities, due to the dependence of the equilibrium matching pattern on the type distributions. In one-to-one matching with increasing differences, a worker's marginal return to skill equals $q'(y)=v_z(F^{-1}(G(y)),y)$, which depends on the distribution of worker types $G$ only insofar as this changes the quantile of $y$. For example, holding fixed the wage of the lowest-skill workers in the market, an increase in the supply of high-skill workers does not affect the wages of low-skill workers in one-to-one matching.\footnote{Fixing the lowest wage is appropriate if it is determined by a fixed outside option, as in \cite{tervio2008}.\label{f:tervio2}} This result no longer holds in many-to-one matching, because increasing the supply of high-skill workers reduces the productivity $T^{-1}(y)$ of firms matched to low-skill workers, and hence reduces these workers' returns to skill $q'(y)=v_z(T^{-1}(y),y)$. Intuitively, in one-to-one matching, a worker faces competitive pressure only from lower-skill workers, whereas in many-to-one matching, all workers are potentially in competition. For instance, a medium-skill worker can be harmed by an increase in the supply of high-skill workers, because she is now more easily replaced by a mix of low-skill and high-skill workers. Thus, whereas in one-to-one assignment models a worker's return to skill is only reduced by an increase in the skill of less-skilled workers, in our model it is also reduced by an increase in the skill of more-skilled workers.

Figure~\ref{f:examples3big} illustrates this possibility. Panel (a) reproduces panel (b) of Figure~\ref{f:examples1}, where $F$ and $G$ are both uniform, and the equilibrium is full compression. Panel (b) then stretches the right tail of $G$: high-skill workers get even more skilled. This leads to an upward shift of the entire equilibrium matching function, which increases all firms' returns to productivity, and decreases all workers' returns to skill.

\begin{figure}[ht]
\centering
\begin{minipage}[b]{0.49\textwidth}\centering
\panelpic{%
\begin{axis}[curveplotpaperbig,
  xmin=0, xmax=1.05, ymin=0, ymax=1.05,
  xtick={0.25, 0.75, 1},
  xticklabels={\tlbl{\tfrac{1}{4}}, \tlbl{\tfrac{3}{4}}, \tlbl{1}},
  ytick={0.25, 0.75, 1},
  yticklabels={$\tfrac{1}{4}$, $\tfrac{3}{4}$, $1$},
]
\draw[guideline] (axis cs:0.25,0)  -- (axis cs:0.25,0.25) -- (axis cs:0,0.25);
\draw[guideline] (axis cs:0.75,0)  -- (axis cs:0.75,1)    -- (axis cs:0,1);
\draw[guideline] (axis cs:0.75,0.75) -- (axis cs:0,0.75);
\addplot[Tstyle, samples=2, domain=0.25:0.75] {x};
\addlegendentry{$T$}
\addplot[Sstyle, samples=2, domain=0.25:0.75] {2*x - 0.5};
\addlegendentry{$S$}
\originzero
\end{axis}}\\[-4pt]
{\scriptsize (a)}
\end{minipage}%
\begin{minipage}[b]{0.49\textwidth}\centering
\panelpic{%
\begin{axis}[curveplotpaperbig,
  xmin=0, xmax=1.05, ymin=0, ymax=1.05,
  xtick={0.25, 0.75, 1},
  xticklabels={\tlbl{\tfrac{1}{4}}, \tlbl{\tfrac{3}{4}}, \tlbl{1}},
  ytick={0.291667, 0.791667, 1},
  yticklabels={$\tfrac{7}{24}$, $\tfrac{19}{24}$, $1$},
]
\draw[guideline] (axis cs:0.25,0) -- (axis cs:0.25,0.291667) -- (axis cs:0,0.291667);
\draw[guideline] (axis cs:0.75,0) -- (axis cs:0.75,1) -- (axis cs:0,1);
\draw[guideline] (axis cs:0.75,0.791667) -- (axis cs:0,0.791667);
\addplot[Tstyle, samples=2, domain=0.25:0.75] {x + 1/24};
\addlegendentry{$T$}
\addplot[Sstyle, samples=2,  domain=0.25:0.5] {2*x - 0.5};
\addlegendentry{$S$}
\addplot[Sstyle, samples=80, domain=0.5:0.5625, forget plot]
  {0.75 - sqrt(9 - 16*x)/4};
\addplot[Sstyle, samples=80, domain=0.5625:0.75, forget plot]
  {0.75 + sqrt(48*x - 27)/12};
\originzero
\end{axis}}\\[-4pt]
{\scriptsize (b)}
\end{minipage}
\caption{Impact of a Fatter Right-Tail of Worker Skills. In both panels, $F=U[1/4,3/4]$. In panel (a), $g(y)=1$ for all $y$; in panel (b),
$g(y)=1$, for $y\in [0,1/2]$, $g(y)=3-4y$ for $y\in (1/2,3/4]$, and $g(y)=12y-9$  for $y\in (3/4,1]$. All firms hire more skilled workforces in panel (b).
}
\label{f:examples3big}
\end{figure}

\section{Moment-Measurable Production} \label{sec:moment}

Theorem~\ref{t:eqm} shows that our baseline, mean-measurable model admits a unique equilibrium matching between firm productivities $x$ and mean workforce skills $z$. However, the equilibrium matching between firms and individual workers is not uniquely determined, because many workforce compositions imply the same distribution of $z$. For example, in Figure~\ref{f:examples1}(b) or \ref{f:examples1}(c), the equilibrium distribution of $z$ can be attained when each firm in the compression interval hires a (different) mixture of all workers in the corresponding skill interval, or, alternatively, when each firm hires a mixture of only two worker types, with workers and co-workers matched in a negatively assortative manner. In this section, we show how this indeterminacy resolves for a class of production functions that approximate the mean-measurable case.

Specifically, we generalize the mean-measurable case by assuming that there exist two twice continuously differentiable functions $v:X\times \R \to \R $ and $m:X\times Y\to \R $, satisfying $v_z >0$ and $m_y >0$, such that
\begin{align*}
V(x,\eta)=v\left (x,\E_\eta [m(x,y)]\right),\quad \text{for all } (x,\eta)\in X\times \Delta(Y).
\end{align*}
In this specification, the function $v$ determines output as a function of firm productivity $x$ and an aggregate labor input $z$, and the function $m$ determines the labor input of a worker of skill $y$ at a firm of productivity $x$. We refer to this specification as the \emph{moment-measurable case}. We also continue to assume that $v_{zz}(x,z)<0<v_{xz}(x,z)$ for all $(x,z)$.

The mean-measurable case is the special case of the moment-measurable case where $m(x,y)=m^*(x,y):=y$, for all $x$. Thus, the moment-measurable case retains the assumption that a workforce $\eta$ impacts production at firm $x$ only through a single moment, but it allows this moment to differ across firms.\footnote{As far as we know, the moment-measurable case is mathematically novel. In the Bayesian persuasion case of our model where the firm type distribution is degenerate, the moment-measurable case reduces to the linear case. In contrast, \cite{DK} study Bayesian persuasion with multiple payoff-relevant moments.}

For example, consider a CES aggregator of labor inputs, where the elasticity of substitution between skill levels depends on firm productivity. Specifically, there exists $\alpha:X\to \R $ such that $V(x,\eta)$ depends on $\eta$ only through the CES aggregator $\big (\E_\eta [y^{\alpha(x)}]\big)^{ {1}/{\alpha(x)}}$, so $m(x,y)=(y^{\alpha(x)}-1)/\alpha(x)$. This aggregator is more dependent on the contribution of higher-skill workers at firms with higher $\alpha(x)$. For instance, note that
\begin{align*}
\lim_{\alpha(x)\to -\infty} \big (\E_\eta [y^{\alpha(x)}]\big)^{\frac {1}{\alpha(x)}} &=\inf_{y\in \supp(\eta)} y, \\
\lim_{\alpha(x)\to 0} \big (\E_\eta [y^{\alpha(x)}]\big)^{\frac {1}{\alpha(x)}} &=\exp (\E_{\eta}[\log (y)]), \quad \text{and} \\
\lim_{\alpha(x)\to +\infty} \big (\E_\eta [y^{\alpha(x)}]\big)^{\frac {1}{\alpha(x)}} &=\sup_{y\in \supp(\eta)} y,
\end{align*}
so the aggregator interpolates between \emph{weakest-link} (similar to \citealp{kremer}) and \emph{best-shot} (similar to \citealp{boermaetal}) as $\alpha(x)$ increases from $-\infty$ to $+\infty$. 

For another example (in the spirit of \citealp{garicano2000} and \citealp{garicano2006}), suppose that each worker tries to solve an independent problem whose difficulty depends on the firm's type, and the aggregate labor input is the total measure of problems solved. That is, there exists an increasing function $L:[\ul y-\ol x,\ol y-\ul x]\to [0,1]$ such that a skill-$y$ worker at a type-$x$ firm solves her problem with probability $L(y-x)$, and $V(x,\eta)$ depends on $\eta$ only through $\E_\eta [L(y-x)]$, so $m(x,y)=L(y-x)$.

We now consider the relationship between firm productivity and workforce heterogeneity in the moment-measurable case. We focus on the question of when more productive firms employ more or less heterogeneous workforces. We formalize this using the notion of single-dipped or single-peaked assignments.\footnote{\cite{KCW} introduced a related notion of single-dipped matching in Bayesian persuasion. The notion in \cite{KCW} was previously introduced by \cite{beiglbock2016} in optimal transport, under the name ``left-curtain coupling.''}

\begin{definition}\label{d:sdd}
An assignment $\gamma$ is \emph{strictly single-dipped} (resp., \emph{strictly single-peaked}) if there exists a $\mu$-measure-$1$ set $X^*$ and functions $\rho:X^*\to [0,1]$ and $T_d,T_u:X^*\to Y$, with $T_d(x)\leq T_u(x)$ for all $x\in X^*$, such that
\begin{enumerate}
\item $\eta=(1-\rho(x))\delta_{T_d(x)}+\rho(x)\delta_{T_u(x)}$ for all $(x,\eta)\in \supp(\gamma)$ with $x\in X^*$; and
\item $T_d(x'),T_u(x')\notin (T_d(x),T_u(x))$ for all $x<x'$ in $X^*$.
\end{enumerate}
\end{definition}

That is, an assignment is strictly single-dipped if whenever firm 0 employs both low- and high-skill workers and firm 1 employs medium-skill workers, firm 0 is more productive than firm 1. In other words, the mapping from worker skill $y$ to the productivity $x$ of the firm where this worker is employed is single-dipped over nested sets of workforces. Conversely, an assignment is strictly single-peaked if firms that employ a mix of low- and high-skill workers are less productive than firms that employ medium-skill workers.

Our next result says that the equilibrium assignment is strictly single-dipped if $m_y$ is strictly log-supermodular: $m_{xy}(x,y)/m_y(x,y)$ is strictly increasing in $y$, for all $x$. Intuitively, $m_y(x,y)$ is log-supermodular if labor input $m$ is more convex in skill $y$ at more productive firms.

\begin{theorem} \label{t:sd}
If $m_y$ is strictly log-supermodular (resp., log-submodular), then the equilibrium assignment is strictly single-dipped (resp., single-peaked).
\end{theorem}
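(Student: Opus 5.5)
The plan is to work with the dual characterization from Theorem~\ref{t:e} and the structure of Remark~\ref{r:price}. First I fix an optimal assignment $\gamma$ and equilibrium prices $(p,q)$. Complementary slackness says that for $\gamma$-a.e.\ $(x,\eta)$ the workforce $\eta$ solves $\max_{\eta}\{v(x,\E_\eta[m(x,y)])-\E_\eta[q(y)]\}$. Since $v$ is concave in $z$, this reduces to a linearized problem. Writing $z(x)=\E_\eta[m(x,y)]$ and $\lambda(x)=v_z(x,z(x))>0$, the first-order (supergradient) condition says that every $y\in\supp(\eta)$ maximizes
\[
\lambda(x)\,m(x,y)-q(y),
\]
or equivalently lies in the contact set where the convex-type function $q/\lambda(x)$ touches $m(x,\cdot)$ plus a constant from above. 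So each firm's workforce is supported on the set $A(x)=\argmax_y\{\lambda(x)m(x,y)-q(y)\}$.

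The first step is to show that $|A(x)|\le 2$ for $\mu$-a.e.\ $x$, which gives the two-point form $(1-\rho)\delta_{T_d}+\rho\delta_{T_u}$ in Definition~\ref{d:sdd}. For this I would use the one-dimensional structure. Reparametrize by $x$ and compare two firms $x<x'$ with $y_1<y_2<y_3$ all in $A(x)$ and $y_2\in A(x')$, or more generally use the nesting condition below. Three-point contact sets should arise only for countably many $x$, as in standard twist and single-crossing arguments. I expect this to follow once the nesting condition is proved, because the nesting condition forces the intervals $(T_d(x),T_u(x))$ to be nested or disjoint and ordered. Nondegenerate three-point sets can then occur only on a null set: nested intervals with interior points used by other firms have measure-zero overlap, and $\nu$ has a density. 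If needed, I will instead count contact sets with nonempty interior among disjoint open subintervals.

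The core step is the nesting condition 2 of Definition~\ref{d:sdd}. Suppose $x<x'$, that $y_1<y_2$ lie in $A(x)$, and that $y\in(y_1,y_2)$ lies in $A(x')$. Optimality of $y_1,y_2$ at $x$ against $y$, and of $y$ at $x'$ against $y_1,y_2$, gives
\begin{align*}
\lambda(x)[m(x,y)-m(x,y_i)]&\le q(y)-q(y_i),\\
\lambda(x')[m(x',y)-m(x',y_i)]&\ge q(y)-q(y_i),
\end{align*}
for $i=1,2$. Chaining these yields
\[
\frac{m(x',y)-m(x',y_1)}{m(x,y)-m(x,y_1)}\ \ge\ \frac{\lambda(x)}{\lambda(x')}\ \ge\ \frac{m(x',y_2)-m(x',y)}{m(x,y_2)-m(x,y)},
\]
taking care with signs, using $m_y>0$. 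Strict log-supermodularity of $m_y$ means that the ratio $m_y(x',\cdot)/m_y(x,\cdot)$ is strictly increasing. By the Cauchy mean value theorem, the increment ratio over $[y_1,y]$ is then strictly smaller than the one over $[y,y_2]$, which contradicts the displayed chain. Hence no firm $x'>x$ uses a worker strictly inside $x$'s workforce range, which is exactly strict single-dippedness. In the log-submodular case the inequality reverses and rules out $x'<x$, giving single-peakedness.

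I expect the main obstacle to be the rigor around supports. Optimality conditions hold only on $\supp(\gamma)$, so I need closedness and continuity arguments to pass from a.e.\ statements to a full-measure set $X^*$. Ties also need handling, for instance when $y$ equals an endpoint, or when firm $x'$ uses the same two-point set, which the nesting condition allows. Finally, I must justify the linearization: concavity of $v$ in $z$ means the first-order condition is necessary and sufficient given the fixed $\lambda(x)$.
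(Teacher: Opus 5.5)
Your nesting step is correct and is essentially the paper's Lemma~\ref{l:sd}: clearing denominators in your ratio chain gives exactly the paper's weighted sum. Your multiplier $\lambda=v_z(x,z)$ is also legitimate for every contact pair, including degenerate $\eta$. Perturbing $\eta$ toward $\delta_{y'}$ and differentiating at $\epsilon=0$ gives $q(y')-\E_\eta[q(y)]\geq v_z(x,z)(m(x,y')-z)$, which lets you bypass the multiplier $r$ of \eqref{DM}.

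\textbf{The gap: you never show that a firm type hires a single workforce.} You write $z(x)$, $\lambda(x)$ and $A(x)$ as functions of $x$. A priori, however, two contact pairs $(x,\eta_0),(x,\eta_1)$ may have different means $z_0\neq z_1$, hence different multipliers and different sets of maximizers, and nothing you prove forces them to coincide. Condition~1 of Definition~\ref{d:sdd} requires one $\eta$ per $x\in X^*$, and the definite article in the theorem presumes the equilibrium assignment is unique.

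This is exactly where strict concavity $v_{zz}<0$ must enter, and your argument never uses strictness. If $v$ were affine in $z$, $V$ would be linear in $\eta$, and any mixed optimal workforce could be split into its degenerate pieces across firms of the same type without changing total value.

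The fix is the paper's averaging argument. Suppose $z_0<z_1$ and let $\bar\eta=(\eta_0+\eta_1)/2$. Dual feasibility at $\bar\eta$ and strict concavity give $p(x)+\E_{\bar\eta}[q(y)]\geq v(x,\tfrac{z_0+z_1}{2})>\tfrac12 v(x,z_0)+\tfrac12 v(x,z_1)=p(x)+\E_{\bar\eta}[q(y)]$, a contradiction. So $z$, $\lambda$ and your set of maximizers depend only on $x$. Once that set has at most two points, $\E_\eta[m(x,y)]=z(x)$ and $m_y>0$ pin down $\eta$. Finally, every optimal $\gamma$ is supported in the contact set of any dual solution (Theorem~\ref{t:e}), so the equilibrium assignment is unique.

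\textbf{Your first step is only asserted.} The reasons you give (``measure-zero overlap'', the density of $\nu$) are not the mechanism, and the paper notes the theorem holds without $G$ having a density. The step does follow from your nesting lemma by a counting argument. For rationals $a<b$, at most one firm type $x$ can have points $y_0<a<y_1<b<y_2$ among its maximizers. For two such types $x<x'$, the point of $x'$ in $(a,b)$ would lie strictly inside $(y_0,y_2)$ for $x$, contradicting nesting. Hence the set of $x$ with three or more maximizers is countable, so $\mu$-null.

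This is a genuinely different route from the paper, which obtains the two-point bound locally. At every $x$ where $p$ is differentiable (Lebesgue-a.e., by Rademacher), the first-order condition in the firm-type direction (Lemma~\ref{l:FOC}) makes $m_x(x,\cdot)$ an affine function of $m(x,\cdot)$ on $\supp(\eta)$. Strict log-supermodularity makes $m_x$ strictly convex as a function of $m$, which rules this out at three points (Lemma~\ref{l:2}). Your completed version is more elementary, since it needs no differentiability of $p$ or $r$ in $x$, and it yields a countable exceptional set. The paper's version gives the bound pointwise, without comparing different firms.
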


For example, with a CES labor aggregator where $m(x,y)=(y^{\alpha(x)}-1)/\alpha(x)$, Theorem~\ref{t:sd} implies that equilibrium is strictly single-dipped (resp., single-peaked) if $\alpha$ is increasing (resp., decreasing). Intuitively, since labor input interpolates between weakest-link and best-shot as $\alpha(x)$ increases, low-$\alpha(x)$ firms hire homogeneous workforces, while high-$\alpha(x)$ firms hire heterogeneous ones. Similarly, with a Garicano-style aggregator where $m(x,y)=L(y-x)$, equilibrium is strictly single-dipped (resp., single-peaked) if $L'$ is strictly log-concave (resp., log-convex).

To see the logic of Theorem~\ref{t:sd}, suppose for contradiction that $m_y$ is strictly log-supermodular but that in equilibrium firm $x_0$ hires workers $y_0$ and $y_0'$ and firm $x_1$ hires worker $y_1$, where $x_0 < x_1$ and $y_0 < y_1 < y_0'$. Consider swapping a mix of $y_0$ and $y_0'$ workers from firm $x_0$ for an equal mass of $y_1$ workers from firm $x_1$, such that total labor input at firm $x_1$, $\E_{\eta_1}[m(x_1,y)]$, remains constant. By strict log-supermodularity of $m_y$, the marginal rate of substitution in the production of labor input between ``improving a worker's skill from $y_0$ to $y_1$'' and ``improving a worker's skill from $y_1$ to $y_0'$'' is higher at firm $x_0$ than firm $x_1$, which implies that this swap strictly increases total labor input at firm $x_0$, $\E_{\eta_0}[m(x_0,y)]$. The swap therefore leaves output at firm $x_1$ unchanged while increasing output at firm $x_0$, contradicting the optimality of the pre-swap assignment.\footnote{This logic is similar to that of Theorem 4 of \cite{KCW}, with the difference that the current result involves matching worker types and (exogenous) firm types, rather than matching states and (endogenous) actions.} Finally, $x$ deterministically maps to $z$ because $v_{zz}<0$, which, together with this swapping argument, implies that the equilibrium assignment is unique and is described as in Definition~\ref{d:sdd}.

We now show that if a mean-measurable production function is perturbed in the direction of log-supermodularity, the equilibrium matching is perturbed in the direction of strict single-dippedness.\footnote{For example, up to re-scaling, the CES aggregator with $m(x,y)=(y^{\alpha(x)}-1)/\alpha(x)$ converges to $m^*$ if $\alpha$ converges to $1$, and the Garicano-style aggregator with $m(x,y)=L(y-x)$ converges to $m^*$ if $L'$ converges to a constant.} This resolves the indeterminacy in the matching $\gamma$ in Theorem~\ref{t:eqm}.

\begin{theorem}\label{t:stability}
Suppose that $m^n$ converges uniformly to $m^*$, and each $m^n_y$ is strictly log-supermodular (resp., log-submodular). Let $T$ be the equilibrium matching function under $m^*$, given by Theorem~\ref{t:eqm}, and let $\gamma$ be the unique strictly single-dipped (resp., single-peaked) assignment that induces $T$. Then the equilibrium assignment $\gamma^n$ under $m^n$ converges weakly to $\gamma$.
\end{theorem}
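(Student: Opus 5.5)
The plan is a compactness-and-identification argument. The space $\Delta(X\times\Delta(Y))$ is compact, so every subsequence of $\gamma^n$ has a further subsequence converging weakly to some $\gamma^\infty$. It then suffices to show that every such limit equals $\gamma$. I will establish two properties of $\gamma^\infty$:
\begin{enumerate}
\item $\gamma^\infty$ is optimal for the limiting planner's problem~\eqref{P} with $V^*(x,\eta)=v(x,\E_\eta[y])$, so that it induces the matching $T$ of Theorem~\ref{t:eqm};
\item $\gamma^\infty$ inherits the strictly single-dipped structure.
\end{enumerate}
Uniqueness of the strictly single-dipped assignment inducing $T$ (asserted in the statement) then gives $\gamma^\infty=\gamma$.

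\textbf{Step 1 (optimality of the limit).} Write $V^n(x,\eta)=v(x,\E_\eta[m^n(x,y)])$. Uniform convergence $m^n\to m^*$ implies $V^n\to V^*$ uniformly on $X\times\Delta(Y)$, because $v$ is Lipschitz in $z$ on the relevant compact range. The feasible set $\{\gamma:\gamma_X=\mu,\ \E_\gamma[\eta]=\nu\}$ is closed and does not depend on $n$. A standard argument then applies. For any feasible $\gamma'$, optimality of $\gamma^n$ gives $\E_{\gamma^n}[V^n]\ge \E_{\gamma'}[V^n]$. Passing to the limit, using uniform convergence together with continuity of $V^*$, yields $\E_{\gamma^\infty}[V^*]\ge\E_{\gamma'}[V^*]$. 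So $\gamma^\infty$ solves \eqref{P} for $V^*$. By Lemma~\ref{l:CEeqv} and Theorems~\ref{t:el} and~\ref{t:eqm}, the strict concavity of $v$ in $z$ forces a.e.\ $x$ to be matched to workforces with mean exactly $T(x)$. Hence $\gamma^\infty$ induces $T$.

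\textbf{Step 2 (single-dippedness survives the limit).} By Theorem~\ref{t:sd}, each $\gamma^n$ is strictly single-dipped. I will encode this as a closed condition on supports. Consider the support of the induced measure $\pi^n\in\Delta(X\times Y)$, obtained from $\gamma^n$ by drawing $x$ and then $y\sim\eta$. The condition is that there are no $x<x'$ and $y_0<y_1<y_0'$ such that $(x,y_0)$, $(x,y_0')$ and $(x',y_1)$ all lie in the support. This is essentially the left-curtain no-crossing condition of \cite{beiglbock2016}. The difficulty is that strict inequalities are not preserved under weak limits.

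I will first try the approach used for stability of left-curtain couplings: take a forbidden triple in $\supp(\pi^\infty)$ and show that small neighbourhoods of it carry positive $\pi^n$-mass for large $n$. This would produce a forbidden configuration for $\pi^n$ once the inequalities are strict, giving a contradiction. Boundary cases with equalities will be absorbed by $\mu$-null sets, since $F$ has a density. One also has to check that the limit still has the two-point structure. Here I will use the fact that the limit of two-point workforces $(1-\rho^n)\delta_{T_d^n}+\rho^n\delta_{T_u^n}$ remains at most two-point: pass to the limit in $(\rho^n,T^n_d,T^n_u)$ along $x$-disintegrations, or argue directly that the nested no-crossing property, combined with the means $T(x)$, forces at most two support points per $x$.

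\textbf{Step 3 (identification).} Combining Steps 1 and 2, $\gamma^\infty$ is strictly single-dipped and induces $T$, so it equals the unique such $\gamma$. Since every subsequence has a further subsequence converging to $\gamma$, the whole sequence converges to $\gamma$. The single-peaked case is symmetric: reverse the order of nesting, using the log-submodular half of Theorem~\ref{t:sd}. I expect the main obstacle to be Step 2, namely transferring the pointwise, support-based single-dipped condition through weak convergence while controlling degenerate configurations. If the support argument proves delicate, the fallback is to characterize single-dipped assignments inducing a given $T$ as the maximizers of a strictly supermodular-type auxiliary linear functional, which is stable under limits.
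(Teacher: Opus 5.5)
Your architecture (subsequential limit, optimality of the limit for the mean-measurable problem, survival of single-dippedness, uniqueness, subsequence criterion) matches the paper's. Step~1 is fine: by continuity of $T$ it even gives $\E_\eta[y]=T(x)$ at \emph{every} $(x,\eta)\in\supp(\gamma^\infty)$, which you will need later. The gap is in Step~2, and it starts with encoding single-dippedness on $\supp(\pi^n)\subset X\times Y$. The forbidden pattern requires one and the same firm to hire both $y_0$ and $y_0'$. Suppose $(x,y_0)$, $(x,y_0')$, $(x',y_1)$ lie in $\supp(\pi^\infty)$. Positive $\pi^n$-mass near these points only yields firms $a,a'$ near $x$, in general $a\neq a'$, one hiring near $y_0$ and the other near $y_0'$. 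That is compatible with strict single-dippedness of $\gamma^n$, so no contradiction arises. The strict inequalities are not the obstacle; they are what make a limiting pattern persist for large $n$. The obstacle is that ``same firm'' is an equality constraint, which supports in $X\times Y$ forget. The paper avoids this by working in $X\times\Delta(Y)$. There a forbidden pattern is $(x_0,\eta_0),(x_1,\eta_1)\in\supp(\gamma^\infty)$ with $x_0<x_1$, $y_0,y_0'\in\supp(\eta_0)$, $y_1\in\supp(\eta_1)$, and $y_0<y_1<y_0'$. Lower hemicontinuity of the support correspondence gives $(x_{0,n},\eta_{0,n})\in\supp(\gamma^n)$ near $(x_0,\eta_0)$, with $x_{0,n}\in X_n^*$ since $\mu(X_n^*)=1$. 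Applying it again to $\eta_{0,n}\to\eta_0$ gives $y_{0,n},y_{0,n}'\in\supp(\eta_{0,n})$ near $y_0,y_0'$, and similarly for $(x_1,\eta_1)$ and $y_1$. Now a single firm straddles $y_{1,n}$, contradicting strict single-dippedness of $\gamma^n$.

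The two-point part of Step~2 is also not established. Each limiting workforce is indeed at most two-point, since that set of measures is weakly closed. However, passing to the limit in $(\rho^n,T_d^n,T_u^n)$ does not produce functions of $x$. Weak convergence gives no pointwise convergence of these maps, and the limit may assign several distinct workforces to the same firm type (e.g.\ if $T_u^n$ oscillates rapidly). Ruling this out is exactly your second option, and it has to be carried out as in the paper:
\begin{itemize}
\item Suppose the set of $x$ whose workforces in $\supp(\gamma^\infty)$ jointly have at least three support points had positive $\mu$-measure. Then it would be uncountable.
\item All these workforces have mean $T(x)$, so each such $x$ has a workforce with support points below and above some worker hired at $x$. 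This configuration is separated by rationals $y<y'$: the workforce's points lie below $y$ and above $y'$, and the worker lies in $(y,y')$.
\item By countability, two such firms $x_0<x_1$ share the same rational pair, which violates the support-level condition above.
\end{itemize}
Off this null set the workforce is unique, two-point, and pinned down by $T(x)$; this gives $X^*$, $\rho$, $T_d$, $T_u$.

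Finally, uniqueness of the strictly single-dipped assignment inducing $T$ should be proved, not read off the statement. The paper derives it from Theorem~1.5 of \cite{beiglbock2016} after relabelling firms by $T(x)$, using that $T$ is continuous and strictly increasing. Your fallback via an auxiliary linear functional is not available here. The hypotheses say nothing about the direction or rate of $m^n\to m^*$, so nothing makes the $\gamma^n$ optimize a fixed functional.
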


We illustrate Theorem~\ref{t:stability} in the context of the examples in Figure~\ref{f:examples1}(b) and \ref{f:examples1}(c).

First, consider the convex-loss case with $F=U[1/4,3/4]$ and $G=U[0,1]$, as in Figure~\ref{f:examples1}(b). The equilibrium matching is $T(x)=x$. The corresponding strictly single-dipped assignment, illustrated in Figure~\ref{f:examples4}(a), is given by
\[
\rho(x)=\tfrac34,\qquad T_d(x)=\tfrac{3}{8}-\tfrac{x}{2},\qquad T_u(x)=\tfrac{3x}{2}-\tfrac{1}{8},\quad \text{for all }x\in X. 
\]
The corresponding strictly single-peaked assignment, illustrated in Figure~\ref{f:examples4}(b), is given by
\[
   \rho(x)=\tfrac14,\qquad T_d(x)=\tfrac{3x}{2}-\tfrac{3}{8},\qquad T_u(x)=\tfrac{9}{8}-\tfrac{x}{2},\quad \text{for all }x\in X.
\]
Theorem~\ref{t:stability} implies that, if the production function is moment-measurable and approximately convex-loss but $m_y$ is strictly log-supermodular, then the equilibrium assignment approximates the one in Figure~\ref{f:examples4}(a); and if the production function is moment-measurable and approximately convex-loss but $m_y$ is strictly log-submodular, then the equilibrium assignment approximates the one in Figure~\ref{f:examples4}(b).

\begin{figure}[ht]
\centering
\begin{minipage}[b]{0.49\textwidth}\centering
\panelpic{%
\begin{axis}[curveplotpaperbig,
  xmin=0, xmax=1.05, ymin=0, ymax=1.05,
  xtick={0.25, 0.75, 1},
  xticklabels={\tlbl{\tfrac{1}{4}}, \tlbl{\tfrac{3}{4}}, \tlbl{1}},
  ytick={0.25, 0.75, 1},
  yticklabels={$\tfrac{1}{4}$, $\tfrac{3}{4}$, $1$},
]
\draw[guideline] (axis cs:0.25,0)  -- (axis cs:0.25,0.25) -- (axis cs:0,0.25);
\draw[guideline] (axis cs:0.75,0)  -- (axis cs:0.75,1)    -- (axis cs:0,1);
\draw[guideline] (axis cs:0.75,0.75) -- (axis cs:0,0.75);
\addplot[Tstyle, samples=2, domain=0.25:0.75] {x};
\addlegendentry{$T$}
\addplot[Sstyle, samples=2, domain=0.25:0.75] {2*x - 0.5};
\addlegendentry{$S$}
\addplot[T1style, samples=2, domain=0.25:0.75] {-0.5*x + 0.375};
\addlegendentry{$T_d$}
\addplot[T2style, samples=2, domain=0.25:0.75] {1.5*x - 0.125};
\addlegendentry{$T_u$}
\originzero
\end{axis}}\\[-4pt]
{\scriptsize (a)}
\end{minipage}%
\begin{minipage}[b]{0.49\textwidth}\centering
\panelpic{%
\begin{axis}[curveplotpaperbig,
  xmin=0, xmax=1.05, ymin=0, ymax=1.05,
  xtick={0.25, 0.75, 1},
  xticklabels={\tlbl{\tfrac{1}{4}}, \tlbl{\tfrac{3}{4}}, \tlbl{1}},
  ytick={0.25, 0.75, 1},
  yticklabels={$\tfrac{1}{4}$, $\tfrac{3}{4}$, $1$},
]
\draw[guideline] (axis cs:0.25,0)  -- (axis cs:0.25,1);
\draw[guideline] (axis cs:0.25,0.25) -- (axis cs:0,0.25);
\draw[guideline] (axis cs:0.75,0)  -- (axis cs:0.75,1)    -- (axis cs:0,1);
\draw[guideline] (axis cs:0.75,0.75) -- (axis cs:0,0.75);
\addplot[Tstyle, samples=2, domain=0.25:0.75] {x};
\addlegendentry{$T$}
\addplot[Sstyle, samples=2, domain=0.25:0.75] {2*x - 0.5};
\addlegendentry{$S$}
\addplot[T1style, samples=2, domain=0.25:0.75] {1.5*x - 0.375};
\addlegendentry{$T_d$}
\addplot[T2style, samples=2, domain=0.25:0.75] {-0.5*x + 1.125};
\addlegendentry{$T_u$}
\originzero
\end{axis}}\\[-4pt]
{\scriptsize (b)}
\end{minipage}
\caption{Strictly Single-Dipped and Single-Peaked Full Compression. In both panels, $F=U[1/4,3/4]$, $G=U[0,1]$, and equilibrium is full compression. In panel (a), the assignment is strictly single-dipped; in panel (b), it is strictly single-peaked.}
\label{f:examples4}
\end{figure}

Next, consider the convex-loss case with $F=U[0,1]$ and $G(y)=\sqrt{y}$, as in Figure~\ref{f:examples1}(c). Recall that $S(x)=x^2$ and
\[
T(x)=
\begin{cases}
x^2, &x\in [0,\tfrac 14],\\
x-\tfrac{3}{16}, & x\in (\tfrac 14, 1].
\end{cases}
\]
The corresponding strictly single-dipped assignment, illustrated in Figure~\ref{f:examples5}(a), is given by
\begin{gather*}
T_d(x)=T_u(x)=T(x),\quad \text{for }x\in [0,\tfrac{7}{16}],\quad \text{and}\\
\rho(x)=\tfrac{1}{d(x)},\quad T_d(x)=\bigl( \tfrac 3 2 -d(x)\bigr)^2,\quad T_u(x)=\bigl(2d(x)-\tfrac 32\bigr)^2,\quad \text{for }x\in (\tfrac{7}{16},1],
\end{gather*}
with $d(x)=\sqrt{x+9/{16}}$.\footnote{Formally, this can be proved using Theorem 4.5 in \cite{H-LT}. Specifically, the bifurcation point $m=7/16$ and the functions $\rho$, $T_d$, and $T_u$ on $[m,1]$ are determined by the boundary condition, $T_d(m)=T_u(m)=T(m)$, the martingale condition, $T(x)=(1-\rho(x))T_d(x)+\rho(x)T_u(x)$, and the mass conservation conditions, $\df G(T_u( x))=\rho(x)\, \df F(x)$ and
$\df G(T_d( x))-\df F(T^{-1}(T_d( x)))=-(1-\rho(x))\,\df F( x)$.} 
Notably, firm types $x \in (1/4,7/16)$ lie in the compression interval $(1/4,1)$, yet they must hire homogeneous workforces. This shows that firms in compression intervals need not hire mixed workforces in the equilibrium assignment, even though they are always willing to do so at the equilibrium wages. The corresponding strictly single-peaked assignment, illustrated in Figure~\ref{f:examples5}(b), is given by
\begin{gather*}
T_d(x)=T_u(x)=T(x)=S(x),\quad \text{for }x\in [0,\tfrac{1}{4}],\quad \text{and}\\
\rho(x)=\tfrac{1}{2}-\tfrac{4-x}{6d(x)},\ \ T_d(x)=\tfrac14\bigl(d(x)+x-1\bigr)^{2},\ \ T_u(x)=\tfrac14\bigl(d(x)-x+1\bigr)^{2},\ \ \text{for } x\in (\tfrac{1}{4},1],
\end{gather*}
with $d(x)=\sqrt{(75-4(x-4)^{2})/12}$. 
Theorem~\ref{t:stability} implies that, if the production function is moment-measurable and approximately convex-loss but $m_y$ is strictly log-supermodular, then the equilibrium assignment approximates the one in Figure~\ref{f:examples5}(a); and if the production function is moment-measurable and approximately convex-loss but $m_y$ is strictly log-submodular, then the equilibrium assignment approximates the one in Figure~\ref{f:examples5}(b).

\begin{figure}[ht]
\centering
\begin{minipage}[b]{0.49\textwidth}\centering
\panelpic{%
\begin{axis}[curveplotpaperbig,
  xmin=0, xmax=1.05, ymin=0, ymax=1.05,
  xtick={0.25, 0.4375, 1},
  xticklabels={\tlbl{\tfrac{1}{4}}, \tlbl{\tfrac{7}{16}}, \tlbl{1}},
  ytick={0.0625, 0.25, 0.8125, 1},
  yticklabels={$\tfrac{1}{16}$, $\tfrac{1}{4}$, $\tfrac{13}{16}$, $1$},
]
\draw[guideline] (axis cs:0.25,0) -- (axis cs:0.25,0.0625) -- (axis cs:0,0.0625);
\draw[guideline] (axis cs:0.25,0.0625) -- (axis cs:1,0.0625);
\draw[guideline] (axis cs:0.4375,0) -- (axis cs:0.4375,0.25) -- (axis cs:0,0.25);
\draw[guideline] (axis cs:1,0) -- (axis cs:1,1) -- (axis cs:0,1);
\draw[guideline] (axis cs:1,0.8125) -- (axis cs:0,0.8125);
\addplot[Tstyle, samples=40, domain=0:0.25] {x^2};
\addlegendentry{$T$}
\addplot[Tstyle, samples=2, domain=0.25:1, forget plot] {x - 3/16};
\addplot[Sstyle, samples=120, domain=0:1] {x^2};
\addlegendentry{$S$}
\addplot[T1style, samples=80, domain=0.4375:1]
  {(1.5 - sqrt(x + 9/16))^2};
\addlegendentry{$T_d$}
\addplot[T2style, samples=80, domain=0.4375:1]
  {(2*sqrt(x + 9/16) - 1.5)^2};
\addlegendentry{$T_u$}
\originzero
\end{axis}}\\[-4pt]
{\scriptsize (a)}
\end{minipage}%
\begin{minipage}[b]{0.49\textwidth}\centering
\panelpic{%
\begin{axis}[curveplotpaperbig,
  xmin=0, xmax=1.05, ymin=0, ymax=1.05,
  xtick={0.25, 1}, xticklabels={\tlbl{\tfrac{1}{4}}, \tlbl{1}},
  ytick={0.0625, 0.8125, 1},
  yticklabels={$\tfrac{1}{16}$, $\tfrac{13}{16}$, $1$},
]
\draw[guideline] (axis cs:0.25,0) -- (axis cs:0.25,1);
\draw[guideline] (axis cs:0.25,0.0625) -- (axis cs:0,0.0625);
\draw[guideline] (axis cs:1,0) -- (axis cs:1,1) -- (axis cs:0,1);
\draw[guideline] (axis cs:1,0.8125) -- (axis cs:0,0.8125);
\addplot[Tstyle, samples=40, domain=0:0.25] {x^2};
\addlegendentry{$T$}
\addplot[Tstyle, samples=2, domain=0.25:1, forget plot] {x - 3/16};
\addplot[Sstyle, samples=120, domain=0:1] {x^2};
\addlegendentry{$S$}
\addplot[T1style, samples=80, domain=0.25:1]
  {(sqrt((75 - 4*(x-4)^2)/12) + x - 1)^2 / 4};
\addlegendentry{$T_d$}
\addplot[T2style, samples=80, domain=0.25:1]
  {(sqrt((75 - 4*(x-4)^2)/12) - x + 1)^2 / 4};
\addlegendentry{$T_u$}
\originzero
\end{axis}}\\[-4pt]
{\scriptsize (b)}
\end{minipage}
\caption{Strictly Single-Dipped and Single-Peaked Upper Compression. In both panels, $F=U[0,1]$, $G(y)=\sqrt{y}$, and equilibrium is upper compression. In panel (a), the assignment is strictly single-dipped; in panel (b), it is strictly single-peaked.}
\label{f:examples5}
\end{figure}

\section{Uniform Price Equilibrium} \label{sec:uniform}

We now return to the social economics applications described in Remark~\ref{r:microfoundation}, where match value accrues to agents on the ``many'' side of the market, such as students attending a school or households moving to a neighborhood. In these settings, competitive equilibrium implicitly involves personalized prices, which may be impractical. In this section, we introduce an alternative equilibrium notion, where there is a single price to match with each agent on the ``one'' side of the market (e.g., a single price for each school or neighborhood). This alternative notion is more realistic in some applications, and we will compare it to competitive equilibrium to assess the impact of uniform pricing on equilibrium assignments and utilities.

The model and the definition of an assignment are unchanged from Section~\ref{s:model} and Remark~\ref{r:microfoundation}, except that in this section we will refer to ``students'' and ``schools'' rather than ``workers'' and ``firms.'' Our alternative equilibrium notion is as follows.

\begin{definition}
A \emph{uniform price equilibrium} (\emph{UPE}) is an assignment $\gamma$ together with prices $p:\supp(\gamma) \to \R$ for schools and utilities $q:Y \to \R$ for students satisfying
\begin{align*}
p(x,\eta) + q(y) &= W(x,y,\eta), \quad \text{for all } (x,\eta)\in \supp(\gamma) \text{ and } y\in \supp(\eta), 
\\
p(x,\eta) + q(y) &\geq W(x,y,\eta), \quad \text{for all } (x,\eta) \in \supp(\gamma) \text{ and } y\in Y. 
\end{align*}
\end{definition}

The first condition says that each student $y$ obtains utility $q(y)=W(x,y,\eta) -p(x,\eta)$ from attending school $x$ with student body $\eta$ and paying tuition $p(x,\eta)$; and the second condition says that student $y$ cannot get a higher utility by attending a different school. Equivalently, $\gamma$ describes a Nash equilibrium assignment (and $q$ describes the corresponding payoffs) in the game where students simultaneously decide what schools to attend, given prices $p$.

UPE requires that two students who attend the same school must pay the same tuition. This could result from an agreement among schools, as in the ``Overlap case'' \citep{USvBrown}. It would also result if schools cannot observe students' types. In the residential choice context, UPE requires that two individuals who want to buy the same house face the same price. This could result from anti-discrimination laws, from the possibility of resale, or simply from the infeasibility of residents paying off to their departing neighbors to sell to more desirable buyers.\footnote{\citeauthor{beckermurphy}---hardly skeptics as to the prevalence of flexible prices---emphasized the latter rationale, writing, ``[W]ith large neighborhoods, such as those in a reasonably sized town or city district, residents generally do not collectively bid for other residents. Instead, they compete for houses, taking account of the composition of the residents in a neighborhood,'' (\citealp{beckermurphy}, p.\ 61).}

In the one-to-one matching case where $W$ is independent of $\eta$, competitive equilibrium (henceforth, CE) and UPE both reduce to the usual equilibrium notion for assignment games, which can equivalently be described as a competitive equilibrium, an optimal assignment, a group stable (core) assignment, or a pairwise stable assignment \citep{GOZ92}. However, they differ in many-to-one matching. For instance, in a CE, each school gets the same payoff from each attending student, inclusive of her peer effect on others, which implies that different students pay different tuitions. Instead, in a UPE, each school gets the same payoff from each student, exclusive of her peer effect, which implies that all students pay the same tuition.\footnote{UPE generalizes the equilibrium notion in \cite{benabou1996} and \cite{beckermurphy}, discussed below. Another related concept is the UPE notion of \cite{mailath2013}. They consider one-to-one matching, but otherwise their notion is more complex, as their model involves pre-match investment and match utilities on both sides of the market.}

CE and UPE are non-nested. In a CE, student $y$ pays tuition $W(x,y,\eta)-q(y)$ to attend school $x$ with student body $\eta$. Since this tuition may differ across students, a CE may not be a UPE. In other words, in a CE, $p(x)=\E_\eta[W(x,y,\eta)-q(y)]$ for all $(x,\eta) \in \supp(\gamma)$, but it is not necessarily true that $p(x)=W(x,y,\eta)-q(y)$ for all $y \in \supp(\eta)$, so the UPE conditions may not hold. Conversely, in a UPE, a school cannot increase its (uniform) tuition and continue to attract students, but an entrant that can charge personalized tuitions to attract a different student body could make a profit. That is, in a UPE, $p(x,\eta) \geq W(x,y,\eta)-q(y)$ for all $(x,\eta) \in \supp(\gamma)$ and $y \in Y$, but it is not necessarily true that $p(x,\eta) \geq \E_{\tilde \eta}[W(x,y,\tilde \eta)-q(y)]$ for all $\tilde \eta \in \Delta(Y)$, so the second condition in the definition of a CE may not hold.\footnote{Another difference is that in CE each type $x$ school gets the same payoff $p(x)$, while in UPE two type $x$ schools with different student bodies $\eta \neq \eta'$ can get different payoffs $p(x,\eta)\neq p(x,\eta')$.}

For the remainder of this section, we assume that peer effects are \emph{moment-measurable}, meaning that there exist two twice continuously differentiable functions $w:X \times Y \times \R  \to \R$ and $m:X \times Y \to \R$ such that 
\[
W(x,y,\eta)=w\left (x,y,\E_\eta[m(x,\tilde y)]\right),\quad \text{for all $(x,y,\eta)\in X\times Y \times \Delta(Y)$}.
\]
For example, the usual case of \emph{linear peer effects} (which generalizes \citealp{benabou1996} and \citealp{beckermurphy}) is the special case where $m(x,y)=y$.

We ask when UPE are positively assortatively segregated, and how this compares to CE (i.e., when PAS is efficient). Recall that PAS is described by the matching function $S(x)=G^{-1}(F(x))$, where student bodies are segregated by ability and matched with schools in a positively assortative manner. We denote partial derivatives of $(x,y,z)\mapsto W(x,y,\delta_z)$ with subscripts, and similarly for $w$.

\begin{theorem}\label{t:UPE} The following hold:
\begin{enumerate}
\item PAS is a UPE if and only if 
\begin{gather}
\int_{S(x)}^{y}\int_{x}^{S^{-1}(\tilde y)} \frac{\df^2 W(\tilde x,\tilde y, \delta_{S(\tilde x)})}{\df \tilde x\,\df \tilde y}\, \df \tilde x\, \df \tilde y\geq 0,\quad \text{for all $(x,y)\in X\times Y$}. 	\label{e:SUPE}
\end{gather}
\item PAS is a UPE for all $(F,G)$ if and only if
\begin{gather}
W_{xy}(x,y,\delta_y)\;\ge\;0,\quad\text{for all $(x,y)\in X\times Y$}, \quad \text{and}	\label{e:wxy}\\
\int_z^y \int_z^{\tilde y}
W_{yz}(x,\tilde y,\delta_{\tilde z})\,\df \tilde z\,\df \tilde y\;\ge\;0,\quad\text{for all $(x,y,z)\in X\times Y\times Y$}.\label{e:iwy}	
\end{gather}
\item PAS is the unique UPE for all $(F,G)$ if
\begin{gather}
W_{xy}(x,y,\delta_y)>0,\quad\text{for all $(x,y)\in X\times Y$}, \quad \text{and} \label{e:wxy>}\\
W_{yz}(x,y,\delta_z)=0,\quad\text{for all $(x,y,z)\in X\times Y\times Y$.}\label{e:iwy=}
\end{gather}
\item When PAS is a UPE, functions $(p,q)$ are corresponding equilibrium prices if and only if there exists $c\in \R$ such that
\end{enumerate}
\begin{align}
p(x)&=\int_{\ul x}^x \bigl( W_x(\tilde x,S(\tilde x),\delta_{S(\tilde x)})+ W_z(\tilde x,S(\tilde x),\delta_{S(\tilde x)})S'(\tilde x) \bigr)\df \tilde x +c,\quad \text{for all $x\in X$,}\label{e:pUPE}\\
q(y)&=W(\ul x,\ul y,\delta_{\ul y})+\int_{\ul y}^y W_y(S^{-1}(\tilde y),\tilde y,\delta_{\tilde y})  \, \df \tilde y -c,\quad \text{for all $y\in Y$.}\label{e:qUPE} 	
\end{align}
\end{theorem}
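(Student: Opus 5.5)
The plan is to start from the UPE definition specialized to the PAS assignment $\phi$. Under PAS, school $x$ has student body $\delta_{S(x)}$, so $\supp(\phi)=\{(x,\delta_{S(x)}):x\in X\}$, and we write $p(x)=p(x,\delta_{S(x)})$. The UPE conditions then read
\[
p(x)+q(S(x))=W(x,S(x),\delta_{S(x)}) \quad\text{and}\quad p(x)+q(y)\ge W(x,y,\delta_{S(x)})\ \text{for all } (x,y).
\]
Equivalently, $q(y)=\max_x\{W(x,y,\delta_{S(x)})-p(x)\}$, with the maximum attained at $x=S^{-1}(y)$.

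\textbf{Part 4.} I would prove part 4 first, by an envelope argument. Since $y\mapsto W(x,y,\delta_{S(x)})-p(x)$ is $C^1$ and $q$ is the upper envelope, with the maximizer $S^{-1}(y)$ unique along the diagonal, Milgrom--Segal gives $q'(y)=W_y(S^{-1}(y),y,\delta_y)$. This yields \eqref{e:qUPE}; the constant is fixed by the equality at $(\ul x,\ul y)$. Symmetrically, $p(x)=W(x,S(x),\delta_{S(x)})-q(S(x))$ is determined once $q$ is. Differentiating gives
\[
p'(x)=W_x+W_zS'+W_yS'-q'(S(x))S'=W_x+W_zS',
\]
which is \eqref{e:pUPE}. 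Here $W_z$ means the derivative in the peer composition $z$, the mean of $\delta_z$ under $m(x,y)=y$, or through $m$ in general. Lipschitz regularity of $p,q$ must be justified. I would argue it from the inequality form: both are differences of $C^1$ functions minimized or maximized over compact sets.

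\textbf{Part 1.} I would substitute these prices into the inequality. Define
\[
\Phi(x,y)=p(x)+q(y)-W(x,y,\delta_{S(x)}).
\]
We have $\Phi(x,S(x))=0$. Writing $\Phi(x,y)=\int_{S(x)}^y\partial_{\tilde y}\Phi(x,\tilde y)\,\df\tilde y$ gives
\[
\partial_{\tilde y}\Phi(x,\tilde y)=W_y(S^{-1}(\tilde y),\tilde y,\delta_{\tilde y})-W_y(x,\tilde y,\delta_{S(x)}),
\]
which equals $\int_x^{S^{-1}(\tilde y)}\frac{\df}{\df\tilde x}W_y(\tilde x,\tilde y,\delta_{S(\tilde x)})\,\df\tilde x$. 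So $\Phi\ge0$ is exactly \eqref{e:SUPE}. Conversely, if PAS is a UPE, part 4 forces these prices, so \eqref{e:SUPE} is necessary. The total derivative inside is $W_{xy}+W_{yz}S'$, evaluated at $(\tilde x,\tilde y,\delta_{S(\tilde x)})$.

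\textbf{Part 2.} This is the main obstacle. To show sufficiency, I would change variables $\tilde z=S(\tilde x)$ in the inner integral. This splits the integrand of \eqref{e:SUPE} into two pieces. The first is $\int_x^{S^{-1}(\tilde y)}W_{xy}(\tilde x,\tilde y,\delta_{S(\tilde x)})\,\df\tilde x$. The second is $\int_{S(x)}^{\tilde y}W_{yz}(S^{-1}(\tilde z),\tilde y,\delta_{\tilde z})\,\df\tilde z$.

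The first piece is awkward, because \eqref{e:wxy} only controls $W_{xy}$ on the diagonal $z=y$. I would instead reorganize by adding and subtracting diagonal terms, so that \eqref{e:SUPE} becomes a sum of two parts. One part is an integral of $W_{xy}(\cdot,\tilde y,\delta_{\tilde y})$ over $x$, which is nonnegative by \eqref{e:wxy} since $\tilde y-S(x)$ and $S^{-1}(\tilde y)-x$ share a sign. The other part is an iterated integral of $W_{yz}$ over $\tilde z$ between $S(x)$ and $\tilde y$, taken at a fixed school type. I will try to make that fixed type work, exploiting the freedom to choose $x$ as the base point, so that it matches the form of \eqref{e:iwy}. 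I expect the exact bookkeeping here to be delicate.

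For necessity, I would choose extreme distributions. First, concentrate $G$ near a point, making $S$ nearly flat, and take local $(x,y)$ near the diagonal; this yields \eqref{e:wxy}. Second, make $F$ nearly an atom, so that $S^{-1}$ is nearly constant at a fixed $x$ while $S$ sweeps $[z,y]$. Then the $W_{xy}$ contribution vanishes and \eqref{e:SUPE} reduces to \eqref{e:iwy}.

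\textbf{Part 3.} Under \eqref{e:iwy=}, UPE utilities are $q(y)=\max_x\{W(x,y,\cdot)-p\}$, and $W_y$ is independent of the peer composition. Strict \eqref{e:wxy>} then gives strict single crossing in $(x,y)$ for every UPE, independently of the student body. So in any UPE, students sort positively across school types, which forces each type-$x$ school to receive exactly $\delta_{S(x)}$.
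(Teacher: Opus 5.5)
Your proposal follows the paper's proof essentially step for step: prices from an envelope argument (you apply it to $q$ and back out $p$ from the equality $p(x)+q(S(x))=W(x,S(x),\delta_{S(x)})$, whereas the paper does the reverse, which is equally valid since $S$ is $C^1$), substitution into the UPE inequality for Part 1, localized extreme distributions for necessity in Part 2, and a revealed-preference sorting argument for Part 3. In Part 3 you should say, as the paper does, that the moment-measurable form plus the intermediate value theorem lets you replace any $\eta$ by some $\delta_z$, so that $W_{yz}=0$ applies. The Part 2 bookkeeping you flag as delicate closes at once: adding and subtracting $W_y(x,\tilde y,\delta_{\tilde y})$ leaves a $W_{xy}(\cdot,\tilde y,\delta_{\tilde y})$ term, which is nonnegative by your sign argument, plus exactly the left side of \eqref{e:iwy} evaluated at school $x$ with $z=S(x)$.
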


Parts 1 and 2 imply that PAS is a UPE if $W(x,y,\delta_{S(x)})$ has increasing differences in $(x,y)$, and PAS is a UPE for all $(F,G)$ if $W_{xy}(x,y,\delta_y)\geq 0$ and $W_{yz}(x,y,\delta_z)\geq 0$. Intuitively, PAS is a UPE if better students are willing to pay more to attend better schools with better peers. Conversely, PAS is not a UPE if $W(x,y,\delta_{S(x)})$ ever has strictly decreasing differences locally at $y=S(x)$, and PAS is not a UPE for some $(F,G)$ if $W_{xy}(x,y,\delta_y)< 0$ or $W_{yz}(x,y,\delta_y)< 0$ for some $(x,y)$. Intuitively, if $W_{xy}(x,y,\delta_y)< 0$ then PAS is not a UPE if $f$ is much more dispersed than $g$ around $(x,y)$ (so match effects dominate), and if $W_{yz}(x,y,\delta_y)< 0$ then PAS is not a UPE if $g$ is much more dispersed than $f$ around $(x,y)$ (so peer effects dominate).

Part 3 says that PAS is the unique UPE if $W_{xy}(x,y,\delta_y)> 0$ and $W_{yz}(x,y,\delta_z)= 0$. These conditions ensure that \emph{negative assortative segregation}---where student bodies are segregated but are matched with schools in a \emph{negatively} assortative manner---is not a UPE.

To understand part 4, consider for simplicity the case where $f=g$, so $S(x)=x$. Here, the theorem says that the UPE prices $(p_{\mathrm{UPE}},q_{\mathrm{UPE}})$ satisfy
\begin{align*}
p_{\mathrm{UPE}}'(x)=W_x(x,x,\delta_x)+W_z(x,x,\delta_x) \quad \text{and} \quad q_{\mathrm{UPE}}'(x)=W_y(x,x,\delta_x), \quad \text{for all $x\in X$.}
\end{align*}
In contrast, when PAS is a competitive equilibrium, we have
\begin{align*}
p_{\mathrm{CE}}'(x)=W_x(x,x,\delta_x) \quad \text{and} \quad q_{\mathrm{CE}}'(x)=W_y(x,x,\delta_x)+W_z(x,x,\delta_x), \qquad \text{for all $x\in X$.}
\end{align*}
Thus, in either case, schools receive their marginal contribution to social surplus, $W_x$, and students receive their marginal contribution exclusive of peer effects, $W_y$, but the two cases differ in which side of the market receives the marginal contribution of peer effects, $W_z$. In a CE, tuition reflects students' contribution to schools inclusive of peer effects, so this value accrues to students. In a UPE, tuition reflects schools' value to students inclusive of peer effects, so this value accrues to schools. 

This comparison has obvious implications for the distributional consequences of regulating private school tuition. Assume that (i) the utility of the lowest type of students and schools is fixed by outside options, (ii) the CE and UPE are both PAS, and (iii) peer effects are positive: $W_z>0$. Then an agreement among schools that each school must charge the same tuition to all its attendees (i.e., no merit scholarships) increases the profit of all schools and reduces the welfare of all students (except the lowest types on each side). Similarly, an agreement among luxury goods retailers not to give discounts to high-status consumers (e.g., ``influencers'') benefits all retailers and harms all consumers.

Parts 1--3 of Theorem~\ref{t:UPE} simplify in the mean-measurable case where $m(x,y)=y$ and $w$ is affine in $y$. In this case, $V(x,\eta)=\E_\eta[w(x,y,\E_\eta[\tilde y])]=w(x,\E_\eta [y],\E_\eta[y])$, so the production function is mean-measurable, as in Section~\ref{sec:linear}. 

\begin{corollary}\label{c:affine}
In the mean-measurable case, the following hold:\footnote{The proof of the ``only if'' direction of part 3 uses densities that are strictly positive on subintervals of $X$ and $Y$, while elsewhere in the paper we assume strictly positive densities on $X$ and $Y$.}
\begin{enumerate}
\item PAS is a UPE if and only if $w_y(x,S(x),S(x))$ is increasing in $x$.
\item PAS is a UPE for all $(F,G)$ if and only if $w_y(x,z,z)$ is increasing in both $x$ and $z$.
\item PAS is the unique UPE for all $(F,G)$ if and only if $w_y(x,z,z)$ is strictly increasing in $x$ and is independent of $z$.
\end{enumerate}
\end{corollary}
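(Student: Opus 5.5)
The plan is to specialize Theorem~\ref{t:UPE} to $W(x,y,\delta_z)=w(x,y,z)$ with $m(x,y)=y$ and $w$ affine in $y$. The key simplification is that $w_y(x,y,z)$ does not depend on $y$. Write
\[
a(x,z):=w_y(x,z,z)=w_y(x,y,z)\quad\text{for every }y .
\]
All cross-partials in Theorem~\ref{t:UPE} then become derivatives of $a$:
\[
\frac{\df^2 W(\tilde x,\tilde y,\delta_{S(\tilde x)})}{\df\tilde x\,\df\tilde y}=\frac{\df}{\df\tilde x}\,a(\tilde x,S(\tilde x)),\qquad W_{xy}(x,y,\delta_y)=a_x(x,y),\qquad W_{yz}(x,\tilde y,\delta_{\tilde z})=a_z(x,\tilde z).
\]

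\textbf{Part 1.} Let $\varphi(x):=a(x,S(x))$. The inner integral in \eqref{e:SUPE} integrates to $\varphi(S^{-1}(\tilde y))-\varphi(x)$, so \eqref{e:SUPE} becomes
\[
\int_{S(x)}^{y}\bigl[\varphi(S^{-1}(\tilde y))-\varphi(x)\bigr]\,\df\tilde y\ \ge 0\quad\text{for all }(x,y).
\]
\emph{If $\varphi$ is increasing:} since $S$ is strictly increasing, the integrand has the sign of $\tilde y-S(x)$. Hence the integral is nonnegative whether $y\ge S(x)$ or $y<S(x)$.

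\emph{Conversely,} suppose $\varphi(x_1)>\varphi(x_2)$ for some $x_1<x_2$. Let $x^*$ be the largest maximizer of $\varphi$ on $[x_1,x_2]$; then $x^*<x_2$. Take $x=x^*$ and $y=S(x_2)$. The integrand is $\le 0$ on $[S(x^*),S(x_2)]$, and by continuity it is strictly negative near $S(x_2)$. So the integral is negative and \eqref{e:SUPE} fails.

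\textbf{Part 2.} Condition \eqref{e:wxy} reads $a_x\ge 0$, i.e.\ $a$ is increasing in $x$. In \eqref{e:iwy} the inner integral equals $a(x,\tilde y)-a(x,z)$, so the condition is
\[
\int_z^y\bigl[a(x,\tilde y)-a(x,z)\bigr]\,\df\tilde y\ge 0\quad\text{for all }y,z .
\]
By the same one-variable argument as in part~1 (with the identity map in place of $S$), this holds if and only if $a(x,\cdot)$ is increasing. Together these give the claim.

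\textbf{Part 3, ``if''.} Strict monotonicity of $a$ in $x$ is \eqref{e:wxy>}. Independence of $a$ from $z$ gives $a_z\equiv 0$, which is \eqref{e:iwy=}. Part~3 of Theorem~\ref{t:UPE} then applies.

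\textbf{Part 3, ``only if'' (the main obstacle).} We must exhibit, for each violation, some $(F,G)$ with a UPE other than PAS. If $a_z<0$ somewhere or $a_x<0$ somewhere, part~2 already gives an $(F,G)$ for which PAS is not a UPE at all. The remaining cases are:
\begin{itemize}
\item[(i)] $a_x= 0$ on an open set;
\item[(ii)] $a_z>0$ near some point $(x_0,z_0)$.
\end{itemize}

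For either case I would first derive an analogue of the part~1 computation for an arbitrary segregated decreasing matching $R$ (negative assortative segregation, with mean $z=R(x)$). Student $y$ compares schools through the integrand $\frac{\df}{\df\tilde x}a(\tilde x,R(\tilde x))$. The argument above should show that this matching is a UPE if and only if $y\mapsto a(R^{-1}(y),y)$ is increasing.

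I would then choose $F$ and $G$ to have densities positive only on small subintervals around $x_0$ and $z_0$. This is the reason for the footnote allowing densities positive only on subintervals.
\begin{itemize}
\item In case (ii), take $F$ highly concentrated and $G$ dispersed. Along a decreasing $R$ the $z$-effect $a_z>0$ then dominates the $x$-effect, so $a(R^{-1}(y),y)$ is increasing.
\item In case (i), $a$ is locally independent of $x$, and the $z$-monotonicity from part~2 gives the same conclusion.
\end{itemize}
In both cases negative assortative segregation is a second UPE, so PAS is not unique. The delicate step is the quantitative comparison of the $x$- and $z$-derivatives along $R$. This requires the chain-rule term $R'=-f/g$ to be small in absolute value, which the choice of concentrated $F$ and dispersed $G$ delivers.
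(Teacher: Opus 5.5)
\textbf{Overall.} Your route is the paper's. You specialize Theorem~\ref{t:UPE} using that $w_y$ does not depend on $y$. For the ``only if'' of part~3, you exhibit negative assortative segregation as a second UPE, using the criterion that $a(x,N(x))$ is decreasing (equivalently, your condition that $y\mapsto a(R^{-1}(y),y)$ is increasing). Parts~1 and~2 are correct, and your largest-maximizer argument for the converse is more explicit than the paper's.

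\textbf{Gap in the ``if'' direction of part~3.} Strict monotonicity of $a(\cdot,z)$ is not the same as $W_{xy}(x,y,\delta_y)>0$ for all $(x,y)$. The derivative $a_x$ may vanish at isolated points (as for $x\mapsto(x-x_0)^3$), so part~3 of Theorem~\ref{t:UPE} cannot be quoted as stated. You need the observation the paper makes: the proof of that part only uses that $W_y(\tilde x,\tilde y,\delta_{\tilde y})$ is strictly increasing in $\tilde x$, together with $W_{yz}\equiv 0$. This is enough to get $(x'-x)(y'-y)\ge 0$ across the support of any UPE.

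\textbf{Inverted requirement in case (ii).} Along a decreasing segregated matching $R$ with $R'=-f/g$,
\[
\frac{\df}{\df y}\,a\bigl(R^{-1}(y),y\bigr)=a_z-a_x\,\frac{g}{f}.
\]
So you need $g/f=|(R^{-1})'|$ small, i.e.\ $|R'|=f/g$ \emph{large}, not small as you wrote. Your choice of a concentrated $F$ and a dispersed $G$ is the right one. For example, take $f$ uniform on an interval of length $\varepsilon^2$ and $g$ uniform on one of length $\varepsilon$, as in the paper, so that $g/f=\varepsilon$. Taken literally, however, your stated condition points to the opposite choice.

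\textbf{Exhaustiveness of your cases.} Cases (i)--(ii) cover everything only once (ii) has been excluded. Then $a_z\equiv 0$, so flatness of $a(\cdot,z)$ on $[x,x']$ at a single $z$ gives $a_x=0$ on the whole strip $(x,x')\times Y$. Without that step, failure of strict monotonicity gives flatness only along a segment, not on an open set. This is why the paper establishes $z$-independence first.
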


We can now compare the conditions in Theorem~\ref{t:UPE} and Corollary~\ref{c:affine} for PAS to be a UPE with conditions for PAS to be efficient. The following result is an immediate implication of Theorems~\ref{t:fs} and \ref{c:psm} and Remarks~\ref{r:moment} and \ref{r:IRS}.

\begin{corollary} \label{c:UPE}
PAS is efficient for all $(F,G)$ if and only if
\begin{align*}
&W_{xy}(x,y,\delta_y)+W_{xz}(x,y,\delta_y)\geq 0, \quad \text{and} \\
&(1-\rho)W(x,y,\delta_y)+\rho W(x,y',\delta_{y'}) \\
\geq  &(1-\rho)W(x,y,(1-\rho)\delta_y+\rho \delta_{y'})+\rho W(x,y',(1-\rho)\delta_y+\rho \delta_{y'}),
\end{align*}
for all $(x,y,y')\in X\times Y\times Y$ and all $\rho \in (0,1)$.

Moreover, in the mean-measurable case,
\begin{enumerate}
\item PAS is efficient if and only if $w_y(x,S(x),S(x))+w_z(x,S(x),S(x))$ is increasing in $x$.
\item PAS is efficient for all $(F,G)$ if and only if $w_y(x,z,z)+w_z(x,z,z)$ is increasing in $x$ and $z$.
\end{enumerate}
\end{corollary}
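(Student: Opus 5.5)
The statement to prove is Corollary~\ref{c:UPE}. The plan is to apply Theorem~\ref{t:fs}(2), together with Remark~\ref{r:moment}, to the production function $V(x,\eta)=\E_\eta[W(x,y,\eta)]$. Then I translate its two conditions, supermodularity \eqref{e:spm} and outer convexity \eqref{e:oc}, into conditions on $W$.

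\textbf{Supermodularity.} We have $V(x,\delta_y)=W(x,y,\delta_y)$. Differentiating totally in $y$ and then in $x$ gives
\[
\frac{\partial^2}{\partial x\,\partial y}V(x,\delta_y)=W_{xy}(x,y,\delta_y)+W_{xz}(x,y,\delta_y),
\]
because the third argument $\delta_y$ moves with $y$. This is a derivative of $(x,y,z)\mapsto W(x,y,\delta_z)$ in $z$ evaluated at $z=y$. So \eqref{e:spm} is exactly the first displayed inequality.

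\textbf{Outer convexity.} Condition \eqref{e:oc} reads $\E_\eta[W(x,y,\delta_y)]\ge \E_\eta[W(x,y,\eta)]$. The standing assumption of Remark~\ref{r:moment} (moment-measurable peer effects) applies here. By that remark, it suffices to check $\eta$ with $|\supp(\eta)|\le 2$, that is, $\eta=(1-\rho)\delta_y+\rho\delta_{y'}$. For such $\eta$ the condition is literally the second displayed inequality; for $\rho\in\{0,1\}$ it holds trivially. This proves the first claim.

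\textbf{Mean-measurable case.} Here $V(x,\eta)=v(x,\E_\eta[y])$ with $v(x,z)=w(x,z,z)$. The derivative $v_z(x,z)=w_y(x,z,z)+w_z(x,z,z)$ follows from $w$ being affine in $y$ and the chain rule. Part 1 is then Theorem~\ref{c:psm}(1) applied to $v$.

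For part 2, fix $(x_0,z_0)$. Choosing $F,G$ so that $S$ is locally flat or locally steep at $x_0$ shows that PAS for all $(F,G)$ forces $v_{xz}\ge 0$ and $v_{zz}\ge 0$, i.e.\ $v_z$ increasing in both arguments. Concretely, take $S$ with slope near $0$, respectively near $\infty$, locally, and use the sign of \eqref{PS}. Conversely, if $v_z$ is increasing in both arguments, Remark~\ref{r:IRS} (the Fan--Lorentz argument) gives PAS for every $(F,G)$.

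\textbf{Main obstacle.} Theorem~\ref{c:psm} is stated under $v_{zz}<0$, so I cannot lean on it in the converse direction of part 2. I would route that direction through Theorem~\ref{t:fs}(2) instead. Mean-measurable outer convexity, $\E_\eta[v(x,\delta_y)]\ge v(x,\E_\eta y)$ for all $\eta$, is equivalent to convexity of $v$ in $z$, which settles the "only if" without sign assumptions. Part 1 likewise needs Theorem~\ref{c:psm}(1) (or Theorem~\ref{t:fs}(1)) without the $v_{zz}<0$ restriction. I would verify this by checking that the PAS-optimality argument for Theorem~\ref{c:psm}(1) uses only monotonicity of $v_z(x,S(x))$ and the potential in \eqref{e:qs}.
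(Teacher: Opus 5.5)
Your proposal is correct and follows the paper's own one-line derivation: Theorem~\ref{t:fs}(2) with Remark~\ref{r:moment} for the general claim, Theorem~\ref{c:psm}(1) for part 1, and Theorem~\ref{t:fs}(2)/Remark~\ref{r:IRS} (supermodularity plus convexity of $v$ in $z$) for part 2. One correction to your last paragraph: you can drop $v_{zz}<0$ in part 1, but the ``if'' direction still relies on increasing differences $v_{xz}\ge 0$ to get $p(x)+q(z)\ge v(x,z)$ off the PAS graph, so it does not use only monotonicity of $v_z(x,S(x))$. For example, $v(x,z)=-xz+z^2$ with $F=G$ uniform has $v_z(x,S(x))=x$ increasing, yet negative assortative segregation yields total value $1/6>0$, beating PAS.
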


Comparing Corollaries~\ref{c:affine} and \ref{c:UPE} shows that PAS can be a UPE (and even the unique one) even if segregation is inefficient. In particular, by Corollaries~\ref{c:affine} and \ref{c:UPE}, PAS is a UPE but is not efficient if and only if $w_y(x,S(x),S(x))$ is increasing in $x$ but $w_y(x,S(x),S(x))+w_z(x,S(x),S(x))$ is not increasing in $x$, which tends to occur if $w_z(x,S(x),S(x))$ is decreasing in $x$. In turn, $w_z(x,S(x),S(x))$ can be decreasing in $x$ if $w_{xz}<0$ (``good schools'' and ``good peers'' are substitutes), if $w_{yz}<0$ (good students value good peers less), or if $w_{zz}<0$ (there are decreasing returns to peer quality). Intuitively, in these cases, a high-ability student's private return to attending a high-quality school tends to exceed the social return, resulting in inefficient segregation.

\cite{benabou1996} and \cite{beckermurphy} consider related models of neighborhood choice. \citeauthor{benabou1996} assumes that neighborhoods are identical ex ante (corresponding to degenerate $x$ in our model) and that there are two household types (corresponding to binary $y$). He shows that in this case the efficiency of PAS depends on the sign of ${\df} \left( w_y(x,z,z)+w_z(x,z,z) \right)/\df z$. Part 2 of the mean-measurable case of Corollary~\ref{c:UPE} generalizes this result to heterogeneous neighborhoods and continuous household types. \citeauthor{beckermurphy} focus on the case where there are two neighborhood types and two household types with equal mass, and $w(x,y,z)=a(z)+yb(x)$. Besides generality, the cases studied by these authors omit some potentially important effects. For instance, \citeauthor{benabou1996} and \citeauthor{beckermurphy} both assume that $w_{xz}=0$, while we saw that $w_{xz}<0$ creates a force for inefficient segregation. This case seems realistic, for example if learning from teachers and learning from peers are substitutes.\footnote{\citeauthor{benabou1996} and \citeauthor{beckermurphy} also do not compare utilities between competitive and uniform price equilibria.}

\section{Discussion} \label{s:discussion}

We conclude by mentioning three possible extensions of our model.

\textbf{Investment.} Several authors have considered productivity-enhancing investment prior to matching (e.g., \citealp{noldeke}). Our model gives a simple and stark result in this context: if workers are ex ante identical and the cost for a worker to acquire skill $y$ is $C(y)$, where
\begin{align*}
V(x,\eta)-\E_\eta [C(y)] < V(x,\delta_{\E_\eta[y]})-C({\E_\eta[y]}), \quad \text{for all } (x,\eta) \text{ s.t.\ } |\supp(\eta)|\geq 2,
\end{align*}
then the competitive equilibrium must be segregated, because it is inefficient to form any nondegenerate workforce $\eta$, rather than having the same workers each acquire skill $\E_\eta[y]$. For example, this condition holds in the mean-measurable case if $C$ is strictly convex. From this perspective, equilibrium workforce compression results from either ex ante worker heterogeneity (prior to investment) or from ex post inefficient investment decisions.

\textbf{Scale.} \cite{eeckhout} study a model of one-to-one matching with endogenous \emph{scale}, where each firm $x$ hires a single type $y$ of worker, but can hire any measure of them (e.g., because the output of worker $y$ at firm $x$ depends only on $x$, $y$, and the total measure of workers in the firm, so there is no reason for a firm to hire heterogeneous workers). Their main result characterizes positive assortative matching in this model. In contrast, our model allows flexible workforce composition, but a fixed scale. A natural mutual generalization would allow flexibility in both composition and scale. For instance, characterizing PAS in such a model is a clear open direction.

\textbf{Many-to-Many Matching.} Our results extend straightforwardly to many-to-many matching, such as forming schools consisting of students and faculty. The many-to-many case is actually slightly simpler, because the possibility of mixing on both sides symmetrizes the analysis. For example, if the production function is mean-measurable on both sides, equilibrium features mean-preserving contractions of $F$ and $G$, with compression regions corresponding to affine regions of $p$ and $q$, which are both convex.

\bibliographystyle{ecta-fullname} 
\bibliography{bibliography}  %

\newpage

\appendix

\section{Proofs} \label{a:proofs}

\subsection{Duality in the General Case}
\begin{proof}[Proof of Theorem~\ref{t:e}]	

We start with mathematical preliminaries. A function $q:Y\to \R $ is $L$-Lipschitz, denoted by $q\in \Lip_L(Y)$, if there exists a constant $L$ such that $|q(y)-q(\hat y)|\leq L\rho_Y(y,\hat y)$, for all $y,\hat y\in Y$, where $\rho_Y$ is a metric on $Y$. As in Section 8.10(viii) in \cite{bogachev2007}, we define the Kantorovich-Rubinstein norm on $M(Y)$ by 
\begin{equation}\label{e:KR}
\kr \eta := |\eta(Y)|+ \sup\left\{\int_Y q(y)\eta(\df y):\, q\in \Lip_1(Y),\, q(y_0)=0\right\},
\end{equation}
for a fixed $y_0\in Y$. By Theorem 6.9 and Remark 6.19 in \cite{villani2009}, $\kr{\cdot}$ metrizes the weak$^\star$ topology on the compact metric space $\Delta(Y)$. So $\Delta(X\times \Delta(Y))$ is also a compact metric space.

Primal attainment follows from standard compactness arguments. Let $\Pf$ denote the feasible set in \ref{P} (i.e., the set of all assignments). Because the functions $\gamma\to \gamma_X$ and $\gamma\to\E_\gamma[\eta] $ are continuous, $\Pf$ is compact, as a closed subset of the compact set $\Delta(X\times \Delta(Y))$. Moreover, $\Pf$ is  non-empty, as it contains $\gamma=\mu\otimes\delta_\nu $ (i.e., $\gamma(A,D)=\mu(A)\1\{\nu\in D\}$ for all $A\subset X$ and $D\subset \Delta(Y)$). Since $V$ is Lipschitz continuous, the function $\gamma \to \E_\gamma[V(x,\eta)]$ is also continuous and thus attains its maximum on the compact set $\Pf$, by the Weierstrass Theorem. Thus, \eqref{P} has a solution $\gamma$. 

Next, the value of \eqref{P} is concave and Lipschitz in $(\mu,\nu)$. Define $U:\Delta(X)\times \Delta(Y)\to \R$  by $U(\mu,\nu)=\max_{\gamma\in \Pf}\E_\gamma [V(x,\eta)]$ for all $(\mu,\nu)\in \Delta(X)\times\Delta(Y)$. 
 Fix $\mu,\hat \mu\in \Delta(X)$, $\nu,\hat \nu\in \Delta(Y)$, and $\lambda\in [0,1]$. By primal attainment, there exist $\gamma\in \Pf$ and $\hat \gamma\in \mathcal P(\hat \mu,\hat \nu)$ such that $U(\mu,\nu)=\E_\gamma [V(x,\eta)]$ and $U(\hat \mu,\hat \nu)=\E_{\hat\gamma} [V(x,\eta)]$. Note that $\gamma_\lambda\in \mathcal P(\mu_\lambda,\nu_\lambda)$, where $\mu_\lambda=(1-\lambda)\mu+\lambda\hat \mu$, $\nu_\lambda=(1-\lambda)\nu+\lambda\hat \nu$, and  $\gamma_\lambda=(1-\lambda)\gamma+\lambda\hat\gamma$. Thus, $U(\mu_\lambda,\nu_\lambda)\geq \E_{\gamma_\lambda} [V(x,\eta)]=(1-\lambda)U(\mu,\nu)+\lambda U(\hat \mu,\hat \nu)$, showing that $U$ is concave.
 
To show that $U$ is Lipschitz, we rely on the following lemma.

\begin{lemma}\label{l:KR}
Let $\mu,\hat \mu\in \Delta(X)$, $\nu, \hat \nu \in \Delta(Y)$, and $\gamma\in \Pf$. There exists $\hat \gamma\in \mathcal P(\hat \mu,\hat \nu)$ such that $\kr{\gamma-\hat \gamma}\leq \kr{\mu-\hat \mu}+\kr{\nu-\hat \nu}$.
\end{lemma}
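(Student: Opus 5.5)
The plan is to build $\hat\gamma$ from optimal transport plans between $\mu$ and $\hat\mu$ and between $\nu$ and $\hat\nu$, moving firms and workers separately. Take the metric on $X\times\Delta(Y)$ to be the sum metric $\rho_X(x,\hat x)+\kr{\eta-\hat\eta}$; the bound below then gives the claimed inequality, and any equivalent product metric changes only the constant.

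\textbf{Step 1 (reduce the norms to $W_1$).} Both $\mu$ and $\hat\mu$ are probability measures, so the mass term $|(\mu-\hat\mu)(X)|$ in \eqref{e:KR} vanishes. The remaining supremum is exactly the Kantorovich--Rubinstein dual of the $1$-Wasserstein distance. Hence $\kr{\mu-\hat\mu}=W_1(\mu,\hat\mu)=\min_{\alpha}\int\rho_X\,\df\alpha$, the minimum being over couplings $\alpha$ of $\mu$ and $\hat\mu$, and similarly for $\nu,\hat\nu$. Fix optimal couplings $\alpha$ and $\beta$; they exist by compactness. Disintegrate them as $\alpha(\df x,\df\hat x)=\mu(\df x)\alpha_x(\df\hat x)$ and $\beta(\df y,\df\hat y)=\nu(\df y)\beta_y(\df\hat y)$, with measurable kernels $x\mapsto\alpha_x$ and $y\mapsto\beta_y$.

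\textbf{Step 2 (construction).} Define $\Phi:\Delta(Y)\to\Delta(Y)$ by $\Phi(\eta)=\int\beta_y\,\eta(\df y)$; each worker of type $y$ in a workforce is replaced by the distribution $\beta_y$. The map $\Phi$ is affine and Borel measurable. Draw $(x,\eta)\sim\gamma$, then draw $\hat x\sim\alpha_x$, and let $\hat\gamma$ be the law of $(\hat x,\Phi(\eta))$.

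Feasibility follows from the marginal conditions on $\gamma$:
\begin{itemize}
\item The $X$-marginal of $\hat\gamma$ is $\int\alpha_x\,\mu(\df x)=\hat\mu$, because $\gamma_X=\mu$.
\item Because $\Phi$ is affine and $\E_\gamma[\eta]=\nu$, we get $\E_{\hat\gamma}[\hat\eta]=\Phi(\E_\gamma[\eta])=\int\beta_y\,\nu(\df y)=\hat\nu$.
\end{itemize}
Hence $\hat\gamma\in\mathcal P(\hat\mu,\hat\nu)$. The law $\pi$ of the full tuple $(x,\eta,\hat x,\Phi(\eta))$ is a coupling of $\gamma$ and $\hat\gamma$.

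\textbf{Step 3 (estimate).} Again the mass term of $\gamma-\hat\gamma$ is zero. Take any $1$-Lipschitz $h$ on $X\times\Delta(Y)$. Then
\[
\int h\,\df\gamma-\int h\,\df\hat\gamma\le\E_\pi\bigl[\rho_X(x,\hat x)+\kr{\eta-\Phi(\eta)}\bigr].
\]
The first term integrates to $\int\rho_X\,\df\alpha=\kr{\mu-\hat\mu}$.

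For the second term, work pointwise in $\eta$. Since $\Phi(\eta)$ is a probability measure, the mass term is zero. For any $1$-Lipschitz $q$ on $Y$,
\[
\int q\,\df\eta-\int q\,\df\Phi(\eta)=\iint\bigl(q(y)-q(\hat y)\bigr)\beta_y(\df\hat y)\,\eta(\df y)\le\iint\rho_Y(y,\hat y)\,\beta_y(\df\hat y)\,\eta(\df y).
\]
The right-hand side does not depend on $q$, so it also bounds $\kr{\eta-\Phi(\eta)}$. Integrating over $\eta\sim\gamma$ and using $\E_\gamma[\eta]=\nu$ gives $\int\rho_Y\,\df\beta=\kr{\nu-\hat\nu}$. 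Taking the supremum over $h$ yields the claim.

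\textbf{Main obstacle.} The delicate point is Step 3's handling of the supremum inside the KR norm of $\eta-\Phi(\eta)$. We cannot interchange the supremum over $q$ with the expectation over $\gamma$. The fix is the pointwise, $q$-independent bound above, which integrates linearly; this linearity is exactly what makes $\E_\gamma[\eta]=\nu$ usable.

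A secondary issue is measurability of $\Phi$ and of the joint construction. I expect this to follow from the standard disintegration theorem on compact metric spaces, together with the weak$^\star$ continuity of $\eta\mapsto\int\!\int k(\hat y)\,\beta_y(\df\hat y)\,\eta(\df y)$ for continuous $k$. If that continuity fails because $y\mapsto\beta_y$ is only measurable, it suffices that $\Phi$ is Borel measurable.
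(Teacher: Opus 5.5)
Your proposal is correct and follows essentially the same route as the paper. There too, $\hat\gamma$ is obtained by moving $x$ along the kernel of an optimal plan $\alpha$ and pushing each workforce through the kernel of an optimal plan $\beta$ (the paper's map $N$ is your $\Phi$), and the bound comes from the resulting coupling of $\gamma$ and $\hat\gamma$. The only difference is cosmetic. The paper obtains the pointwise estimate $\kr{\eta-N(\eta)}\le\iint\rho_Y(y,\hat y)\,\beta(\df\hat y|y)\,\eta(\df y)$ through an auxiliary coupling $\xi$, and it also records the reverse inequality in Claim~\ref{l:L}, which the lemma does not need. You derive the one-sided bound directly from Lipschitz test functions.
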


Letting $\Gamma(\gamma)$ denote the assignment in Lemma~\ref{l:KR}, and letting $L$ denote the Lipschitz constant for $V$, we have
\begin{align*}
U(\mu,\nu)-U(\hat \mu,\hat \nu) &=\max_{\gamma \in \Pf} \E_\gamma [V(x,\eta)]-\max_{\hat \gamma \in \Pfhat} \E_{\hat \gamma} [V(x,\eta)] \\
 & \leq \max_{\gamma \in \Pf} \big(\E_\gamma [V(x,\eta)]-\E_{\Gamma(\gamma)} [V(x,\eta)]\big) \\
 & \leq \max_{\gamma \in \Pf} L \kr{\gamma-\Gamma(\gamma)} \leq L (\kr{\mu-\hat \mu}+\kr{\nu- \hat \nu}),
\end{align*}
establishing Lipschitz continuity (after interchanging $(\mu,\nu)$ and $(\hat \mu,\hat \nu)$).

\begin{proof}[Proof of Lemma~\ref{l:KR}]
For notational convenience, denote $\hat X=X$ and $\hat Y=Y$. By the Kantorovich-Rubinstein theorem (Theorem 8.10.45 in \citealp{bogachev2007}), there exist optimal transport plans between $\mu$ and $\hat \mu$ and between $\nu$ and $\hat \nu$: $\alpha\in \Delta(X\times \hat X)$ with $\alpha_X=\mu$ and $\alpha_{\hat X}=\hat \mu$ and $\beta\in \Delta(Y\times \hat Y)$ with $\beta_Y=\nu$ and $\beta_{\hat Y}=\hat \nu$ such that
\[
\kr{\mu-\hat \mu}=\int_{X\times \hat X} \rho_X(x,\hat x)\alpha(\df x,\df \hat x)\quad\text{and}\quad \kr{\nu-\hat \nu}=\int_{Y\times \hat Y} \rho_Y(y,\hat y)\beta(\df y,\df \hat y).
\]	
By Theorem 10.4.5 in \cite{bogachev2007}, $\alpha$ and $\beta$ have conditional distributions $\alpha(\hat x|x)$ and $\beta(\hat y|y)$. Define the pushforward mapping $N:\Delta (Y)\to \Delta (Y)$ by $N(\eta)(\hat A)=\int_Y \beta(\hat A|y) \eta(\df y)$ for all $\eta \in \Delta(Y)$ and $\hat A\subset \hat Y$. Define the joint distribution $\zeta \in \Delta(X\times \hat X\times \Delta(Y)\times \Delta(\hat Y))$ by
 \[
 \zeta (A, \hat A,D,\hat D)= \int_{A\times D}\1 \{N(\eta)\in \hat D \} \alpha(\hat A|x)\gamma (\df x,\df \eta) ,
 \]
for all $A,\hat A\subset X$ and $D,\hat D\subset \Delta(Y)$.
Note that $\zeta_{X\times\Delta(Y)}=\gamma$ and $\zeta_{X\times X}=\alpha$. Define the projection $\hat \gamma = \zeta_{\hat X\times \Delta(\hat Y)}$, and note that $\hat \gamma \in \mathcal P(\hat\mu,\hat \nu)$, as $\hat \gamma_{\hat X}=\hat \mu$ and $\E_{\hat \gamma}[\hat \eta]=\hat \nu$. Finally, define $\sigma=\zeta_{\Delta(Y)\times \Delta(\hat Y)}$, and note that $\E_\zeta[\eta]=\nu$ and $\E_\zeta[\hat \eta]=\hat \nu$.

Then, $\kr{\gamma-\hat \gamma}\leq \kr{\mu-\hat \mu}+\kr{\nu-\hat \nu}$, as, for any $\hat V\in \Lip_1(X\times \Delta(Y))$, we have
\begin{align*}
&\int_{X\times \Delta(Y)}\hat V(x,\eta)(\gamma -\hat \gamma)(\df x,\df \eta) \\
= & \int_{X\times \hat X\times \Delta(Y)\times \Delta(\hat Y)} (\hat V(x,\eta)-\hat V(\hat x,\hat \eta))\zeta(\df x,\df \hat x,\df \eta,\df \hat \eta) \\
=& \int_{X\times \hat X\times \Delta(Y)\times \Delta(\hat Y)} (\hat V(x,\eta)-\hat V(\hat x,\eta)+\hat V(\hat x,\eta)-\hat V(\hat x,\hat \eta))\zeta(\df x,\df \hat x,\df \eta,\df \hat \eta) \\
\leq &  \int_{X\times \hat X} \rho_X(x,\hat x)\alpha(\df x ,\df \hat x) +\int_{\Delta(Y)\times \Delta(\hat Y)} \kr{\eta -\hat \eta}\sigma(\df \eta,\df \hat \eta)\\
= & \kr{\mu -\hat \mu} + \kr{\nu-\hat \nu},
\end{align*}
where the last line is by Lemma 3 in \cite{DK}, restated as Claim~\ref{l:L}.
\begin{claim}\label{l:L}
$\int_{\Delta(Y)\times \Delta(\hat Y)} \kr{\eta-\hat \eta}\sigma(\df \eta,\df \hat \eta)=\kr{\nu-\hat \nu}$.
\end{claim}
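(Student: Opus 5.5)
The plan is to prove the two inequalities separately. Let $\tau$ denote the marginal of $\gamma$ on $\Delta(Y)$, so $\E_\tau[\eta]=\nu$, and let $\sigma$ be the law of $(\eta,N(\eta))$ with $\eta\sim\tau$. Both $\eta$ and $N(\eta)$ are probability measures. Hence the mass term $|\eta(Y)-N(\eta)(\hat Y)|$ in \eqref{e:KR} vanishes, and $\kr{\eta-N(\eta)}$ equals the supremum part of \eqref{e:KR}. By the Kantorovich--Rubinstein theorem (Theorem 8.10.45 in \citealp{bogachev2007}, already invoked above), that supremum equals the minimal transport cost $\inf_\pi\int\rho_Y\,\df\pi$ over couplings $\pi$ of $\eta$ and $N(\eta)$. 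The same holds for $\nu$ and $\hat\nu$.

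\textbf{Upper bound.} Fix $\eta\in\Delta(Y)$ and define the coupling $\pi_\eta(\df y,\df\hat y)=\eta(\df y)\,\beta(\df\hat y|y)$. Its first marginal is $\eta$, and by the definition of $N$ its second marginal is $N(\eta)$. Therefore
\[
\kr{\eta-N(\eta)}\le\int_Y\int_{\hat Y}\rho_Y(y,\hat y)\,\beta(\df\hat y|y)\,\eta(\df y).
\]
Integrate this inequality over $\eta\sim\tau$. The map $\eta\mapsto\int h\,\df\eta$ is linear, where $h(y)=\int\rho_Y(y,\hat y)\beta(\df\hat y|y)$, and $\E_\tau[\eta]=\nu$. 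So the right-hand side integrates to $\int h\,\df\nu=\int\rho_Y\,\df\beta=\kr{\nu-\hat\nu}$, using the optimality of $\beta$. This gives $\int\kr{\eta-\hat\eta}\,\sigma(\df\eta,\df\hat\eta)\le\kr{\nu-\hat\nu}$.

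\textbf{Lower bound.} The KR norm is a supremum of linear functionals. Fix $q\in\Lip_1(Y)$ with $q(y_0)=0$. We have $\E_\tau[\eta]=\nu$, and $\E_\tau[N(\eta)]=N(\nu)=\hat\nu$ because $\beta$ has second marginal $\hat\nu$. Hence
\[
\int q\,\df(\nu-\hat\nu)=\int\Big(\int q\,\df\eta-\int q\,\df N(\eta)\Big)\tau(\df\eta)\le\int\kr{\eta-N(\eta)}\,\tau(\df\eta).
\]
Taking the supremum over $q$ gives $\kr{\nu-\hat\nu}\le\int\kr{\eta-\hat\eta}\,\sigma(\df\eta,\df\hat\eta)$. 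This is Jensen's inequality for the norm. Combining the two bounds yields the claim.

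The main point requiring care is measure-theoretic rather than conceptual. First, $\eta\mapsto N(\eta)$ must be Borel measurable; this holds because $\eta\mapsto\int\beta(\hat A|y)\,\eta(\df y)$ is measurable for each $\hat A$. Second, the interchanges of integrals, such as $\E_\tau[\int h\,\df\eta]=\int h\,\df\nu$ and $\E_\tau[N(\eta)]=\hat\nu$, must be justified. Both follow by Fubini's theorem for bounded measurable integrands, since $\rho_Y$ and $q$ are bounded on the compact space $Y$.
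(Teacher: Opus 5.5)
Your proposal is correct and is essentially the paper's argument. The upper bound uses the coupling $\eta(\df y)\,\beta(\df\hat y|y)$, which is exactly the conditional $\xi(\cdot|\eta)$ that the paper obtains by building the auxiliary measures $\chi$ and $\xi$; you write this coupling directly from the kernel instead. Integrating over $\tau$ recovers $\int\rho_Y\,\df\beta=\kr{\nu-\hat\nu}$. The lower bound is the same convexity-of-the-norm step, using $\E_\tau[\eta]=\nu$ and $\E_\tau[N(\eta)]=\hat\nu$, which you spell out via the supremum over $1$-Lipschitz test functions.
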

\begin{proof}
Define $\tau=\sigma_{\Delta(Y)}$, $\chi\in \Delta(Y\times \Delta(Y))$ by $\chi (B,D)=\int_D \eta(B)\tau(\df \eta)$ for all $B\subset Y$ and $D\subset \Delta(Y)$, and $\xi\in \Delta(Y\times \hat Y\times \Delta(Y))$ by $\xi (B,\hat B,D)=\int_B \beta(\hat B|y)\chi(D|y)\nu (\df y)$ for all $B,\hat B\subset Y$ and $D\subset \Delta(Y)$. Note that $\xi (B,\hat Y|\eta)=\eta(B)$ and $\xi(Y,\hat B|\eta)=N(\eta)(\hat B)$ for all $B,\hat B\subset Y$, so $\kr{\eta-N(\eta)}\leq \int_{Y\times \hat Y}\rho_Y(y,\hat y)\xi(\df y,\df \hat y|\eta)$. Then
\begin{align*}
\int_{\Delta(Y)\times \Delta(\hat Y)} \kr{\eta-\hat \eta}\sigma(\df \eta,\df \hat \eta)&=\int_{\Delta(Y)} \kr{\eta -N(\eta)}\tau (\df \eta)\\
&\leq \int_{\Delta(Y)} \int_{Y\times \hat Y} \rho_Y(y,\hat y) \xi (\df y,\df \hat y|\eta) \tau(\df \eta)\\
&=  \int_{Y\times \hat Y} \rho_Y (y,\hat y) \beta (\df y,\df \hat y) \\
&=\kr{\nu-\hat \nu}\\
&= \kr{\int_{\Delta(Y)} \eta\tau (\df \eta) -\int_{\Delta(Y)}N(\eta)\tau (\df \eta)}\\
&\leq \int_{\Delta(Y)} \kr{\eta -N(\eta)}\tau (\df \eta),
\end{align*}
implying that both inequalities hold with equality. 
\end{proof}
\end{proof}

Next, $U$ is superdifferentiable, because it is concave and has bounded steepness at $(\mu,\nu)$ (which is implied by Lipschitz continuity). Indeed, the Duality Theorem in \cite{Gale1967} implies that there exists a continuous linear function $H$ on $M(X)\times M(Y)$ such that $H(\hat \mu-\mu,\hat \nu-\nu)\geq U(\hat \mu,\hat \nu)-U(\mu,\nu)$ for all $\hat \mu\in \Delta(X)$ and $\hat \nu \in \Delta(Y)$. By the Kantorovich-Rubinstein theorem (Exercise 8.10.143 in \citealp{bogachev2007} or Theorem 0 in \citealp{hanin}), there exist $p\in \Lip(X)$ and $q\in \Lip(Y)$ such that $H(\hat \mu-\mu,\hat \nu-\nu)=\int_X p(x)(\hat \mu -\mu)(\df x)+\int_Y q(y)(\hat \nu-\nu)(\df y)$ for all $\hat \mu\in \Delta(X)$ and $\hat \nu \in \Delta(Y)$. Adding a constant to $p$ if necessary, we obtain
\begin{align}
U (\mu,\nu)=&\int_X p(x) \mu(\df x)+\int_Y q(y)\nu(\df y),\label{e:U=H}\\
U (\hat \mu,\hat \nu)\leq &\int_X p(x) \hat \mu(\df x)+\int_Y q(y)\hat \nu(\df y),\quad \text{for all $\hat \mu\in \Delta(X)$ and $ \hat \nu\in \Delta(Y)$.}\label{e:U<H}
\end{align}

Given that $U$ has a supergradient $(p,q)$ at $(\mu,\nu)$, standard duality arguments imply equilibrium existence and its equivalence to solutions to \eqref{P} and \eqref{D}. Indeed, note that $(p,q) \in \Df$, where $\Df$ denotes the set of feasible solutions to \eqref{D}, as, for each $x\in X$ and $\eta\in \Delta(Y)$, we have
\begin{equation}
V(x,\eta)\leq U (\delta_x,\eta)\leq p(x)+\int_Y q(y) \eta(\df y), \label{e:dual}
\end{equation}
where the first inequality is by $\delta_{(x,\eta)}\in \mathcal P(\delta_x,\eta)$, and the second inequality is by \eqref{e:U<H} evaluated at $(\hat \mu,\hat \nu)=(\delta_x,\eta)$. Letting $\gamma$ be a solution to \eqref{P}, we have
\begin{align*}
&\, \int_{X\times \Delta(Y)} \left (V(x,\eta)-p(x)-\int_Y q(y)\eta (\df y)\right )\gamma(\df x,\df \eta)\\
=& \, U(\mu,\nu)-\int_X p(x)\mu(\df x)-\int_Y q(y)\nu(\df y)=0,	
\end{align*}
where the first equality is by the definition of $\gamma\in \Pf$, and the second equality is by \eqref{e:U=H}. Hence, \eqref{e:dual} holds with equality for $\gamma$-almost all $(x,\eta)$ and thus for all $(x,\eta)\in \supp(\gamma)$, by continuity of $V(x,\eta)-p(x)-\int_Y q(y) \eta(\df y)$ in $(x,\eta)$. So, $(\gamma;p,q)$ is a competitive equilibrium, which proves equilibrium existence. Next, if $(\gamma;p,q)$ is an equilibrium, then $\gamma$ solves \eqref{P} and $(p,q)$ solves \eqref{D}, because, for each $\tilde \gamma\in \Pf$ and $(\tilde p,\tilde q)\in \Df$, we have
\begin{align*}
\int_X \tilde p(x)\mu(\df x)+\int_Y \tilde q(y)\nu(\df y)&=\int_{X\times \Delta(Y)}  \left( \tilde p(x)+\int_Y \tilde q(y)\eta(\df y) \right)\gamma(\df x,\df \eta)\\
&\geq \int_{X\times \Delta(Y)} V(x,\eta)\gamma(\df x,\df \eta)\\
&=\int_{X\times \Delta(Y)}  \left( p(x)+\int_Y  q(y)\eta(\df y) \right)\gamma(\df x,\df \eta)\\
&=\int_X p(x)\mu(\df x)+\int_Y q(y)\nu(\df y)\\
&= \int_{X\times \Delta(Y)}  \left( p(x)+\int_Y q(y)\eta(\df y) \right)\tilde \gamma(\df x,\df \eta)\\
&\geq \int_{X\times \Delta(Y)} V(x,\eta)\tilde \gamma(\df x,\df \eta).
\end{align*}
Conversely, if $\gamma$ solves \eqref{P} and $(p,q)$ solves \eqref{D}, then $(\gamma;p,q)$ is an equilibrium. Indeed, since there exists an equilibrium $(\tilde \gamma;\tilde p,\tilde q)$, we have
\begin{align*}
0&= \int_{X\times \Delta(Y)} \left (V(x,\eta)-\tilde p(x)-\int_Y \tilde q(y)\eta (\df y)\right )\tilde \gamma(\df x,\df \eta)\\
&=\int_{X\times \Delta(Y)} V(x,\eta)\tilde \gamma(\df x,\df \eta)-\int_X \tilde p(x)\mu(\df x)-\int_Y \tilde q(y)\nu(\df y)\\
&\leq \int_{X\times \Delta(Y)} V(x,\eta) \gamma(\df x,\df \eta)-\int_X  p(x)\mu(\df x)-\int_Y q(y)\nu(\df y)\\
&=\int_{X\times \Delta(Y)} \left (V(x,\eta)- p(x)-\int_Y q(y)\eta (\df y)\right )\gamma(\df x,\df \eta)\leq 0,
\end{align*}
showing that the inequalities hold with equality, so $V(x,\eta)-p(x)-\int_Y q(y) \eta(\df y)=0$ for all $(x,\eta)\in \supp(\gamma )$.
\end{proof}

\begin{proof}[Proof of Remark~\ref{r:price}]
The remark follows from standard optimal transport arguments. Fix a solution $(p,q)$ to \eqref{D}. Let $p^\star:X\to \R$ and $V^\star:\Delta(Y)\to \R$ be given by, for all  $x\in X$ and $\eta\in \Delta(Y)$,
\begin{align*}
p^\star(x) =\max_{\eta\in \Delta(Y)} \left\{V(x,\eta)-\int_Y q(y)\eta (\df y)\right \}\quad \text{and}\quad V^\star(\eta) &=	\max_{x\in X} \{V(x,\eta)-p(x)\},
\end{align*}
where the maximums are attained by continuity and compactness.

By a standard envelope-theorem argument, $p^\star$ and $V^\star$ are $L$-Lipschitz, given that $V$ is $L$-Lipschitz. Indeed, for all $x,\hat x\in X$ and $\eta,\hat \eta \in \Delta(Y) $, we have
\begin{align*}
p^\star (x) &\geq V(x,\eta)-\int_Y q(y)\eta(\df y) \geq V(\hat x,\eta)-\int_Y q(y)\eta(\df y)  -L\rho_X(x,\hat x),\\
V^\star(\eta) &\geq 	V(x,\eta)-p(x)\geq V(x,\hat \eta)-p(x)-L\kr{\eta-\hat \eta},
\end{align*}
so taking the maximum over $\eta\in \Delta(Y)$ and $x\in X$ on the right-hand sides yields $p^\star(x)\geq p^\star (\hat x)-L \rho_X(x,\hat x)$ and $V^\star (\eta)\geq V^\star(\hat \eta)-L\kr{\eta-\hat \eta}$. Then, since \eqref{D} has a solution when $\mu=\delta_{x}$ and $V(x,\eta)=V^\star(\eta)$ for all $x\in X$ and $\eta\in \Delta(Y)$, there exists $q^\star$ that solves
\begin{equation}
	\min_{\hat q\in \Lip(Y)} \int _Y\hat q(y)\nu(\df y)\quad \text{s.t.}\quad \int_Y \hat q(y)\eta(\df y) \geq V^\star(\eta),\quad \text{for all $\eta\in  \Delta(Y)$}.\label{e:PD}
\end{equation} 
Note that $(p^\star,q)\in \Df$ and $p^\star(x)\leq p(x)$ for all $x\in X$, so $p(x)=p^\star(x)$ for all $x\in \supp(\mu)$, since $(p,q)$ solves \eqref{D}. Similarly, $(p,q^\star)\in \Df$ and $\int_Y q^\star(y)\nu (\df y)\leq \int _Y q(y)\nu (\df y)$, so $q$ also solves \eqref{e:PD}, since $(p,q)$ solves \eqref{D}.
\end{proof}

\subsection{Positive Assortative Segregation}

\begin{proof}[Proof of Theorem~\ref{t:fs}]
\emph{Part 3.} By Definition~\ref{d:CE} (take $\eta=\delta_y$), if $(\phi;p,q)$ is a competitive equilibrium, then
\[
p(x)=V(x,\delta_{S(x)})-q(S(x))=\max_{y\in Y} \{V(x,\delta_y)-q(y)\},
\]
so, by the envelope theorem (\citealp{MS}, Theorem 3) and continuity of $V_x$ and $S$, $p$ has a derivative $p'(x)=V_x(x,\delta_{S(x)})$ at each $x\in (\ul x,\ol x)$, yielding \eqref{e:ps}; and \eqref{e:qs} follows from $p(S^{-1}(y))+q(y)=V(S^{-1}(y),\delta_y)$.

\emph{Part 1.} By part 3, Definition~\ref{d:CE}, and some algebra, PAS is a competitive equilibrium if and only if \eqref{e:FS} holds. In addition, if \eqref{e:FS} holds strictly for all $(x,\eta)$ with $\eta\neq \delta_{S(x)}$, then, by Theorem~\ref{t:e} and continuity of $V$, $p$, and $q$, for any equilibrium assignment $\gamma$, we have 
\[
\supp(\gamma)\subset \Big \{(x,\eta)\in X\times \Delta (Y) :p(x)+\int_Y q(y)\, \eta (\df y)=V(x,\eta)\Big \}=\{(x,\delta_{S(x)}):x\in X\},
\]
showing that $\gamma=\phi$.

\emph{Part 2.} Suppose \eqref{e:spm} and \eqref{e:oc} hold. Let $p$ and $q$ be given by \eqref{e:ps} and \eqref{e:qs}. Then
\begin{align*}
p(x)+\int_Y q(y)\, \eta(\df y) &=\int_Y \Big(V(x,\delta_y)+\int_{S(x)}^y\int_{x}^{S^{-1}(\tilde y)}V_{xy}(\tilde x,\delta_{\tilde y})\,\df \tilde x\,\df \tilde y\Big)\,\eta(\df y)\\
&\geq \int_Y V(x,\delta_y)\, \eta(\df y) \geq V(x,\eta),
\end{align*}
for all $(x,\eta)\in X\times \Delta(Y)$, so $\phi$ is an equilibrium assignment. In addition, if $V_{xy}(x,\delta_y)>0$ for all $(x,y)\in X\times Y$, then the first inequality holds strictly for all $(x,\eta)$ with $\eta\neq \delta_{S(x)}$, so $\phi$ is the unique equilibrium assignment.

Conversely, suppose that $V_{xy}(x,\delta_y)< 0$ for some $(x,y)\in X\times Y$. By continuity of $V_{xy}$, there exist $x'>x$ in $X$ and $y'>y$ in $Y$ such that $V_{xy}(\tilde x,\delta_{\tilde y})<0$ for all $(\tilde x,\tilde y)\in [x,x']\times [y,y']$. Let $\mu$ and $\nu$ be such that $S(x)=y$ and $S(x')=y'$. Then, for $p$ and $q$ given by \eqref{e:ps} and \eqref{e:qs},
\begin{align*}
p(x)+q(y')-V(x,\delta_{y'})=\int_y^{y'}\int_x^{S^{-1}(\tilde y)}V_{xy}(\tilde x,\delta_{\tilde y})\, \df \tilde x\, \df \tilde y<0,
\end{align*}
showing that \eqref{e:FS} fails at $(x,\delta_{y'})$, so $\phi$ is not an equilibrium assignment.

Finally, suppose that $\int_Y V(x,\delta_y)\,\eta(\df y)+\varepsilon< V(x,\eta)$ for some $(x,\eta)\in X\times \Delta(Y)$ and $\varepsilon>0$. By continuity of $V_{xy}$, there exist $\mu\in \Delta(X)$ (with small enough $\kr{\mu-\delta_x}$) and $\nu\in \Delta(Y)$ such that $\int_{S(x)}^y\int_{x}^{S^{-1}(\tilde y)}V_{xy}(\tilde x,\delta_{\tilde y})\,\df \tilde x\,\df \tilde y<\varepsilon$ for all $y\in Y$. Then, for $p$ and $q$ given by \eqref{e:ps} and \eqref{e:qs},
\begin{align*}
p(x)+\int_Y q(y)\, \eta(\df y) &=\int_Y \Big(V(x,\delta_y)+\int_{S(x)}^y\int_{x}^{S^{-1}(\tilde y)}V_{xy}(\tilde x,\delta_{\tilde y})\,\df \tilde x\,\df \tilde y\Big)\,\eta(\df y)\\
&< \int_Y V(x,\delta_y)\, \eta(\df y)+\varepsilon < V(x,\eta),
\end{align*}
showing that \eqref{e:FS} fails at $(x,\eta)$, so $\phi$ is not an equilibrium assignment.
\end{proof}

\begin{proof}[Proof of Remark~\ref{r:moment}]
Fix $x\in X$ and $\eta\in \Delta(Y)$. By the Choquet Theorem (\citealp{winkler}, Theorem 3.1) and the Richter-Rogosinsky Theorem (\citealp{winkler}, Theorem 2.1), there exists $\tau \in \Delta(\Delta(Y))$ such that $\E_\tau [\tilde \eta]=\eta$, and, for each $\tilde \eta\in \supp(\tau)$, we have $|\supp(\tilde \eta)|\leq 2$ and $\E_{\tilde\eta}[m(x,y)]=\E_{\eta}[m(x,y)]$. Thus, if \eqref{e:FS} holds at all $\tilde \eta\in \supp(\tau)$, then 
\begin{gather*}
p(x)+\E_\eta [q(\tilde y)]=p(x)+\E_\tau [\E_{\tilde \eta}[q(\tilde y)]]\\
\geq \E_\tau[\E_{\tilde \eta} [v(x,y,\E_{\tilde\eta}[m(x,\tilde y)])]]=\E_{\eta} [v(x,y,\E_{\eta}[m(x,\tilde y)])],	
\end{gather*}
showing that \eqref{e:FS} holds at $\eta$. In addition, if $\eta\neq \delta_{S(x)}$ and \eqref{e:FS} holds strictly for all $\tilde \eta\in \supp(\tau)$ with $\tilde \eta\neq \delta_{S(x)}$, then $\supp(\tau)$ contains some $\tilde \eta \neq \delta_{S(x)}$ (by $\E_\tau [\tilde \eta]=\eta\neq \delta_{S(x)}$), at which the inequality is strict, so \eqref{e:FS} holds strictly at $\eta$. Similarly, if \eqref{e:oc} holds at all $\tilde \eta\in \supp(\tau)$, then 
\[
\int_Y V(x,\delta_{y})\,\eta(\df y)=\int_{\Delta(Y)}\int_Y V(x,\delta_{y})\,\tilde \eta(\df y)\, \tau(\df \tilde \eta)\geq \int_{\Delta(Y)} V(x,\tilde \eta )\, \tau(\df \tilde \eta)= V(x,\eta),
\]
showing that \eqref{e:oc} holds at $\eta$.
\end{proof}

\subsection{Duality in the Mean-Measurable Case}
This appendix allows for general $v\in \Lip (X\times Y)$, which may not be continuously differentiable, and general $F\in \Delta(X)$ and $G\in \Delta(Y)$, which may not have full support or densities. Theorem~\ref{t:el} continues to hold verbatim, and Lemma~\ref{l:CEeqv} is generalized to Lemma~\ref{l:CEeqvg}.
	
We start with three lemmas.
\begin{lemma}\label{l:Lip}
If $v\in \Lip_L(X\times Y)$, then $V\in \Lip_L(X\times \Delta(Y))$.
\end{lemma}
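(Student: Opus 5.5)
The plan is to bound $|V(x,\eta)-V(\hat x,\hat\eta)|$ directly, using $V(x,\eta)=v(x,\E_\eta[y])$, and to split via the triangle inequality into a change in $x$ and a change in $\eta$:
\[
|V(x,\eta)-V(\hat x,\hat\eta)|\le |v(x,\E_\eta[y])-v(\hat x,\E_\eta[y])|+|v(\hat x,\E_\eta[y])-v(\hat x,\E_{\hat\eta}[y])|.
\]
Since $v\in\Lip_L(X\times Y)$, the first term is at most $L\rho_X(x,\hat x)$ and the second is at most $L\,|\E_\eta[y]-\E_{\hat\eta}[y]|$. Here I take the product metric on $X\times Y$ to be the sum metric $\rho_X+\rho_Y$, with $X\times\Delta(Y)$ metrized the same way using $\kr{\cdot}$ on the second factor. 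If the paper intends a different product metric, such as the max or Euclidean one, the same argument applies up to the equivalence constant.

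The key step is to show $|\E_\eta[y]-\E_{\hat\eta}[y]|\le \kr{\eta-\hat\eta}$. I would apply the definition \eqref{e:KR}. Because $\eta,\hat\eta$ are probability measures, $(\eta-\hat\eta)(Y)=0$, so the mass term drops out. The function $q(y)=y-y_0$ is $1$-Lipschitz on $Y=[\ul y,\ol y]$ with respect to $\rho_Y(y,\hat y)=|y-\hat y|$, and it vanishes at $y_0$. It is therefore admissible in the supremum, giving
\[
\E_\eta[y]-\E_{\hat\eta}[y]=\int_Y (y-y_0)\,(\eta-\hat\eta)(\df y)\le \kr{\eta-\hat\eta}.
\]
Applying the same argument to $-q$, which is also admissible, bounds the absolute value.

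Combining the two steps gives $|V(x,\eta)-V(\hat x,\hat\eta)|\le L(\rho_X(x,\hat x)+\kr{\eta-\hat\eta})$, that is, $V\in\Lip_L(X\times\Delta(Y))$. I do not expect a real obstacle. The only point needing care is the metric convention: one has to check that $y\mapsto y$ is genuinely $1$-Lipschitz for the metric $\rho_Y$ in use, which holds for the standard metric on the interval, and that the product metric on $X\times\Delta(Y)$ matches the one implicitly used for $v$ on $X\times Y$.
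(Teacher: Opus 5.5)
Your proof is correct and follows the paper's argument: the same triangle-inequality split into a change in $x$ and a change in the mean, with $|\E_\eta[y]-\E_{\hat\eta}[y]|\le\kr{\eta-\hat\eta}$ coming from the Kantorovich--Rubinstein norm. You spell out the admissible test function $y-y_0$, which the paper leaves implicit, and your sum-metric convention $\rho_X+\kr{\cdot}$ matches the bound $L(|x-\hat x|+\kr{\eta-\hat\eta})$ the paper uses.
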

\begin{proof}
If $v$ is $L$-Lipschitz, then so is $V$, as, for all $x,\hat x\in X$ and $\eta,\hat \eta\in \Delta(Y)$, we have
\begin{align*}
V(x,\eta)-V(\hat x,\hat \eta)&=v(x,\E_\eta[y])-v(\hat x,\E_\eta[y])+v(\hat x,\E_{\eta}[y])-v(\hat x,\E_{\hat \eta}[y])\\
&\leq L\left(|x-\hat x|+\left|\int_Y y\eta(\df y)-\int_Y y\hat\eta(\df y) \right|\right)\\
&\leq L\left(|x-\hat x|+\kr{\eta-\hat \eta}\right).\qedhere
\end{align*}
\end{proof}

\begin{lemma}\label{l:Blackwell}
There exists $\gamma\in \Pf$ that induces $J\in \Delta(X\times Y)$ if and only if $J_X=F$ and $J_Y\succeq  G$. 
\end{lemma}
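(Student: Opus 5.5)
The plan is to prove the two directions separately, reducing each to the classical Blackwell/Strassen characterization of mean-preserving spreads in one dimension.

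\emph{Necessity.} Let $\gamma\in\Pf$ induce $J$. Then $J_X=\gamma_X=F$ holds immediately. The $Y$-marginal $J_Y=H$ is the distribution of $\E_\eta[y]$ under $\eta\sim\gamma_{\Delta(Y)}=:\tau$, and $\E_\tau[\eta]=G$. Fix any convex $\phi$ on $Y$. By Jensen's inequality,
\[
\E_H[\phi(z)]=\E_\tau[\phi(\E_\eta[y])]\le \E_\tau[\E_\eta[\phi(y)]]=\E_G[\phi(y)].
\]
This says $G$ dominates $H$ in the convex order. Taking $\phi(y)=(t-y)^+$ gives $\int_{\ul y}^t H\le\int_{\ul y}^t G$. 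Taking $\phi(y)=\pm y$ gives equal means, which is the equality at $\ol y$. Hence $H\succeq G$.

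\emph{Sufficiency.} Suppose $J_X=F$ and $H:=J_Y\succeq G$.
\begin{enumerate}
\item By Strassen's theorem (the Blackwell result the paper already cites), there is a martingale coupling $\kappa\in\Delta(Y\times Y)$ with marginals $H$ and $G$ and $\E_\kappa[y\mid z]=z$.
\item Disintegrate $\kappa(\df z,\df y)=H(\df z)\,\kappa_z(\df y)$. This gives a measurable kernel $z\mapsto\kappa_z\in\Delta(Y)$ with $\E_{\kappa_z}[y]=z$ for $H$-a.e.\ $z$. On the remaining $H$-null set, redefine $\kappa_z=\delta_z$ so that the mean identity holds everywhere.
\item Define $\gamma$ as the pushforward of $J$ under $(x,z)\mapsto(x,\kappa_z)$.
\end{enumerate}

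The remaining step is to verify the required properties of $\gamma$:
\begin{itemize}
\item $\gamma_X=J_X=F$.
\item $\E_\gamma[\eta]=\int\kappa_z\,H(\df z)=G$.
\item The induced distribution of $(x,\E_\eta[y])=(x,z)$ is $J$ itself.
\end{itemize}

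The main technical point is measurability: the map $z\mapsto\kappa_z$ must be Borel into $\Delta(Y)$ with the weak$^\star$ topology, so that the pushforward is well defined. This follows from the existence of regular conditional distributions on the compact metric space $Y\times Y$ (e.g.\ Theorem 10.4.5 in \cite{bogachev2007}, as used in the proof of Lemma~\ref{l:KR}). Weak$^\star$-measurability of a kernel is equivalent to measurability of $z\mapsto\kappa_z(B)$ for every Borel $B$.

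Strassen's theorem itself is assumed as known. Alternatively, it can be obtained in dimension one by an explicit construction such as the left-curtain coupling.
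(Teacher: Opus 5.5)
Your proposal is correct and follows the paper's proof. Necessity comes from Jensen's inequality applied to convex test functions. Sufficiency takes the Blackwell (Strassen) mean-preserving kernel $z\mapsto\kappa_z$ and pushes $J$ forward along $(x,z)\mapsto(x,\kappa_z)$, exactly as the paper does with its kernel $\mathcal N$. Your explicit reduction of the convex order to the integrated-cdf inequality via $(t-y)^+$ and $\pm y$, and your handling of the $H$-null set and the kernel's measurability, fill in details the paper leaves implicit.
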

\begin{proof}
Let $\gamma \in \Pf$ and let $J^\gamma\in \Delta(X\times Y)$ denote the joint distribution of $(x,\E_\eta[y])$ induced by $\gamma$. Then $J^\gamma_X=F$, by $\gamma_X=\mu$, and $J^\gamma_Y\succeq  G $, because, for each $h\in \Conv (Y)$, we have, by Jensen's inequality and $\E_\gamma[\eta]=\nu$,
\[
\E_{J^\gamma_Y}[h(y)]=\E_{\gamma}[h(\E_\eta[y])]\leq \E_{\gamma} [\E_{\eta}[h(y)]]=\E_G [h(y)].
\]
Conversely, let $J\in \Delta(X\times Y)$ be such that $J_X=F$ and $J_Y\succeq  G$, and let $\mu$ and $\nu$ denote the measures induced by the cdfs $F$ and $G$. By \cite{blackwell1953equivalent}, there exists $\mathcal N:Y\to \Delta(Y)$ such that $\E_{\mathcal N(z)} [y]=z$, for all $z\in Y$, and $\E_{J_Y} [\mathcal N(z)] =\nu $. Define $\gamma\in \Delta(X\times \Delta(Y))$ by 
\[
\gamma(A,D)=\int_{A\times Y} \1\{\mathcal N(z)\in D\}\, J (\df x,\df z),\quad \text{for all $A\subset X$ and $D\subset \Delta(Y)$}.
\]
Then $J^\gamma_X=J_X$ and $J^\gamma_Y=J_Y$. Moreover, $\gamma_X=\mu$, by $J_X=F,$ and
$\E_\gamma[\eta]=\E_{J_Y} [\mathcal N(z)]=\nu$, by the construction of $\mathcal N$.
\end{proof}

Let $\DfL$ denote the set of feasible solutions to \eqref{DL}. For any $q\in \Lip(Y)$, define $\check q: Y\to \R$ by
\begin{equation}\label{e:qc}
\check{q}(z)=\min_{\eta\in \Delta(Y)} \E_\eta [q(y)]\quad \text{s.t.}\quad  \E_\eta[y]=z,\quad\text{for all $z\in Y$.}
\end{equation} 

\begin{lemma}\label{l:dual}
If $(p,q)\in \DfL $, then $(p,q)\in \Df $. If $(p,q)\in \Df $, then $(p,\check q)\in \DfL $.
\end{lemma}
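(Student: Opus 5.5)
The plan is to verify the two implications separately, using only the definitions of $\DfL$ and $\Df$ (with $V(x,\eta)=v(x,\E_\eta[y])$) and the construction of $\check q$ in \eqref{e:qc}.

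\emph{First implication.} Take $(p,q)\in \DfL$, so $q$ is convex and Lipschitz and $p(x)+q(z)\ge v(x,z)$ for all $(x,z)$. For any $(x,\eta)$, Jensen's inequality gives
\[
p(x)+\E_\eta[q(y)]\ge p(x)+q(\E_\eta[y])\ge v(x,\E_\eta[y])=V(x,\eta).
\]
Hence $(p,q)\in\Df$.

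\emph{Second implication.} Take $(p,q)\in\Df$. The argument has three steps.

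First, $\check q$ is well defined. The minimum in \eqref{e:qc} is attained because $\{\eta:\E_\eta[y]=z\}$ is a nonempty, weak$^\star$-compact subset of $\Delta(Y)$, and $\eta\mapsto\E_\eta[q(y)]$ is continuous since $q$ is Lipschitz.

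Second, the dual constraint holds. For every $z$, pick a minimizer $\eta_z$ in \eqref{e:qc}. Then
\[
p(x)+\check q(z)=p(x)+\E_{\eta_z}[q(y)]\ge V(x,\eta_z)=v(x,z).
\]

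Third, $\check q$ is convex. Take $z=\lambda z_1+(1-\lambda)z_2$ with minimizers $\eta_1,\eta_2$. The mixture $\lambda\eta_1+(1-\lambda)\eta_2$ has mean $z$. Therefore
\[
\check q(z)\le \lambda\check q(z_1)+(1-\lambda)\check q(z_2).
\]
So $\check q$ is the convex envelope of $q$ on the interval $Y$; note also that $\check q\le q$, by taking $\eta=\delta_z$.

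\emph{Main obstacle: $\check q\in\Lip(Y)$.} This is the one step that is not immediate, because a convex function on a compact interval can have unbounded slope at the endpoints. I would handle it by comparing $\check q$ with the affine minorants of $q$.

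Let $L$ be the Lipschitz constant of $q$. For any $z_0$, the function $q(z_0)-L|y-z_0|$ minorizes $q$, but it is not affine. Instead, I would use that the convex envelope on an interval is the supremum of the affine functions lying below $q$.

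The key observation is that any affine minorant of $q$ that touches $\check q$ at an endpoint $\ul y$ must touch $q$ there too, because only $\delta_{\ul y}$ has mean $\ul y$, so $\check q(\ul y)=q(\ul y)$. Since $q$ is $L$-Lipschitz and the line lies below $q$ and meets it at $\ul y$, its slope is at least $-L$. The same argument at $\ol y$ shows the slope is at most $L$.

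Monotonicity of the subgradients of the convex function $\check q$ then gives $|\check q'|\le L$ everywhere. Hence $\check q$ is $L$-Lipschitz, which is exactly what the endpoint argument is needed for, and $(p,\check q)\in\DfL$.
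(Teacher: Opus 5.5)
Your first implication, the attainment of the minimum in \eqref{e:qc}, the dual constraint $p(x)+\check q(z)\ge v(x,z)$, and the convexity of $\check q$ are all correct; the first two match the paper. The gap is in the step you single out as the main obstacle, because the endpoint slope bound runs in the wrong direction. If $\ell(y)=q(\ul y)+s(y-\ul y)$ lies below $q$ on $Y$, this only says $s\le (q(y)-q(\ul y))/(y-\ul y)$ for all $y>\ul y$. That is an \emph{upper} bound on $s$, and Lipschitz continuity of $q$ gives no lower bound. For example, take $Y=[0,1]$ and $q(y)=y$, so $L=1$ and $\check q=q$. The line $-5y$ lies below $q$ and touches $\check q$ at $\ul y=0$, yet its slope is $-5<-L$. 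Symmetrically, a minorant touching at $\ol y$ gets a \emph{lower} bound on its slope, not $s\le L$. At an endpoint, the set of supporting slopes of $\check q$ is a half-line unbounded in exactly the direction you need to control. So no bound on \emph{every} supporting line there can feed your monotonicity argument.

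What you need is existential: \emph{some} supporting line at $\ul y$ has slope $-L$. This is the function you set aside, taken at $z_0=\ul y$. There $\ell(y)=q(\ul y)-L(y-\ul y)$ is affine on $Y$ and lies below $q$. By Jensen, $\ell(z)=\E_\eta[\ell(y)]\le \E_\eta[q(y)]$ for every $\eta$ with mean $z$, so $\ell\le\check q$. Equality holds at $\ul y$ since $\check q(\ul y)=q(\ul y)$. Thus $-L\in\partial\check q(\ul y)$, and likewise $L\in\partial\check q(\ol y)$ via $q(\ol y)-L(\ol y-y)$. Monotonicity of subgradients then confines all interior subgradients to $[-L,L]$. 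The sandwich $q(\ul y)-L(z-\ul y)\le\check q(z)\le q(z)\le q(\ul y)+L(z-\ul y)$, and its analogue at $\ol y$, handles the endpoints. Hence $\check q$ is $L$-Lipschitz. This also makes unnecessary your unproved claim that $\check q$ is the supremum of the affine minorants of $q$, which would require lower semicontinuity of $\check q$.

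Once repaired, your argument is an elementary, self-contained alternative to the paper's. The paper instead identifies \eqref{e:qc}, up to sign, with problem \eqref{P} for a singleton $X$ and the two-point prior on $\{\ul y,\ol y\}$ with mean $z$. That prior moves $1$-Lipschitzly in $z$ in the Kantorovich-Rubinstein norm. Convexity and Lipschitz continuity of $\check q$ then follow from the concavity and Lipschitz continuity of the value function $U$ established in Section~\ref{s:dual}. Your route gives the explicit constant $L$ without that machinery.
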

\begin{proof}
If $(p,q)\in \DfL$, then, for all $x\in X$ and $\eta\in \Delta(Y)$, we have
\[
p(x)+\E_\eta [q(y)]\geq p(x)+q(\E_\eta[y])\geq v(x,\E_\eta[y])=V(x,\eta),
\]
so $(p,q)\in \Df$. 

Let $(p,q)\in \Df$.  Up to notation and sign change, \eqref{e:qc} features the same optimization problem as \eqref{P} when $|\supp(\mu)|=1$ and $|\supp(\nu)|=2$. Thus, as follows from Section~\ref{s:dual}, $\check q$ is Lipschitz continuous and convex. Moreover, letting $\eta_z$ denote a minimizer in \eqref{e:qc}, $(p,q)\in \Df$ yields
\[
p(x)+\check q(z)=p(x)+\E_{\eta_z}[q(y)]\geq V(x,\eta_z)=v(x,z),
\]
showing that $(p,\check q)\in \DfL$. 
\end{proof}

\begin{lemma}\label{l:CEeqvg}
Suppose that $\gamma\in \Delta(X\times \Delta(Y))$ induces $J\in \Delta(X\times Y)$. If $(J;p,q)$ is a competitive equilibrium in the sense of Definition~\ref{d:CEL}, then $(\gamma;p,q)$ is a competitive equilibrium in the sense of Definition~\ref{d:CE}. Conversely, if $(\gamma;p,q)$ is a competitive equilibrium in the sense of Definition~\ref{d:CE}, then $(J;p,\check q)$ is a competitive equilibrium in the sense of Definition~\ref{d:CEL}, and $\check q(y)=q(y)$ for all $y\in \supp(G)$.
\end{lemma}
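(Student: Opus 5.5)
We prove the two directions separately, relying on Lemma~\ref{l:dual} to pass between the feasible sets $\DfL$ and $\Df$, and on the fact that $\gamma$ induces $J$. Write $\nu$ for the measure induced by $G$.

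For the first direction, suppose $(J;p,q)$ satisfies Definition~\ref{d:CEL}. By Lemma~\ref{l:dual}, $(p,q)\in\Df$, which is exactly the inequality condition of Definition~\ref{d:CE}. For the equality condition, note first that $\gamma$ is an assignment: the proof of Lemma~\ref{l:Blackwell} gives $\gamma_X=\mu$ once $\E_\gamma[\eta]=\nu$ is established, and we take $\gamma\in\Pf$ as part of the hypothesis that $\gamma$ is an assignment. We then integrate the slack $V(x,\eta)-p(x)-\E_\eta[q(y)]\le 0$ against $\gamma$:
\[
\E_\gamma[V(x,\eta)]-\E_F[p]-\E_G[q]=\E_J[v(x,z)]-\E_F[p]-\E_{J_Y}[q]=\E_J[v(x,z)-p(x)-q(z)]=0,
\]
using $\E_\gamma[\E_\eta q]=\E_G[q]=\E_{J_Y}[q]$ and the equality on $\supp(J)$. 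Since the integrand is nonpositive and continuous (Lemma~\ref{l:Lip}), it vanishes on $\supp(\gamma)$, which establishes equality on the support.

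For the converse, suppose $(\gamma;p,q)$ satisfies Definition~\ref{d:CE}. Lemma~\ref{l:dual} gives $(p,\check q)\in\DfL$, so $\check q$ is convex and Lipschitz and the inequality in Definition~\ref{d:CEL} holds. The key step is to show $\check q=q$ on $\supp(G)$. Since $\check q\le q$ pointwise (take $\eta=\delta_y$), it suffices to prove $\E_G[\check q]\ge\E_G[q]$ and then invoke continuity. On $\supp(\gamma)$ we have
\[
\E_\eta[q]=V(x,\eta)-p(x)=v(x,\E_\eta y)-p(x)\le \check q(\E_\eta y)\le \E_\eta[\check q],
\]
where the first inequality uses $(p,\check q)\in\DfL$ and the second is Jensen's inequality for convex $\check q$. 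Integrating over $\gamma$ yields $\E_G[q]\le\E_G[\check q]$. Combined with $\check q\le q$, this gives equality $\nu$-a.e., hence on $\supp(G)$ by continuity. All the displayed inequalities then hold with equality $\gamma$-a.e.

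In particular, $\E_{J_Y}[\check q]=\E_\gamma[\check q(\E_\eta y)]=\E_\gamma[\E_\eta q]=\E_G[q]=\E_G[\check q]$, which is the budget condition. Moreover, $p(x)+\check q(z)=v(x,z)$ holds $J$-a.e., and therefore on $\supp(J)$ by continuity combined with the global inequality.

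The main obstacle is the middle part of the converse: establishing that $\check q$ coincides with $q$ on $\supp(G)$, and passing from almost-everywhere to support statements. These steps require the continuity of $\check q$, which is supplied by Lemma~\ref{l:dual}, together with a careful handling of supports when $G$ lacks full support.
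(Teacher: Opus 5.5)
Your proof is correct and is essentially the paper's argument: both establish the chain $\int v\,\df J\le\int(p+\check q)\,\df J\le\int p\,\df\mu+\int\check q\,\df\nu\le\int p\,\df\mu+\int q\,\df\nu=\int V\,\df\gamma=\int v\,\df J$ and read off the support equality, the budget condition, and $\check q=q$ on $\supp(G)$ from the forced equalities. Your pointwise Jensen step, integrated over $\gamma$, plays the role of the paper's appeal to convexity of $\check q$ and $J_Y\succeq G$. One small correction: $\gamma_X=\mu$ follows directly from $J_X=F$, whereas $\E_\gamma[\eta]=\nu$ is what must be assumed; the paper also assumes this implicitly.
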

\begin{proof}
If $(J;p,q)$ is a competitive equilibrium in the sense of Definition~\ref{d:CEL}, then $(\gamma;p,q)$ is a competitive equilibrium in the sense of Definition~\ref{d:CE}, because
\begin{align*}
\int_{X\times \Delta(Y)} V(x,\eta)\, \gamma(\df x,\df \eta)&\leq \int_{X\times \Delta(Y)} \Big (p(x)+\int_Y q(y)\, \eta(\df y)\Big )\, \gamma(\df x,\df \eta)\\
&= \int_X p(x)\, \mu(\df x)+\int_Y q(y)\, \nu (\df y)\\
&= \int_X p(x)\, F(\df x)+ \int_Y q(z)\, J_Y(\df z)\\
&= \int_{X\times Y} v(x,z)\, J(\df x,\df z)\\
&= \int_{X\times \Delta(Y)} V(x,\eta)\, \gamma(\df x,\df \eta),
\end{align*}
where the inequality holds by $(p,q)\in \Df$ (see Lemma~\ref{l:dual}). Conversely, if $(\gamma;p,q)$ is a competitive equilibrium in the sense of Definition~\ref{d:CE}, then $(J;p,\check q)$ is a competitive equilibrium in the sense of Definition~\ref{d:CEL}, and $\check q(y)=q(y)$ for all $y\in \supp(G)$, because
\begin{align*}
\int_{X\times Y} v(x,z)\, J(\df x,\df z)&\leq  \int_{X\times Y} (p(x)+\check q(z))\, J(\df x,\df z)\\
&=\int_X p(x)\, F(\df x)+\int_Y \check q(z)\, J_{Y}(\df z)\\
&\leq  	\int_X p(x)\, \mu (\df x)+\int_Y \check q(y)\, \nu (\df y)\\
&\leq  	\int_X p(x)\, \mu (\df x)+\int_Y q(y)\, \nu (\df y)\\
&= \int_{X\times \Delta(Y)} V(x,\eta)\, \gamma(\df x,\df \eta)\\
&=  \int_{X\times Y} v(x,z)\, J(\df x,\df z),
\end{align*}
where the first inequality holds by $(p,\check q)\in \DfL $, the second inequality holds by $\check q\in \Conv(Y)$ and $J_Y\succeq G$, and the third inequality holds by $\check q(y)\leq q(y)$ for all $y\in Y$.
\end{proof}
\begin{proof}[Proof of Theorem~\ref{t:el}]
By Lemma~\ref{l:Lip}, Theorem~\ref{t:e} implies that there exists a competitive equilibrium $(\gamma;p,q)$ in the sense of Definition~\ref{d:CE}. Let $J$ denote the joint distribution of $(x,\E_\eta[y])$ induced by $\gamma$. By Lemma~\ref{l:CEeqvg}, $(J;p,\check q)$ is a competitive equilibrium in the sense of Definition~\ref{d:CEL}. Moreover, by Theorem~\ref{t:e} and Lemmas~\ref{l:Blackwell}--\ref{l:CEeqvg}, $(J;p,q)$ is a competitive equilibrium if and only if $J$ solves \eqref{PL} and $(p,q)$ solves \eqref{DL}.
\end{proof}

\subsection{Equilibrium Characterization in the Mean-Measurable Case}

We first formalize the intuition for the segregation set $X^S$.
\begin{lemma}\label{l:I}
Let $\gamma \in \Pf$ induce $T$ satisfying \eqref{F}, and let $I\in \Delta(X\times Y)$ be given by
\[
I(x,y)=\int_{X\times\Delta(Y)}\eta([\ul y,y])\1\{\tilde x\leq x\}\, \gamma(\df \tilde x,\df  \eta).
\]
Then $x\in X^S$ if and only if $I(x,S(x))=F(x)$. Moreover,  if $x\in X^S$, with $x\neq \ul x,\ol x$, and $T$ is continuous, then  $T(x)=S(x)$.
\end{lemma}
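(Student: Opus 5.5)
The plan is to express both integrals in \eqref{F} through the joint distribution $I$ of (firm type, worker type) and compare them via a rearrangement inequality. Setting $t=F(x)$, the left side of \eqref{F} is $\int_{\ul x}^{x} T\,\df F=\E_\gamma[\E_\eta[y]\1\{\tilde x\le x\}]$. This is the total skill of the workers employed by firms of type at most $x$. That worker population has measure $F(x)$, and its skill distribution $\nu_x(\cdot)=I(x,\cdot)$ satisfies $\nu_x\le \nu$ setwise. Among all sub-measures of $\nu$ with mass $F(x)$, the one with the smallest total skill is $\nu$ restricted to $[\ul y,S(x)]$, whose total skill is exactly $\int_{\ul y}^{S(x)} y\,G(\df y)=\int_{\ul x}^{x} S\,\df F$, since $G(S(x))=F(x)$ and $S$ is the quantile map. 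This is the usual "bathtub" fact.

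The key step is a strict form of the bathtub inequality. I would write
\[
\int y\,\nu_x(\df y)-\int_{\ul y}^{S(x)} y\,\nu(\df y)=\int_{(S(x),\ol y]} (y-S(x))\,\nu_x(\df y)+\int_{[\ul y,S(x)]}(S(x)-y)\,(\nu-\nu_x)(\df y),
\]
which uses that the two measures have equal mass. Both terms are nonnegative. The difference vanishes if and only if $\nu_x$ puts no mass above $S(x)$ and $\nu-\nu_x$ puts no mass below $S(x)$, up to the single point $S(x)$. Because $G$ has a positive density, this is equivalent to $\nu_x([\ul y,S(x)])=F(x)$, that is, $I(x,S(x))=F(x)$. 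So $x\in X^S$ if and only if $I(x,S(x))=F(x)$.

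For the second claim, fix an interior $x\in X^S$ at which $T$ is continuous. All workers employed by firms $\tilde x\le x$ have skill at most $S(x)$. Hence every workforce in the support at $\tilde x<x$ has mean at most $S(x)$, so $T(\tilde x)\le S(x)$. In the same way, the workers employed by firms $\tilde x>x$ are exactly $\nu$ restricted to $[S(x),\ol y]$, so $T(\tilde x)\ge S(x)$ for $\tilde x>x$. Here I would apply the mass-above argument symmetrically, using $I(x,S(x))=F(x)$ together with the total mass. Letting $\tilde x\to x$ from both sides and using continuity of $T$ gives $T(x)=S(x)$.

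The main obstacle is a technical one. $T$ is defined only $F$-a.e. as the mean map, so the inequalities $T(\tilde x)\le S(x)$ hold only almost everywhere. This is why continuity is assumed, and it passes the a.e. bounds to the limit. The positive density of $G$ also matters: it removes atoms at $S(x)$ and makes the vanishing of the two terms in the display equivalent to $I(x,S(x))=F(x)$.
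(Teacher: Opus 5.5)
Your proof is correct. For the first claim it is the paper's argument in a different form. The paper uses the layer-cake formula to write $\int_{\ul x}^{x}(T-S)\,\df F=\int_{\ul y}^{\ol y}\bigl(\min(F(x),G(y))-I(x,y)\bigr)\,\df y$. Your two nonnegative terms are exactly this integral split at $y=S(x)$: the term over $(S(x),\ol y]$ equals $\int_{S(x)}^{\ol y}(F(x)-I(x,t))\,\df t$, and the term over $[\ul y,S(x)]$ equals $\int_{\ul y}^{S(x)}(G(t)-I(x,t))\,\df t$. A small tidy-up: positivity of $g$ is needed only to get $G(S(x))=F(x)$. Given that, mass balance gives $\nu_x((S(x),\ol y])=(\nu-\nu_x)([\ul y,S(x)])$, so vanishing of your first term already forces vanishing of the second.

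For the second claim you take a different route. The paper does not use the first claim there. It notes that $x\mapsto\int_{\ul x}^{x}(T-S)\,\df F$ is nonnegative and equals $0$ at the interior point $x$. Its derivative $(T(x)-S(x))f(x)$ exists by continuity of $T$, $S$, and $f$, so it must vanish. You instead use $I(x,S(x))=F(x)$ to show that firms below $x$ employ only workers in $[\ul y,S(x)]$ and firms above $x$ employ only workers in $(S(x),\ol y]$. This gives $T\le S(x)$ a.e.\ on $[\ul x,x)$ and $T\ge S(x)$ a.e.\ on $(x,\ol x]$. You then conclude by continuity at $x$, using that $F$-full-measure sets are dense because $f>0$.

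The paper's argument is a one-line calculus step. Yours is longer but proves more. It makes rigorous the sorting interpretation of $X^S$ stated informally in Section~\ref{s:matching}. Its two-sided a.e.\ bounds also hold at every $x\in X^S$ without continuity of $T$, with continuity used only to pin down the value $T(x)$.
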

\begin{proof}
Note that $I_X=F$ and $I_Y=G$. Thus, $I(x,y)\leq \min (F(x),G(y))$ for all $(x,y)\in X\times Y$. Moreover,
\begin{align*}
\int_{\ul x}^{x}T(\tilde x)\,F(\df \tilde x)
=&\int_{X\times \Delta(Y)}\int y\,\eta(\df y)\1\{\tilde x\le x\}\,\gamma(\df \tilde x,\df \eta)\\
=&\int_{X\times \Delta(Y)}\left(\ul y+\int_{\ul y}^{\ol y}(1-\eta([\ul y,y]))\,\df y\right)\1\{\tilde x\le x\}\,\gamma(\df \tilde x,\df \eta)\\
=&\ul yF(x)+\int_{\ul y}^{\ol y}\int_{X\times \Delta(Y)}(1-\eta([\ul y,y]))\1\{\tilde x\le x\}\,\gamma(\df \tilde x,\df \eta)\, \df y\\
=&\ul yF(x)+\int_{\ul y}^{\ol y}(F(x)-I(x,y))\, \df y,
\end{align*}
and, similarly,
\begin{align*}
\int_{\ul x}^{x}S(\tilde x)\,F(\df \tilde x)=&\ul yF(x)+\int_{\ul y}^{\ol y}(F(x)-\min(F(x),G(y)))\, \df y,
\end{align*}
so
\[
\int_{\ul x}^{x}T(\tilde x)\,F(\df \tilde x)-\int_{\ul x}^{x}S(\tilde x)\,F(\df \tilde x)=\int_{\ul y}^{\ol y}(\min(F(x),G(y))-I(x,y))\, \df y.
\]
Since the integrand is nonnegative, $x\in X^S$ if and only if $I(x,y)=\min (F(x),G(y))$ for all $y\in Y$, which is equivalent to $I(x,S(x))=F(x)$, since $G(S(x))=F(x)$.

Finally, if $x\in X^S\cap (\ul x,\ol x)$ and $T$ is continuous, then $\int_{\ul x}^{x}T(\tilde x)\,F(\df \tilde x)-\int_{\ul x}^{x}S(\tilde x)\,F(\df \tilde x)$ attains a minimum of $0$ at an interior point $x$, so $T(x)=S(x)$ by the first-order condition.
\end{proof}

\begin{proof}[Proof of Theorem~\ref{t:eqm}]
By Theorem~\ref{t:el}, $J$ solves \eqref{PL} if and only if there exists $(p,q)\in \DfL $ satisfying
\begin{gather*}
p(x)+q(z)=v(x,z),\quad \text{for all }(x,z)\in \supp (J),
\\
\int_Y q(z)\, J_Y(\df z)=\int_Y q(z)\, G(\df z).
\end{gather*} 
Define the contact set 
\[
\Gamma:=\{(x,z)\in  X\times Y:\, p(x)+q(z)=v(x,z)\},
\]
and notice that $\Gamma$ is a compact set containing $\supp(J)$. We first prove several lemmas.
\begin{lemma}\label{l:uniq}
There exists a continuous, increasing $T:X\to Y$  such that 
\[\Gamma=\{(x,T(x)):x\in X\}.\]
\end{lemma}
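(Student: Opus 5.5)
The plan is to read off the structure of the contact set $\Gamma$ from the dual solution $(p,q)\in\DfL$. Three facts will drive the argument: $q$ is convex, $v_{xz}>0$, and $v_{zz}<0$. For each $x$, the section $\Gamma_x=\{z:\,p(x)+q(z)=v(x,z)\}$ is the argmax of $z\mapsto v(x,z)-q(z)$, because $p(x)+q(z)\ge v(x,z)$ everywhere. This function is strictly concave: $v(x,\cdot)$ is strictly concave and $-q$ is concave. So $\Gamma_x$ has at most one point.

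The first step is to show $\Gamma_x\neq\emptyset$ for every $x\in X$. Since $F$ has full support, $\supp(J)\subset\Gamma$ has $X$-projection $\supp(F)=X$, and so every $x$ has some $z$ with $(x,z)\in\Gamma$. (Equivalently, $p(x)=\max_z\{v(x,z)-q(z)\}$ on $\supp(F)=X$, by the argument of Remark~\ref{r:price}.) Combining this with uniqueness, we define $T(x)$ as the unique maximizer of $v(x,\cdot)-q$, which gives $\Gamma=\{(x,T(x)):x\in X\}$.

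Next, monotonicity. I will use the standard Topkis/revealed-preference argument. Take $x<x'$, and suppose for contradiction that $T(x)>T(x')$. Optimality at each point gives
\[v(x,T(x))-q(T(x))\ge v(x,T(x'))-q(T(x'))\]
together with the symmetric inequality at $x'$. Adding the two yields
\[v(x,T(x))+v(x',T(x'))\ge v(x,T(x'))+v(x',T(x)),\]
and this contradicts strict increasing differences ($v_{xz}>0$). Hence $T$ is increasing.

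Finally, continuity. $\Gamma$ is closed, since $p$, $q$, and $v$ are continuous, and $X\times Y$ is compact. So $T$ has a closed graph in a compact range, and a single-valued correspondence with closed graph into a compact space is continuous. Concretely, if $x_n\to x$ and $T(x_n)\to z$ along a subsequence, then $(x,z)\in\Gamma$, so $z=T(x)$ by uniqueness of the section.

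I expect the only delicate point to be nonemptiness at every $x$, which requires full support of $F$. This is assumed in this section; otherwise one would restrict to $\supp(F)$. Strict concavity of $v(x,\cdot)-q$ and closedness of $\Gamma$ are immediate from the standing assumptions.
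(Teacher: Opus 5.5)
Your proof is correct. It uses the same ingredients as the paper's: nonemptiness of every section from full support of $F$ and closedness of $\Gamma$, convexity of $q$, strict concavity of $v$ in $z$, supermodularity, and closedness of $\Gamma$ for continuity. The difference is organizational. The paper rules out any pair $(x_0,z_0),(x_1,z_1)\in\Gamma$ with $x_0\le x_1$ and $z_0>z_1$ in one midpoint chain of inequalities, which gives single-valuedness and monotonicity together and needs only weak supermodularity. You instead get single-valuedness from strict concavity of $v(x,\cdot)-q$ and monotonicity from the standard revealed-preference swap with strict increasing differences. You also make the closed-graph continuity step explicit, which the paper leaves implicit.
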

\begin{proof}
Since $\Gamma$ is closed and $J_X=F$ has full support on $X$, the set $\Gamma_x:=\{z\in Y:(x,z)\in \Gamma\}$ is nonempty for all $x\in X$.
Suppose for contradiction that there exist $x_0\leq x_1$ in $X$ and $z_0>z_1$ in $Y$ such that $(x_0,z_0)$ and $(x_1,z_1)$ are in $\Gamma$. Then, denoting $z=(z_0+z_1)/2$, we have
\begin{multline*}
p(x_0)+q(z_0)+p(x_1)+q(z_1)\geq p(x_0)+p(x_1)+2q(z)\geq v(x_0,z)+v(x_1,z)\\
>\tfrac{1}{2}(v(x_0,z_0)+v(x_0,z_1))+\tfrac{1}{2}(v(x_1,z_0)+v(x_1,z_1))\\
\geq v(x_0,z_0)+v(x_1,z_1)=p(x_0)+q(z_0)+p(x_1)+q(z_1),
\end{multline*}
where the first inequality is by $q\in \Conv(Y)$, the second inequality is by $(p,q)\in \DfL$, the third inequality is by strict concavity of $v$ in $z$, the fourth inequality is by supermodularity of $v$, and the equality is by the definition of $\Gamma$.
\end{proof}

Fix a continuous, increasing map $T$ satisfying \eqref{F}. Let $J\in \Delta(X\times Y)$ be given by $J(x,z)=\min (F(x),H(z))$ for all $(x,z)\in X\times Y$, with $H\in \Delta(Y)$ given by $H(z)=F(T^{-1}(z))$ for all $z\in Y$, where $T^{-1}$ is a generalized inverse of $T$, defined by $T^{-1}(z)=\max\{x\in X:\, T(x)\leq z\}$, for all $z\ge T(\ul x)$, and $T^{-1}(z)=\ul x$, for all $z<T(\ul x)$.  Then $J$ solves \eqref{PL} if and only if there exists $(p,q)\in \DfL$ such that
\begin{gather}
p(x)+q(T(x))=v(x,T(x)),\quad \text{for all } x\in X,\label{e:p+q=v}\\
\int_X q(S(x))\, F(\df x)=\int_X q(T(x))\, F(\df x).\label{e:qS=qT}
\end{gather}
Define the sets 
\begin{align*}
Y^S=&\left \{y\in Y:  \int_{\ul y}^y G(\tilde y)\, \df \tilde y=\int_{\ul y}^y H(\tilde y)\, \df \tilde y\right \},\quad \text{and}\\
Y^C=&\left \{y\in Y:  \int_{\ul y}^y G(\tilde y)\, \df \tilde y>\int_{\ul y}^y H(\tilde y)\, \df \tilde y\right \}=Y\setminus Y^S.	
\end{align*}
Notice that $Y^C$ is open and thus is a union of at most countably many disjoint open intervals, $Y^C=\bigcup_k(\ul y_k,\ol y_k)$. Moreover, recall that any $q\in \Lip(Y)\cap \Conv(Y)$ admits a right-continuous, bounded, and increasing subderivative $q'$.

\begin{lemma}\label{l:DM}
For $q\in \Lip (Y)\cap \Conv(Y)$, \eqref{e:qS=qT} holds if and only if $q$ is affine on each $(\ul y_k,\ol y_k)$. 
\end{lemma}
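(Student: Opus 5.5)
Throughout, write $\int_X q(S(x))\,F(\df x)=\int_Y q\,\df G$ and $\int_X q(T(x))\,F(\df x)=\int_Y q\,\df H$. The plan is to integrate by parts twice and use the majorization gap. Let
\[
D(y):=\int_{\ul y}^y \bigl(G(\tilde y)-H(\tilde y)\bigr)\,\df \tilde y,
\]
which satisfies $D\ge 0$ on $Y$ because $H\succeq G$ (equivalently, because $T$ satisfies \eqref{F}). It also satisfies $D(\ul y)=0$, and $D(\ol y)=0$ because the means agree. Moreover $D>0$ exactly on $Y^C=\bigcup_k(\ul y_k,\ol y_k)$.

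The first step derives an integral identity for the gap. For $q\in\Lip(Y)\cap\Conv(Y)$, write $q(y)=q(\ul y)+\int_{\ul y}^y q'(t)\,\df t$ with $q'$ the right-continuous increasing subderivative. Then
\[
\int q\,\df G-\int q\,\df H=\int_{\ul y}^{\ol y} q'(t)\bigl(H(t)-G(t)\bigr)\,\df t=-\int_{\ul y}^{\ol y} q'(t)\,\df D(t).
\]
A second integration by parts in Stieltjes form, with $D$ vanishing at both endpoints and $q'$ of bounded variation, gives
\[
\int q\,\df G-\int q\,\df H=\int_{(\ul y,\ol y)} D(t)\,\df q'(t).
\]
Here $\df q'$ is the nonnegative second-derivative measure of $q$. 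The boundary terms vanish since $D(\ul y)=D(\ol y)=0$, and the continuity of $D$ makes the choice of left or right limits of $q'$ irrelevant.

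The second step reads the equivalence off this identity. The integrand $D\,\df q'$ is a product of a nonnegative continuous function and a nonnegative measure, so the gap is $\ge 0$, which is Jensen again. It equals zero iff $\df q'$ gives no mass to $\{D>0\}=Y^C$, that is, iff $q'$ is constant on each open interval $(\ul y_k,\ol y_k)$. This is the same as $q$ being affine on each $(\ul y_k,\ol y_k)$. Both directions follow, because $\df q'(Y^C)=\sum_k \df q'((\ul y_k,\ol y_k))$ and $D$ is strictly positive on each component. If $\df q'$ charges some $(\ul y_k,\ol y_k)$, it charges a compact subset of it on which $D$ is bounded below by a positive constant, so the integral is strictly positive.

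The main obstacle is justifying the Stieltjes integration by parts when $q'$ may jump and $G,H$ may have atoms. Here $H$ can have atoms if $T$ has flat pieces, since Lemma~\ref{l:DM} is stated for general increasing continuous $T$. I would handle this by noting that $D$ is Lipschitz, since it is an integral of bounded functions, and $q'$ is of bounded variation. The product rule $\df(q'D)=q'\,\df D+D\,\df q'$ then holds with $q'$ taken right-continuous, because $D$ has no jumps. As a fallback, I would approximate $q$ uniformly by smooth convex functions: the identity holds classically for those, and one passes to the limit using weak convergence of $\df q'_n$ to $\df q'$ together with the continuity of $D$.
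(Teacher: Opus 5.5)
Your proposal is correct and follows essentially the same route as the paper: two integrations by parts turn the gap $\int q\,\df G-\int q\,\df H$ into $\int D\,\df q'$, where $D(y)=\int_{\ul y}^{y}(G-H)\ge 0$ vanishes at both endpoints, and the equivalence follows from $D$ being strictly positive exactly on $Y^C$. Your justification of the Stieltjes product rule via continuity of $D$, and your compact-subset argument for strict positivity, fill in details the paper leaves implicit.
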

\begin{proof}
We have
\begin{multline*}
\int_X (q(S(x))-q(T(x)))\, F(\df x)=\int_Y q(y)\, \df G(y) - 	\int_Y q(y)\, \df H(y)\\=\int_Y (H(y)-G(y))\, q(\df y)
=\int_Y \Big(\int_{\ul y}^y G(\tilde y)\, \df \tilde y-\int_{\ul y}^y H(\tilde y)\, \df \tilde y\Big )\, q'(\df y),
\end{multline*}
where the first equality is by a change of variables, the second equality is by integration by parts and $G(\ol y)=H(\ol y)=1$, and the third equality is by integration by parts and $\int_{\ul y}^{\ol y} G(y)\, \df  y=\int_{\ul y}^{\ol y} H(y)\, \df y$. Since the integrand is (weakly) positive and $q'$ is increasing, \eqref{e:qS=qT} holds if and only if $q'$ is constant on each $(\ul y_k,\ol y_k),$ where the integrand is strictly positive.
\end{proof}

\begin{lemma}\label{l:IK}
Fix $x\in (\ul x,\ol x)$. Then $x\in X^C$ if and only if $T(x)\in Y^C$.
\end{lemma}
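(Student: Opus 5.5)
The plan is to reduce Lemma~\ref{l:IK} to a pointwise identity between the two majorization gaps. Set
\[
\Phi(x):=\int_{\ul x}^{x}T(\tilde x)\,F(\df\tilde x)-\int_{\ul x}^{x}S(\tilde x)\,F(\df\tilde x),\qquad
\Psi(y):=\int_{\ul y}^{y}G(\tilde y)\,\df\tilde y-\int_{\ul y}^{y}H(\tilde y)\,\df\tilde y .
\]
By definition, $X^C=\{\Phi>0\}$ and $Y^C=\{\Psi>0\}$. The goal is to show that, for interior $x$,
\[
\Phi(x)\ \ge\ \Psi(T(x)),
\]
with equality when $T(x)=S(x)$. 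I will also show that $\Phi(x)>0$ forces $\Psi(T(x))>0$.

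The main tool is a Legendre-type representation of both gaps in terms of quantile functions, as in the proof of Lemma~\ref{l:I}. Let $t=F(x)$. Since $H=F\circ T^{-1}$ and $T$ is continuous and increasing, $T\circ F^{-1}=H^{-1}$. Hence $\int_{\ul x}^x T\,\df F=\int_0^t H^{-1}(s)\,\df s$ and $\int_{\ul x}^x S\,\df F=\int_0^t G^{-1}(s)\,\df s$. The standard identity
\[
\int_0^t K^{-1}(s)\,\df s=\max_{y\in Y}\Bigl\{ty-\int_{\ul y}^{y}K(\tilde y)\,\df\tilde y\Bigr\}+\text{const}
\]
holds for any cdf $K$ on $Y$, with the same constant $\ul y$-term for $K=G$ and $K=H$. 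The maximum is attained at any $y$ with $K(y^-)\le t\le K(y)$. For $K=H$ this maximizer can be taken to be $y=T(x)$, and for $K=G$ it is $y=S(x)$.

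Using the maximizer $T(x)$ for $H$ and bounding the $G$-maximum from below by its value at $y=T(x)$ gives
\[
\Phi(x)=\Bigl[tT(x)-\!\int_{\ul y}^{T(x)}\!H\Bigr]-\max_y\Bigl[ty-\!\int_{\ul y}^{y}\!G\Bigr]\le\Psi(T(x)).
\]
This is the inequality in the reverse direction from the one stated above: it shows $\Phi(x)\le\Psi(T(x))$. So $x\in X^C$ implies $T(x)\in Y^C$.

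For the converse, suppose $\Phi(x)=0$, i.e.\ $x\in X^S$ is interior. Lemma~\ref{l:I} together with continuity of $T$ gives $T(x)=S(x)$, so $T(x)$ is also a maximizer for $G$. The inequality above then becomes an equality, and $\Psi(T(x))=\Phi(x)=0$, i.e.\ $T(x)\in Y^S$. Combining the two directions yields the equivalence.

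The main obstacle is justifying the Legendre representation with generalized inverses and atoms. $H$ may have atoms on flat parts of $T$, and flats of $H$ occur where $T$ jumps; continuity of $T$ rules out the latter. I will handle these cases by verifying directly that $y=T(x)$ satisfies $H(T(x)^-)\le F(x)\le H(T(x))$. This follows from $H(z)=F(T^{-1}(z))$ with the max-generalized inverse.
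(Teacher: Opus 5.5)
Your proof is correct. It shares its main ingredients with the paper's proof but handles one direction differently. The paper subtracts the integration-by-parts identities \eqref{e:ORT} and \eqref{e:ORS}, which are exactly your Legendre representations evaluated at the maximizers $T(x)$ and $S(x)$. This gives $\Phi(x)=\Psi(T(x))$ whenever $T(x)=S(x)$, and the paper then routes both directions through $T(x)=S(x)$:
\begin{itemize}
\item for $x\in X^S$, via Lemma~\ref{l:I}, exactly as you do;
\item for $T(x)\in Y^S$, via a first-order condition at the interior minimum $0$ of $\Psi$, which forces $G(T(x))=H(T(x))=F(x)$.
\end{itemize}

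You replace the second step with the observation that the right side of \eqref{e:ORS} is a maximum over $y$ (a Fenchel--Young inequality). This yields the global comparison $\Phi(x)\le\Psi(T(x))$ for every $x$, with equality exactly when $T(x)=S(x)$, since $y\mapsto F(x)y-\int_{\ul y}^{y}G$ is strictly concave. You then get $x\in X^C\Rightarrow T(x)\in Y^C$ directly.

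Your version is slightly more informative. It also sidesteps details the paper's first-order argument leaves implicit: the one-sided derivatives of $\Psi$ when $H$ has atoms, and the need to know that $T(x)$ is interior to $Y$. The paper's version is shorter and needs no convex-duality language.

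Two points to fix in the write-up:
\begin{itemize}
\item Your stated goal ``$\Phi(x)\ge\Psi(T(x))$'' has the wrong sign. That inequality is false whenever $T(x)\neq S(x)$. The derivation you actually carry out uses the correct direction, so nothing downstream is affected, but the statement should match.
\item The ``constant'' in your Legendre identity is in fact zero. This is harmless, since it cancels anyway.
\end{itemize}
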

\begin{proof}
Integration by parts gives
\begin{align}
\int_{\ul x}^x T(\tilde x)\, F(\df \tilde x)+\int_{\ul y}^{T(x)} H(y)\, \df y=& F(x) T(x),	\label{e:ORT}\\
\int_{\ul x}^x S(\tilde x)\, F(\df \tilde x)+\int_{\ul y}^{S(x)} G(y)\, \df y=& F(x) S(x). \label{e:ORS}
\end{align}
If $x\in X^S$, then $T(x)=S(x)$ by Lemma~\ref{l:I}. Thus, $T(x)\in Y^S$ by \eqref{e:ORT} and \eqref{e:ORS}. Conversely, if $T(x)\in Y^S$, then $G(T(x))=H(T(x))$ (equivalently, $T(x)=S(x)$), since $\int_{\ul y}^y G(\tilde y)\, \df \tilde y-\int_{\ul y}^y H(\tilde y)\, \df \tilde y$ attains a minimum of $0$ at $T(x)$. Thus, $x\in X^S$ by \eqref{e:ORT} and \eqref{e:ORS}.
\end{proof}
\begin{lemma}\label{l:const}
If $J$ solves \eqref{PL} and $(p,q)$ solves \eqref{DL}, then,
for all $x_0,x_1\in (\ul x_i,\ol x_i)$, we have 
\[
q'(T(x_0))=v_z(x_0,T(x_0))=v_z(x_1,T(x_1))=q'(T(x_1)).
\]	
\end{lemma}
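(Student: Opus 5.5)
We need to prove Lemma \ref{l:const}: on a compression interval $(\ul x_i,\ol x_i)$, the quantities $q'(T(x))$ and $v_z(x,T(x))$ agree and are constant. The plan combines three ingredients: the contact-set structure (Lemma~\ref{l:uniq}), the affine property of $q$ on compression intervals in $Y$ (Lemmas~\ref{l:DM} and \ref{l:IK}), and a first-order condition at points of the contact set.

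\textbf{Step 1: $q$ is affine on $T((\ul x_i,\ol x_i))$.} Since $J$ solves \eqref{PL} and $(p,q)$ solves \eqref{DL}, Theorem~\ref{t:el} shows that $(J;p,q)$ is a competitive equilibrium. Hence \eqref{e:p+q=v} and \eqref{e:qS=qT} hold with $T$ the continuous increasing map from Lemma~\ref{l:uniq}. By Lemma~\ref{l:IK}, each $x\in(\ul x_i,\ol x_i)$ satisfies $T(x)\in Y^C$. Continuity of $T$ makes $T((\ul x_i,\ol x_i))$ a connected subset of $Y^C$, so it lies inside a single component $(\ul y_k,\ol y_k)$. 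Lemma~\ref{l:DM} then says $q$ is affine there, so $q'$ equals a constant $\lambda$ on that open interval.

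\textbf{Step 2: first-order condition.} Fix $x\in(\ul x_i,\ol x_i)$. By the dual constraint, $z\mapsto v(x,z)-q(z)$ is bounded above by $p(x)$ on $Y$, and by \eqref{e:p+q=v} it attains this bound at $z=T(x)$. Since $T(x)$ is interior to $(\ul y_k,\ol y_k)$, $q$ is differentiable at $T(x)$ with derivative $\lambda$. The function $v(x,\cdot)$ is differentiable, so the interior maximum gives $v_z(x,T(x))=q'(T(x))=\lambda$. Applying this at both $x_0$ and $x_1$ yields the chain of equalities in the statement.

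\textbf{Main obstacle.} The delicate point is Step 1's passage from $x\in X^C$ to membership of the whole image in one interior component of $Y^C$. This is needed so that the derivative of $q$ exists and is the same constant at $T(x_0)$ and $T(x_1)$. If $T$ were constant on part of the interval, the image could degenerate to a point; the argument still goes through, but differentiability must then be checked, and it holds because that point lies in the open set $Y^C$. Lemma~\ref{l:IK} requires $x$ to be interior, which is guaranteed since the compression intervals are open subsets of $(\ul x,\ol x)$.
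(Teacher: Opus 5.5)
Your proposal is correct and follows essentially the same route as the paper: Lemma~\ref{l:IK} places $T((\ul x_i,\ol x_i))$ inside a single component $(\ul y_k,\ol y_k)$ of $Y^C$, Lemma~\ref{l:DM} makes $q$ affine there, and the first-order condition for $\max_{y}\{v(x_j,y)-q(y)\}$ at the interior point $T(x_j)$ gives $q'(T(x_j))=v_z(x_j,T(x_j))$. Your explicit connectedness argument (continuity of $T$ keeps the image of the interval inside one component) fills in a step that the paper asserts without comment.
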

\begin{proof}
By Lemma~\ref{l:IK}, there exists $k$ such that $T(x)\in (\ul y_k,\ol y_k)$ for all $x\in (\ul x_i,\ol x_i)$. By Lemma~\ref{l:DM}, $q$ has the same derivative at $T(x_0)$ and $T(x_1)$, i.e., $q'(T(x_0))=q'(T(x_1))$. For $j=0,1$,
\[
p(x_j)=v(x_j,T(x_j))-q(T(x_j))=\max_{y\in Y} \{v(x_j,y)-q(y)\},
\]
where the first equality is by $(x_j,T(x_j))\in \Gamma$ and the second equality is by $(p,q)\in \DfL$.
Since $v$ and $q$ are continuously differentiable in $y$ on $(\ul y_k,\ol y_k)$, we get $q'(T(x_j))=v_z(x_j,T(x_j))$.
\end{proof}
\begin{lemma}\label{l:pq}
If $J$ solves \eqref{PL} and $(p,q)$ solves \eqref{DL}, then \eqref{e:p} and \eqref{e:q} hold for some $c\in \R$. Moreover, $v_z(x,T(x))$ is increasing on $X$.
\end{lemma}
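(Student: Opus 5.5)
By Lemma~\ref{l:uniq}, the contact set $\Gamma$ is the graph of a continuous increasing $T$. The plan is to obtain the price formulas by the envelope theorem, and to derive monotonicity of $v_z(x,T(x))$ from convexity of $q$.

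\textbf{Step 1: the formula for $p$.} For every $x$,
\[
p(x)=\max_{z\in Y}\{v(x,z)-q(z)\}.
\]
The inequality $\geq$ comes from dual feasibility, and equality is attained at $z=T(x)$ because $(x,T(x))\in\Gamma$. Since the maximizer is unique (the contact set is a graph) and continuous in $x$, and $v_x$ is continuous, the envelope theorem (\citealp{MS}, Theorem 3) gives $p'(x)=v_x(x,T(x))$ on $(\ul x,\ol x)$. Integrating yields \eqref{e:p}, with $c=p(\ul x)$.

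\textbf{Step 2: the formula for $q$ on $[T(\ul x),T(\ol x)]$.} For each $z$ in this range, $z=T(x)$ for some $x$ by continuity of $T$. The function $y\mapsto v(x,y)-q(y)$ is maximized at $y=T(x)$. Because $q$ is convex, it has one-sided derivatives $q'_-\leq q'_+$, and the first-order conditions give
\[
q'_+(T(x))\ge v_z(x,T(x))\ge q'_-(T(x)).
\]
I will use this sandwich twice.

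\emph{Monotonicity of $v_z(x,T(x))$.} Take $x<x'$. If $T(x)<T(x')$, convexity of $q$ gives
\[
v_z(x,T(x))\le q'_+(T(x))\le q'_-(T(x'))\le v_z(x',T(x')).
\]
If $T(x)=T(x')$, then $v_{xz}>0$ gives the strict inequality directly.

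\emph{Differentiability of $q$.} At a point $z$ where $T^{-1}(z)$ is a singleton, I need to rule out a kink $q'_-(z)<q'_+(z)$. Suppose a kink exists. The subgradient of $q$ at $z$ is then a nondegenerate interval. Points $z'$ slightly above $z$ are of the form $T(x')$ with $x'\downarrow x$. By continuity of $v_z$ and the sandwich at $x'$, $q'_-(z')\le v_z(x',T(x'))\to v_z(x,z)$. Convexity gives $q'_-(z')\ge q'_+(z)$, so $q'_+(z)\le v_z(x,z)$; hence $q'_+(z)=v_z(x,z)$. The symmetric argument from below gives $q'_-(z)=v_z(x,z)$, so there is no kink. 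Where $T^{-1}(z)$ is a nondegenerate interval (a flat piece of $T$), monotonicity is strict there, so $q'$ is pinned on a null set only; this does not matter for integration.

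Thus $q$ is absolutely continuous with $q'(y)=v_z(T^{-1}(y),y)$ almost everywhere on $[T(\ul x),T(\ol x)]$. Integrating, and using $p(\ul x)+q(T(\ul x))=v(\ul x,T(\ul x))$, gives the middle line of \eqref{e:q}.

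\textbf{Step 3: the formula for $q$ outside $[T(\ul x),T(\ol x)]$.} For $y>T(\ol x)$, feasibility and convexity force $q$ to lie above the tangent line with slope $v_z(\ol x,T(\ol x))$. Optimality of $(p,q)$ together with \eqref{e:qS=qT} should then force $q$ to equal that line.

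Concretely, if $G$ places mass above $T(\ol x)$, then $T(\ol x)$ lies in a compression interval of $Y$, since $H$ has no mass there. Lemma~\ref{l:DM} then makes $q$ affine across that interval, with slope equal to $q'$ at $T(\ol x)$, and Step 2 identifies this slope as $v_z(\ol x,T(\ol x))$. Since $G$ has full support, this covers everything up to $\ol y$.

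The lower tail $y<T(\ul x)$ is handled symmetrically, giving the slope $v_z(\ul x,T(\ul x))$.

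\textbf{Main obstacle.} The hardest step is Step 3: justifying rigorously that the compression interval reaches the endpoint $\ol y$, via Lemma~\ref{l:IK}-type integration identities near the boundary. A secondary difficulty is the no-kink argument in Step 2 at points where $T$ might be flat.
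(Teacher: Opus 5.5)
Your proposal is correct. Step 1 and Step 3 match the paper. The formula for $p$ comes from the envelope theorem applied to $p(x)=\max_z\{v(x,z)-q(z)\}$. The tails of $q$ come from placing $T(\ol x)$ (resp.\ $T(\ul x)$), when $T(\ol x)<\ol y$ (resp.\ $T(\ul x)>\ul y$), inside a compression interval of $Y$, on which Lemma~\ref{l:DM} makes $q$ affine.

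You differ from the paper on the middle line of \eqref{e:q} and on the monotonicity claim. The paper writes $q(T(x))=v(x,T(x))-p(x)$ and substitutes \eqref{e:p}. It then turns $v(x,T(x))-v(\ul x,T(\ul x))-\int_{\ul x}^x v_x(\tilde x,T(\tilde x))\,\df\tilde x$ into $\int_{T(\ul x)}^{T(x)}v_z(T^{-1}(\tilde y),\tilde y)\,\df\tilde y$ by a change of variables along the graph of $T$. Monotonicity of $v_z(x,T(x))$ is then read off from convexity of the resulting $q$. You instead use the one-sided optimality conditions $q'_-(T(x))\le v_z(x,T(x))\le q'_+(T(x))$. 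These give monotonicity at once through $q'_+(a)\le q'_-(b)$ for $a<b$, identify $q'$ almost everywhere, and let you integrate the Lipschitz function $q$ directly.

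Your route has two advantages. It avoids the change of variables for a $T$ not yet known to be strictly increasing, where $T^{-1}$ is only a generalized inverse. It also pins down the tail slope $v_z(\ol x,T(\ol x))$ without the paper's appeal to Lemmas~\ref{l:DM}--\ref{l:const}. The paper's route is shorter because it recycles \eqref{e:p}. Your no-kink argument is valid but not needed, since a convex Lipschitz function is differentiable almost everywhere.

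The obstacle you flag in Step 3 is mild. Let $\Psi(y)=\int_{\ul y}^y(G(\tilde y)-H(\tilde y))\,\df\tilde y$. Then $\Psi(\ol y)=0$ and $H=1$ on $[T(\ol x),\ol y]$, so $\Psi(y)=\int_y^{\ol y}(1-G(\tilde y))\,\df\tilde y>0$ for all $y\in[T(\ol x),\ol y)$, because $G$ has a positive density. Symmetrically, $\Psi(y)=\int_{\ul y}^y G(\tilde y)\,\df\tilde y>0$ on $(\ul y,T(\ul x)]$.
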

\begin{proof}
Fix $x_0\in (\ul x,\ol x)$. By $(x_0,T(x_0))\in \Gamma$ and $(p,q)\in \DfL$, we have
\[
p(x_0)=v(x_0,T(x_0))-q(T(x_0))=\max_{y\in Y}\{v(x_0,y)-q(y)\}.
\]
By continuity of $v_x$ and $T$, the envelope theorem (\citealp{MS}, Theorem 3) implies that $p$ has a derivative $p'(x_0)=v_x(x_0,T(x_0))$ at $x_0$, so \eqref{e:p} holds for some $c\in \R $. Moreover, for all $y\in [T(\ul x),T(\ol x)]$, by \eqref{e:p}, \eqref{e:p+q=v}, and a change of variables,  we have
\begin{align*}
q(y)&=v(T^{-1}(y),y)-p(T^{-1}(y))
=v(\ul x,T(\ul x))+\int_{T(\ul x)}^y v_z(T^{-1}(\tilde y),\tilde y)\,\df \tilde y -c,
\end{align*}
so \eqref{e:q} holds for all $y\in [T(\ul x),T(\ol x)]$.  Next, if $T(\ul x)>\ul y$, then, since $G$ has positive density on $Y$, there exists $y_0\in (T(\ul x),\ol y]$ such that $\int_{\ul y}^y G(\tilde y)\, \df \tilde y>\int_{\ul y}^y H(\tilde y)\, \df \tilde y$ for all $y\in (\ul y,y_0)$. Thus, by Lemmas~\ref{l:DM}--\ref{l:const}, \eqref{e:q} holds for all $y\in [\ul y,T(\ul x))$. Similarly, if $T(\ol x)<\ol y$, then, since $G$ has positive density on $Y$, we have that \eqref{e:q} holds for all $y\in (T(\ol x),\ol y]$.

Since $q$ is given by \eqref{e:q}, and $q\in \Conv (Y)$, it follows that  $v_z(T^{-1}(y),y)$ is increasing on $[T(\ul x),T(\ol x)]$. Since $v_z$ is increasing in $x$ and $T^{-1}$ is increasing, it follows that $v_z(x,T(x))$ is increasing on $X$.
\end{proof}
\begin{lemma}\label{l:converse}
If $v_z(x,T(x))$ is increasing on $X$ and is constant on each $(\ul x_i,\ol x_i)$, then $(p,q)$ given by \eqref{e:p} and \eqref{e:q}, for any $c\in \R $, is in $\DfL$ and satisfies \eqref{e:p+q=v} and \eqref{e:qS=qT}.
\end{lemma}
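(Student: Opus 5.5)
We verify the three requirements in turn: that $q$ is convex and Lipschitz, that the no-profit inequality $p(x)+q(z)\ge v(x,z)$ holds, and that \eqref{e:p+q=v} and \eqref{e:qS=qT} hold. Since $T$ is continuous and increasing but possibly only weakly so, we use the generalized inverse $T^{-1}$ defined above. Under the hypothesis, $v_z(x,T(x))$ is increasing and constant on compression intervals, so the integrand $v_z(T^{-1}(\tilde y),\tilde y)$ in \eqref{e:q} is well defined wherever it matters.

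\textbf{Convexity and Lipschitz continuity.} On $[T(\ul x),T(\ol x)]$, the derivative of $q$ is $q'(y)=v_z(T^{-1}(y),y)$. For $y=T(x)$ this equals $v_z(x,T(x))$, which is increasing in $x$ and hence in $y$. On the two outer pieces, $q$ is linear with slopes $v_z(\ul x,T(\ul x))$ and $v_z(\ol x,T(\ol x))$. These are the endpoint values of the increasing slope, so $q'$ is increasing on all of $Y$ and $q$ is convex. Boundedness of $v_z$ gives the Lipschitz property; $p$ is Lipschitz because $v_x$ is bounded.

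\textbf{Equality on the graph.} To obtain \eqref{e:p+q=v}, differentiate $p(x)+q(T(x))-v(x,T(x))$ in $x$. Where $T$ is differentiable, this gives $v_x+q'(T(x))T'(x)-v_x-v_zT'(x)=0$. Since $T$ is monotone and continuous (hence absolutely continuous if Lipschitz), it is cleaner to argue directly by a change of variables, as in the proof of Lemma~\ref{l:pq}. The constant is fixed at $x=\ul x$ by the choice of $v(\ul x,T(\ul x))$ in \eqref{e:q}.

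\textbf{The no-profit inequality.} This is the main step. For fixed $x$ and $z\ne T(x)$, write
\[
p(x)+q(z)-v(x,z)=\int_{T(x)}^{z}\bigl(q'(\tilde y)-v_z(x,\tilde y)\bigr)\,\df\tilde y .
\]
Take $z>T(x)$. For $\tilde y\in[T(x),T(\ol x)]$ we have $q'(\tilde y)=v_z(T^{-1}(\tilde y),\tilde y)\ge v_z(x,\tilde y)$, because $T^{-1}(\tilde y)\ge x$ and $v_{xz}>0$. For $\tilde y>T(\ol x)$ we have $q'(\tilde y)=v_z(\ol x,T(\ol x))\ge v_z(x,T(\ol x))\ge v_z(x,\tilde y)$, using $v_{xz}>0$ and then $v_{zz}<0$. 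The case $z<T(x)$ is symmetric. Hence $(p,q)\in\DfL$. Strict concavity is not even needed here; plain monotonicity suffices. The subtle point is where $T^{-1}$ jumps (flat parts of $T$), but such $\tilde y$ form a null set for the integral, so the argument goes through.

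\textbf{Condition \eqref{e:qS=qT}.} By Lemma~\ref{l:DM} it suffices that $q$ be affine on each $(\ul y_k,\ol y_k)$. By Lemma~\ref{l:IK}, interior points of $Y^C$ inside $[T(\ul x),T(\ol x)]$ are images $T(x)$ of points $x\in X^C$. On a compression interval, $v_z(x,T(x))$ is constant by hypothesis, so $q'$ is constant on the corresponding image interval.

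It remains to treat components of $Y^C$ that stick out beyond $[T(\ul x),T(\ol x)]$, for instance $[\ul y,T(\ul x))$. There $q$ is affine by construction with slope $v_z(\ul x,T(\ul x))$. Moreover $\ul x$ is the left endpoint of a compression interval: if $T(\ul x)>\ul y$, then $\ul x$ cannot be a segregation point in the interior sense. So the slope matches the constant slope just inside, and $q$ is affine on the whole component. The upper tail is handled the same way.

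I expect the main obstacle to be this bookkeeping that each component of $Y^C$ corresponds to a compression interval of $X$ (or a tail adjoined to one), including the endpoint cases.
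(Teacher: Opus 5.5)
Your proof is correct. For the convexity of $q$, the equality \eqref{e:p+q=v} (by change of variables) and condition \eqref{e:qS=qT} (via Lemmas~\ref{l:DM} and \ref{l:IK}, with the tail slopes equal to $q'(T(\ul x))$ and $q'(T(\ol x))$), you follow the paper. The real difference is in the no-profit inequality.

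The paper writes each $z\in[T(\ul x),T(\ol x)]$ as $T(\hat x)$ and integrates in the firm direction, as in \eqref{e:M}: $p(x)+q(T(\hat x))-v(x,T(\hat x))=\int_{\hat x}^x\bigl(v_x(\tilde x,T(\tilde x))-v_x(\tilde x,T(\hat x))\bigr)\,\df\tilde x\ge 0$. This is the classical one-to-one matching argument. It uses only $v_{xz}\ge 0$ and monotonicity of $T$, and never touches $q'$ or $T^{-1}$. However, it needs continuity of $T$ to reach every $z$ in the range. It also needs the separate mixed identities \eqref{e:L} and \eqref{e:R} for the two linear tails, which is where $v_{zz}\le 0$ enters.

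You instead fix $x$ and integrate $q'-v_z(x,\cdot)$ in the skill direction from $T(x)$ to $z$. On $[T(\ul x),T(\ol x)]$ you use the single-crossing comparison $q'(\tilde y)=v_z(T^{-1}(\tilde y),\tilde y)\gtrless v_z(x,\tilde y)$ according as $\tilde y\gtrless T(x)$. This is correct for the max-type generalized inverse, since $T^{-1}(\tilde y)\ge x$ if $\tilde y\ge T(x)$ and $T^{-1}(\tilde y)<x$ if $T(\ul x)\le \tilde y<T(x)$. On the tails you use $v_{xz}\ge 0\ge v_{zz}$. Your route handles interior and tail values of $z$ with one formula, and shows the inequality as ``first-order condition plus single crossing'' for $\max_z\{v(x,z)-q(z)\}$. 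Its only extra requirement is that $q$ be absolutely continuous with a.e.\ derivative equal to the integrand in \eqref{e:q}, which holds by construction.

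Two minor remarks. First, your concern about flat parts of $T$ is moot. The hypotheses force $T$ to be strictly increasing: strictly on each compression interval because $v_{zz}<0<v_{xz}$, and $T=S$ at interior segregation points by Lemma~\ref{l:I}, exactly as in part 1 of Theorem~\ref{t:eqm}. Second, the precise reason $\ul x$ begins a compression interval when $T(\ul x)>\ul y$ is continuity: $T>S$ near $\ul x$, so $\int_{\ul x}^{x}(T-S)\,\df F>0$ for $x$ slightly above $\ul x$.
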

\begin{proof}
Substituting $(p,q)$ from \eqref{e:p} and \eqref{e:q}, and using that $v_x$ is increasing in $z$ and $v_z$ is increasing in $x$ and decreasing in $z$,  we get
\begin{gather}
\begin{gathered}\label{e:M}
p(x)+q(T(\hat x))-v(x,T(\hat x))= \int_{\hat x}^x (v_x(\tilde x,T(\tilde x))-v_x(\tilde x,T(\hat x)))\, \df \tilde x\geq 0,	\\
\text{for all $x,\hat x\in X$, with equality at $\hat x=x$},
\end{gathered}\\
\begin{gathered}\label{e:L}
p(x)+q(y)-v(x,y)=\int_{\ul x}^x (v_x(\tilde x,T(\tilde x))-v_x(\tilde x,T(\ul x)))\, \df \tilde x\\
+\int_{y}^{T(\ul x)}(v_z(x,\tilde y)-v_z(\ul x,T(\ul x)))\, \df \tilde y\geq 0,\quad \text{for all  $x\in X$ and $y\in [\ul y,T(\ul x))$},
\end{gathered}\\
\begin{gathered}\label{e:R}
p(x)+q(y)-v(x,y)=\int_x^{\ol x} (v_x(\tilde x,T(\ol x))-v_x(\tilde x,T(\tilde x)))\, \df \tilde x\\
+\int_{T(\ol x)}^{y}(v_z(\ol x,T(\ol x))-v_z(x,\tilde y))\, \df \tilde y\geq 0,\quad \text{for all  $x\in X$ and $y\in (T(\ol x),\ol y]$}.
\end{gathered}
\end{gather}
Since $v_z(x,T(x))$ is increasing in $x$, we have $q\in \Conv(Y)$. Thus, by \eqref{e:M}--\eqref{e:R}, $(p,q)$ is in $\DfL$ and satisfies \eqref{e:p+q=v}. Finally, \eqref{e:qS=qT} holds by Lemmas~\ref{l:DM} and \ref{l:IK}, given that $v_z(x,T(x))$ is constant on each $(\ul x_i,\ol x_i)$, $q'(y)=q'(T(\ul x))$ for all $y\in [\ul y,T(\ul x))$, and $q'(y)=q'(T(\ol x))$ for all $y\in (T(\ol x),\ol y]$.
\end{proof}

We are now ready to prove Theorem~\ref{t:eqm}.

\emph{Part 1.} By Theorem~\ref{t:el}, any equilibrium assignment $J$ satisfies $\supp(J)\subset \Gamma$. Then, by Lemma~\ref{l:uniq}, there is a unique equilibrium matching $T$, which is continuous and increasing. We now show that $T$ is strictly increasing. Suppose for contradiction that there exist $x_0<x_1$ in $X$ and $z\in Y$ such that $z=T(x_0)=T(x_1)$. Note that $v_z(x,T(x))$ is increasing on $X$ by Lemma~\ref{l:pq} and is constant on each $(\ul x_i,\ol x_i)$ by Lemma~\ref{l:const}. 
Since $v_z$ is strictly increasing in $x$ and strictly decreasing in $z$, it follows that $T$ is strictly increasing on each $(\ul x_i,\ol x_i)$, so $[x_0,x_1]\subset X^S$, and $\int_{\ul x}^x T(\tilde x)\,F(\df \tilde x)=\int_{\ul x}^x S(\tilde x)\,F(\df \tilde x)$, for all $x\in [x_0,x_1]$. Then, by Lemma~\ref{l:I}, $S(x)=T(x)=z$ for all $x\in (x_0,x_1),$ yielding $G(z)-G(z_-)\geq F(x_1)-F(x_0)>0$, contradicting that $G$ has a density on $Y$. 

\emph{Part 2.} By part 1, there is a unique increasing equilibrium matching $T$. By Lemmas~\ref{l:const} and \ref{l:pq}, $v_z(x,T(x))$ is increasing on $X$ and is constant on each $(\ul x_i,\ol x_i)$. Conversely, consider an increasing matching $T$ such that $v_z(x,T(x))$ is increasing on $X$ and is constant on each $(\ul x_i,\ol x_i)$. Since $T$ and $v_z(x,T(x))$ are increasing, and $v_z$ is continuous and strictly decreasing in $z$, it follows that $T$ cannot have upward jumps and thus is continuous. Then by Theorem~\ref{t:el} and Lemma~\ref{l:converse}, $T$ is the equilibrium matching.

\emph{Part 3.} By Theorem~\ref{t:el}, the only if part follows from Lemma~\ref{l:pq}, and the if part follows from Lemma~\ref{l:converse}.
\end{proof}
\begin{proof}[Proof of Corollary~\ref{c:eqm}]
\emph{Part 1.} Follows from Theorem~\ref{t:eqm}, since, for an increasing $T$, $v_z(x,T(x))=a'(T(x)-x)+c$ is increasing in $x$ with $\df v_z(x,T(x))/\df x=0$ for all $x\in X^C$ if and only if $T$ is $1$-Lipschitz with $T'(x)=1$ for all $x\in X^C$.  

\emph{Part 2.} It suffices to show that $T$ given by part 2 satisfies part 1. Since $T(x)=x-E'(F(x))=S(x)$ if $\ol E(F(x))=E(F(x))$, and $T'(x)=1-\ol e'(F(x))F'(x)=1$ if $\ol E(F(x))<E(F(x))$, it follows that $T$ is increasing. Then, by $\ol E\in \Conv([0,1])$ and $F\in \Delta(Y)$, $\ol e\circ F$ is increasing, so $T$ is $1$-Lipschitz. 
Finally, $T$ satisfies \eqref{F} and $x\in X^C$ if and only if $E(F(x))>\ol E(F(x))$ (and hence $T'(x)=1)$, because 
\begin{align*}
\int_{\ul x}^{x} T(\tilde x) F( \df \tilde x) - \int_{\ul x}^{x} S(\tilde x) F(\df \tilde x) =E(F(x))-\ol E(F(x))\geq 0,
\end{align*}
for all $x\in X$, with equality at $x=\ol x$ by $E(1)=\ol E(1)$, and with strict inequality if and only if $E(F(x))>\ol E(F(x))$.

\emph{Part 3.} Take any $\tilde T\in \Tau$, and define $\tilde E:[0,1]\to \R $ by $\tilde E(t)=\int_0^t\bigl(F^{-1}(\tilde t)-\tilde T(F^{-1}(\tilde t))\bigr)\, \df \tilde t$ for all $t\in [0,1]$. Then $\tilde E$ is convex, because $\tilde T$ is $1$-Lipschitz. Moreover
\[
E(t)-\tilde E(t)=\int_{\ul x}^{F^{-1}(t)}\tilde T(x)\, F(\df x)-\int_{\ul x}^{F^{-1}(t)}S(x)\, F(\df x)\geq 0,
\]
for all $t\in [0,1]$, because $\tilde T$ satisfies \eqref{F}. Hence $\tilde E(t)\leq \ol E(t)$ for all $t\in [0,1]$, because $\ol E$ is the largest convex function that lies below $E$. Consequently, $T\circ F^{-1}\succeq  \tilde T\circ F^{-1}$, because
\[
\int_{\ul x}^{x} T(\tilde x) F( \df \tilde x) - \int_{\ul x}^{x} \tilde T(\tilde x) F(\df \tilde x)=\tilde E(F(x))-\ol E(F(x))\leq 0,
\]
for all $x\in X$.
\end{proof}

\begin{proof}[Proof of Theorem~\ref{c:psm}]	
\emph{Part 1.} Follows from Theorem~\ref{t:eqm}.

\emph{Part 2.} Follows from Lemma~\ref{l:uc} with $\hat x=\ul x$.

\emph{Part 3.} By strict quasiconcavity, there exists $\hat x\in X$ such that $v_z(x,S(x))$ is strictly increasing on $[\ul x,\hat x)$ and strictly decreasing on $(\hat x,\ol x]$. Assume that $\hat x<\ol x$, as otherwise part 1 applies. Then part 3 follows from the next lemma.
\begin{lemma}\label{l:uc}
There exists $x^*\in [\ul x,\hat x]$ and an increasing $T:X\to Y$ such that
\begin{gather}
\int_{x^*}^{\ol x} T(x)\, F(\df x)=\int_{x^*}^{\ol x} S(x)\, F(\df x), \label{e:x*1}\\
v_z(x^*,T(x^*))=v_z(x,T(x)),\quad \text{for all $x\in [x^*,\ol x]$},\label{e:x*2}\\
S(x^*)\leq T(x^*),\quad \text{with equality if $x^*\neq \ul x$}.\label{e:x*3}
\end{gather}
Moreover, the equilibrium matching is upper compression with $X^C=(x^*,\ol x)$. 
\end{lemma}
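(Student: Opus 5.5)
The plan is constructive: build $T$ by choosing a cutoff $x^*$, and then verify optimality through part 2 of Theorem~\ref{t:eqm}. For each candidate cutoff $s\in[\ul x,\hat x]$, I would define the ``flat'' matching on $[s,\ol x]$ by $v_z(x,T_s(x))=k$ for a constant $k$. Because $v_{xz}>0>v_{zz}$, the implicit function theorem gives, for each $k$, a unique continuous, strictly increasing solution $z=\tau(x,k)$, which is strictly decreasing in $k$. I would extend it by $T(x)=S(x)$ on $[\ul x,s)$ and clip $\tau$ to $Y$ where needed. Then I would pick $k=k(s)$ so that the mass condition \eqref{e:x*1} holds, $\int_s^{\ol x}\tau(x,k)\,F(\df x)=\int_s^{\ol x}S(x)\,F(\df x)$. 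Such a $k$ exists by continuity and strict monotonicity in $k$, since the range of $\tau$ covers $Y$ at the extremes.

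The cutoff is chosen by the continuity condition \eqref{e:x*3}. Let $\Phi(s)=\tau(s,k(s))-S(s)$, which is continuous in $s$.
\begin{itemize}
\item At $s=\hat x$, the function $v_z(x,S(x))$ is strictly decreasing on $(\hat x,\ol x]$. Hence $S$ crosses the level curve $\tau(\cdot,k)$ from above to below at most once. The mass condition then forces $\tau(\hat x,k(\hat x))\le S(\hat x)$, because $\tau$ is flatter than $S$ at the left end. So $\Phi(\hat x)\le0$.
\item If $\Phi(\ul x)\ge 0$, I take $x^*=\ul x$, and \eqref{e:x*3} holds with inequality.
\item Otherwise, the intermediate value theorem gives $x^*\in(\ul x,\hat x]$ with $\Phi(x^*)=0$.
\end{itemize}

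The main obstacle is checking feasibility \eqref{F} on $[x^*,\ol x]$, namely $\int_{x^*}^x(T-S)\,\df F\ge0$. I would argue single crossing: on $(x^*,\ol x]$, the difference $v_z(x,S(x))-k$ is first positive and then negative. This holds because $v_z(x,S(x))$ is quasiconcave, starts at $\ge k$ at $x^*$ (using $T(x^*)\ge S(x^*)$ and $v_{zz}<0$), and the total mass condition rules out it staying positive throughout. Since $v_{zz}<0$, the sign of $T-S$ is the same as the sign of $v_z(x,S(x))-k$. So $T-S$ is $\ge0$ and then $\le0$. Its running integral therefore rises and then falls to $0$ at $\ol x$, so it stays $\ge0$, with strict inequality in the interior. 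This gives $X^C\supseteq(x^*,\ol x)$. On $[\ul x,x^*]$ we have $T=S$, so those points lie in $X^S$. Thus $X^C=(x^*,\ol x)$, and when $x^*=\ul x$ the set is $(\ul x,\ol x)$.

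Finally, I would verify part 2 of Theorem~\ref{t:eqm}.
\begin{itemize}
\item $T$ is increasing and continuous. Continuity at $x^*$ follows from \eqref{e:x*3}; if $x^*=\ul x$, there is no junction to check.
\item $v_z(x,T(x))$ is constant on the only compression interval, by \eqref{e:x*2}.
\item $v_z(x,T(x))$ is increasing overall. On $[\ul x,x^*]\subseteq[\ul x,\hat x]$ it equals $v_z(x,S(x))$, which is increasing there, and at $x^*$ it matches $k$.
\end{itemize}
Uniqueness in Theorem~\ref{t:eqm} then identifies this $T$ as the equilibrium matching, which is upper compression.
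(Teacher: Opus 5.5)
\textbf{There is a genuine gap in how you select the cutoff.} Your route is the paper's in a dual parametrization: the paper fixes the level curve through $(s,S(s))$ and applies the intermediate value theorem to its mass gap $O(s;S(s))$, whereas you fix the mass through $k(s)$ and track the starting gap $\Phi(s)$. Your single-crossing feasibility step and your final appeal to part~2 of Theorem~\ref{t:eqm} are fine.

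\emph{First, the sign at $\hat x$ is reversed.} Since $v_{zz}<0$, $\tau(x,k)-S(x)$ has the sign of $v_z(x,S(x))-k$; this is the same correspondence you use in the feasibility step. That quantity is strictly decreasing on $[\hat x,\ol x]$, so $\tau(\cdot,k(\hat x))-S$ is positive and then negative there. The mass condition forces a genuine sign change, hence $\Phi(\hat x)>0$. Equivalently, as in the paper, the level curve through $(\hat x,S(\hat x))$ lies strictly below $S$ on $(\hat x,\ol x]$ and so carries too little mass. With your claimed $\Phi(\hat x)\le 0$, the case $\Phi(\ul x)<0$ has both endpoints nonpositive, and the intermediate value theorem gives nothing.

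\emph{Second, and more seriously, clipping $\tau$ to $Y$ makes your case split vacuous.} Because $S(\ul x)=G^{-1}(0)=\ul y\le\tau(\ul x,k)$, you always have $\Phi(\ul x)\ge 0$, so you always set $x^*=\ul x$ and always output full compression. Take the example of Figure~\ref{f:examples1}(c), with $v_z=x-z$, $F=U[0,1]$, and $S(x)=x^2$. The mass-balanced clipped curve at $s=0$ is $\max(x-k,0)$ with $k=1-\sqrt{2/3}$. It gives $\Phi(0)=0$, but it lies strictly below $S$ on $(0,k)$, which violates \eqref{F}. Also, $v_z(x,T(x))=x$ is not constant there, which violates \eqref{e:x*2}. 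Your feasibility step hides this: inferring $v_z(x^*,S(x^*))\ge k$ from $T(x^*)\ge S(x^*)$ presumes $v_z(x^*,T(x^*))=k$, i.e.\ that clipping is inactive at $x^*$.

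\textbf{Repair.} Run the case split and the intermediate value argument on $\Psi(s)=v_z(s,S(s))-k(s)$ instead of $\Phi$. It plays the role of the paper's $-O(s;S(s))$ and is insensitive to clipping. $\Psi$ is continuous, and $\Psi(\hat x)>0$ by the argument above. If $\Psi(\ul x)\ge 0$, take $x^*=\ul x$. Otherwise obtain $x^*\in(\ul x,\hat x)$ with $\Psi(x^*)=0$, so the level curve passes through $(x^*,S(x^*))$. In the example, $\Psi(0)<0<\Psi(1/2)=1/12$, and the root is $x^*=1/4$ with $k=3/16$, matching the paper.

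At this $x^*$, $v_z(x^*,\ul y)\ge v_z(x^*,S(x^*))\ge k$, and your single-crossing argument gives $v_z(\ol x,\ol y)<k$. Since $v_z$ is increasing in $x$, clipping is then inactive on all of $[x^*,\ol x]$. So \eqref{e:x*2} holds, and the rest of your proof goes through as in the paper.
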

\begin{proof}
Define $O(x;z)=\int_{x}^{\ol x} (T(\tilde x)-S(\tilde x))\,F(\df \tilde x)$, where $T(\tilde x)$ is given by $v_z(x,z)=v_z(\tilde x, T(\tilde x)).$ First, consider the case $O(\ul x;S(\ul x))\leq 0$. In this case, $v_z(\ul x,S(\ul x))>v_z(\ol x,S(\ol x))$, as otherwise, for all $\tilde x\in (\ul x,\ol x)$, $v_z(\tilde x,T(\tilde x))<v_z(\tilde x,S(\tilde x))$ because $v_z(x,S(x))$ is strictly quasiconcave, so $T(\tilde x)>S(\tilde x)$ because $v_z$ is strictly decreasing in $z$, and thus $O(\ul x;S(\ul x))>0$, yielding a contradiction. Let $\hat z$ be given by $v_z(\ul x,\hat z)=v_z(\ol x,S(\ol x))$. Then, for all $\tilde x\in (\ul x,\ol x)$, $v_z(\tilde x,T(\tilde x))<v_z(\tilde x,S(\tilde x))$, so $T(\tilde x)>S(\tilde x)$, and thus $O(\ul x;\hat z)>0$. By the intermediate value theorem, there exists $z^*\in [S(\ul x),\hat z)$ such that $O(\ul x;z^*)=0$, so \eqref{e:x*1}--\eqref{e:x*3} hold. Since $v_z(x,S(x))$ is strictly quasiconcave and $v_z$ is strictly decreasing in $z$, there exists $x_0\in (\ul x,\ol x)$ such that $S(\tilde x)<T(\tilde x)$ for all $\tilde x\in (\ul x,x_0)$ and $S(\tilde x)>T(\tilde x)$ for all $\tilde x\in (x_0,\ol x]$, implying that \eqref{F} holds strictly for all $x\in (\ul x,\ol x)$. Thus, the equilibrium matching is full compression by Theorem~\ref{t:eqm}.

Next, consider the case $O(\ul x;S(\ul x))> 0$. Note that $O(\hat x;S(\hat x))<0$, as, for all $\tilde x\in (\hat x,\ol x]$, $v_z(\tilde x,T(\tilde x))=v_z(\hat x,S(\hat x))>v_z(\tilde x,S(\tilde x))$, so $T(\tilde x)<S(\tilde x)$. By the intermediate value theorem, there exists $x^*\in (\ul x,\hat x)$ such that $O(x^*;S(x^*))=0$, so \eqref{e:x*1}--\eqref{e:x*3} hold. Then there exists $x_0\in (x^*,\ol x)$ such that $S(\tilde x)<T(\tilde x)$ for all $\tilde x\in (x^*,x_0)$ and $S(\tilde x)>T(\tilde x)$ for all $\tilde x\in (x_0,\ol x]$. Thus the equilibrium matching is upper compression with $X^C=(x^*,\ol x)$ by Theorem~\ref{t:eqm}.
\end{proof}

\emph{Part 4.} Follows from part 3 by symmetry. 
\end{proof}
\subsection{Comparative Statics in the Mean-Measurable Case}
\begin{proof}[Proof of Remark~\ref{r:rms}]
Suppose by contradiction that $\hat v_z \unrhd v_z$ but there exist $x\in (\ul x,\ol x)$, $z\in (\ul y,\ol y)$, and $k>0$ such that 
\[
\frac{\hat v_{xz}(x,z)}{-\hat v_{zz}(x,z)}<k< \frac{v_{xz}(x,z)}{-v_{zz}(x,z)}.
\]
Then, by a Taylor expansion, there exists $\varepsilon>0$ such that $v_z(x,z)<v_z(x+\varepsilon,z+k\varepsilon)$ and $\hat v_z(x,z)>\hat v_z(x+\varepsilon,z+k\varepsilon)$, contradicting that $\hat v_z \unrhd v_z$.

Next, by rearrangement,
\[\frac{\hat v_{xz}(x,z)}{-\hat v_{zz}(x',z')}\geq \frac{v_{xz}(x,z)}{-v_{zz}(x',z')},\]
for all $(x,z)$ and $(x',z')$ in $X\times Y$ is equivalent to
\[\inf_{(x,z)\in X\times Y}\frac{\hat v_{xz}(x,z)}{v_{xz}(x,z)}\geq \sup_{(x',z')\in X\times Y}\frac{-\hat v_{zz}(x',z')}{-v_{zz}(x',z')},\]
which is equivalent to the existence of $c>0$ such that
\[
\frac{\hat v_{xz}(x,z)}{v_{xz}(x,z)}\geq c\geq  \frac{-\hat v_{zz}(x',z')}{-v_{zz}(x',z')},
\]
for all $(x,z)$ and $(x',z')$ in $X\times Y$, meaning that $\hat v_{xz}(x,z)\geq c v_{xz}(x,z)$ and $-\hat v_{zz}(x,z)\leq -c v_{zz}(x,z)$ for all $(x,z)$. 

Now, suppose that such $c>0$ exists. If $x' \geq x$, $z' \geq z$, and $v_z(x',z')\geq v_z(x,z)$, then
\begin{align*}
\hat v_z(x',z')-\hat v_z(x,z)&=\int_x^{x'}\hat v_{xz}(\tilde x,z)\,\df\tilde x+\int_z^{z'}\hat v_{zz}(x',\tilde z)\,\df\tilde z	\\
&\geq  c\int_x^{x'}v_{xz}(\tilde x,z)\,\df\tilde x+c\int_z^{z'}v_{zz}(x',\tilde z)\,\df\tilde z\\
&= c(v_z(x',z')-v_z(x,z))\geq 0,
\end{align*}
implying that $\hat v_z \unrhd v_z$.
\end{proof}

\begin{proof}[Proof of Theorem~\ref{t:CS}]
\emph{If parts of both parts.} Suppose that $\hat v_z \unrhd v_z$ and $\hat G\preceq G$. Let $T$ and $\hat T$ denote the equilibrium matchings under $(v,F,G)$ and $(\hat v,F,\hat G)$, and let $S=G^{-1}\circ F$ and $\hat S=\hat G^{-1}\circ F$. Suppose by contradiction that $\hat T\circ F^{-1}\nsucceq T\circ F^{-1}$. Then there exist $x_0<x_1$ in $X$ such that
$\int_{\ul x}^{x_i} \hat T(\tilde x)\,F(\df \tilde x) = \int_{\ul x}^{x_i} T(\tilde x)\,F(\df \tilde x)$ for $i=0,1$ and $\int_{\ul x}^{x} \hat T(\tilde x)\,F(\df \tilde x) > \int_{\ul x}^{x} T(\tilde x)\,F(\df \tilde x)$ for all  $x\in (x_0,x_1)$.
In turn, there exist $\tilde x_0<\tilde x_1$ in $(x_0,x_1)$ such that $\hat T(\tilde x_0)>T(\tilde x_0)$ and $\hat T(\tilde x_1)<T(\tilde x_1)$. Next, $\hat v_z(\tilde x_0,\hat T(\tilde x_0))=\hat v_z(\tilde x_1,\hat T(\tilde x_1))$ by Theorem~\ref{t:eqm}, because
\[
\int_{\ul x}^{x} \hat T(\tilde x)\,F(\df \tilde x) > \int_{\ul x}^{x} T(\tilde x)\,F(\df \tilde x) \geq \int_{\ul x}^{x} S(\tilde x)\,F(\df \tilde x) \geq \int_{\ul x}^{x} \hat S(\tilde x)\,F(\df \tilde x),
\]
for all $x\in (x_0,x_1)$. But then 
\[
\hat v_z(\tilde x_0, T(\tilde x_0))>\hat v_z(\tilde x_0, \hat T(\tilde x_0))=\hat v_z(\tilde x_1,\hat T(\tilde x_1))>\hat v_z(\tilde x_1, T(\tilde x_1)),
\]
which implies, by $\hat v_z \unrhd v_z$, that $v_z(\tilde x_0,T(\tilde x_0))>v_z(\tilde x_1,T(\tilde x_1))$, contradicting Theorem~\ref{t:eqm}, which shows that $v_z(x,T(x))$ is increasing in $x$.

\emph{Only if part of part 1.} Suppose that $\hat v_z \ntrianglerighteq v_z $, so there exist $x_1\geq x_0$ in $X$ and $y_1\geq y_0$ in $Y$ such that $v_z(x_1,y_1)\geq v_z(x_0,y_0) $ but $\hat v_z(x_1,y_1)<\hat v_z(x_0,y_0)$. Consider discrete distributions $F^d=\delta_{x_0}/2+\delta_{x_1}/2$ and $G^d=\delta_{y_0}/2+\delta_{y_1}/2$. Let  $J^d=\delta_{(x_0,y_0)}/2+\delta_{(x_1,y_1)}/2$, $r\in [v_z(x_0,y_0),v_z(x_1,y_1)]$, $p(x_i)=v(x_i,y_i)-ry_i$, and $q(z)=rz$ for all $z\in [y_0,y_1]$ and $i=0,1$. Then $J^d$ is the unique equilibrium assignment under $(v,F^d,G^d)$ by Theorem~\ref{t:el}, as
\begin{align*}
v(x_i,y_i)-r y_i+rz-v(x_i,z)&=r(z-y_i)-\int_{y_i}^z v_z(x_i,\tilde z)\,\df \tilde z\\
&\geq\int_{y_i}^z (v_z(x_i,y_i)-v_z(x_i,\tilde z))\,\df \tilde z\geq 0,
\end{align*}
for all $z\in [y_0,y_1]$ and $i=0,1$, with both inequalities holding as equality if and only if $z=y_i$. Next, there exist unique $\hat z_0<\hat z_1$ in $(y_0,y_1)$ such that $\hat z_0+\hat z_1=y_0+y_1$ and $\hat v_z(x_0,\hat z_0)=\hat v_z(x_1,\hat z_1)$. Let  $\hat J^d=\delta_{(x_0,\hat z_0)}/2+\delta_{(x_1,\hat z_1)}/2$, $\hat r= \hat v_z(x_0,\hat z_0)=\hat v_z(x_1,\hat z_1)$, $\hat p(x_i)=\hat v(x_i,\hat z_i)-\hat r\hat z_i$, and $\hat q(z)=\hat rz$ for all $z\in [y_0,y_1]$ and $i=0,1$. Then $\hat J^d$ is the unique equilibrium assignment under $(\hat v,F^d,G^d)$ by Theorem~\ref{t:el}, as
\begin{align*}
\hat v(x_i,\hat z_i)-\hat r \hat z_i+\hat rz-\hat v(x_i,z)=\int_{\hat z_i}^{z} (\hat v_z(x_i,\hat z_i)-\hat v_z(x_i,\tilde z))\,\df \tilde z\geq 0,
\end{align*}
for all $z\in [y_0,y_1]$ and $i=0,1$, with the inequality holding as equality if and only if $z=\hat z_i$. Clearly, $\hat J_Y^d\npreceq J_Y^d$. By continuity and approximation, there exist distributions $F$ and $G$ with continuous positive densities such that the equilibrium assignments under $(\hat v,F,G)$ and $(v,F,G)$ satisfy $\hat J_{Y}\npreceq J_{Y}$, and thus $\hat T\circ  F^{-1} \nsucceq T\circ  F^{-1}$.

\emph{Only if part of part 2.} Suppose that $\hat G\npreceq G$. Consider the convex-loss case with $F=G$, so that $T(x)=S(x)=x$ by Corollary~\ref{c:eqm}. Then $\hat T\circ  F^{-1} \nsucceq T\circ  F^{-1}$, as otherwise $\hat G^{-1}=\hat S\circ F^{-1}\succeq  \hat T\circ  F^{-1}\succeq  T\circ  F^{-1}=S\circ F^{-1}=G^{-1}$, contradicting that $\hat G\npreceq G$.
\end{proof}

\begin{proof}[Proof of Theorem~\ref{t:priceCS}]
\emph{If parts of parts 1 and 2.}
Suppose that $F\leq \hat F$ and $\hat G\leq  G$. Let $T$ and $\hat T$ denote the equilibrium matchings under $(v,F,G)$ and $(v,\hat F,\hat G)$, and let $S=G^{-1}\circ F$ and $\hat S=\hat G^{-1}\circ \hat F$. Suppose by contradiction that $\hat T\ngeq  T$. Then the set $A:=\{x\in X:\,\hat T(x)<T(x)\}$ is non-empty. By continuity of $\hat T$ and $T$, the set $A$ is open and is thus a union of at most countably many disjoint open intervals. Take one such interval $(\ul a,\ol a)$, with $\ul x\leq \ul a<\ol a\leq \ol x$. Note that $\int_{\ul x}^{\ul a}\hat T(x)\, \hat F(\df x)=\int_{\ul x}^{\ul a}\hat S(x)\, \hat F(\df x)$ and thus $\hat T(\ul a)\geq  \hat S(\ul a)=\hat G^{-1}(\hat F(\ul a))$ (with equality if $\ul a>\ul x$), as otherwise, $\int_{\ul x}^{\ul a}\hat T(x)\, \hat F(\df x)>\int_{\ul x}^{\ul a}\hat S(x)\, \hat F(\df x)$, so $\ul a>\ul x$ and, for some $x\in (\ul a,\ol a)$,
\[
v_z(\ul a,T(\ul a))=v_z(\ul a,\hat T(\ul a))=v_z(x,\hat T(x))>v_z(x,T(x))\geq v_z(\ul a,T(\ul a)).
\]
Similarly, $\int_{\ul x}^{\ol a}T(x)\, F(\df x)=\int_{\ul x}^{\ol a}S(x)\, F(\df x)$ and thus $T(\ol a)\leq S(\ol a)$ (with equality if $\ol a<\ol x$), as otherwise $\int_{\ul x}^{\ol a}T(x)\, F(\df x)>\int_{\ul x}^{\ol a}S(x)\, F(\df x)$, so $\ol a<\ol x$ and, for some $x\in (\ul a,\ol a)$,
\[
v_z(\ol a,\hat T(\ol a))\geq v_z(x,\hat T(x))>v_z(x,T(x))=v_z(\ol a,T(\ol a))= v_z(\ol a,\hat T(\ol a)).
\]

Next,
\begin{equation}\label{e:0<int}
\begin{aligned}
0\underset{(a)}{\leq} &\int_{\ul x}^{\ol a} (\hat T(x)-\hat S(x))\, \hat F(\df x)\underset{(b)}=\int_{\ul a}^{\ol a} (\hat T(x)-\hat S(x))\, \hat F(\df x)\\
\underset{(c)}=&\int_{\hat F(\ul a)}^{\hat F(\ol a)}(\hat T(\hat F^{-1}(t))-\hat G^{-1}(t))\, \df t \underset{(d)}< \int_{\hat F(\ul a)}^{\hat F(\ol a)}(T(\hat F^{-1}(t))-\hat G^{-1}(t))\, \df t,
\end{aligned}
\end{equation}
where (a) holds because $\hat T\circ \hat F^{-1}\preceq \hat G^{-1}$, (b) holds because $\int_{\ul x}^{\ul a}\hat T(x)\, \hat F(\df x)=\int_{\ul x}^{\ul a}\hat S(x)\, \hat F(\df x)$, (c) holds by a change of variable, and (d) holds because $\hat T(x)<T(x)$ for $x\in (\ul a,\ol a)$.

Note that
\begin{equation}\label{e:T_1<S_1}
T(\hat F^{-1}(t))\leq T(\ol a)\leq G^{-1}(F(\ol a))\leq G^{-1}(t) \leq \hat G^{-1}(t),\quad\text{for all $t\in [F(\ol a),\hat F(\ol a)]$},
\end{equation}
where the first inequality holds because $T$ is increasing and $t\leq \hat F(\ol a)$, the second inequality holds because $T(\ol a)\leq S(\ol a)$, the third inequality holds because $G^{-1}$ is increasing and $t\geq F(\ol a)$, and the fourth inequality holds because $G^{-1}\leq \hat G^{-1}$. Similarly, we have
\begin{equation}\label{e:T_0>S_0}
\hat T(F^{-1}(t))\geq \hat T(\ul a)\geq  \hat G^{-1}(\hat F(\ul a))\geq \hat G^{-1}(t)\geq G^{-1}(t),\quad\text{for all $t\in [F(\ul a),\hat F(\ul a)]$},
\end{equation}
 where the first inequality holds because $\hat T$ is increasing and $t\geq F(\ul a)$, the second inequality holds because $\hat T(\ul a)\geq \hat S(\ul a)$, the third inequality holds because $\hat G^{-1}$ is increasing and $t\leq \hat F(\ul a)$, and the fourth inequality holds because $\hat G^{-1}\geq G^{-1}$.

Finally, if $F(\ol a)\leq \hat F(\ul a)$, then $[\hat F(\ul a),\hat F(\ol a)]\subset [F(\ol a),\hat F(\ol a)]$, so, by \eqref{e:T_1<S_1}, the integrand in \eqref{e:0<int} is negative, yielding a contradiction. Otherwise (if $F(\ol a)> \hat F(\ul a)$), we again get a contradiction,
\begin{align*}
0\underset{(e)}<&\int_{\hat F(\ul a)}^{\hat F(\ol a)}(T(\hat F^{-1}(t))-\hat G^{-1}(t))\, \df t\underset{(f)}\leq \int_{\hat F(\ul a)}^{F(\ol a)}(T(\hat F^{-1}(t))-\hat G^{-1}(t))\, \df t\\
\underset{(g)}\leq & \int_{\hat F(\ul a)}^{F(\ol a)}(T(F^{-1}(t))-G^{-1}(t))\, \df t\underset{(h)}\leq \int_{F(\ul a)}^{F(\ol a)}(T(F^{-1}(t))-G^{-1}(t))\, \df t\\
\underset{(i)}= &\int_{\ul a}^{\ol a}(T(x)-S(x))\, F(\df x)\underset{(j)}=-\int_{\ul x}^{\ul a}(T(x)-S(x))\, F(\df x)\underset{(k)}\leq 0,
\end{align*}
where (e) holds by \eqref{e:0<int}, (f) holds by \eqref{e:T_1<S_1}, (g) holds because $T$ is increasing, $\hat F^{-1}\leq F^{-1}$, and $\hat G^{-1}\geq G^{-1}$, (h) holds because $T(F^{-1}(t))\geq G^{-1}(t)$ for all $t\in [F(\ul a),\hat F(\ul a)]$ by \eqref{e:T_0>S_0} and $T(F^{-1}(t))\geq \hat T(F^{-1}(t))$ for all $t\in [F(\ul a),F(\ol a)]\supset [F(\ul a),\hat F(\ul a)]$, (i) holds by a change of variable, (j) holds because $\int_{\ul x}^{\ol a}T(x)\, F(\df x)=\int_{\ul x}^{\ol a}S(x)\, F(\df x)$, and (k) holds because $T\circ F^{-1}\preceq G^{-1}$.

\emph{Only if part of part 1.} Suppose that $F\nleq \hat F$, so that there exists $x_0\in (\ul x,\ol x)$ such that $F(x_0)>\hat F(x_0)$. Consider the convex-loss case with $G$ uniform on $[0,\varepsilon]$ such that, for all $x\in X$, we have $\varepsilon\leq 1/F'(x)$ and $\varepsilon\leq 1/\hat F'(x)$, so $S'(x)=F'(x)\varepsilon\leq 1$ and $\hat S'(x)=\hat F'(x)\varepsilon\leq 1$. By Corollary~\ref{c:eqm}, the equilibrium matchings are $T=S$ and $\hat T=\hat S$, so $\hat T(x_0) = G^{-1}(\hat F(x_0)) < G^{-1}(F(x_0)) = T(x_0)$, showing that $\hat T\ngeq T$.

\emph{Only if part of part 2.} Suppose that $\hat G\nleq G$, so that there exists $y_0\in (\ul y,\ol y)$ such that $\hat G(y_0)>G(y_0)$. Consider the convex-loss case with $F$ uniform on $[0,1/\varepsilon]$ such that, for all $y\in Y$, we have $\varepsilon\leq G'(y)$ and $\varepsilon\leq \hat G'(y)$, so $S'(x)=\varepsilon /G'(S(x))\leq 1$ and $\hat S'(x)=\varepsilon / \hat G'(\hat S(x))\leq 1$. By Corollary~\ref{c:eqm}, the equilibrium matchings are $T=S$ and $\hat T=\hat S$, so, letting $x_0=F^{-1}(G(y_0))$, we have $\hat T(x_0) = \hat G^{-1}(F(x_0)) < G^{-1}(F(x_0)) = T(x_0)$, showing that $\hat T\ngeq T$.

\emph{Part 3.} Follows from part 3 of Theorem~\ref{t:eqm}.
\end{proof}

\subsection{Duality in the Moment-Measurable Case}\label{a:dm}

We first show that Theorem~\ref{t:e} applies to the moment-measurable case.

\begin{lemma}\label{l:LipM}
If $v\in \Lip(X\times \R)$ and $m\in \Lip(X\times Y)$, then $V(x,\eta)=v(x,\E_\eta[m(x,y)])\in \Lip(X\times \Delta(Y))$.
\end{lemma}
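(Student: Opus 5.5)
The plan is to mimic the proof of Lemma~\ref{l:Lip}: bound $|V(x,\eta)-V(\hat x,\hat\eta)|$ by inserting the intermediate point $V(\hat x,\eta)$ and controlling each piece with the Lipschitz constants of $v$ and $m$. Let $L_v$ and $L_m$ denote the Lipschitz constants of $v$ on $X\times\R$ and of $m$ on $X\times Y$. Write $z(x,\eta)=\E_\eta[m(x,y)]$. Then
\[
|V(x,\eta)-V(\hat x,\hat \eta)|\leq L_v\bigl(|x-\hat x|+|z(x,\eta)-z(\hat x,\hat\eta)|\bigr),
\]
so it suffices to show that $(x,\eta)\mapsto z(x,\eta)$ is Lipschitz on $X\times\Delta(Y)$.

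For that step, I would split
\[
z(x,\eta)-z(\hat x,\hat\eta)=\E_\eta[m(x,y)-m(\hat x,y)]+\bigl(\E_\eta[m(\hat x,y)]-\E_{\hat\eta}[m(\hat x,y)]\bigr).
\]
The first term is at most $L_m|x-\hat x|$ in absolute value, pointwise in $y$. For the second term, the function $y\mapsto m(\hat x,y)/L_m$ is $1$-Lipschitz. Subtracting the constant $m(\hat x,y_0)/L_m$ does not change the difference of integrals, because $\eta$ and $\hat\eta$ are both probability measures. So by the definition \eqref{e:KR} of the Kantorovich--Rubinstein norm (applied to $\eta-\hat\eta$ and to $\hat\eta-\eta$), this term is bounded by $L_m\kr{\eta-\hat\eta}$.

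Combining the two bounds gives
\[
|V(x,\eta)-V(\hat x,\hat\eta)|\leq L_v(1+L_m)\bigl(|x-\hat x|+\kr{\eta-\hat\eta}\bigr).
\]
The only minor care points are to treat the case $L_m=0$ trivially, and to note that the product metric on $X\times\Delta(Y)$ is the sum metric used in the proof of Theorem~\ref{t:e}. I do not expect a real obstacle here.
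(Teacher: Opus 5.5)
Your proof is correct and follows essentially the same route as the paper. The paper telescopes through the same three pieces: moving $x$ in the first argument of $v$, moving $x$ inside $m$, then moving $\eta$. This yields the bound $L_v\rho_X(x,\hat x)+L_vL_m\rho_X(x,\hat x)+L_vL_m\kr{\eta-\hat\eta}$. Your justification of the last term via the normalization $q(y_0)=0$ in \eqref{e:KR} makes explicit a step the paper leaves implicit.
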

\begin{proof}
Letting $L_v$ and $L_m$ denote Lipschitz constants of $v$ and $m$, for all $x,\hat x\in X$ and $\eta,\hat \eta \in \Delta(Y)$, we have
\begin{align*}
V(x,\eta)-V(\hat x,\hat \eta)=&\left(v\left(x,\int_Y m(x,y)\eta(\df y)\right)-v\left(\hat x,\int_Y m(x,y)\eta(\df y)\right)\right )\\
+&\left(v\left(\hat x,\int_Y m(x,y)\eta(\df y)\right)-v\left(\hat x,\int_Y m(\hat x,y)\eta(\df y)\right)\right )\\
+&\left(v\left(\hat x,\int_Y m(\hat x,y)\eta(\df y)\right)-v\left(\hat x,\int_Y m(\hat x,y)\hat \eta(\df y)\right)\right )\\
\leq & L_v \rho_X(x,\hat x)+L_vL_m\rho_X(x,\hat x)+L_vL_m\kr{\eta -\hat \eta},
\end{align*}
so $V\in \Lip(X\times \Delta(Y))$
\end{proof}
Next, we simplify the primal and dual problems. Let $Z=[\ul z,\ol z]$, where $\ul z=\min_{x\in X}m(x,\ul y)$ and $\ol z=\max_{x\in X}m(x,\ol y)$. Moreover, for $\pi\in \Delta(X\times Y\times Z)$, let $\pi_{x,z}$ denote the conditional distribution of $y$ given $(x,z)$. The planner's problem is
\begin{equation}\tag{P$_M$} \label{PM}
\begin{gathered}
\max_{\pi \in  \Delta(X\times Y\times Z)} \E_{\pi } [v(x,z)]\\
\text{s.t.}\quad \pi_X=\mu,\quad \pi_Y=\nu,\quad\text{and}\quad \E_{\pi_{x,z}}[m(x,y)]=z,\quad\text{for all $(x,z)\in X\times Z$}.
\end{gathered}
\end{equation}
Let $B(X\times Z)$ denote the set of bounded, measurable functions on $X\times Z$. The dual problem is
\begin{equation}\tag{D$_M$} \label{DM}
\begin{gathered}
\min_{(p,q,r)\in \Lip(X)\times \Lip(Y)\times B(X\times Z)} \E_\mu[p (x)]+\E_\nu[q(y)]\\
\text{s.t.}\enspace  p(x)+q(y)\geq v(x,z)+r(x,z)(m(x,y)-z),\enspace \text{for all $(x,y,z)\in X\times Y\times Z$.}
\end{gathered}
\end{equation}
Let $\DfM$ denote the set of feasible solutions to \eqref{DM}.
\begin{lemma}\label{l:DDM}
Suppose that $v$ is continuously differentiable and $m$ is Lipschitz. Then $(p,q)\in \Df$ if and only if there exists $r\in B(X\times Z)$ such that $(p,q,r)\in \DfM$.
\end{lemma}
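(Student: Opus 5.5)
The plan is to prove the two directions separately. The easy direction is the ``if'' direction; the substantive one is ``only if'', which rests on a supporting-line argument for a concave function of the moment.

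\emph{If direction.} Suppose $(p,q,r)\in\DfM$. Fix $(x,\eta)\in X\times\Delta(Y)$ and set $z=\E_\eta[m(x,y)]$. Note that $z\in Z$, since $\ul z\le m(x,\ul y)\le m(x,y)\le m(x,\ol y)\le \ol z$. (I assume $m$ is increasing in $y$, as in the moment-measurable case; otherwise $Z$ should be replaced by the range of $m$, which is a compact interval.) Integrate the dual constraint over $y\sim\eta$ at this fixed $(x,z)$. The term $r(x,z)(\E_\eta[m(x,y)]-z)$ vanishes, which yields $p(x)+\E_\eta[q(y)]\ge v(x,z)=V(x,\eta)$. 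Hence $(p,q)\in\Df$.

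\emph{Only if direction.} Suppose $(p,q)\in\Df$. For each $x$, define
\[
\phi_x(z)=\min\{\E_\eta[q(y)]:\ \eta\in\Delta(Y),\ \E_\eta[m(x,y)]=z\}
\]
for $z$ in the attainable interval $Z_x=[m(x,\ul y),m(x,\ol y)]$. The minimum is attained because the constraint set is nonempty and weak$^\star$-compact, and $\eta\mapsto\E_\eta[q]$ is continuous. The function $\phi_x$ is convex, since mixing feasible measures preserves the linear constraint. It is the lower convex envelope of the curve $\{(m(x,y),q(y)):y\in Y\}$. Moreover, $\phi_x$ is Lipschitz with a constant uniform in $x$. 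This follows by the same argument as in Lemma~\ref{l:dual}: the problem is the persuasion problem with two-point support, and Section~\ref{s:dual} gives Lipschitz continuity of its value. Concretely, one can use $m_y$ bounded below on the compact set, or simply $\Lip(q)$ together with $m_y>0$ (which makes $y\mapsto m(x,y)$ bi-Lipschitz). The feasibility of $(p,q)$ then gives $p(x)+\phi_x(z)\ge v(x,z)$ on $Z_x$, because $V(x,\eta)=v(x,z)$ for every feasible $\eta$.

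So the function $\psi_x(z):=v(x,z)-p(x)$ lies below the convex function $\phi_x$. By Remark~\ref{r:price}, however, what we need is an affine minorant $\ell(y')=\phi_x(z)+r(z')(y'-z)$ with $\ell\le\phi_x$, chosen so that $v(x,z)\le p(x)+\phi_x(z)+r\,(m(x,y)-z)$. This is the main obstacle: we cannot simply take a subgradient of $\phi_x$ at $z$. That would give $\phi_x(z')\ge\phi_x(z)+r(z'-z)$, but we need the inequality anchored at $v(x,z)$, not $\phi_x(z)$, and when $v(x,z)<p(x)+\phi_x(z)$ there is slack.

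The fix is to use concavity of $v$ in $z$ (recall $v_{zz}<0$) rather than convexity of $\phi_x$. Set $r(x,z)=v_z(x,z)$. Then for every $y$, with $z'=m(x,y)\in Z_x$, we get
\[
v(x,z)+v_z(x,z)(m(x,y)-z)\ \ge\ v(x,m(x,y))\ \le\ p(x)+q(y).
\]
This chain runs in the wrong direction, so instead choose $r$ to be a supergradient of the concave function $\psi_x$ at $z$. The function $\psi_x$ is concave and lies below $\phi_x$; the latter is convex and in turn lies below the points $q(y)$ at $m(x,y)$. By the separating-hyperplane theorem for a concave function below a convex one on the interval $Z_x$ (a sandwich theorem), there exists an affine $\ell$ with $\psi_x\le\ell\le\phi_x$. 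I would take $r(x,z)$ to be a slope of such an affine function through the relevant point.

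Pointwise in $z$, what is needed is exactly the inequality $\psi_x(z)+r(z'-z)\le\phi_x(z')$ for all $z'$. The cleanest choice is $r(x,z)=\psi_x'(z)=v_z(x,z)$, provided concavity of $v(x,\cdot)$ is available. Without it, I would use the Lipschitz bound on $v$ together with the convex-minus-concave gap $\phi_x-\psi_x\ge0$ to pick a slope bounded by $\Lip(v)+\Lip(\phi_x)$. For $z\in Z\setminus Z_x$ the constraint is vacuous only if $r$ is large, so there I would set $r$ to a large constant of the appropriate sign; boundedness holds uniformly. Finally, measurability of $r$ follows from the continuity of $v_z$.

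Given the remaining uncertainty, I expect the crux to be showing that a bounded slope $r$ exists when $z$ lies near the endpoints of $Z_x$ or outside $Z_x$. This is where differentiability of $v$ and the Lipschitz property of $m$ are used.
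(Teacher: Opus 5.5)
Your ``if'' direction is the paper's argument. The ``only if'' direction breaks at the step that matters: the choice of $r(x,z)$ for $z\in Z_x$. You settle on $r(x,z)=v_z(x,z)$, i.e.\ a supergradient of the concave $\psi_x$ at $z$. But concavity puts that tangent line \emph{above} $\psi_x$, and nothing keeps it below $\phi_x$ at points where $\psi_x(z)<\phi_x(z)$. Here is a concrete failure. Let $X=Y=[0,1]$, $m(x,y)=y$, and $v(x,z)=(x+M)z-K(z-\tfrac12)^2$ with $M>K>0$, so that $v_z>0$ and $v_{zz}<0<v_{xz}$. Let $p(x)=x+M$ and $q\equiv 0$. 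Then $(p,q)\in\Df$, since $v(x,z)\le (x+M)z\le p(x)$ for $z\in[0,1]$. Yet with $r(x,0)=v_z(x,0)=x+M+K$, the constraint of \eqref{DM} at $(x,y,z)=(x,1,0)$ requires $x+M\ge v(x,0)+r(x,0)=x+M+\tfrac{3K}{4}$, which is false. The slope $v_z$ is forced only at contact points with nondegenerate workforces (Lemma~\ref{l:FOC}); it is not a valid choice for an arbitrary feasible $(p,q)$. Your measurability claim rests on this same choice. The sandwich step would in fact give a valid slope on $Z_x$: if $\psi_x\le\ell\le\phi_x$ and $\ell$ has slope $s$, then $\psi_x(z)+s(z'-z)\le\ell(z')\le\phi_x(z')$ for every $z\in Z_x$. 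But you abandon it, and it needs concavity of $v$ in $z$, which the lemma does not assume.

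The option you discarded actually works. Your objection to a subgradient $s$ of $\phi_x$ at $z$ has the slack backwards: $\psi_x(z)\le\phi_x(z)$ gives $\psi_x(z)+s(z'-z)\le\phi_x(z)+s(z'-z)\le\phi_x(z')\le q(y)$ at $z'=m(x,y)$. To finish along your route you would still need three things. First, a bound $|s|\le L:=L_q/\min m_y$, where $L_q$ is the Lipschitz constant of $q$, using one-sided derivatives at the endpoints of $Z_x$. Second, an actual verification that $r=\mp\max\{L,\sup|v_z|\}$ works to the left/right of $Z_x$. Third, a jointly measurable selection, such as a one-sided derivative of $\phi_x$. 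The paper sidesteps all of this by working directly with the interval $R(x,z)=[\ul r(x,z),\ol r(x,z)]$ of all admissible slopes. Nonemptiness comes from the two-point constraints \eqref{e:D2}. The property $R(x,z)\cap[-C,C]\neq\emptyset$, with $C=\sup|v_z|$, comes from the one-point constraints $p(x)+q(y)\ge v(x,m(x,y))$ and the mean value theorem. It then selects the min-norm element, which is measurable by the measurable maximum theorem. This treats $z$ inside and outside $Z_x$ uniformly and gives a bound depending only on $v$. It also needs neither concavity of $v$ nor a lower bound on $m_y$.
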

\begin{proof}
Suppose that $(p,q,r)\in \DfM$. Then, for all $\eta \in \Delta(Y)$ and $z=\E_\eta[m(x,y)]$, we have
\[
\int_Y (p(x)+q(y))\eta(\df y)\geq \int_Y (v(x,z)+r(x,z)(m(x,y)-z))\eta(\df y)=v(x,z)=V(x,\eta),
\]
so $(p,q)\in \Df$.

Suppose now that $(p,q)\in \Df$. Considering all $\eta\in \Delta(Y)$ with $|\supp(\eta)|=1$ gives
\begin{equation}\label{e:D1}
p(x)+q(y)\geq v(x,m(x,y)),\quad \text{for all $(x,y)\in X\times Y$}.
\end{equation}
Considering all $\eta\in \Delta(Y)$ with $|\supp(\eta)|=2$ gives
\begin{equation}\label{e:D2}
\begin{gathered}
\frac{p(x)+q(y_1)-v(x,z)}{m(x,y_1)-z}\leq \frac{p(x)+q(y_2)-v(x,z)}{m(x,y_2)-z},\\
\text{for all $y_1,y_2\in Y$ and $z\in Z$ such that $m(x,y_1)<z<m(x,y_2).$}
\end{gathered}
\end{equation}

Define $\tilde r$ on $X\times Y\times Z$ by 
\[
\tilde r(x,y,z)=
\begin{cases}
v_z(x,z), &z=m(x,y),\\
\frac{v(x,m(x,y))-v(x,z)}{m(x,y)-z}, &z\neq m(x,y).
\end{cases}
\]
Note that $\tilde r$ is continuous on $X\times Y\times Z$, because, by the mean value theorem, for each $(x,y,z)\in X\times Y\times Z$, there exists $\hat z$ between $z$ and $m(x,y)$ such that $\tilde r(x,y,z)=v_z(x,\hat z)$. Thus, there exists $C$ such that $|\tilde r(x,y,z)|\leq C$ for all $(x,y,z)\in X\times Y\times Z$.

Define $\ul r$ and $\ol r$ on $X\times Z$ by
\begin{align*}
\ul r(x,z)&=
\begin{cases}
\sup_{y_1\in Y:m(x,y_1)<z}\frac{p(x)+q(y_1)-v(x,z)}{m(x,y_1)-z},\quad &z>m(x,\ul y),\\
-\infty,\quad &z\leq m(x,\ul y),
\end{cases}\\
\ol r(x,z)&=
\begin{cases}
\inf_{y_2\in Y:m(x,y_2)>z}\frac{p(x)+q(y_2)-v(x,z)}{m(x,y_2)-z},\quad &z<m(x,\ol y),\\
+\infty,\quad &z\geq m(x,\ol y).
\end{cases}
\end{align*}
Note that the correspondence $R$ defined by $R(x,z)=[\ul r(x,z),\ol r(x,z)]$, for all $(x,z)\in X\times Z$, is nonempty valued, by \eqref{e:D2}. Moreover, for all $(x,z)\in X\times Z$, the set $R(x,z)\cap [-C,C]$ is nonempty, because $\ol r(x,z)\geq -C$ and $\ul r(x,z)\leq C$ for all $(x,z)\in X\times Z$, as, by \eqref{e:D1}, we have
\begin{align*}
\frac{p(x)+q(y_2)-v(x,z)}{m(x,y_2)-z}&\geq \frac{v(x,m(x,y_2))-v(x,z)}{m(x,y_2)-z}=\tilde r(x,y_2,z), &z<m(x,y_2),\\
\frac{p(x)+q(y_1)-v(x,z)}{m(x,y_1)-z}&\leq \frac{v(x,m(x,y_1))-v(x,z)}{m(x,y_1)-z}=\tilde r(x,y_1,z), &z>m(x,y_1).
\end{align*}
Then $r$ given by $r(x,z)=\argmin_{t\in R(x,z)} |t|$, for all $(x,z)\in X\times Z$, is a bounded, measurable function on $X\times Z$, by the measurable maximum theorem (\citealp{aliprantis2006}, Theorem 18.19). Finally, it is easy to verify that $(p,q,r)\in \DfM$. 
\end{proof}

By Theorem~\ref{t:e} and Lemmas~\ref{l:LipM} and \ref{l:DDM}, there exists a solution $(p,q,r)$ to \eqref{DM} such that $(p,q)$ is a solution to \eqref{D}. Define the contact set
\[
\Lambda=\left\{(x,\eta)\in X\times \Delta(Y):\, p(x)+\E_\eta[q(y)]=v(x,\E_\eta[m(x,y)]) \right \},
\]
which is compact by continuity in $(x,\eta)$. By Theorem~\ref{t:e}, $\gamma\in \Pf$ solves \eqref{P} if and only if $\supp(\gamma)\subset \Lambda$. Moreover, for every $(x,\eta)\in \Lambda$ and $z=\E_\eta[m(x,y)]$, we have
\[
\int_Y(p(x)+q(y)-v(x,z)-r(x,z)(m(x,y)-z))\eta(\df y)=p(x)+\E_\eta [q(y)]-v(x,z)=0.
\]
Since the integrand is nonnegative, by $(p,q,r)\in \DfM$, and continuous in $y$, we have, for all $y\in \supp(\eta)$,
\begin{equation}\label{e:DM}
\begin{gathered}
p(x)+q(y)=v(x,z)+r(x,z)(m(x,y)-z),\\
\text{for all $(x,\eta)\in \Lambda$ and $z=\E_\eta[m(x,y)]$.}	
\end{gathered}	
\end{equation}
\begin{lemma}\label{l:FOC}
Suppose that $v$ and $m$ are continuously differentiable.
Let $(p,q,r)$ be a solution to \eqref{DM}, let $(x,\eta)\in \Lambda$ with nondegenerate $\eta$, 
and let $z=\E_\eta[m(x,y)]$. Then $r(x,z)=v_z(x,z)$. Moreover, for Lebesgue almost every such $x$, 
$r$ has a partial derivative $r_x(x,z)$ with respect to $x$ at $(x,z)$ satisfying, for all $y\in \supp(\eta)$,
\begin{gather}\label{FOCx}
v_z(x,z)\E_\eta [m_x(x,y)]=v_z(x,z)m_x(x,y)+r_x(x,z)(m(x,y)-z).
\end{gather}
\end{lemma}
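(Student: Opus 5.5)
The plan uses only the dual constraint of \eqref{DM}, the complementary slackness identity \eqref{e:DM}, and one-sided perturbations: of the workforce $\eta$ for the first claim, and of the firm type $x$ for the second. Since $\eta$ is nondegenerate and $m_y>0$, there exist $y_1,y_2\in\supp(\eta)$ with $m(x,y_1)<z<m(x,y_2)$; I fix these throughout.

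\emph{Step 1: $r(x,z)=v_z(x,z)$.}
\begin{itemize}
\item For $t\in(0,1)$, consider $\eta_t=(1-t)\eta+t\delta_{y_2}$, whose moment is $z_t=z+t(m(x,y_2)-z)$.
\item Since $(p,q)$ solves \eqref{D} (Lemma~\ref{l:DDM}), we have $p(x)+\E_{\eta_t}[q(y)]\geq v(x,z_t)$.
\item By \eqref{e:DM}, $p(x)+\E_\eta[q(y)]=v(x,z)$ and $q(y_2)-\E_\eta[q(y)]=r(x,z)(m(x,y_2)-z)$.
\item Hence the left side equals $v(x,z)+t\,r(x,z)(m(x,y_2)-z)$, and so
\[
v(x,z)+t\,r(x,z)\bigl(m(x,y_2)-z\bigr)\geq v\bigl(x,z+t(m(x,y_2)-z)\bigr).
\]
\item Subtracting $v(x,z)$, dividing by $t$ and letting $t\downarrow 0$ gives $r(x,z)\geq v_z(x,z)$, because $m(x,y_2)-z>0$.
\item Repeating this with $y_1$, where $m(x,y_1)-z<0$, reverses the inequality, so $r(x,z)=v_z(x,z)$.
\end{itemize}

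\emph{Step 2: the derivative of $p$.} Since $p$ is Lipschitz, it is differentiable at Lebesgue almost every interior $x$; I restrict to such $x$. Hold $\eta$ fixed and vary the firm type to $x'$. The function $x'\mapsto p(x')+\E_\eta[q(y)]-v(x',\E_\eta[m(x',y)])$ is nonnegative by feasibility and vanishes at $x'=x$. It is differentiable at $x$ because $v$ and $m$ are $C^1$. Hence its derivative at $x$ is zero, which gives
\[
p'(x)=v_x(x,z)+v_z(x,z)\,\E_\eta[m_x(x,y)].
\]

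\emph{Step 3: existence of $r_x(x,z)$.} This is the main obstacle, because $r$ is only known to be bounded and measurable, so its differentiability has to be forced by a squeeze.
\begin{itemize}
\item Fix $z$. For $x'$ near $x$ we still have $m(x',y_1)<z<m(x',y_2)$ by continuity.
\item The dual constraint at $(x',y_2,z)$ gives $r(x',z)\leq U(x')$, where $U(x')=\dfrac{p(x')+q(y_2)-v(x',z)}{m(x',y_2)-z}$.
\item The dual constraint at $(x',y_1,z)$ gives $r(x',z)\geq L(x')$, where $L(x')=\dfrac{p(x')+q(y_1)-v(x',z)}{m(x',y_1)-z}$.
\item By \eqref{e:DM}, $U(x)=L(x)=r(x,z)$.
\item Both $U$ and $L$ are differentiable at $x$, since $p$ is differentiable there and $v,m$ are $C^1$.
\item The function $U-L$ is nonnegative near $x$ and vanishes at $x$, so $U'(x)=L'(x)$.
\item The squeeze $L(x')\le r(x',z)\le U(x')$ then shows that $r(\cdot,z)$ is differentiable at $x$, with $r_x(x,z)=U'(x)$.
\end{itemize}

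\emph{Step 4: deriving \eqref{FOCx}.} Take any $y\in\supp(\eta)$, including $y$ with $m(x,y)=z$. The function
\[
g(x')=p(x')+q(y)-v(x',z)-r(x',z)\bigl(m(x',y)-z\bigr)
\]
is nonnegative by feasibility in \eqref{DM} and vanishes at $x'=x$ by \eqref{e:DM}. It is differentiable at $x$ by Steps 2 and 3. Therefore $g'(x)=0$, that is,
\[
p'(x)=v_x(x,z)+r_x(x,z)\bigl(m(x,y)-z\bigr)+r(x,z)\,m_x(x,y).
\]
Substituting $p'(x)$ from Step 2 and $r(x,z)=v_z(x,z)$ from Step 1, the $v_x$ terms cancel and the result is exactly \eqref{FOCx}.
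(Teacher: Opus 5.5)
Your proof is correct and follows essentially the paper's route: a supergradient argument gives $r(x,z)=v_z(x,z)$; Rademacher's theorem gives differentiability of $p$; squeezing $r(\cdot,z)$ between the dual constraints at the two support points forces $r_x(x,z)$ to exist; and the first-order condition in $x$ yields \eqref{FOCx}. The differences are only in packaging: you perturb $\eta$ directly using \eqref{D} rather than combining \eqref{DM} constraints at $\hat z$, you replace the paper's $\liminf/\limsup$ chain with the cleaner $L\le r\le U$, $U'(x)=L'(x)$ squeeze, and you obtain $p'(x)$ from a separate envelope step instead of averaging \eqref{e:p'} over $\eta$.
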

\begin{proof}
Since $\eta$ is nondegenerate and $m_y(x,y)>0$, for all $(x,y)\in X\times Y$, there exist $y_0,y_1\in \supp(\eta)$ such that $m(x,y_0)<z<m(x,y_1)$. By \eqref{e:DM} and $(p,q,r)\in \DfM$, we have, for all $\hat z\in (m(x,y_0),m(x,y_1))$ and $i=0,1$,  
\[
p(x)+q(y_i)=v(x,z)+r(x,z)(m(x,y_i)-z)\geq v(x,\hat z)+r(x,\hat z)(m(x,y_i)-\hat z),
\]
yielding
\begin{gather*}
(p(x)+q(y_0))\frac{m(x,y_1)-\hat z}{m(x,y_1)-m(x,y_0)}+(p(x)+q(y_1))\frac{\hat z-m(x,y_0)}{m(x,y_1)-m(x,y_0)}\\
=v(x,z)+r(x,z)(\hat z-z) \geq v(x,\hat z).
\end{gather*}
Thus, $r(x,z)=v_z(x,z)$, as follows from
\[
v_z(x,z) =\lim_{\hat z\downarrow z} \frac{v(x,\hat z)-v(x,z)}{\hat z-z} \frac{}{}\leq r(x,z)\leq \lim_{\hat z\uparrow z} \frac{v(x,\hat z)-v(x,z)}{\hat z-z}=v_z(x,z).
\]

Let $\hat X$ be the set of interior points $x$ of $X$ where $p$ has a derivative, which we denote by $p'(x)$. By Rademacher's theorem (e.g., \citealp{santambrogio2015}, Box 1.9), $X\setminus\hat X$ has zero Lebesgue measure. 
Fix $(x,\eta)\in \Lambda$ with $x\in \hat X$ and $|\supp(\eta)|>1$. Recall that there exist $y_0,y_1\in \supp (\eta)$ such that $m(x,y_0)<z<m(x,y_1)$. By \eqref{e:DM} and $(p,q,r)\in \DfM$, we have, for all $\hat x\in X$ with $m(\hat x,y_0)<z<m(\hat x,y_1)$, and $y\in \supp(\eta)$,  
\begin{equation}\label{e:Dx}
q(y)=v(x,z)+r(x,z)(m(x,y)-z)-p(x)\geq v(\hat x,z)+r(\hat x,z)(m(\hat x,y)-z) - p(\hat x).	
\end{equation}
Denote, for $i=0,1$,
\[
C_i=\frac{v_x(x,z)+r(x,z)m_x(x,y_i)-p'(x)}{z-m(x,y_i)}.
\]
Considering $y=y_0$ in \eqref{e:Dx} and recalling that $r(x,z)=v_z(x,z)$ yields
\[
r(\hat x,z)-r(x,z)\geq \frac{v(\hat x,z)-v(x,z)+v_z(x,z)(m(\hat x,y_0)-m(x,y_0))-(p(\hat x)-p(x))}{z-m(\hat x,y_0)},
\]
so
\begin{align*}
\ul r_x(x_+,z)&:=\liminf_{\hat x\downarrow x} \frac{r(\hat x,z)-r(x,z)}{\hat x-x}\geq C_0,\\
\ol r_x(x_-,z)&:=\limsup_{\hat x\uparrow x} \frac{r(\hat x,z)-r(x,z)}{\hat x-x}\leq C_0.	
\end{align*}
Similarly, considering $y=y_1$ in \eqref{e:Dx} yields
\begin{align*}
\ol r_x(x_+,z)&:=\limsup_{\hat x\downarrow x} \frac{r(\hat x,z)-r(x,z)}{\hat x-x}\leq C_1,\\
\ul r_x(x_-,z)&:=\liminf_{\hat x\uparrow x} \frac{r(\hat x,z)-r(x,z)}{\hat x-x}\geq C_1.	
\end{align*}
Thus,
\[
C_0\leq \ul r_x(x_+,z)\leq \ol r_x(x_+,z)\leq C_1\leq \ul r_x(x_-,z)\leq \ol r_x(x_-,z)\leq C_0,
\]
implying that all inequalities hold with equality, so $r$ has a derivative $r_x(x,z)$ at $(x,z)$, which then, by \eqref{e:Dx}, must satisfy, for all $y\in \supp(\eta)$,
\begin{equation}\label{e:p'}
v_x(x,z)+v_z(x,z)m_x(x,y)+r_x(x,z)(m(x,y)-z)=p'(x).
\end{equation}
Taking the expectation of \eqref{e:p'} with respect to $\eta$ gives $p'(x)=v_x(x,z)+v_z(x,z)\E_\eta[m_x(x,y)]$. Substituting $p'(x)$ back to \eqref{e:p'} gives \eqref{FOCx}.
\end{proof}

\subsection{Single-Dippedness in the Moment-Measurable Case}
We prove Theorems~\ref{t:sd} and \ref{t:stability} only in the single-dipped case; the single-peaked case is symmetric. Moreover, Theorem~\ref{t:sd} and its proof remain valid without the assumptions that $v_{xz}>0$ and that $G$ has full support or a density.
\begin{proof}[Proof of Theorem~\ref{t:sd}]
We first prove two key lemmas, which do not rely on the assumption that $v_{zz}<0$.
\begin{lemma}\label{l:sd}
For every $(x_0,\eta_0)$ and $(x_1,\eta_1)$ in $\Lambda$ such that there exist $y_0<y_1<y_0'$ with $y_0,y_0'\in \supp(\eta_0)$ and $y_1\in \supp(\eta_1)$, we have $x_0\geq x_1$.
\end{lemma}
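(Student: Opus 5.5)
The plan is to argue by contradiction from the dual. Fix a solution $(p,q,r)$ to \eqref{DM} such that $(p,q)$ solves \eqref{D}, as constructed above. Suppose $(x_0,\eta_0),(x_1,\eta_1)\in\Lambda$ with $y_0<y_1<y_0'$, where $y_0,y_0'\in\supp(\eta_0)$ and $y_1\in\supp(\eta_1)$, but $x_0<x_1$. Write $z_i=\E_{\eta_i}[m(x_i,y)]$ and $r_i=r(x_i,z_i)$.

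First I would extract two affine-type bounds on $q$ from \eqref{e:DM} and dual feasibility $(p,q,r)\in\DfM$. Set $c_i=v(x_i,z_i)-r_iz_i-p(x_i)$. Then
\[
q(y)\geq c_i+r_i\,m(x_i,y)\quad\text{for all $y\in Y$ and $i=0,1$,}
\]
with equality at $y\in\{y_0,y_0'\}$ for $i=0$ and at $y=y_1$ for $i=1$. Define
\[
g(y)=c_0+r_0m(x_0,y)-c_1-r_1m(x_1,y).
\]
Comparing the two bounds at the touching points gives $g(y_0)\geq0$, $g(y_0')\geq0$, and $g(y_1)\leq 0$.

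Second, I would use Lemma~\ref{l:FOC} to pin down the sign of $r_0$. Since $\eta_0$ is nondegenerate, that lemma gives $r_0=v_z(x_0,z_0)>0$. Nothing is known about the sign of $r_1$, because $\eta_1$ may be degenerate. The change of variables therefore has to be arranged so that only the sign of $r_0$ matters.

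Third, I would change variables to $u=m(x_1,y)$, which is strictly increasing in $y$ because $m_y>0$. Write $m(x_0,y)=\chi(u)$, where $\chi=m(x_0,\cdot)\circ m(x_1,\cdot)^{-1}$. Then
\[
\chi'(u)=\frac{m_y(x_0,y)}{m_y(x_1,y)}.
\]
Strict log-supermodularity of $m_y$ together with $x_0<x_1$ makes this ratio strictly decreasing in $y$, so $\chi$ is strictly concave. Consequently $g=r_0\chi(u)-r_1u+\text{const}$ is strictly concave in $u$, since $r_0>0$ and the $r_1$ term is affine. A strictly concave function that is $\geq0$ at $u(y_0)$ and $u(y_0')$ is strictly positive at the interior point $u(y_1)$, which contradicts $g(y_1)\leq 0$. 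Hence $x_0\geq x_1$.

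The main obstacle I expect is the sign issue in the second step. A naive parametrization by $m(x_0,\cdot)$ makes the argument depend on $r_1>0$, which can fail when $\eta_1$ is a point mass. Parametrizing by $m(x_1,\cdot)$ moves all the curvature onto the $r_0$ term, and Lemma~\ref{l:FOC} controls that term. I would also double-check that Lemma~\ref{l:FOC}'s identity $r_0=v_z(x_0,z_0)$ uses only that $v$ is differentiable and does not use $v_{zz}<0$. That is consistent with the remark that this lemma does not rely on $v_{zz}<0$.
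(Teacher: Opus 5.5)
Your proposal is correct and is essentially the paper's argument. The paper combines the same three inequalities ($g(y_0)\ge 0$, $g(y_0')\ge 0$, $g(y_1)\le 0$) with weights given by $m(x_1,\cdot)$-differences, which cancels the $r(x_1,z_1)$ terms. It then divides by $r(x_0,z_0)=v_z(x_0,z_0)>0$ from Lemma~\ref{l:FOC} and contradicts strict log-supermodularity via a double integral, which is exactly your three-point chord inequality for the strictly concave $\chi$ in the variable $u=m(x_1,y)$.
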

\begin{proof}
Denote $z_i=\E_{\eta_i}[m(x_i,y)]$ for $i=0,1$. By \eqref{e:DM}, we have 
\begin{align}
p(x_0)+q(y_0)&=v(x_0,z_0)+r(x_0,z_0)(m(x_0,y_0)-z_0),\label{x0y0}\\
p(x_0)+q(y_0')&=v(x_0,z_0)+r(x_0,z_0)(m(x_0,y_0')-z_0),\label{x0y0'}\\
p(x_1)+q(y_1)&=v(x_1,z_1)+r(x_1,z_1)(m(x_1,y_1)-z_1).\label{x1y1}
\end{align}
By $(p,q,r)\in \DfM$, we have
\begin{align}
p(x_0)+q(y_1)&\geq v(x_0,z_0)+r(x_0,z_0)(m(x_0,y_1)-z_0),\label{x0y1}\\
p(x_1)+q(y_0)&\geq v(x_1,z_1)+r(x_1,z_1)(m(x_1,y_0)-z_1),\label{x1y0}\\
p(x_1)+q(y_0')&\geq v(x_1,z_1)+r(x_1,z_1)(m(x_1,y_0')-z_1).\label{x1y0'}
\end{align}
Substituting $q(y_0)$, $q(y_0')$, and $q(y_1)$ from \eqref{x0y0}, \eqref{x0y0'}, and \eqref{x1y1} into \eqref{x1y0}, \eqref{x1y0'}, and \eqref{x0y1} gives
\begin{align*}
p(x_0)-p(x_1)+v(x_1,z_1)+r(x_1,z_1)(m(x_1,y_1)-z_1)&\geq v(x_0,z_0)+r(x_0,z_0)(m(x_0,y_1)-z_0),\\
p(x_1)-p(x_0)+v(x_0,z_0)+r(x_0,z_0)(m(x_0,y_0)-z_0)&\geq v(x_1,z_1)+r(x_1,z_1)(m(x_1,y_0)-z_1),\\
p(x_1)-p(x_0)+v(x_0,z_0)+r(x_0,z_0)(m(x_0,y_0')-z_0)&\geq v(x_1,z_1)+r(x_1,z_1)(m(x_1,y_0')-z_1).
\end{align*}
Summing up these inequalities multiplied by $(m(x_1,y_0')-m(x_1,y_0))$, $(m(x_1,y_0')-m(x_1,y_1))$, $(m(x_1,y_1)-m(x_1,y_0))$, and then dividing by $r(x_0,z_0)=v_z(x_0,z_0)>0$ (see Lemma~\ref{l:FOC}) gives
\begin{gather*}
0\leq (m(x_0,y_0')-m(x_0,y_1))(m(x_1,y_1)-m(x_1,y_0))\\-(m(x_0,y_1)-m(x_0,y_0))(m(x_1,y_0')-m(x_1,y_1))\\
=\int_{y_1}^{y_0'} \int_{y_0}^{y_1} (m_y(x_0,y')m_y(x_1,y)-m_y(x_0,y)m_y(x_1,y'))\,\df y\,\df y'.
\end{gather*}
This inequality implies that $x_0\geq x_1$, as otherwise the integrand would be strictly negative by strict log-supermodularity of $m_y$, given that $x_0<x_1$ and $y<y'$.
\end{proof}

\begin{lemma}\label{l:2}
For $\mu$-almost every $x\in X$ and every $\eta\in \Delta(Y)$ such that $(x,\eta)\in \Lambda$, we have $|\supp(\eta)|\leq 2$.	
\end{lemma}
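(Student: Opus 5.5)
The plan is to read off the support restriction from the first-order condition in Lemma~\ref{l:FOC}. Let $\hat X$ be the set of interior points of $X$ at which $p$ is differentiable. By Rademacher's theorem, $X\setminus\hat X$ is Lebesgue-null. Since $F$ has a density, it is also $\mu$-null. Inspecting the proof of Lemma~\ref{l:FOC}, the conclusion holds for every $x\in\hat X$ and \emph{every} nondegenerate $\eta$ with $(x,\eta)\in\Lambda$. The exceptional set therefore does not depend on $\eta$, which is exactly the quantifier order the statement needs. So fix $x\in\hat X$ and $(x,\eta)\in\Lambda$. If $\eta$ is degenerate there is nothing to prove, so assume it is nondegenerate and set $z=\E_\eta[m(x,y)]$.

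Lemma~\ref{l:FOC} gives $r(x,z)=v_z(x,z)>0$, and \eqref{FOCx} holds for all $y\in\supp(\eta)$. Divide \eqref{FOCx} by $v_z(x,z)$ and write $c=r_x(x,z)/v_z(x,z)$. Then the function
\[
g(y)=m_x(x,y)+c\,m(x,y)
\]
equals the constant $\E_\eta[m_x(x,\tilde y)]+cz$ on $\supp(\eta)$. It therefore suffices to show that every level set of $g$ on $Y$ has at most two points.

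For this, differentiate: $g'(y)=m_y(x,y)\bigl(m_{xy}(x,y)/m_y(x,y)+c\bigr)$. Here $m_y>0$, and by strict log-supermodularity the bracket is strictly increasing in $y$. Hence the bracket vanishes at most at one point $\hat y$; it is negative to the left of $\hat y$ and positive to the right. So $g$ is strictly monotone on $Y$, or strictly decreasing on $[\ul y,\hat y]$ and strictly increasing on $[\hat y,\ol y]$. In either case each level set of $g$ contains at most one point from each monotone piece, so at most two points. This gives $|\supp(\eta)|\le 2$.

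The main obstacle is really only the bookkeeping. One must check that the full-measure set from Lemma~\ref{l:FOC} is uniform over $\eta$, which it is because it is just the differentiability set of $p$. One must also confirm that $r_x(x,z)$ is a single finite number, so that the same $c$ applies to all $y\in\supp(\eta)$. Both follow from the proof of Lemma~\ref{l:FOC}. The log-submodular (single-peaked) case is symmetric: $g$ is then strictly increasing and then strictly decreasing, with the same conclusion.
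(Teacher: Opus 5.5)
Your proposal is correct and follows essentially the paper's route. Both arguments use Lemma~\ref{l:FOC}, whose exceptional set is the non-differentiability set of $p$ and so does not depend on $\eta$. From it they get that $y\mapsto m_x(x,y)+c\,m(x,y)$ is constant on $\supp(\eta)$, and then use strict log-supermodularity of $m_y$ to rule out three support points.

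The only difference is the last step. The paper eliminates $r_x(x,z)$ by taking a weighted three-point combination of \eqref{FOCx} and then signs the resulting double integral. You instead keep $c$ and observe that $g'=m_y\,(m_{xy}/m_y+c)$ changes sign at most once. Hence $g$ is strictly decreasing and then strictly increasing, so its level sets have at most two points. This is an equivalent and slightly cleaner computation.
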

\begin{proof}
By Lemma~\ref{l:FOC}, for Lebesgue almost every $x\in X$ (and thus for $\mu$-almost every $x\in X$ since $\mu$ has a density), for every nondegenerate $\eta\in \Delta(Y)$ such that $(x,\eta)\in \Lambda$, for every $y\in \supp(\eta)$, and for $z=\E_\eta[m(x,y)]$, there exists $r_x(x,z)$ such that \eqref{FOCx} holds. Fix any such $x$ and suppose by contradiction that $\supp(\eta)$ contains $y_0<y_1<y_2$. Summing up \eqref{FOCx} for $i=0$ multiplied by $m(x,y_2)-m(x,y_1)$, for $i=1$ multiplied by $m(x,y_0)-m(x,y_2)$, and for $i=2$ multiplied by $m(x,y_1)-m(x,y_0)$ yields a contradiction:
\begin{gather*}
0=v_z(x,z)(m_x(x,y_2)-m_x(x,y_1))(m(x,y_1)-m(x,y_0))\\
-v_z(x,z)(m_x(x,y_1)-m_x(x,y_0))(m(x,y_2)-m(x,y_1))	\\
=v_z(x,z)\left[\int_{y_1}^{y_2}\int_{y_0}^{y_1}(m_{xy}(x,y')m_{y}(x,y)-m_{xy}(x,y)m_{y}(x,y'))\,\df y\,\df y'\right]\\
>v_z(x,z)\frac{m_{xy}(x,y_1)}{m_y(x,y_1)}\left[\int_{y_1}^{y_2}\int_{y_0}^{y_1}(m_{y}(x,y')m_{y}(x,y)-m_{y}(x,y)m_{y}(x,y'))\,\df y\,\df y'\right]=0,
\end{gather*}
where the inequality holds because $m_y$ is strictly log-supermodular, so
\[
\frac{m_{xy}(x,y')}{m_y(x,y')}>\frac{m_{xy}(x,y_1)}{m_y(x,y_1)}>\frac{m_{xy}(x,y)}{m_y(x,y)},\quad \text{for $y'>y_1>y$}.\qedhere
\]
\end{proof}
To complete the proof of Theorem~\ref{t:sd}, we now use the assumption that $v_{zz}<0$.  
Suppose for contradiction that there exist $(x,\eta_0),(x,\eta_1)\in \Lambda$ with $z_0=\E_{\eta_0}[m(x,y)]<\E_{\eta_1}[m(x,y)]=z_1$. Then, denoting $\eta=(\eta_0+\eta_1)/2$ and $z=(z_0+z_1)/2$, we have
\begin{multline*}
p(x)+\int_Y q(y)\, \eta(\df y)\geq v(x,z)>\tfrac{1}{2}v(x,z_0)+\tfrac{1}{2}v(x,z_1)\\
=\tfrac{1}{2}\Big (p(x)+\int_Y q(y)\, \eta_0(\df y)\Big )+\tfrac{1}{2}\Big (p(x)+\int_Y q(y)\, \eta_1(\df y)\Big )=p(x)+\int_Y q(y)\, \eta(\df y),
\end{multline*}
where the first inequality is by $(p,q)\in \Df$ and $V(x,\eta)=v(x,z)$, the second inequality is by $v_{zz}<0$, the first equality is by $(x,\eta_0),(x,\eta_1)\in \Lambda$ and the second equality is by the rearrangement. Hence $\E_{\eta_0}[m(x,y)]=\E_{\eta_1}[m(x,y)]$ for all $(x,\eta_0),(x,\eta_1)\in \Lambda$.

By Lemma~\ref{l:2}, there exists a full-measure set $X^*\subset X$ such that $|\supp(\eta)|\leq 2$ for all $(x,\eta)\in \Lambda$ with $x\in X^*$. Fix $(x,\eta_0),(x,\eta_1)\in \Lambda$ with $x\in X^*$. Denoting $\eta=(\eta_0+\eta_1)/2$ and recalling that $\E_{\eta_0}[m(x,y)]=\E_{\eta_1}[m(x,y)]=\E_{\eta}[m(x,y)]=z$, we have
\[
p(x)+\int_Y q(y)\, \eta(\df y)=\tfrac{1}{2}\Big(p(x)+\int_Y q(y)\, \eta_0(\df y)\Big)+\tfrac{1}{2}\Big(p(x)+\int_Y q(y)\, \eta_1(\df y)\Big)=v(x,z),
\]
showing that $(x,\eta)\in \Lambda$, so $|\supp(\eta)|=|\supp(\eta_0)\cup\supp(\eta_1)|\leq 2$, and thus $\eta_0=\eta_1$, by $\E_{\eta_0}[m(x,y)]=\E_{\eta_1}[m(x,y)]$ and $m_y>0$. So, there exist functions $\rho:X^*\to [0,1]$ and $T_d,T_u:X^*\to Y$, with $T_d(x)\leq T_u(x)$ for all $x\in X^*$, such that $\eta=(1-\rho(x))\delta_{T_d(x)}+\rho(x)\delta_{T_u(x)}$ for all $(x,\eta)\in \Lambda$ with $x\in X^*$. Moreover, $T_d(x'),T_u(x')\notin (T_d(x),T_u(x))$ for all $x<x'$ in $X^*$, by Lemma~\ref{l:sd}.
\end{proof}

\begin{proof}[Proof of Theorem~\ref{t:stability}]
By Theorem~\ref{t:sd}, each $\gamma_n$ is unique and is strictly single-dipped. Since $\Pf $ is compact, up to a subsequence, $\kr{\gamma_n-\gamma^*}\to 0$ for some $\gamma^*\in \Pf$.

We now show that $\gamma^*$ solves \eqref{P} under $m$. Fix arbitrary $\tilde\gamma\in \Pf$ and $\delta>0$. Define $V_n(x,\eta)=v(x,\E_\eta[m_n(x,y)])$ and $V(x,\eta)=v(x,\E_\eta[y])$. There exists $N\geq 1$ such that, for all $n\geq N$, we have $|V_n(x,\eta)-V(x,\eta)|\leq \delta$ for all $(x,\eta)\in X\times \Delta(Y)$, because $v$ is Lipschitz and $m_n$ converges uniformly to $m$, and $|\int V \,\df \gamma^*-\int V\,\df \gamma_n| \leq \delta$, because $V$ is Lipschitz, $X\times \Delta(Y)$ is compact, and $\kr{\gamma_n-\gamma^*}\to 0$. Thus,
\[
\int V\,\df\gamma^* \geq \int V\,\df\gamma_n-\delta \geq \int V_n\,\df\gamma_n-2\delta \geq \int V_n\,\df\tilde\gamma-2\delta\geq \int V\,\df\tilde\gamma-3\delta,
\]
showing that $\gamma^*$ solves \eqref{P}.

Next, we show that $\E_\eta[y]=T(x)$ for all $(x,\eta)\in \supp(\gamma^*)$. By Theorems~\ref{t:e} and \ref{t:el}, there exists $(p,q)\in \DfL$ such that, for all $(x,\eta)\in \supp(\gamma^*)$, we have $p(x)+q(\E_\eta[y])=v(x,\E_\eta[y])$, so $\E_\eta[y]=T(x)$ by Lemma~\ref{l:uniq}.

Next, we show that $\gamma^*$ is \emph{single-dipped}, meaning that there do not exist $(x_0,\eta_0),(x_1,\eta_1)\in \supp(\gamma^*)$, with $y_0,y_0'\in \supp(\eta_0)$ and $y_1\in \supp(\eta_1)$, such that $x_0<x_1$ and $y_0<y_1<y_0'$. Suppose by contradiction that such $(x_0,\eta_0)$ and $(x_1,\eta_1)$ exist. Since $\kr{\gamma_n-\gamma^*}\to 0$, lower hemicontinuity of the support correspondence (\citealp{aliprantis2006}, Theorem 17.14), with $\mu(X_n^*)=1$ (where $X_n^*$ is the full $\mu$-measure set from Lemma~\ref{l:2} for $\gamma_n$), implies that, for all large $n$, there exist $(x_{0,n},\eta_{0,n}),(x_{1,n},\eta_{1,n})\in \supp(\gamma_n)$, with $y_{0,n},y_{0,n}'\in \supp(\eta_{0,n})$ and $y_{1,n}\in \supp (\eta_{1,n})$, such that $x_{0,n},x_{1,n}\in X_n^*$, $x_{0,n}<x_{1,n}$, and $y_{0,n}<y_{1,n}<y_{0,n}'$, contradicting that $\gamma_n$ is strictly single-dipped.

Suppose by contradiction that $\gamma^*$ is not strictly single-dipped. Since $\gamma^*$ is single-dipped, $\E_\eta[y]=T(x)$, for all $(x,\eta)\in \supp(\gamma^*)$, and $\mu$ has a density, there must exist an uncountable set  $\tilde X\subset X$ such that, for each $x\in \tilde X$, we have $|\cup_{\eta:(x,\eta)\in \supp(\gamma^*)}\supp(\eta)|\geq 3$. By the pigeonhole principle (and $\E_\eta[y]=T(x)$ for all $(x,\eta)\in \supp(\gamma^*)$), there exist (rational) $y<y'$ in $Y$ and $(x_0,\eta_0), (x_1,\eta_1)\in \supp(\gamma^*)$ with $y_0,y_0'\in \supp(\eta_0)$ and $y_1\in \supp(\eta_1)$, such that $x_0<x_1$ and $y_0<y<y_1<y'<y_0'$, contradicting that $\gamma^*$ is single-dipped. 

By the subsequence criterion for convergence, to complete the proof of the theorem, it suffices to show that there is a unique single-dipped assignment $\gamma$ that induces $T$. Suppose by contradiction that there exist single-dipped (and thus strictly single-dipped) assignments $\gamma\neq \hat \gamma$ that induce $T$. Let $\psi$ and $\hat \psi$ be the distributions of $(T(x),\eta)$ induced by $\gamma$ and $\hat \gamma$. By part 1 of Theorem~\ref{t:eqm}, $T$ is continuous and strictly increasing, so $\psi\neq \hat \psi$  are both strictly single-dipped, contradicting Theorem 1.5 of \cite{beiglbock2016}.
\end{proof}

\subsection{Uniform Price Equilibrium}
\begin{proof}[Proof of Theorem~\ref{t:UPE}]
\emph{Part 4.} By definition, if $(\phi;p,q)$ is a UPE, then
\[
p(x)=W(x,S(x),\delta_{S(x)})-q(S(x))=\max_{y\in Y}\{W(x,y,\delta_{S(x)})-q(y)\},
\]
so, by the envelope theorem (\citealp{MS}, Theorem 3) and continuity of $W_x$, $W_z$, and $S'$,  $p$ has a derivative $p'(x)=W_x(x,S(x),\delta_{S( x)})+ W_z(x,S(x),\delta_{S(x)})S'(x)$ at each $x\in (\ul x,\ol x)$, yielding \eqref{e:pUPE}; and \eqref{e:qUPE} follows from $p(S^{-1}(y))+q(y)=W(S^{-1}(y),y,\delta_y)$.

\emph{Part 1.} PAS is a UPE if and only if, for all $(x,y)\in X\times Y$,
\begin{gather*}
q(y)-q(S(x))\geq W(x,y,\delta_{S(x)})-W(x,S(x),\delta_{S(x)}),
\end{gather*}
which is equivalent to \eqref{e:SUPE}, since $q$ is given by \eqref{e:qUPE}.

\emph{Part 2.} Suppose that \eqref{e:wxy} and \eqref{e:iwy} hold. Then, for all $(x,y)\in X\times Y$,
\begin{gather*}
\int_{S(x)}^{y}\int_{x}^{S^{-1}(\tilde y)} \frac{\df^2 W(\tilde x,\tilde y, \delta_{S(\tilde x)})}{\df \tilde x\,\df \tilde y}\, \df \tilde x\, \df \tilde y
=\int_{S(x)}^{y}\bigl(W_y(S^{-1}(\tilde y),\tilde y,\delta_{\tilde y})-W_y(x,\tilde y,\delta_{S(x)})\bigr)\,\df \tilde y\\
\ge \int_{S(x)}^{y}\bigl(W_y(x,\tilde y,\delta_{\tilde y})-W_y(x,\tilde y,\delta_{S(x)})\bigr)\,\df \tilde y
= \int_{S(x)}^{y} \int_{S(x)}^{\tilde y}
W_{yz}(x,\tilde y,\delta_{\tilde z})\,\df \tilde z\,\df \tilde y\ge\;0, 
\end{gather*}
so PAS is a UPE.

Conversely, suppose that PAS is a UPE for all continuous, positive densities $f$ and $g$. First, fix $x<x'$ in $(\ul x,\ol x)$, $y\in (\ul y,\ol y)$, and small $\varepsilon>0$. There exist continuous, positive densities such that $S(\tilde x)=y+\varepsilon(\tilde x-x)$ for $\tilde x\in [x,x'-\varepsilon-\varepsilon^2]$ and $S(\tilde x)=y+\varepsilon(x'-\varepsilon-x)+ (\tilde x-x'+\varepsilon)$ for $\tilde x\in [x'-\varepsilon,x']$. Since PAS is a UPE, \eqref{e:SUPE} holds. Taking $\varepsilon\downarrow 0$ yields $W_y(x',y,\delta_y)\geq W_y(x,y,\delta_y)$, so, by continuity of $W_{xy}$, \eqref{e:wxy} holds. Next, fix $x\in (\ul x,\ol x)$, $y> z$ in $(\ul y,\ol y)$ (the case $y< z$ is analogous and omitted), and small $\varepsilon>0$. There exist continuous, positive densities such that $S(\tilde x)=z+(y-z)(\tilde x-x)/\varepsilon $ for $\tilde x\in [x,x+\varepsilon]$. Since PAS is a UPE, \eqref{e:SUPE} holds. Taking $\varepsilon\downarrow 0$ yields $\int_z^y \int_z^{\tilde y}
W_{yz}(x,\tilde y,\delta_{\tilde z})\,\df \tilde z\,\df \tilde y\ge0$, so, by continuity of $W_{yz}$,  \eqref{e:iwy} holds.

\emph{Part 3.} Suppose that $W(x,y,\eta)=w(x,y,\E_\eta [m(x,\tilde y)])$, and \eqref{e:wxy>} and \eqref{e:iwy=} hold. Let $\gamma$ be a UPE. Fix $(x,\eta),(x',\eta')\in \supp(\gamma)$, $y\in \supp(\eta)$, and $y'\in \supp(\eta')$. By the intermediate value theorem, there exist $z,z'\in Y$ such that $\E_\eta[m(x,\tilde y)]=m(x,z)$ and $\E_{\eta'}[m(x',\tilde y)]=m(x',z')$. Then
\begin{gather*}
\int_y^{y'}\int_x^{x'} W_{xy}(\tilde x,\tilde y,\delta_{\tilde y})\, \df \tilde x\, \df \tilde y =W(x',y',\delta_{z'})-W(x',y,\delta_{z'})-W(x,y',\delta_{z}) +W(x,y,\delta_{z})\\
= p(x')+q(y)-W(x',y,\delta_{z'}) +p(x)+q(y')-W(x,y',\delta_{z}) \geq 0,
\end{gather*}
where the first equality holds because $W_{y}$ is independent of $z$, and the second equality and the inequality hold because $\gamma$ is a UPE. So, since $W_y(\tilde x,\tilde y,\delta_{\tilde y})$ is strictly increasing in $\tilde x$, we have $(x'-x)(y'-y)\geq 0$. That is, for all $(x,\eta),(x',\eta')\in \supp(\gamma)$ with $x<x'$, we have $\ol S(x)=\max \{\supp (\eta)\}\leq \min \{\supp(\eta')\}=\ul S(x')$, so, for almost all $x\in X$, $\ul S(x)=\ol S(x)=S(x)$, implying that $\gamma=\phi$.
\end{proof}

\begin{proof}[Proof of Corollary~\ref{c:affine}]
Since $w$ is affine in $y$, $w_y$ is independent of $y$. Thus, \eqref{e:SUPE} holds if and only if $w_y(x,y,S(x))=w_y(x,S(x),S(x))$ is increasing in $x$, \eqref{e:wxy} holds if and only if $w_y(x,y,z)=w_y(x,z,z)$ is increasing in $x$, and \eqref{e:iwy} holds if and only if $w_y(x,y,z)=w_y(x,z,z)$ is increasing in $z$.

Define the negative assortative matching (NAS) as the matching where each student $x$ goes to school $N(x)=G^{-1}(1-F(x))$. Note that NAS is a UPE if and only if 
\begin{equation*}\label{e:NAS}
\int_{N(x)}^{y}\int_{x}^{N^{-1}(\tilde y)} \frac{\df^2 w(\tilde x,\tilde y, N(\tilde x))}{\df \tilde x\,\df \tilde y}\, \df \tilde x\, \df \tilde y\geq 0,\quad\text{for all $(x,y)\in X\times Y$},
\end{equation*}
which holds if and only if $w_y(x,y,N(x))$ is decreasing in $x$, when $w$ is affine in $y$.

Suppose that PAS is the unique UPE for all densities $f$ and $g$. Then $w_y(x,z,z)$ is increasing in $x$ and $z$, and NAS is not a UPE for any $f$ and $g$. First, suppose for contradiction that there exist $x\in (\ul x,\ol x)$ and $z\in (\ul y,\ol y)$ such that $w_{yz}(x,z,z)>0$. Fix small $\varepsilon>0$, and let $f$ and $g$ be the uniform densities on $[x,x+\varepsilon^2]$ and $[z,z+\varepsilon]$. Then $w_y(\tilde x,z,N(\tilde x))$ is strictly decreasing in $x$ on $[x,x+\varepsilon^2]$, so NAS is a UPE. A contradiction. Thus, $w_{y}$ is independent of $z$. Next, suppose for contradiction that there exist $x<x'$ such that $w_y(x,y,z)=w_y(x',y,z)$. Let $f$ be the uniform density on $[x,x']$. Then $w_y(x,y,N(x))$ is constant for all $x\in [x,x']$, so NAS is a UPE. A contradiction. Thus, $w_{y}$ is strictly increasing in $x$.
The converse follows from the proof of Theorem~\ref{t:UPE}, which is valid if $w_y(x,y,y)$ is strictly increasing in $x$, rather than $w_{xy}(x,y,y)>0$ for all $(x,y)$.
\end{proof}

\end{document}